\documentclass{article}
  \usepackage[main,final,preprint]{neurips_2026}

\usepackage[utf8]{inputenc} 
\usepackage[T1]{fontenc}    
\usepackage{hyperref}       
\usepackage{url}            
\usepackage{booktabs}       
\usepackage{amsfonts}       
\usepackage{nicefrac}       
\usepackage{microtype}      
\usepackage{xcolor}         

\usepackage{multirow}

\usepackage{graphicx} 
\usepackage{geometry}
\usepackage{amssymb,amsmath}
\usepackage{mathtools,amsthm}
\usepackage{thm-restate}
\usepackage{color,todonotes,xspace}
\usepackage{colonequals}
\usepackage{stmaryrd}
\usepackage{needspace}

\newcommand{\llangle}{\langle\!\langle}
\newcommand{\rrangle}{\rangle\!\rangle}
\newcommand{\llbrace}{\lbrace \!\! \lbrace}
\newcommand{\rrbrace}{\rbrace \!\!\rbrace}
\newcommand{\bigllbrace}{\big\lbrace \!\! \big\lbrace}
\newcommand{\bigrrbrace}{\big\rbrace \!\!\big\rbrace}

\newtheorem{theorem}{Theorem}
\newtheorem{lemma}[theorem]{Lemma}
\newtheorem{corollary}[theorem]{Corollary}
 
\newtheorem{definition}[theorem]{Definition}
\newtheorem{example}[theorem]{Example}

\newtheorem{remark}[theorem]{Remark}

\newcommand{\cls}{\ensuremath{\mathsf{cls}}\xspace}
\newcommand{\agg}{\ensuremath{\mathsf{agg}}\xspace}
\newcommand{\sumagg}{\ensuremath{\mathsf{sum}}\xspace}
\newcommand{\meanagg}{\ensuremath{\mathsf{mean}}\xspace}
\newcommand{\maxagg}{\ensuremath{\mathsf{max}}\xspace}
\newcommand{\com}{\ensuremath{\mathsf{com}}\xspace}
\newcommand{\Hom}{\ensuremath{\mathsf{Hom}}\xspace}
\newcommand{\Inj}{\ensuremath{\mathsf{Inj}}\xspace}
\newcommand{\res}{\ensuremath{\mathsf{eval}}\xspace}

\newcommand{\Emb}{\ensuremath{\mathsf{Emb}}\xspace}

\DeclareMathOperator{\ind}{ind}
\DeclareMathOperator{\pred}{atom}

\renewcommand{\max}{\ensuremath{\mathsf{max}}\xspace}
\renewcommand{\mod}{\ensuremath{\mathop{\mathsf{mod}}}\xspace}
\newcommand{\mean}{\ensuremath{\mathsf{mean}}\xspace}

\newcommand{\relu}{\textup{ReLU}\xspace}
\newcommand{\trrelu}{\ensuremath{\textup{ReLU}^*}\xspace}

\newcommand{\calC}{\mathcal{C}}
\newcommand{\calF}{\mathcal{F}}

\newcommand{\calK}{\mathcal{K}}
\newcommand{\calL}{\mathcal{L}}
\newcommand{\calN}{\mathcal{N}}
\newcommand{\calP}{\mathcal{P}}

\newcommand{\mn}[1]{\ensuremath{\mathsf{#1}}}

\usetikzlibrary{decorations.pathmorphing, calligraphy,calc}

\usepackage{tikz}
\newcommand{\sun}[1][0.4]{%
  \mathord{%
    \vcenter{\hbox{%
      \begin{tikzpicture}[
        scale=#1,
        baseline=-0.5ex,
        line width=0.6pt
      ]
        \draw (0,0) circle (0.12);
        \foreach \a in {90,150,210,270,330,30} {
          \draw (\a:0.18) -- (\a:0.34);
        }
      \end{tikzpicture}%
    }}%
  }%
}

\def\email#1{\texttt{\small #1}}

\title{Expressive Power of Deep Homomorphism Networks over Relational Databases}
\author{
Balder ten Cate\\
University of Amsterdam\\
\email{b.d.tencate@uva.nl}
\And
Maurice Funk \\
Leipzig University\\
\email{maurice.funk@uni-leipzig.de}
\And
Benny Kimelfeld \\
Technion \& RelationalAI\\
\email{bennyk@cs.technion.ac.il}
\AND 
Carsten Lutz\\
Leipzig University\\
\email{clu@informatik.uni-leipzig.de}
\And
Moritz Schönherr\\
Leipzig University\\
\email{schoenherr@informatik.uni-leipzig.de}
\And
Arie Soeteman\\
University of Amsterdam\\
\email{a.w.soeteman@uva.nl}}

\begin{document}
\maketitle

\begin{abstract}
  The  expressive limitations of message-passing Graph Neural Networks (GNNs) have motivated a wide range of  more powerful graph learning architectures. We advocate Deep Homomorphism Networks (DHNs) as a  model  particularly well-suited for learning over relational databases,
  due to their close connection to important fragments of SQL such as conjunctive queries.
  We study the precise expressive power of DHNs
  by relating them to various natural fragments and extensions of first-order logic (FO). For DHNs with $\maxagg$, $\sumagg$, and $\meanagg$ aggregation, we establish connections to the unary negation fragment (UNFO)
  and to the extensions of UNFO with counting quantifiers and  with ratio quantifiers.
We further relate sum-aggregation DHNs to the unary quantifier alternation fragment of FO and to a fragment of FO extended with expressive counting.
  Through the classical
  correspondence between FO and SQL, these results also illuminate the relation between DHNs and SQL. They further enable us to study the decidability of fundamental static analysis problems for DHNs,  the emptiness problem and the subsumption problem.
  We confirm  through experiments that the established differences in expressive power 
are reflected in the performance on suitable prediction tasks.

%
%
\end{abstract}

\section{Introduction}

Message-passing Graph Neural Networks (GNNs) are a prominent class of graph learning models that have proven highly effective across a wide range of domains, including social network analysis~\cite{DBLP:journals/tii/GuoW21}, molecular property prediction~\cite{DBLP:conf/nips/SunDY22}, recommender systems~\cite{DBLP:journals/csur/SharmaLNSSKK24}, and data analytics for relational databases~\cite{DBLP:conf/icml/FeyHHLR0YYL24}. 
However, it is well known that the expressive power of the standard GNN model is bounded from above by the one-dimensional Weisfeiler--Leman test.
This is a serious limitation:
in essence, GNNs can only detect tree-shaped graph properties whereas even 
simple and common  structures such as triangles, cliques, and
fixed-length cycles lie beyond their expressive power.
Consequently, a growing body of work has sought to design more expressive graph neural architectures, including higher-order GNNs \cite{morris2019weisfeiler},  subgraph GNNs \cite{Qian,You2021,Zeng2023}, and  hierarchical ego  GNNs \cite{soeteman2025logical}.
These have exhibited improved performance on 
many tasks such as predicting constrained solubility of molecules (as in the ZINC benchmark \cite{irwin2012zinc}),  where substructure features such as triangles and rings are highly relevant.

One important and practical way to increase the expressive power of GNNs, breaking free from their tree-shaped limitations, is to provide the ability to count homomorphisms from selected graph patterns. This is often done by feature augmentation: 
every vertex is adorned with homomorphism counts that capture local graph structure, such as the number of triangles it lies on~\cite{DBLP:conf/iclr/BaoJBCL25,barcelo2021graph,DBLP:conf/icml/JinBCL24,DBLP:conf/icml/NguyenM20,DBLP:journals/corr/abs-2308-15283}.  
However, homomorphism counts can also be integrated into the GNN architecture more deeply, replacing the standard edge-based message passing mechanism of GNNs by a 
more general form of message passing along complex patterns. 
The latter, more powerful approach is taken by Deep Homomorphism Networks (DHNs), recently introduced in \cite{maehara2024deep}. As shown in this article,  DHNs offer strictly higher expressive power than standard GNNs even when the latter are augmented with
homomorphism-count based features. Moreover, we argue that DHNs have another important advantage: they provide a natural connection to relational databases. Indeed, a vast amount of  data is stored in relational systems and analyzing such data is an important application of GNNs, usually approached by first translating the database into a graph~\cite{DBLP:conf/icml/FeyHHLR0YYL24,
DBLP:conf/icml/ChenKL25,DBLP:conf/icde/TonshoffFGK23,
DBLP:conf/nips/WangGWZFZZCLMST24}. For DHNs, such a translation is unnecessary as their definition  extends gracefully from graphs to databases. What is more, pattern homomorphisms are essentially  an alternative presentation of conjunctive queries (CQs), arguably the most important query language for relational database systems. We recall that CQs correspond to the select-project-join fragment of SQL and that in practice a large fraction of  SQL queries fall into this fragment~\cite{DBLP:conf/pods/Chaudhuri98}. Thus, DHNs provide a natural form of expressive power for database properties while also opening up the possibility to benefit from decades of work on the efficient evaluation of CQs.

{\bf Our contributions.} The aim of this paper is to provide a careful analysis of the exact expressive power of DHNs used as vertex classifiers, with a particular focus on their relation to natural fragments and extensions of first-order logic (FO). Since an important core of SQL famously corresponds to FO, this also helps clarify which SQL queries are expressible by DHNs and which are not. In addition, it enables us to obtain results about the static analysis of DHNs. We study the \emph{emptiness problem}, which asks whether a given DHN defines the empty set of vertices, thereby indicating a potential flaw in the model, and the related \emph{subsumption problem}, 
which asks whether the vertex classifier realized by one given DHN is subsumed by that realized by another. Our logical characterizations 
also allow us to prove that DHNs are strictly more expressive than GNNs augmented with initial homomorphism count features, as proposed by Barceló et al.~\cite{barcelo2021graph}.

We study DHNs with $\maxagg$, $\sumagg$, and $\meanagg$ aggregation. For $\maxagg$-DHNs, we obtain a complete and uniform logical characterization:  the database properties expressible by $\maxagg$-DHNs are exactly those definable in the unary negation fragment of FO (UNFO) \cite{DBLP:journals/corr/SegoufinC13}. By known results, this implies that the emptiness and subsumption problems for
$\maxagg$-DHNs are decidable. 

For $\sumagg$-DHNs, our results offer a broad yet incomplete picture.
We prove that, for connected patterns, the expressive power on databases of bounded degree is exactly that of UNFOC, the extension of UNFO with counting quantifiers. This also yields a characterization in the non-uniform setting often studied in the GNN literature, where a different GNN may be used for each graph size \cite{DBLP:conf/lics/Grohe21}. 
The connectedness restriction is 
essential to bypass
an open problem concerning the expressive power of GNNs with global readout~\cite{Barcelo2020,DBLP:conf/aaai/HaukeW26}.
We then prove that the emptiness and subsumption problems for $\sumagg$-DHNs are undecidable, both on unrestricted-degree graphs and on bounded-degree graphs. On bounded-degree graphs, these problems become decidable when all patterns are connected. In contrast, under mild assumptions the same problems are decidable for  $\sumagg$-GNNs~\cite{DBLP:conf/icalp/BenediktLMT24}.

There is, however, much more to say about $\sumagg$-DHNs. While we do not obtain a precise logical characterization of their expressive power on databases of unrestricted degree, we are able to sandwich them between two logical formalisms that provide meaningful lower and upper bounds. For the lower bound, we define the
 new fragment  UQAFO of FO that is based on unary quantifier alternation and admits inequalities, and hence counting. We show that UQAFO is strictly more expressive than UNFOC and, at the same time, no more expressive than $\sumagg$-DHNs. This perspective also motivates a natural variation of DHNs, 
 Deep Embedding Networks (DENs), in which homomorphisms are replaced by embeddings. Using Lovász’ homomorphism counting theorem, we prove that $\sumagg$-DENs have the same expressive power as $\sumagg$-DHNs, and we also show that they are strictly more expressive than UQAFO. Making a short digression to $\maxagg$ aggregation, we further observe that $\maxagg$-DENs are strictly more expressive than $\maxagg$-DHNs. For the upper bound, we propose a homomorphism-based logic inspired by first-order logic with counting, as studied in descriptive complexity~\cite{immerman2012descriptive}. Building on recent developments~\cite{DBLP:journals/theoretics/Grohe24}, we show that this logic is at least as expressive as $\sumagg$-DHNs.

We only briefly address the case of $\meanagg$ aggregation. Our main result is that, on bounded-degree databases and for connected patterns, $\meanagg$-DHNs have the same expressive power as a variant of UNFOC in which counting quantifiers are replaced by ratio quantifiers, in the spirit of a recent study on the expressiveness of GNNs with mean aggregation~\cite{AAAIMean}.

We point out that all results stated above hold both on relational databases and on graphs, subject to mild additional assumptions for the undecidability results, namely the presence of vertex colors.  We also provide experiments that corroborate our theoretical findings on expressive power in practice. In particular, they confirm our two main separation results.
Fully detailed proofs 
are provided in the appendix.

\section{Preliminaries}
\label{sect:prelims}

\newcommand{\Sbf}{\ensuremath{\mathbf{S}}\xspace}
\newcommand{\Vbf}{\ensuremath{\mathbf{V}}\xspace}

\paragraph{Databases, Homomorphisms,  Embeddings.}
For a vector $\bar{x}\in\mathbb{R}^{k}$ we use $x_1,\ldots,x_{k}$ 
to refer to its
components. 
Fix a countably infinite set \Vbf of \emph{values}.
A {\em schema} \Sbf
is a finite set of relation symbols~$R$ with associated arity \mbox{$\mn{ar}(R) \geq 0$}. 
An {\em \Sbf-fact} is an expression of
the form $R(\bar v)$, where $R \in \Sbf$ and $\bar v$ is an
$\mn{ar}(R)$-tuple of values from $\Vbf$.
An {\em $\Sbf$-database} is a 
finite set of \Sbf-facts. We write
$\mn{adom}(D)$ to denote the \emph{active domain} of database $D$, that is, the set of values used in  $D$.
%
%
The \emph{degree} of $D$ is the
maximum number of facts that any value in $D$ occurs in.

Our DHNs run on databases,
whereas in the original paper~\cite{maehara2024deep}, DHNs are applied to undirected graphs. By choosing the \emph{graph schema} $\Sbf = \{E\}$ with $\mn{ar}(E)=2$, we can capture the setting
of directed graphs. Undirected graphs require the additional constraint that all inputs are 
irreflexive and symmetric. 
%
A \emph{pointed \Sbf-database} is a pair $(D, v)$, where $D$ is an 
\Sbf-database and $v \in \mn{adom}(D)$. We typically write
$(D,v)$ as $D^v$ or as $D^\bullet$ if the exact
identity of $v$ is not important. 
%
  For  $d \geq 1$, an \emph{$\mathbb{R}^d$-embedded \Sbf-database}
 is a pair $(D,\lambda)$ where $D$ is an \Sbf-database and $\lambda \colon \mn{adom}(D) \to \mathbb{R}^d$ is an embedding function.
Embedded databases also
exist in a pointed version, defined as expected.

A homomorphism from a pointed database $F^v$ to a pointed database $D^u$ is a
function $h \colon \mn{adom}(F)\rightarrow \mn{adom}(D)$ such that 
$h(v)=u$ and $R(\bar v) \in F$
implies $R(h(\bar v)) \in D$, where $h(\bar v)$ denotes the component-wise application of $h$.
We use $\Hom(F^\bullet,D^\bullet)$ to denote the set of all homomorphisms from $F^\bullet$ to~$D^\bullet$. A homomorphism  from $F^\bullet$ to~$D^\bullet$ is an \emph{embedding} if it is injective and, additionally, 
$R(h(\bar v)) \in D$ implies
$R(\bar v) \in F$.
Embeddings are also known as partial isomorphisms or induced substructure isomorphisms.
The following lemma is 
widely known, the graph version can be found, e.g., in \cite[Chapter~5.2.3]{Lovasz2012}. A proof for databases is in the appendix.
\begin{restatable}{theorem}{lovaszhomcount}[Lov\'asz' homomorphism count theorem]\label{lovasz homomorphism count implies embedding count}
  Let \Sbf be a schema.
  For pointed \Sbf-databases \(F^\bullet,D^\bullet\), the number $k$ of embeddings  from \(F^\bullet\) to \(D^\bullet\) is uniquely determined by the number of homomorphisms to \(D^\bullet\), from all \Sbf-databases with at most $|\mn{adom}(F)|$ values. 
  Moreover, $k$ can be written as a linear combination of the homomorphism counts. The same is true when the roles of homomorphisms and embeddings are swapped.
\end{restatable}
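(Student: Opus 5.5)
The plan is the classical Möbius-inversion argument, carried out over pointed databases instead of graphs. First I factor homomorphisms by their image. Every homomorphism $h\colon F^v\to D^u$ induces the equivalence relation $\theta=\ker h$ on $\mn{adom}(F)$, with the class of $v$ distinguished. Let $F/\theta$ be the quotient pointed database, obtained by identifying values in the same class. Then $h$ factors uniquely as the quotient map followed by an injective homomorphism $F/\theta \to D$. An injective homomorphism $g\colon G^\bullet\to D^\bullet$ has image equal to some database $G'$ with $G\subseteq G'\subseteq$ the induced subdatabase of $D$ on $g(\mn{adom}(G))$, and $g$ is an embedding of $G'$ (up to renaming values) into $D$. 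Grouping by these data gives
\[
\mathrm{hom}(F^\bullet,D^\bullet)=\sum_{\theta}\ \sum_{G'} c_{\theta,G'}\,\mathrm{emb}(G'^\bullet,D^\bullet).
\]
Here $\theta$ ranges over partitions of $\mn{adom}(F)$, $G'$ ranges over isomorphism types of pointed databases on $|\mn{adom}(F)/\theta|$ values that contain $F/\theta$ (up to isomorphism), and $c_{\theta,G'}$ are nonnegative integers independent of $D$. Since the schema is finite and the number of values is bounded, all these sums are finite. The constant-relations and nullary facts fit in the same way: a nullary fact is either present or not, and it is handled like any other fact in the extension $G\subseteq G'$.

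This expresses homomorphism counts as linear combinations of embedding counts, which is the ``swapped'' direction. To invert it, I index by the finite set $\mathcal{P}_n$ of isomorphism types of pointed $\Sbf$-databases with at most $n=|\mn{adom}(F)|$ values, which is closed under the constructions above. I then order $\mathcal{P}_n$ by $(G_1)\preceq(G_2)$ iff $|\mn{adom}(G_1)|>|\mn{adom}(G_2)|$, or the two have equal domain size and $G_2$ is isomorphic to a subdatabase of $G_1$ on the same domain (so $G_1$ has at least as many facts). In the identity above, every $G'\neq F$ appearing with nonzero coefficient either has fewer values (when $\theta$ is nontrivial) or the same values and strictly more facts. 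The only term with the same domain size and the same facts is $G'=F$, which arises from the trivial $\theta$, and its coefficient is $c=1$: an embedding of $F$ itself gives exactly one homomorphism that has trivial kernel and image type $F$. Counting automorphisms is not needed here, because we count maps rather than images, though I would double-check the normalization.

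The resulting matrix $M$ over $\mathcal{P}_n$ is therefore triangular with ones on the diagonal, and hence invertible over $\mathbb{Z}$. Solving $\mathrm{hom}=M\,\mathrm{emb}$ gives $\mathrm{emb}(F^\bullet,D^\bullet)$ as an integer linear combination of $\mathrm{hom}(G^\bullet,D^\bullet)$ for $G\in\mathcal{P}_n$. In particular, it is uniquely determined by these counts.

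The main obstacle is making the triangularity rigorous. I need to check that each term's $(\theta,G')$ strictly dominates $F$ in a well-founded order unless it is the diagonal term, and that the point is respected: classes containing $v$ map to $u$, so pointedness is preserved in quotients and in extensions. This is a careful bookkeeping step rather than a deep one. I would handle it by induction on the pair (number of values descending, number of facts ascending), proving the inversion formula for all of $\mathcal{P}_n$ simultaneously.
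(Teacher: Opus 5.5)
Your proposal is correct and follows essentially the same route as the paper: factor each homomorphism through its kernel partition and the quotient $F/\theta$, write each injective homomorphism as an embedding of the unique same-domain extension it induces, and invert the resulting unitriangular integer matrix indexed by databases with at most $|\mn{adom}(F)|$ values. One point to tighten: $G'$ must be exactly the pulled-back pattern $\{R(\bar x) \mid R(g(\bar x)) \in D\}$, not merely a database sandwiched between $G$ and it, for the grouping to be a bijection; with that fixed, your multiplicities $c_{\theta,G'}$ are the correct normalization (slightly more careful than the paper's ``sum over non-isomorphic $G$''), and the diagonal entry is indeed $1$, because the only extension of $F$ on its own domain that is isomorphic to $F$ is $F$ itself.
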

%

\paragraph{Deep Homomorphism Networks.} 

A finite multiset over a set $X$ is a function $M \colon X \rightarrow \mathbb{N}$ such that $M(x) > 0$ for only finitely many $x \in X$. An \emph{aggregation function} is a function from finite
multisets over $\mathbb{R}$ to $\mathbb{R}$, such as sum, arithmetic mean, and max. As a convention, we set $\mean(\emptyset) = \max(\emptyset) = 0$. 
We shall often
apply aggregation functions to vectors from $\mathbb{R}^d$.
Such applications are always component-wise.
%
\begin{definition}[Homomorphism Query]
\label{def:homquery}
Let \Sbf be a schema.
A \emph{homomorphism query} over
\Sbf with input dimension $d \geq 0$ and output dimension $d' > 0$ is a triple
   $(F^\bullet,\mu,\agg)$ where $F^\bullet$
   is a pointed \Sbf-database, $\mu$ labels each value in $F$
   with a function $\mu_v$ from $\mathbb{R}^d$ to $\mathbb{R}^{d'}$ referred to as a \emph{transformation function}, and $\agg$ is an aggregation function. The \emph{result} of evaluating 
 $(F^\bullet, \mu,\agg)$ on an $\mathbb{R}^d$-embedded pointed \Sbf-database $(D^\bullet,
\lambda)$ is the $\mathbb{R}^{d'}$-vector
\begin{align*}
    \res((F^\bullet,\mu,\agg),(D^\bullet,\lambda)) := \agg \bigllbrace \prod_{v \in \mn{adom}(F)} \mu_v(\lambda(h(v))) \mid h \in \Hom(F^\bullet, D^\bullet) \bigrrbrace
\end{align*}
where the product is applied component-wise. We say that  $(F^\bullet,\mu,\agg)$ is \emph{connected} if the Gaifman graph of $F$ is
connected (defined in the appendix).
\end{definition}
The database $F$ in the definition of homomorphism queries can
naturally be viewed as a conjunctive query.
In \cite{maehara2024deep}, the vector $\res((F^\bullet,\mu,\agg),(D^\bullet,\lambda))$ is called a \emph{generalized homomorphism
number}. This is because, for $\agg
= \sumagg$, it is identical to that
of weighted homomorphism numbers~\cite{Lovasz2012}. 

%
\begin{definition}[DHN Layer]
Let \Sbf be a schema.
An \emph{\Sbf-DHN layer with input dimension $d \geq 0$ and output dimension $d' \geq 0$} is a pair $\mathcal{L} =(\mathcal{F},
\com)$ such that 
%
 $\mathcal{F}$ is a sequence of \Sbf-homomorphism queries $(F^\bullet_1,
\mu_{1},\agg_1), \dots, (F^\bullet_m, \mu_{m},\agg_m)$
  of input dimension $d$ and
   $\com \colon \mathbb{R}^{d_1} \times \cdots \times \mathbb{R}^{d_m} \rightarrow \mathbb{R}^{d'}$ is the \emph{combination function} with $d_i$ the
     output dimension of $(F^\bullet_i,\mu_{i},\agg_i)$.
%
 Given an $\mathbb{R}^d$-embedded \Sbf-database $(D, \lambda)$,  $\mathcal{L}$ produces an 
 $\mathbb{R}^{d'}$-embedded \Sbf-database $(D, \lambda_\calL)$ defined by setting, for  each value $u \in \mn{adom}(D)$:
\begin{align*}
    \lambda_\calL(u)  = \com(\res((F_1^\bullet, \mu_{1},\agg_1),(D^u,\lambda)), \ldots, \res((F_m^\bullet, \mu_{m},\agg_m),(D^u,\lambda))).
\end{align*}
\end{definition}
%
%
%
In \cite{maehara2024deep}, a deep homomorphism network is a finite sequence of DHN layers with matching input and output dimensions that transforms an input embedding into an output embedding. Here, we want to use DHNs as classifiers and thus add a classification function.
\begin{definition}[Deep Homomorphism Network]

Let \Sbf be a schema.
A \emph{deep homomorphism network (DHN)
over \Sbf} 
 is a sequence  $\calN=(\mathcal{L}_1, \ldots ,\mathcal{L}_k,\cls)$ where each $\calL_i$
is an \Sbf-DHN layer and $\cls \colon \mathbb{R}^{d'} \rightarrow \{0,1\}$ is a \emph{classification function}, with $d' \geq 0$.
The input dimension of $\calL_1$ must be~$0$, the output dimension of $\calL_i$ must be identical to the input dimension of $\calL_{i+1}$ for $1 \leq i < k$, and the output dimension of $\calL_k$ must be $d'$. We say that $\calN$ is
an \emph{\agg-DHN}, with  $\agg \in \{ \sumagg, \meanagg,\maxagg \}$, if \agg is the only aggregation function
used in $\calN$, and $\calN$ is \emph{connected} if every homomorphism query in it is.

For an \Sbf-database $D$, the DHN
$\calN$ produces a 
sequence $(D,\lambda^0_{\calN,D}),\dots,(D,\lambda^k_{\calN,D})$ of embedded \Sbf-databases, where $\lambda^0_{\calN,D}$
assigns to every value the empty vector $()$ and for $i>0$, $\lambda^i_{\calN,D}$ is the
result of applying layer $\calL_i$ to
$(D,\lambda^{i-1}_{\calN,D})$ for $1 \leq i \leq k$.
%
%
%
The DHN $\calN$ \emph{accepts} the pointed \Sbf-database $D^v$
if $\cls(\lambda^{k}_{\calN,D}(v))=1$. It \emph{rejects} this database otherwise.
\end{definition}
%




The combination and classification functions, as well as the transformation functions in homomorphism queries may be represented as feed-forward neural networks (FNNs), defined in the appendix. The parameters of these networks are learned during the training phase of a DHN.
We  put special emphasis on FNNs that use the \emph{truncated ReLU} $\trrelu(x) = \min(\max(0,x),1)$ as the activation function.
However, all our results also apply to the case of
non-truncated  $\relu(x) = \max(0,x)$, since
$\trrelu(x) = \relu(x) - \relu(x-1)$ for all $x \in \mathbb{R}$.
In a \emph{simple} DHN, all
 combination and transformation functions must be represented by FNNs  with $\trrelu$ activation,
 without hidden layers and with rational weights, and the classification function must be a threshold function. This is made precise in the appendix.
It is easy to see and was already observed in \cite{maehara2024deep} that DHNs generalize GNNs. This even applies to GNNs with global readout
when homomorphism queries may be 
disconnected, but not otherwise.

\begin{remark}
There is an apparent mismatch between running a DHN on a database and running a GNN on the graph constructed from a database: the DHN assigns embedding vectors to the \emph{values} in the database, whereas in standard graph encodings, every vertex represents a database \emph{tuple}, and running a GNN therefore assigns embedding vectors to such tuples~\cite{DBLP:conf/icml/FeyHHLR0YYL24,
DBLP:conf/icml/ChenKL25,DBLP:conf/icde/TonshoffFGK23,
DBLP:conf/nips/WangGWZFZZCLMST24}. This mismatch, however, is easily resolved: simply increase the arity of every database relation by one and place a unique value in the fresh component of every tuple, across all relations. The embedding vectors assigned to these unique values then correspond to the embedding vectors assigned to tuples.
\end{remark}



    

\vspace*{-2mm}

\paragraph{Decision Problems.}
We consider two static analysis problems for DHNs. 
In the \emph{emptiness problem}, the question is whether a given DHN accepts
no pointed databases whatsoever. This corresponds to the classifier being unsatisfiable. In the \emph{subsumption problem}, the input consists of two DHNs $\calN_1,\calN_2$ and 
the question is whether for every pointed database $G^\bullet$, acceptance by 
$\calN_1$ implies acceptance by~$\calN_2$. In other words, the subsumption problem asks whether the vertex classifier represented by $\calN_2$ is
at least as general as the vertex classifier represented by $\calN_1$. 
It is easy to see that both problems can be reduced to one another in polynomial time.
Note that there is no obvious way to finitely represent an arbitrary DHN
in which no restrictions
are placed on the form of the
combination, transformation, and classification functions.
In static analysis, we thus assume that all of these functions are given by FNNs in which all coefficients are rational numbers. 




%

\paragraph{UNFO and (G)HML.}

The
formulas over a schema \Sbf of the 
\emph{unary negation fragment of first-order logic with counting (UNFOC)} are defined according to the following syntax rule:
\[
  \varphi(\bar x) ::= R(\bar x) \mid \varphi(\bar x) \wedge \varphi(\bar x) \mid \varphi(\bar x) \vee \varphi(\bar x) \mid \exists^{\geq n} y \, \varphi(\bar x, y) \mid 
  \neg \varphi(x_i)
\]
where $R$ ranges over the relation symbols in \Sbf, 
$n \in \mathbb{N}$, and, in the last clause, $\varphi$ has no free variables besides (possibly) $x_i$.
The fragment of UNFOC in which $n$ is restricted to $1$ is the \emph{unary negation fragment of first-order logic (UNFO)} \cite{DBLP:journals/corr/SegoufinC13}. 
Note that in UNFOC it is in general not possible to express  $\exists^{\leq n} x \, \varphi$ as $\neg \exists^{\geq n+1} x \, \neg \varphi$ 
because $\varphi$ may have more than one free variable. 
This is analogous to the asymmetry 
between existential and universal quantification in UNFO. We use UNFO(C) to classify values and therefore focus, without further notice, on formulas in one free variable.

We next introduce \emph{homomorphism modal logic (HML)}, which may be viewed as a
normal form for UNFO formulas. It was introduced already in \cite{DBLP:journals/corr/SegoufinC13}, though not under this name. While we do not use 
    modal syntax, HML can naturally be viewed as a polyadic modal logic in which the  modal diamond operator is replaced by a generalized operator based on conjunctive queries (CQs), see \cite{DBLP:journals/corr/SegoufinC13}. 
 HML formulas are
defined according to the rule
    \[
        \varphi(x) ::= \neg \varphi(x) \mid \varphi(x) \lor \varphi(x) \mid \exists \bar y \, \psi(x, \bar y)
    \]
    %
    where
    $\psi(x, \bar y)$ is 
    a conjunction
    of relational atoms  
    $R(\bar z)$ and  unary formulas~$\varphi(z)$. We emphasize
    that, up to the latter, formulas $\exists \bar y \, \psi(x, \bar y)$ are nothing but conjunctive queries and thus the expressive power of HML  is closely linked to database query languages. The name HML mentions homomorphisms because the semantics of CQs is defined naturally in terms of homomorphisms.   Note that unary atoms $P(x)$
    can be expressed by formulas
    of the form $\exists \bar y \, \psi(x, \bar y)$ with $\bar y$ empty.
     We say that a HML formula 
    is \emph{connected} if for every subformula
    $\exists \bar y \, \psi(x, \bar y)$,
    (i)~some atom in $\psi$ uses $x$ and
    (ii)~the Gaifman graph of 
    $\psi$ is connected,
     meaning the Gaifman graph of 
    the database whose facts are exactly the relational atoms that occur in $\psi$.
  %
    %
%
It was proved in \cite{DBLP:journals/corr/SegoufinC13}
that   UNFO and HML have the same expressive power.
%

We next define an extension of HML that has the same expressive power as UNFOC.
\emph{Graded homomorphism modal logic (GHML)}  
is defined in the same way as HML, except
that formulas of the form $\exists \bar y \, \psi(x, \bar y)$ are replaced by formulas 
of the form $ \exists^{\geq k_1} y_1 \cdots \exists^{\geq k_m} y_m \, \bigvee_{1 \leq i \leq \ell} \psi_i(x,y_1,\dots,y_m)$ with $\psi_i$ of the same shape as the formulas $\psi$ in HML. A GHML formula is \emph{connected}
if in every existentially
quantified subformula, all formulas $\psi_i$ satisfy Conditions~(i) and~(ii).
%
%
\begin{restatable}{lemma}{unfocghmlequiv}
\label{lem:twocombined}
  UNFOC and GHML have the same expressive power. 
\end{restatable}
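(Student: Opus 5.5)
We prove the two inclusions separately, in both cases for formulas with one free variable $x$.

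\emph{From GHML to UNFOC.} This direction is a direct syntactic translation by induction on the formula. Negation and disjunction are already UNFOC constructs; negation applies to a formula with the single free variable $x$, so it is unary. For a subformula
\[
\exists^{\geq k_1} y_1 \cdots \exists^{\geq k_m} y_m \, \bigvee_{1 \leq i \leq \ell} \psi_i(x,\bar y),
\]
each $\psi_i$ is a conjunction of relational atoms and unary formulas. By the induction hypothesis, the unary formulas are equivalent to UNFOC formulas in one free variable. Hence the matrix is a positive Boolean combination of atoms and such formulas, which UNFOC allows. Nested counting quantifiers are also UNFOC constructs. So the translation is immediate, and it preserves the free variable.

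\emph{From UNFOC to GHML.} We adapt the normal-form argument of \cite{DBLP:journals/corr/SegoufinC13} for UNFO and HML. We show by induction on formula structure the following stronger statement. Every UNFOC formula $\varphi(\bar x)$ with free variables $\bar x$ is equivalent to a disjunction of formulas of the shape
\[
\exists^{\geq k_1} y_1 \cdots \exists^{\geq k_m} y_m \, \psi(\bar x, \bar y),
\]
where $\psi$ is a conjunction of relational atoms and GHML formulas in one free variable. Call these \emph{graded blocks}; a single block may also carry a disjunctive matrix.

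The cases go as follows.
\begin{itemize}
\item Atoms are trivial.
\item Disjunction stays a disjunction of blocks.
\item Unary negation $\neg\varphi(x_i)$: by the induction hypothesis $\varphi$ is equivalent to a GHML formula in $x_i$, and GHML is closed under negation. This uses the one-free-variable restriction crucially.
\item Counting quantification $\exists^{\geq n} y\, \varphi(\bar x,y)$ where $\varphi$ is a disjunction of blocks: we pull the disjunction inside the quantifier, giving $\exists^{\geq n} y \bigvee_j \chi_j$, and then merge the prefixes.
\end{itemize}
Merging is what forces the GHML syntax to allow a disjunction $\bigvee_i \psi_i$ under a single quantifier prefix. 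In general $\exists^{\geq n} y(\chi_1\vee\chi_2)$ is not a Boolean combination of $\exists^{\geq a}y\,\chi_1$ and $\exists^{\geq b}y\,\chi_2$, because the witness sets may overlap.

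The main obstacle is conjunction, and likewise the case where a block appears inside the counting quantifier. We need to commute a counting quantifier with an inner counting prefix:
\[
\exists^{\geq n} y \, \exists^{\geq k_1} z_1\cdots \psi \;\equiv\; \exists^{\geq n} y\, \exists^{\geq k_1} z_1 \cdots \psi .
\]
In this form the matrix already sits under a linear prefix, so nesting gives a prefix directly. For a conjunction of two blocks, $\exists^{\bar k}\bar y\,\psi_1 \wedge \exists^{\bar \ell}\bar z\,\psi_2$ with disjoint bound variables, we use the equivalence with $\exists^{\bar k}\bar y\,\exists^{\bar\ell}\bar z\,(\psi_1\wedge\psi_2)$. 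This holds because $\psi_1$ does not mention $\bar z$, so the inner counts do not depend on $\bar y$, and conversely.

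We will check carefully that these prenexing steps are sound for counting quantifiers. The nonuniform-dependency issue arises only when a quantified variable occurs in both conjuncts, and renaming apart prevents that. We also need to verify that the quantifiers $\exists^{\geq 0}$ (trivially true) and vacuous quantifiers are handled. At the end, applying the statement to formulas with exactly one free variable $x$ yields a GHML formula. A final disjunction of blocks is a GHML formula via closure under $\vee$, and each block with its disjunctive matrix is a single GHML existential formula.
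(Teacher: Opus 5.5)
Your GHML-to-UNFOC direction is correct. Your conjunction step is also sound for positive bounds, because pulling a $\bar y$-free conjunct out of $\exists^{\geq k}y$ is always valid.

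\textbf{The gap.} The problem lies in the counting-quantifier case, in the step you call merging the prefixes. To bring $\exists^{\geq n}y\,\bigvee_j\chi_j$ into GHML shape, the different prefixes of the blocks $\chi_j$ must be pulled out over the disjunction. That is, you must replace $\exists^{\bar k}\bar z_1\,\psi_1\vee\exists^{\bar\ell}\bar z_2\,\psi_2$ by $\exists^{\bar k}\bar z_1\,\exists^{\bar\ell}\bar z_2\,(\psi_1\vee\psi_2)$.

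For counting quantifiers this is not an equivalence, and renaming apart does not help. In the disjunct $\psi_1$ the quantifiers over $\bar z_2$ are vacuous. A vacuous $\exists^{\geq \ell}z$ is not trivially true; it asserts that the database has at least $\ell$ values.

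For instance, $\exists^{\geq 1}y\,(R(x,y)\vee\exists^{\geq 2}z\,S(x,y,z))$ holds at $a$ in the one-value database $\{R(a,a)\}$. The merged formula $\exists^{\geq 1}y\,\exists^{\geq 2}z\,(R(x,y)\vee S(x,y,z))$ fails there. The inner block has two free variables, so you cannot absorb it as a unary GHML formula.

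This is exactly the ``vacuous quantifier'' issue you deferred, and it is not a side detail. Prenexing over $\vee$ is sound only on databases with at least $c$ values, where $c$ is the largest counting bound.

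\textbf{The fix.} The missing idea, which is how the paper proceeds, is a case split on database size:
\[
\varphi\equiv(\exists^{\geq c}w\,\top\wedge\varphi)\vee\bigvee_{1\leq i<c}(\exists^{\geq i}w\,\top\wedge\neg\exists^{\geq i+1}w\,\top\wedge\varphi).
\]
In the first disjunct, every prenexing step is valid on all databases that matter. This is the paper's lemma on pulling $\exists^{\geq n}$ out of $\wedge$ and $\vee$ on databases with at least $\max(n,c)$ values.

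In the $i$-th disjunct, first replace every subformula $\exists^{\geq n}\psi$ with $n>i$ by $\bot$, which is correct on databases with exactly $i$ values, and simplify. After that all remaining bounds are at most $i$, so prenexing is valid again.

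The size guards are themselves GHML-expressible, so each disjunct, and hence $\varphi$, becomes a GHML formula. With this split added, your bottom-up normal-form induction goes through, carried out relative to the respective class of databases (size at least $c$, or size exactly $i$).
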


\section{Logical Characterizations of DHNs}
\label{sect:DHNs}

We relate DHNs
with different aggregation functions to variants of UNFO and HML. The following theorem summarizes our results on the expressive power of 
$\maxagg$-DHNs and
$\sumagg$-DHNs.
\begin{theorem}
\label{thm:combi}~\\[-4mm]
\begin{enumerate}
    \item    $\maxagg$-DHNs, simple $\maxagg$-DHNs, UNFO, and HML all have the same expressive power.

     \item Simple 
     $\sumagg$-DHNs are at least as expressive as %
     GHML and UNFOC.
     
  \item  
    On databases of bounded degree, the following have the same expressive power: connected $\sumagg$-DHNs, 
     simple connected $\sumagg$-DHNs, and connected GHML.
     
\end{enumerate}
\end{theorem}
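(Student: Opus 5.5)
Each item splits into a ``logic $\Rightarrow$ DHN'' direction, shown by compiling formulas into simple DHNs, and a ``DHN $\Rightarrow$ logic'' direction, shown by an invariance argument. By the result of \cite{DBLP:journals/corr/SegoufinC13} and Lemma~\ref{lem:twocombined}, it suffices to work with HML and GHML throughout.

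\textbf{Logic to DHN (Items 1 and 2, and one direction of Item 3).} We proceed by induction on the formula, building a simple DHN in which each layer adds one coordinate per subformula. That coordinate holds $1$ if the subformula is true at the value and $0$ otherwise. Negation and disjunction are handled by the combination function, an FNN without hidden layers: $\neg$ is $\trrelu(1-x)$ and $\vee$ is $\trrelu(x+y)$. Earlier coordinates are copied forward by identity homomorphism queries with the trivial pattern. For $\exists\bar y\,\psi(x,\bar y)$ we take $F^\bullet$ to be the database of relational atoms of $\psi$, pointed at $x$. The transformation $\mu_z$ at each variable $z$ projects to the $\{0,1\}$ coordinate of the conjunction of the unary subformulas attached to $z$; this conjunction is obtained in an earlier layer by $\trrelu(\sum x_i - (n-1))$. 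The product over $\mn{adom}(F)$ is then the indicator that $h$ witnesses $\psi$. With $\maxagg$ this yields exactly the truth value, which proves the inclusion of HML in simple $\maxagg$-DHNs.

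With $\sumagg$ we obtain the number of witnessing homomorphisms. Graded formulas $\exists^{\geq k_1}y_1\cdots\exists^{\geq k_m}y_m\bigvee_i\psi_i$ cannot be handled directly, because nested counting over tuples is not a single count. To handle them I would first pass to a normal form in which the quantifier block counts distinct tuples with injectivity constraints. One option is to expand the grades into explicit pairwise-distinct witnesses, $y_1^1,\dots,y_1^{k_1},\dots$, with inequalities between them. Inequalities are not CQ atoms, so I would use Theorem~\ref{lovasz homomorphism count implies embedding count}, in its weighted version, to express injective-homomorphism counts as linear combinations of homomorphism counts of quotient patterns. These linear combinations are realized by the combination function, followed by a threshold $\trrelu(c-k+1)$. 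The disjunction $\bigvee_i\psi_i$ is handled by inclusion--exclusion over conjunctions of the $\psi_i$, which again yields a linear combination of homomorphism counts. Quotient patterns of connected patterns remain connected, so the construction stays connected and delivers the connected direction of Item 3.

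\textbf{DHN to logic.} For $\maxagg$ (Item 1), I would show by induction over layers that every embedding coordinate takes finitely many values across all databases. The base embedding is constant; $\max$ over finitely many products is again one of finitely many values; and $\mu$ and $\com$ map finite sets to finite sets. Each level set $\lambda^i(v)=c$ is then HML-definable. A $\max$ aggregate equals $c$ iff some homomorphism achieves a product equal to $c$ and none achieves a larger one. This condition is a finite Boolean combination of CQ formulas $\exists\bar y\,(F\wedge\bigwedge_z\theta_{c_z}(z))$ ranging over tuples $(c_z)$ with the appropriate product. The classifier is a finite union of level sets.

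For Item 3, the converse uses bounded degree $\Delta$ together with connectedness. The patterns are connected, so every homomorphism image from $v$ lies in a bounded radius. Hence the number of homomorphisms, and therefore the set of sums, is bounded by a function of $\Delta$ and the pattern. This again keeps the set of reachable values finite by induction. Each value of a $\sumagg$ aggregate corresponds to a bounded multiset of tuple-types, which is expressible with counting quantifiers over the pattern variables in GHML, using a disjunction over the finitely many multisets.

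The main obstacle I expect is the $\sumagg$ compilation of graded quantifiers in Item 2. The nested $\exists^{\geq k_j}$ semantics (``at least $k_1$ values $y_1$ such that there are at least $k_2$ values $y_2$\dots'') is not a plain tuple count. I would handle it by compiling it layer-wise, one grade at a time. Under that scheme, the innermost quantified variables get counted per fixed prefix. This requires that the prefix variables be representable as the pointed element or reduced via intermediate values, and I would check this against the precise GHML semantics first.
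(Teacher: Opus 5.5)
Your Item~1 plan is essentially the paper's proof. One caveat: since $\maxagg(\emptyset)=0$, your level-set condition must also cover the case that no homomorphism exists. The paper handles this by guessing the set $X_i$ of realized type assignments.

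For graded quantifiers, the layer-wise scheme in your last paragraph fails. A DHN stores features only of single values. In $\exists^{\geq 2}y_1\exists^{\geq 2}y_2\,(E(x,y_1)\wedge E(x,y_2)\wedge E(y_1,y_2))$, the inner count is a property of the pair $(x,y_1)$ and cannot be precomputed at $y_1$. The expansion you list as ``one option'' already removes this obstacle, provided the inequalities are imposed only between siblings, $y_j^{\bar\ell,i}\neq y_j^{\bar\ell,i'}$. Full injectivity, or distinctness from $x$, changes the semantics: for instance, two $y_1$-witnesses may share their $y_2$-witness. The construction then runs as follows.
\begin{enumerate}
\item Distribute $\bigwedge_{\text{paths}}\bigvee_i\psi_i$ into a disjunction of conjunctive patterns.
\item For each such pattern, the number of homomorphisms respecting the sibling inequalities is $\sum_P|\Inj(F/P,D^v,\Phi/P)|$, summed over partitions $P$ compatible with the inequalities (with $\Phi$ the unary conditions, merged along $P$).
\item By Theorem~\ref{lovasz homomorphism count implies embedding count}, this is an integer combination of weighted homomorphism counts of quotient patterns. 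A simple $\sumagg$-DHN evaluates it and tests $\geq 1$ with $\trrelu$.
\end{enumerate}
Done this way, your route is a legitimate alternative to the paper's. The paper derives Item~2 from Lemma~\ref{lem:UQAFOUNFOC} and Theorem~\ref{thm:DENcombi}: counting becomes inequalities in UQAFO, then $\sumagg$-DENs, then Lov\'asz back to homomorphisms. Both routes use the same two ingredients. Your direct version additionally yields connected GHML $\subseteq$ connected simple $\sumagg$-DHNs with no degree bound. The paper instead passes through GHML$^-$ (Lemmas~\ref{lem:GHMLvsGHMLminus} and~\ref{thm:UNFOC-to-sum-DHN-uniform}), and needs bounded degree there only because GHML$^-$ cannot itself evaluate such integer combinations.

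The second gap is in the converse of Item~3. Your level-set formulas say there are exactly $n_\tau$ homomorphisms of type $\tau$, so they count \emph{tuples} $\bar y$. That is GHML$^-$, not GHML. GHML's prefix $\exists^{\geq k_1}y_1\cdots\exists^{\geq k_m}y_m$ counts one variable at a time, which is the same mismatch you noticed in the other direction. You therefore still need to translate tuple counting into nested counting. The paper does this in Point~1 of Lemma~\ref{lem:GHMLvsGHMLminus} via integer partitions:
$\exists^{\geq n}(y_0,\bar y')\,\psi\equiv\bigvee_{s\in S_n}\bigwedge_{m\le n}\exists^{\geq \#_{\geq m}(s)}y_0\,\exists^{\geq m}\bar y'\,\psi$,
applied recursively, where $S_n$ is the set of integer partitions of $n$ and $\#_{\geq m}(s)$ is the number of parts of $s$ that are at least $m$. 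Connectedness must be preserved along the way. Once this step is added, your layer-wise induction is a fine substitute for the paper's neighbourhood-type argument in Lemma~\ref{lem:DHNtoGHML}. That induction relies on a connected pattern having boundedly many homomorphisms on degree-bounded databases, so only finitely many vectors are reachable.
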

%
%
The connectedness assumption in Point~3 of 
the theorem
is related to an open problem concerning the expressive power of GNNs. Namely, it can  be proved
along the lines of
 \cite{Barcelo2020} that 
 on bounded degree graphs,
 the expressive power of 
$\sumagg$-GNNs without global readout is exactly that of  graded modal logic. 
With global readout, by contrast, there is no known logic that exactly matches the expressive power of $\sumagg$-GNNs on bounded degree graphs; see Hauke and Walega~\cite{DBLP:conf/aaai/HaukeW26} for the case without a bounded degree assumption. To avoid this open problem, we impose connectedness.
If the constant degree bound in Point~3 is strengthened to a constant bound on the size of the database, which amounts to considering non-uniform expressive power, then the statement remains true even without the  `connected' qualifier, our proofs still apply. UNFOC is not explicitly mentioned in Point~3 because it is awkward to define a connectedness condition for it. However, the resulting logic would have exactly the same expressive power as connected GHML.

Point~3 ceases to hold when we
drop the bounded degree assumption. 
In a sense, this is well-known: 
even GNNs with $\sumagg$ aggregation can express properties that do not fall within FO such as ``a graph vertex has more red than blue neighbors,'' and thus they also do not fall within UNFOC. What is more surprising is that even when restricted to properties expressible in FO, the expressive power of connected $\sumagg$-DHNs exceeds that of UNFOC.
\begin{restatable}{proposition}{sumdhnnotinunfocuniform}
    \label{thm:sum-DHN-not-in-UNFOC-uniform}
    There is a connected $\sumagg$-DHN that expresses an FO property and is not equivalent to a UNFOC formula.
\end{restatable}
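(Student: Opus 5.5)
The plan is to exhibit a single FO property that connected $\sumagg$-DHNs can express through an embedding count, and then show by an invariance argument that no UNFOC formula defines it. For the property I take, over the graph schema,
\[
\varphi(x) \;=\; \exists y\,\exists z\,\bigl(E(x,y)\wedge E(x,z)\wedge y\neq z \wedge \neg E(y,z)\wedge\neg E(z,y)\bigr),
\]
which says that $x$ has two distinct neighbours that are non-adjacent. This is the pointed induced path $y - x - z$. The formula is clearly in FO. It violates the unary negation discipline, since it negates a binary atom and uses inequality, and it is the kind of property UNFOC should not capture.

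\textbf{Expressibility.} Let $P^x$ be the pointed database consisting of the path $y-x-z$ with only the edges $E(x,y),E(x,z)$ (add both directions if we work with undirected graphs). Then $D^x\models\varphi$ iff the number of embeddings of $P^x$ into $D^x$ is positive. By Theorem~\ref{lovasz homomorphism count implies embedding count}, this number is a fixed integer linear combination of homomorphism counts $|\Hom(F^x,D^x)|$ with $|\mn{adom}(F)|\le 3$. I will check that only connected $F$ are needed. In the inclusion–exclusion proof, the relevant $F$ arise from $P$ by adding edges on its vertex set and by taking quotients, and both operations preserve connectedness of a connected pattern. Each such count is a $\sumagg$ homomorphism query with all $\mu_v$ constantly $1$ and input dimension $0$. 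So a one-layer connected $\sumagg$-DHN computes this vector of counts. Its combination function forms the linear combination, and its classification function thresholds at $\ge 1$ (the value is an integer).

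\textbf{Non-expressibility.} By Lemma~\ref{lem:twocombined} it suffices to consider GHML. Fix a GHML formula $\psi$, and let $m$ bound the number of variables in its CQ blocks and $c$ bound its counting thresholds. Let $D_1^x$ be $x$ joined to every vertex of a clique $K_N$, and let $D_2^x$ be the same graph with one edge $ab$ of the clique removed. Then $D_1^x\not\models\varphi$ and $D_2^x\models\varphi$. I will show by induction on the structure of $\psi$ that, for $N\gg m\cdot c$, every subformula $\chi$ has the same truth value at $x$ in both databases, and likewise at all clique vertices, including $a$ and $b$. The key point is that every CQ block with $\le m$ variables, whose unary constraints are uniform over the clique vertices by the induction hypothesis, has the same answer in both databases. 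A homomorphism into $D_1$ that uses the edge $ab$ can be re-routed: we redirect the image vertices $a$ or $b$ to fresh clique vertices. Since $N$ is large and homomorphisms need not be injective, there are enough fresh vertices to keep at least $c$ witnesses for each counting quantifier. Conversely, every homomorphism into $D_2$ is already one into $D_1$. Disjunctions and unary negation are immediate from the hypothesis, so $\psi$ cannot distinguish $D_1^x$ from $D_2^x$.

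\textbf{Main obstacle.} I expect the main obstacle to be the counting part of this induction. The quantifier prefix $\exists^{\ge k_1}y_1\cdots\exists^{\ge k_m}y_m$ counts witnesses coordinate-wise, so I must make sure that the re-routing step never merges witnesses in a way that drops a count below its threshold. I would handle this with an explicit injection argument: choose the replacement vertices among the $N-m$ unused clique vertices, and take $N\ge m\cdot\max(c,1)+m+2$ so that enough distinct replacements always exist.
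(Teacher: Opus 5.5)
Your proposal is correct, but it takes a genuinely different route from the paper in all three ingredients.

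\textbf{How the two routes differ.} The paper uses local transitivity. It expresses this property with a simple connected $\sumagg$-DHN that directly compares the homomorphism counts of the directed 2-path and of the transitive triangle, so no appeal to Theorem~\ref{lovasz homomorphism count implies embedding count} is needed. For non-expressibility it introduces GHML Ehrenfeucht--Fra\"iss\'e games. It then builds a fairly delicate, interval-spacing Duplicator strategy on a transitive tournament versus the same tournament with the edge $(v_{m+1},v_{m+3})$ removed. You instead use the induced cherry (two non-adjacent neighbours of $x$) and obtain the DHN from the embedding-to-homomorphism conversion. Your check that quotients and fact-additions keep the patterns connected is exactly what is needed there. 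You then prove non-expressibility by a direct induction on $K_{N+1}$ versus $K_{N+1}$ minus an edge.

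\textbf{What each route buys.} Your negative side is considerably more elementary. Vertex-transitivity of $D_1$ makes every lower-depth unary subformula constant, so what remains is a positive nested counting formula. Such a formula is preserved along the injective homomorphisms $D_2 - b \hookrightarrow D_2 \hookrightarrow D_1$ (use $D_2 - a$ when evaluating at $b$). Moreover, $D_2 - b \cong K_N$ and $D_1 \cong K_{N+1}$ agree on it once $N \geq m+c+1$, because its truth depends only on the equality type of the tuple of chosen values. This disposes of your ``main obstacle'' without any re-routing bookkeeping. The paper's route buys a reusable game characterisation, and an example it reuses for Lemma~\ref{lem:UQAFOUNFOC}, Point~4 of Theorem~\ref{thm:DENcombi}, and the experiments. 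Your property is itself an EML (hence UQAFO) formula that GHML cannot express, so it would serve in those places as well.

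\textbf{One detail to fix.} Over arbitrary directed graphs, an embedding of the one-directional $P^x$ also forbids $E(y,x)$, $E(z,x)$ and loops. So $\varphi$ is not equivalent to ``some embedding of $P^x$ exists''. There are two easy fixes:
\begin{itemize}
\item Sum the embedding counts over all completions of $P$ on $\{x,y,z\}$ that keep $y,z$ non-adjacent. These patterns are still connected.
\item Take as your DHN ``the embedding count of the symmetric $P^x$ is $\geq 1$''. This is itself an FO property and it separates your symmetric, irreflexive test graphs.
\end{itemize}
In either case, make sure the pattern and the test graphs use the same edge convention.
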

A concrete such property is \emph{local transitivity} of vertices $v$ in a directed graph $G$, meaning that
for all vertices $u_1$, $u_2$:
$(v,u_1),(u_1,u_2) \in E(G)$
implies $(v,u_2) \in E(G)$.

By virtue of Theorem~\ref{thm:combi}, we may  use logics in place of DHNs to analyze static analysis problems for DHNs. Especially in the 
case of undecidability results and lower complexity bounds, this is much more convenient than working directly with DHNs.
The following theorem summarizes our results on the static analysis of 
$\maxagg$-DHNs and
$\sumagg$-DHNs. 
\begin{theorem}
\label{thm:staticanalysiscombi}
\phantom{}
\begin{enumerate}
\item  The emptiness problem and the subsumption problem for $\maxagg$-DHNs are decidable.

\item  The emptiness problem and the subsumption problem for  connected $\sumagg$-DHNs on databases of bounded degree are decidable and {\sc coNExpTime}-complete.

\item   The emptiness problem and the subsumption problem for   $\sumagg$-DHNs are undecidable.
This already holds (a)~for connected simple $\sumagg$-DHNs on graphs and (b)~for
 (not necessarily connected)  simple $\sumagg$-DHNs on graphs of bounded degree.

\end{enumerate}
\end{theorem}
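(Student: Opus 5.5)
Each of the three points will go through Theorem~\ref{thm:combi}, moving from DHNs to a logic where the decision procedures and hardness results already exist. Because emptiness and subsumption reduce to each other in polynomial time, it suffices to handle emptiness in each case. Also, once translations exist in both directions, subsumption for the networks is the same as entailment for the corresponding formulas.

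\textbf{Point~1.} We use Point~1 of Theorem~\ref{thm:combi}. The translation must be effective for FNNs with rational coefficients. Starting from a $\maxagg$-DHN, I will compute an equivalent UNFO formula as follows.
\begin{enumerate}
\item Under $\maxagg$ aggregation, and after the first layer, only finitely many embedding values occur on any input. The first layer sees only empty vectors, so its features are constant or depend only on whether a homomorphism exists. By induction on the layers, each layer therefore takes values in a finite set that can be computed from the rational FNNs.
\item Each feature value is then definable by an HML formula, namely a Boolean combination of CQs whose atoms are annotated with formulas for the previous-layer values.
\item Satisfiability of UNFO is decidable, with the finite model property, by \cite{DBLP:journals/corr/SegoufinC13}, and this gives decidability.
\end{enumerate}
The main care needed is to check that the finite-value argument really is effective.

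\textbf{Point~2.} For the upper bound, I would translate a connected $\sumagg$-DHN into connected GHML over degree-$\le b$ databases. Bounded degree is what makes this finite: the number of homomorphisms of a connected pattern into a degree-$b$ database is bounded by a function of $b$ and the pattern size. So each layer again takes finitely many values, which can be enumerated, and each value is definable with counting quantifiers up to that bound.

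Decidability then requires deciding satisfiability of connected GHML/UNFOC over bounded-degree databases in {\sc coNExpTime} for the complement, i.e.\ {\sc NExpTime} satisfiability. I would try to establish a bounded-degree tree-like model property. Connected UNFOC formulas are invariant under a suitable guarded/CQ-unravelling that preserves the degree bound up to a factor. On top of that, I would guess a type-based witness of exponential size.

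Hardness would come from UNFO satisfiability, which is already {\sc 2ExpTime}-hard in general. That conflicts with the target bound, which suggests the DHN encoding is succinct in a different way. So for the lower bound I would instead reduce from exponential tiling. Binary counters can be encoded in the rational FNN weights, which lets the classifier compare counts against exponentially large thresholds compactly.

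I expect the upper bound to be the main obstacle. The hard part is the sizes: the feature values are exponential in the binary-encoded weights, and the model construction has to respect both the counting thresholds and the degree bound.

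\textbf{Point~3.} For undecidability I would reduce from Hilbert's tenth problem, or equivalently from the emptiness problem for linear integer constraints with multiplication. Sum aggregation over homomorphisms of a pattern with two pendant branches computes products of neighbour counts. Concretely, $\#\mathrm{red}(v)\cdot\#\mathrm{blue}(v)$ is obtained as the homomorphism count of a star with one red leaf and one blue leaf. Using several layers, polynomial equalities between counts attached to $v$ can be tested with $\trrelu$ (via $1-\trrelu(|x|)$ built from two ReLUs). Encoding the variables as counts of colored neighbours gives undecidability for connected simple DHNs on colored graphs, which proves~(a).

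For~(b), bounded degree kills the local counts. Instead I would use disconnected patterns, which act as global readout: the variables become global counts of colored vertices. I would then use global readout to verify a grid or counter-machine encoding on degree-bounded graphs, checking local consistency with connected patterns and global totals with disconnected ones. Equivalently, one can reduce Hilbert's tenth problem with global products computed by disconnected two-component patterns.
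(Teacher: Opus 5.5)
Your Point~1 is essentially the paper's argument: an effective translation to HML using the finiteness of the set of feature vectors under $\maxagg$ (a computable finite superset suffices), followed by decidability and the finite model property of UNFO. Point~3 is correct but takes a different route, discussed at the end. The real problem is Point~2, where neither bound is actually established.

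\textbf{Upper bound in Point~2.} Going through connected GHML cannot give {\sc coNExpTime}.
\begin{enumerate}
\item The DHN-to-GHML translation is not polynomial. It enumerates feature values, equivalently bounded-radius neighbourhood types, and there can be doubly exponentially many of these. So even an {\sc NExpTime} satisfiability test for the resulting formula would not give an {\sc NExpTime} test for the network.
\item The bounded-degree tree-like model property you hope for is unsubstantiated. Connected GHML counts homomorphisms of cyclic patterns (e.g.\ the number of triangles at $x$), and unravelling does not preserve such counts. Without the degree bound, finite satisfiability of UNFOC is even undecidable, so there is no off-the-shelf procedure to lean on.
\end{enumerate}
The missing idea is plain locality. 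If every query is connected and has at most $k$ values, then $\lambda^\ell_{\calN,D}(v)$ depends only on the restriction of $D$ to the values within distance $\ell k$ of $v$. With degree at most $B$, this ball has $B^{O(\ell k)}$ values and facts, which is single-exponential in the input. Hence $\calN$ accepts some database iff it accepts one of this size. A nondeterministic algorithm can guess that database and evaluate $\calN$ on it directly in exponential time. Non-subsumption is handled the same way. No logic is needed for this direction, and the obstacle you anticipate about large feature values does not arise.

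\textbf{Lower bound in Point~2.} You are right that the {\sc 2ExpTime}-hardness of UNFO does not transfer, because it relies on unbounded degree. There is no conflict with the target bound, but you abandon the logical route too early.
\begin{enumerate}
\item The paper uses finite satisfiability of connected HML formulas $\varphi\wedge\neg q$, with $\varphi$ modal and $q$ a connected unary CQ. This is {\sc NExpTime}-hard already on vertex-labelled directed graphs of degree at most~4, obtained from \cite{Lutz-IJCAR08}.
\item It combines this with the polynomial translation of connected GHML$^-$ (hence HML) into connected simple $\sumagg$-DHNs.
\end{enumerate}
Your alternative, exponential tiling with binary counters in the FNN weights, omits exactly the hard part. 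A connected DHN only inspects a polynomial-radius ball around $v$, so exponentially many grid positions must be packed into that ball, for instance as leaves of a polynomial-depth tree. The real work is enforcing the constraints between positions whose coordinates are successors, and large thresholds in the weights do not provide this by themselves.

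\textbf{Point~3.} Your Hilbert's-tenth reduction is a valid and genuinely different route.
\begin{enumerate}
\item \emph{The paper's route.} It reduces finite-rectangle tiling to finite satisfiability of GHML$^-$. It uses a spypoint for~(a) and a disconnected pattern for the global $\forall x$ in~(b), then translates to simple $\sumagg$-DHNs. This proves more: already the arithmetic-free counting logic GHML$^-$ is undecidable.
\item \emph{Your route.} It is shorter and isolates the real source of undecidability, namely products of counts computed by multi-leaf patterns. That is exactly what $\sumagg$-GNNs lack.
\end{enumerate}
Two details need fixing. First, in a simple DHN every transformation and combination output is truncated to $[0,1]$, so counts cannot be carried to later layers. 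Each monomial must be a single query, and the $\trrelu$ zero-test must be applied to these query outputs in the combination of the same layer. For~(b) the pattern is therefore the root plus one isolated coloured vertex per variable occurrence, which generally means more than two components. Iterated two-component patterns would not work. The witnesses then have degree~$1$. Second, if ``graphs'' means the plain graph schema, you must simulate colours by gadgets, as the paper also does.
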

The precise computational complexity of the problems in Point~1 of
Theorem~\ref{thm:staticanalysiscombi}
remains open. We note, however,
that the translation from UNFO
and HML to $\maxagg$-DHNs is polynomial (while the converse translation is not), and thus the  {\sc 2ExpTime}-hardness of (finite) satisfiability in UNFO transfers to the emptiness and subsumption problems of $\maxagg$-DHNs. The reader might want to compare Point~2 
of Theorem~\ref{thm:staticanalysiscombi} with the result from \cite{DBLP:conf/icalp/BenediktLMT24}
 that for simple $\sumagg$-GNNs, defined in exact analogy to our simple $\sumagg$-DHNs, emptiness and subsumption are decidable.

Turning towards mean aggregation, 
    we define an extension RHML of HML with ratio quantifiers, in analogy with the extension of modal logic by such quantifiers considered in
    \cite{AAAIMean}. Unlike the 
    logics considered up to this point,
    RHML is \emph{not} a fragment of first-order logic, and neither of monadic second-order logic. We refrain
     from considering an analogous extension of UNFO because it is not clear how it can be defined.
    %
%
   %
    %
%
    An RHML formula is defined according to the rule
    \[\varphi(x) \coloncolonequals \neg\varphi(x)\mid \varphi(x)\lor \varphi(x)\mid \exists _{\geq t}\, \bar{y} \, (\mu(x,\bar{y}), \nu(x,\bar{y}))\mid \exists_{>t}\, \bar{y}\, (\mu(x,\bar{y}),\nu(x,\bar{y})),\]
    where \(t\in \mathbb{Q}\), \(\mu(x,\bar{y})\) is a conjunction of unary formulas of the form \(\varphi(z)\), and \(\nu(x,\bar{y})\) is a conjunction of relational atoms \(R(\bar{z})\).
    In particular, unary atoms of the form \(P(z)\) may occur in both \(\mu\) and \(\nu\).
    %
  The semantics of the RHML formula
  $
  \xi = \exists_{\geq t}\, \bar{y}\,(\mu(x,\bar{y}),\nu(x,\bar{y}))
  $, with  $|\bar y|=n$,
  is as follows: 
  \[
  D \models \xi(v)
  \text{ if and only if }
  \frac{|\{\overline{w}\in \mn{adom}(D)^n\mid D\models \mu(v,\overline{w})\land \nu(v,\overline{w})\}|}{|\{\overline{w}\in \mn{adom}(D)^n\mid D\models\nu(v,\overline{w})\}|}\geq t.
  \]
As a special case, we  put $D \models \xi(v)$ if the denominator of the above fraction is \(0\).
The semantics of the \(\exists_{>t}\) quantifier is defined in the same way, except that
such formulas are not satisfied when the denominator is \(0\). As an example, we can define in RHML over the graph schema the
set of all vertices $v$ such that the number of triangles rooted at $v$ whose vertices all have a reflexive loop is at least 50\% of the total number of triangles rooted at $v$:
\[
\exists_{\geq 0.5} \, y z \, \big (
(E(x,x) \wedge E(y,y) \wedge E(z,z)),
(E(x,y) \wedge E(y,z) \wedge E(z,x))
\big ).
\]

%
\begin{theorem}\label{thm:meancombinew}~\\[-4mm]
\begin{enumerate}
    \item \meanagg-DHNs are at least as expressive as RHML.

    \item   On databases of bounded degree, the following have the same expressive power: connected \meanagg-DHNs, connected simple \meanagg-DHNs, and connected RHML.
\end{enumerate}
\end{theorem}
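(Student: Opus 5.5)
For Point~1, I will translate RHML into \meanagg-DHNs by induction on the formula. Each subformula $\varphi$ is assigned one dimension of the embedding. After the layer corresponding to the nesting depth of $\varphi$, that component should hold $1$ at values satisfying $\varphi$ and $0$ otherwise. Earlier components are passed along by the combination function; a homomorphism query on the single-value pattern with identity transformation copies the current embedding. Negation $1-x$ and disjunction $\trrelu(x+y)$ are handled by the combination function, exactly as for GNNs.

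The key case is $\xi=\exists_{\geq t}\,\bar y\,(\mu,\nu)$. I take the pattern $F^x$ whose facts are the relational atoms of $\nu$, pointed at $x$. Variables of $\bar y$ that do not occur in $\nu$ range over all of $\mn{adom}(D)$; they are added as isolated values of $F$, which gives a disconnected pattern and is permitted in Point~1. I use two homomorphism queries with \meanagg. The first has transformations $\mu_z$ equal to the product of the indicator components of the unary formulas that $\mu$ places on $z$; since these are $0/1$ values, their product is a conjunction. The mean then equals the fraction of homomorphisms $h\in\Hom(F^v,D^v)$ with $D\models\mu(v,h(\bar y))$, which is exactly the ratio in the semantics, because homomorphisms from $F^v$ correspond bijectively to tuples $\bar w$ with $D\models\nu(v,\bar w)$. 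The second query uses constant transformations equal to $1$ and returns $1$ if the denominator is nonzero and $0$ by the convention $\mean(\emptyset)=0$. The combination function outputs $1$ iff (ratio $\ge t$ or denominator $=0$), and for $\exists_{>t}$ iff ratio $>t$ and denominator $\neq 0$. Thresholding is a step function; with arbitrary FNNs this is immediate. For the simple version I need a rounding argument like the GNN case. That requires a gap: the ratio is rational but its denominator can be arbitrarily large, so I will flag this and rely on the freedom of general combination functions in Point~1.

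For Point~2, the direction from connected RHML to connected \meanagg-DHNs is as above; $\mu$ never contributes relational atoms, so the connectedness conditions transfer. Simplicity now follows because bounded degree $\Delta$ and connected patterns bound the number of homomorphisms by some $N$ depending on $F$ and $\Delta$. Every ratio is therefore a fraction with denominator at most $N$, so a ratio $\ge t$ versus $<t$ is separated by a gap $\ge 1/N^2$. A $\trrelu$ combination without hidden layers, rescaled by a large rational factor, then realises the threshold exactly, and intermediate layers with fresh dimensions absorb the necessary compositions.

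The converse, from connected \meanagg-DHNs to connected RHML on bounded degree, is the main obstacle. I will follow the strategy presumably used for Point~3 of Theorem~\ref{thm:combi}. By induction on layers, I show that each embedding component takes only finitely many values on databases of degree at most $\Delta$, and that each value is RHML-definable. For a query $(F^\bullet,\mu,\meanagg)$, $|\Hom(F^v,D^v)|\le N$, and each product $\prod_z\mu_z(\lambda(h(z)))$ ranges over a finite set $P$ of values. The mean is determined by the exact counts $c_p$ of homomorphisms producing each $p\in P$, together with the total count $c$. I must express ``$c_p=a$ and $c=b$'' in RHML for $a\le b\le N$. The total $c$ is not directly available, since RHML only sees ratios. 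I will recover it using fractions determined by types: the partition of homomorphisms according to the tuple of inductively defined unary types $\tau$ of the images. Each conjunction $\mu_\tau$ is definable, and $\exists_{\ge r}$ combined with $\exists_{>r'}$ pins the ratio of each $\tau$ exactly. The mean is a function of these ratios alone, since $\sum_p p\,c_p/c$ depends only on the fractions $c_p/c$. So I do not need $c$ itself, only the distribution over finitely many types with ratio denominators at most $N$, a finite case disjunction; the $c=0$ case is captured by the convention of $\exists_{>t}$. Finally, the classification function is a finite disjunction over attainable final values. The delicate point I expect to verify carefully is that ``finitely many values'' genuinely holds layer by layer, which needs the degree bound and connectedness at every layer.
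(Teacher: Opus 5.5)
Your proposal is correct and follows essentially the same route as the paper. In the forward direction, a \meanagg query with conjunction-indicator transformations computes the ratio, a constant-$1$ query detects an empty denominator, and a thresholding combination function decides the quantifier; this function is discontinuous in Point~1 and realizable by \trrelu in Point~2, because bounded degree and connectedness leave only finitely many attainable ratios. In the converse direction, you pin the distribution of homomorphisms over tuples of previous-layer types (embedding vectors) with $\exists_{\geq r}$/$\exists_{>r}$ combinations, guarded by $\exists_{>0}(\top,\nu)$ for the empty case, exactly as in the paper. One small slip: the separation gap is not $1/N^2$ in general, since it depends on the denominator of $t$ (e.g.\ $1/(qN)$ if $t=p/q$); but because only finitely many ratios are attainable, a positive gap exists, and that is all you need.
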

Point~1 does not mention simple 
\meanagg-DHNs (only) because we 
use a non-continuous  activation function (Heavyside), cf.\ \cite{AAAIMean}.









\section{Deep Embedding Networks and  Unary Quantifier Alternation}
\label{sect:DENs}


Section~\ref{sect:DHNs} leaves open the question of a logic that  characterizes the expressive power of $\sumagg$-DHNs on databases of
unrestricted degree. 
In this section, we 
establish a lower bound.
It
is given by a novel and significant 
strengthening UQAFO of UNFOC 
that replaces unary negation with unary quantifier alternation. 
In the presentation as a modal logic,
the transition from UNFO to UQAFO
corresponds to the replacement
of homomorphisms by embeddings.
This 
leads us to also introduce \emph{deep embedding networks (DENs)}, 
defined exactly like DHNs, but with homomorphisms replaced by embeddings; see \cite{Luo2026} for a related proposal.
From now on, we assume that it is understood what we mean by an  \emph{embedding query}, a \emph{DEN layer}, and a \emph{DEN}. It is an easy
consequence of Theorem~\ref{lovasz homomorphism count implies embedding count} that $\sumagg$-DHNs and $\sumagg$-DENs have the same expressive power. 
Interestingly, although motivated by $\sumagg$ aggregation, UQAFO turns out to have exactly the same expressive power as $\maxagg$-DENs.

The formulas over schema \Sbf of the \emph{unary quantifier alternation fragment of
first-order logic (UQAFO)} 
are the formulas $\varphi(\bar x)$ formed according to the following syntax rule:
\[
\begin{array}{r@{\;}c@{\;}l}
\varphi(\bar{x}) &::=& \varphi_\exists(\bar{x})\mid \varphi_\forall(\bar{x}) \mid \varphi(\bar{x}) \circ \varphi(\bar{x}) 
\\[1mm]
\varphi_\exists(\bar{x}) &::=& R(\bar{x})\mid \neg R(\bar{x})\mid \varphi_\exists(\bar{x})\circ \varphi_\exists(\bar{x})
\mid x_i=x_j\mid x_i\neq x_j\mid \exists\bar{y}\,\varphi_\exists(\bar{x}, \bar{y})\mid \varphi_\forall(x_i) \\[1mm]
    \varphi_\forall(\bar{x}) &::=& R(\bar{x})\mid \neg R(\bar{x})\mid \varphi_\forall(\bar{x})\circ \varphi_\forall(\bar{x})\mid 
    x_i=x_j\mid x_i\neq x_j\mid \forall\bar{y}\,\varphi_\forall(\bar{x}, \bar{y})\mid \varphi_\exists(x_i)
  \end{array}
  \]
  where ``$\circ$'' stands for either the ``$\vee$'' or the ``$\wedge$'' symbols. 
   It is easy to see that
   quantifier alternation is only possible for unary formulas: if $\exists \bar x \, \psi$ is a UQAFO formula and $\forall \bar y \, \vartheta$ is a subformula of
   $\psi$ that is not nested within another
   quantifier in $\psi$, then $\forall \bar y \, \vartheta$ has at most one free variable, and likewise with $\exists$ and
   $\forall$ swapped.
We also remark
  that  because UQAFO admits inequalities, it does not become more expressive when  counting quantifiers are added. 
  For example,
  the formula  \(\exists x \exists y\, (\forall z\, R(x,z) \land \forall z\,S(y,z))\) is a UQAFO formula while the formula
  \(\exists x \exists y \forall z \, (E(x,z) \wedge E(z,y)) \)
  is not. Like for UNFO, we concentrate on UQAFO formulas in one free variable.
%


We next observe that UQAFO is strictly more expressive than UNFOC. The separating property is local transitivity.
\begin{restatable}{lemma}{lemUQAFOUNFOC}
\label{lem:UQAFOUNFOC}
 UQAFO is strictly more expressive than UNFOC.
\end{restatable}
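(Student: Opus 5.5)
We prove the two halves separately: first a syntactic inclusion of UNFOC in UQAFO, then a separation witnessed by local transitivity.

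\emph{Inclusion.} We translate UNFOC into UQAFO by structural induction, keeping track of whether each translated formula lies in the $\exists$-part or the $\forall$-part of the grammar.
\begin{itemize}
\item Atoms, conjunctions, disjunctions and plain existential quantification go directly into $\varphi_\exists$.
\item A counting quantifier $\exists^{\geq n} y\,\varphi(\bar x,y)$ becomes
\[
\exists y_1\cdots\exists y_n \Big(\bigwedge_{i<j} y_i\neq y_j \wedge \bigwedge_i \varphi'(\bar x,y_i)\Big),
\]
where $\varphi'$ is the translation of $\varphi$. This stays inside $\varphi_\exists$, since UQAFO allows inequalities.
\item Unary negation is the only delicate case. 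For $\neg\varphi(x_i)$ with $\varphi$ having at most one free variable, we push the negation inward to negation normal form. This dualizes $\exists$ to $\forall$, $\wedge$ to $\vee$, and negates atoms, which is allowed. A counting quantifier $\neg\exists^{\geq n}y\,\psi$ becomes
\[
\forall y_1\cdots\forall y_n\Big(\bigvee_{i<j}y_i=y_j \vee \bigvee_i \neg\psi'(y_i)\Big).
\]
\end{itemize}
The result of a negation is therefore a $\varphi_\forall$ formula in at most one free variable. It may be used inside a $\varphi_\exists$ context via the clause $\varphi_\forall(x_i)$, or as a closed formula, viewed as a degenerate unary formula.

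We then have to check that the induction stays within the grammar. Inside a negated unary formula, nested unary negations flip polarity again and become $\varphi_\exists(x_i)$ formulas inside the $\forall$-part. This works because every negation in UNFOC is applied to a formula with at most one free variable. So the induction hypothesis should state that both $\varphi$ and its negation normal form dual are translatable, and the dual lands in the opposite polarity class. Sentences (zero free variables) need a small technical step: we attach a dummy free variable $x_i$, or note that $\exists$/$\forall$ over closed subformulas can be treated as unary formulas in a fresh variable.

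\emph{Strictness.} Local transitivity is expressed in UQAFO by
\[
\forall u_1\forall u_2\,\big(\neg E(x,u_1)\vee\neg E(u_1,u_2)\vee E(x,u_2)\big),
\]
which is a $\varphi_\forall$ formula and hence in UQAFO. It remains to show that no UNFOC formula expresses it, which is the main obstacle.

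The plan is to use Lemma~\ref{lem:twocombined} to work with GHML, and to use a counting bisimulation (graded guarded-negation style) game for UNFOC, with $n$ rounds and counting bound $c$. The key weakness to exploit is that UNFOC can only negate unary properties. A formula speaking about $x$, $u_1$, $u_2$ can only assert positive existential (CQ-like, graded) patterns. It cannot assert the absence of the binary edge $E(x,u_2)$ except through a negated unary property of $u_2$ alone, which forgets $x$.

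So we construct, for each $n$ and $c$, two graphs with distinguished vertices:
\begin{itemize}
\item $G$, where $v$ is locally transitive;
\item $H$, where $w$ is not: some $u_1$ with $E(w,u_1)$ has an out-neighbour $u_2$ with no edge from $w$.
\end{itemize}
The aim is that both sides look the same locally with multiplicities. Concretely, take $u_2$ in $H$ to be a copy of a vertex that $w$ does point to, and make every relevant vertex have $c$-many copies with identical unary types, so that all graded CQ patterns rooted at the point match. A natural choice is large blow-ups, such as complete bipartite layered structures or high-girth covers, with many twins.

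We then verify that Duplicator wins via a back-and-forth argument on homomorphic images of CQs with counting. The crucial check is that any CQ with counting true at $w$ in $H$ maps injectively enough, up to $c$, into $G$ at $v$, and conversely. Unary negations are handled by matching unary types of single vertices, which coincide by construction.

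The expected hardest step is designing the pair of structures so that graded CQ counts agree, since counting homomorphisms is quite discriminating. We would try a finite-cover argument: let $H$ be a high-girth unravelling-like cover of $G$ in which some $2$-path endpoint loses its shortcut edge, while the degree and count profiles up to $c$ are preserved.
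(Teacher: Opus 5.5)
Your inclusion half is correct, and more direct than the paper's. The paper first shows that every GHML formula is equivalent to a UQAFO formula of both kinds and then appeals to Lemma~\ref{lem:twocombined}. Tracking polarity directly on UNFOC needs no such normal form: translate into the existential kind, turn each unary negation into a universal-kind formula via Lemma~\ref{lemma:uqafo_neg_exis_univ_swap}, and re-enter through the clause $\varphi_\forall(x_i)$. Your UQAFO formula for local transitivity is also fine.

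The genuine gap is the strictness half, which carries all the content. You give neither the two pointed databases, nor a precise game with its soundness lemma, nor a Duplicator strategy. Moreover, the high-girth finite-cover construction you settle on cannot work:
\begin{itemize}
\item UNFOC contains every CQ rooted at $x$. Since $w$ is not locally transitive, it has a $2$-path, so $v$ must have one too.
\item Local transitivity of $v$ closes that path into a triangle. Hence $\exists y\,z\,(E(x,y)\wedge E(y,z)\wedge E(x,z))$ holds at $v$, and must also hold at $w$.
\item A high-girth cover or unravelling has no loops and no short cycles, hence no homomorphic image of this pattern at all.
\end{itemize}
Short cyclic patterns must survive on both sides, so the defect has to be hidden in some other way.

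The paper does this with a GHML game of $d$ rounds, width $n$ and counting bound $c$. In each of $n$ steps, Spoiler offers at most $c$ distinct values, Duplicator answers with as many distinct values, and Spoiler keeps one pair. The picked tuple together with the current position must then admit a homomorphism, and the next round starts from a picked pair. The paper proves that a Duplicator win yields agreement on all GHML formulas within these bounds.

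It plays this game on the transitive tournament $F_1^m$ on $v_1,\dots,v_{2m+3}$ against $F_2^m=F_1^m\setminus\{E(v_{m+1},v_{m+3})\}$, both pointed at $v_{m+1}$. Edges are determined by index order, so every order-preserving answer gives the required homomorphism unless Duplicator produces both $v_{m+1}$ and $v_{m+3}$. Duplicator copies indices near the two ends and otherwise answers inside buffer zones. It keeps gaps of at least $c\,2^{n+1-k}-c+1$ after $k$ picks, which works once $m\ge d(c\,2^{n+1}-c+1)$.

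Your layered blow-up idea can also be completed, but only if the point's own layer is multiplied as well. If $w$ is the unique source, then $\neg\exists z\,(E(z,y)\wedge\neg\exists z'\,E(z',z))$ holds at the defective $u_2$ and at no $2$-step successor of $v$, so the two sides are distinguished. A working version:
\begin{itemize}
\item Take layers $L_0,L_1,L_2$, each of size at least $n+c+2$, with all edges in $L_0\times L_1$, $L_1\times L_2$ and $L_0\times L_2$.
\item Obtain $H$ by deleting one edge $(w,u_2^*)$.
\item Duplicator answers layer-preservingly and injectively, and never with $w$ or $u_2^*$ except when returning the current position. The deleted edge is then never needed.
\end{itemize}
Either construction, together with its Duplicator argument, is exactly what your proposal leaves open.
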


We next introduce a modal logic-style presentation of UQAFO, in the same spirit
in which HML is an alternative presentation of UNFO. We define \emph{embedding modal logic (EML)} exactly like HML except that
in formulas of the form $\exists \bar y \, \psi(x, \bar y)$, the
subformula
$\psi(x, \bar y)$ may now also contain negated relational atoms
$\neg R(\bar z)$ and inequalities $z_1 \neq z_2$. Just like UQAFO, EML does not become
more expressive when quantifiers $\exists \bar y \, \psi(x, \bar y)$ are replaced
with a counting version. More details are in the appendix.
We remark that EML is closely related to a GNN-related modal logic GML$(\mathcal{T})$ recently introduced in \cite{Luo2026}. 

\noindent
\Needspace*{10\baselineskip}
\begin{theorem}
\label{thm:DENcombi}
~\\[-4mm]
    \begin{enumerate}
        \item UQAFO, EML, and $\maxagg$-DENs have the same expressive power.
        \item $\sumagg$-DENs are strictly more expressive than UQAFO.
        \item $\sumagg$-DENs and $\sumagg$-DHNs have the same expressive power.
        \item $\maxagg$-DENs are more expressive than $\maxagg$-DHNs.
    \end{enumerate}
\end{theorem}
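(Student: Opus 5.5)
I would prove the four points separately, reusing the techniques that (by Theorem~\ref{thm:combi}) relate $\maxagg$-DHNs to HML and UNFO, but with homomorphisms replaced by embeddings throughout.

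\textbf{Point~1.} I would first show that UQAFO and EML have the same expressive power, mimicking the UNFO/HML normal form of \cite{DBLP:journals/corr/SegoufinC13}.

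For the direction from EML to UQAFO, observe the following. A formula $\exists \bar y\,\psi$, where $\psi$ contains relational atoms, negated atoms, inequalities and nested unary EML formulas, is already in $\varphi_\exists$ form. Negation of a unary formula is pushed inward by swapping $\varphi_\exists$ and $\varphi_\forall$; this is allowed because unary subformulas may switch sides.

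For the converse, I would proceed by induction on the $\varphi_\exists/\varphi_\forall$ nesting. A $\varphi_\exists$-formula in one free variable is put in prenex disjunctive normal form. Each disjunct is an existentially quantified conjunction of (negated) atoms, (in)equalities, and unary subformulas; this is exactly an EML diamond. A $\varphi_\forall$-formula is handled as the negation of a $\varphi_\exists$-formula.

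Next I would translate EML into $\maxagg$-DENs, by induction, one layer per nesting depth. For a diamond $\exists\bar y\,\psi$, I would first add all atoms of $\psi$ to the pattern. Enumerating the ways to extend the pattern to the atoms not forced by $\psi$ (and identifications of variables) turns the diamond into a finite disjunction of embedding queries. Negated atoms and inequalities then hold automatically under an embedding. The unary subformulas are already computed as $0/1$ features, and $\mu_v$ simply selects them. The product then equals $1$ iff some embedding satisfies them, and $\max$ detects this. Boolean combinations are handled by $\trrelu$ FNNs.

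For the direction from $\maxagg$-DENs to EML, I would follow the $\maxagg$-DHN-to-UNFO argument. On a fixed embedding query, $\max$ of products takes only finitely many values when the inputs take finitely many values. Inductively, each layer therefore has finite image, each attained value is definable by an EML formula, and $\max$ over embeddings is expressed by a finite disjunction of diamonds over combinations of feature values.

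The \emph{finite image} argument is the step I expect to be delicate. Input features are initially constant, and each layer maps a finite set of inputs to a finite set of outputs because $\max$ of a finite-valued multiset is finite-valued. I therefore expect it to go through.

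\textbf{Point~3.} This follows from Theorem~\ref{lovasz homomorphism count implies embedding count}, applied in weighted form. A $\sumagg$ homomorphism query with weights $\prod_v\mu_v$ decomposes by the kernel of $h$: it is a sum over quotients $F'$ of $F$ and over embeddings of databases extending $F'$. Each of these is an embedding query whose transformation is the product of the merged $\mu_v$. This gives DHN$\to$DEN. The converse uses the swapped direction, a weighted Möbius inversion over the same finite lattice of quotients and extensions.

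Everything remains finite, and the extra linear combination is absorbed into \com. The weighted version of the Lovász inversion needs some care, but it is standard.

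\textbf{Point~2.} Inclusion follows from Point~1 and Point~3: a UQAFO formula gives a $\maxagg$-DEN. Since all features are $0/1$, $\max$ equals $\min(1,\sumagg)$, which a $\trrelu$ can compute. Hence we obtain a $\sumagg$-DEN. Strictness follows from a non-FO property such as ``more red than blue neighbours'', which a $\sumagg$-DEN expresses and which is not FO by a standard counting/EF argument.

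\textbf{Point~4.} By Point~1 and Theorem~\ref{thm:combi}.1, it suffices to show that UQAFO is not contained in UNFO. Local transitivity is UNFOC-inexpressible by the proof of Lemma~\ref{lem:UQAFOUNFOC}, a fortiori not UNFO-expressible. It is nevertheless UQAFO-expressible, since $\neg\exists y z\,(E(x,y)\wedge E(y,z)\wedge\neg E(x,z))$ is a $\varphi_\forall$ formula.
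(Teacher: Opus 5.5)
Your proposal is correct. Points~1, 3 and~4 and the inclusion part of Point~2 follow the paper's route:
\begin{itemize}
\item UQAFO$\to$EML by grouping quantifiers, then DNF and substitution for equalities;
\item EML$\to\maxagg$-DENs by completing each diamond to strict form;
\item $\maxagg$-DENs$\to$EML by the finite-image argument with realized and unrealized labelings;
\item Point~3 by a layerwise, weighted Lov\'asz inversion;
\item the inclusion in Point~2 via $\trrelu$ of a sum being $\max$ on $0/1$ features.
\end{itemize}

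The genuine difference is the strictness part of Point~2. You separate with a non-FO majority property such as ``more red than blue neighbours''. This is valid: UQAFO is a fragment of FO, and a $\sumagg$-DEN can sum over the finitely many two-element patterns that contain the edge and the colour. Your argument is much cheaper than the paper's. The paper uses the FO-definable PQR-order, and for this it must develop Ehrenfeucht--Fra\"iss\'e games for UQAFO and a delicate Duplicator strategy on $G_1^n, G_2^n$. What the paper's route buys is the stronger fact that $\sumagg$-DENs exceed UQAFO even on FO-definable properties. So the gap is not merely the familiar non-FO counting power that plain $\sumagg$-GNNs already have. Your argument says nothing about that question, and it is the one that makes UQAFO interesting as a lower bound (compare the local-transitivity separation for UNFOC).

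Two small points remain. First, in a DEN the single-value, fact-free pattern embeds at $u$ only if $u$ carries no facts of its own, such as unary facts or self-loops. So ``Boolean combinations are handled by $\trrelu$ FNNs'' does not transfer verbatim from DHNs. To pass a value's own features to the next layer, you must sum over all single-value patterns, exactly one of which embeds at each value; the paper does exactly this. Second, Point~4 also needs the inclusion direction, which you did not state. It is immediate, because every HML formula is an EML formula.
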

%
The separating property for Point~2  concerns directed graphs with three unary relation
symbols, $P$, $Q$, and $R$. We are interested in all vertices $v$ in a  graph $G$ such that \(v\) and its descendants form a linear order 
$v=v_0,\dots,v_n$ in which each vertex satisfies exactly one of \(P, Q,\) and \(R\); we require that \(n \mod 3= 2\) and that $v_i$ satisfies $P$
if $i \mod 3 = 0$, $Q$ if $i \mod 3 = 1$, and $R$ if $i \mod 3 =2$. We call this property
\emph{PQR-order}.
A separating property for Point~4 is local transitivity.

Barcel\'o et al.\ studied 
GNNs in which the initial vertex features include homomorphism counts for a finite set of pointed pattern graphs
\cite{barcelo2021graph}. While it is obvious that DHNs subsume such GNNs  
\cite{maehara2024deep}, a strict separation had been left open.
We use Points~1 to~3 of Theorem~\ref{thm:DENcombi} to show that, already on undirected unembedded graphs, $\sumagg$-DHNs (and even $\maxagg$-DENs)  can express properties that are not expressible by a $\sumagg$-GNNs with initial homomorphism counts. 
As a separating example, we use the property that a vertex lies on a $6$-cycle on which every vertex has a degree $1$ neighbor. This \emph{sun property} can be expressed as
an EML formula:
\begin{align*}
\varphi_{\sun}(x_1) =
\exists x_2\cdots x_6 \,\Big( \bigwedge_{1 \leq i \leq 6}\!(E(x_i,x_{(i\bmod 6)+1}) \wedge \exists y\,(E(x_i,y) \wedge \exists^{=1} z\,E(y,z)) 
\land\!\!\!\bigwedge_{i < j\leq 6}\!\!\! x_i\neq x_j)
\Big)
\end{align*}
where $\exists^{=1} z\, E(y,z)$ is shorthand for $\exists^{\geq1}z\, E(y,z) \wedge \neg \exists^{\geq2}z\, E(y,z)$. By Points~1 to~3 of Theorem~\ref{thm:DENcombi}, the sun property can thus be expressed by a 
$\sumagg$-DHN. We prove that 
$\varphi_{\sun}$ is not expressed by GNNs with initial homomorphism counts, even when they are 
equipped with global readout,
by
reusing a result by Chen et al.\ which states that homomorphism counts cannot detect isolated vertices \cite{chen2025algorithms}.
\begin{theorem}\label{thm:sun-expressivity}
    On undirected graphs, $\sumagg$-DHNs are strictly more expressive than $\sumagg$-GNNs with global readout and features enriched with homomorphism counts. The
    same holds on directed graphs
    and on graphs with constant embedding $(0)$.
\end{theorem}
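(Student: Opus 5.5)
Two directions are needed: a $\sumagg$-DHN recognises the sun property, and no $\sumagg$-GNN with global readout and homomorphism-count features does.

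\emph{Expressibility.} The formula $\varphi_{\sun}$ is an EML formula. The $\exists^{=1}$ shorthand is allowed because, as remarked, EML does not gain expressive power from counting quantifiers. By Point~1 of Theorem~\ref{thm:DENcombi}, $\varphi_{\sun}$ is expressed by a $\maxagg$-DEN. By Point~2 (UQAFO, hence EML, is captured by $\sumagg$-DENs) it is expressed by a $\sumagg$-DEN. By Point~3 this DEN can be replaced by an equivalent $\sumagg$-DHN. These results are stated over arbitrary schemas with unembedded inputs, so they also cover undirected graphs, directed graphs, and the constant embedding $(0)$. For undirected graphs I would read $E$ symmetrically; the formula is unchanged.

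\emph{Inexpressibility.} I would build two pointed undirected graphs $G^u$ and $H^w$ with the following properties:
\begin{itemize}
\item[(a)] $G^u \models \varphi_{\sun}$ and $H^w \not\models \varphi_{\sun}$;
\item[(b)] for every pointed pattern $F^\bullet$ in any fixed finite set, the vertex-level homomorphism counts agree on corresponding vertices;
\item[(c)] the graphs, with these counts as features, cannot be distinguished by colour refinement with global readout, i.e.\ 1-WL.
\end{itemize}
The key fact I would use is Chen et al.'s result that homomorphism counts cannot detect isolated vertices, and more precisely degree-one pendant structure relative to the rest of the graph.

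The concrete plan starts from the sun graph $G$: a $6$-cycle $c_1,\dots,c_6$, each $c_i$ carrying a pendant path $c_i - a_i - b_i$, so that $a_i$ has the degree-$1$ neighbour $b_i$. The companion $H$ has the same underlying structure, but the degree-$1$ leaves are rearranged so that no $6$-cycle vertex has a neighbour with a degree-$1$ neighbour. For example, one can attach the pendant structure to a pair of disjoint $3$-cycles, or use an isolated-vertex gadget in the spirit of Chen et al. The gadget should be chosen so that every pattern's homomorphism count at every vertex coincides: homomorphisms may fold, so a count can see "a walk of length two to a vertex" but not "a vertex of degree exactly one". Because GNN distinguishability with homomorphism features is equivalent to 1-WL on the count-coloured graph, it remains to show that the coloured graphs of (b) are 1-WL-equivalent. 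The fallback strategy is to make them both regular in the appropriate sense, or isomorphic after quotienting by the colouring.

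The main obstacle is constructing this pair and proving the count equality for \emph{every} pattern, not only for small ones. I would attempt it by reducing to Chen et al.'s theorem. Take their pair of graphs that agree on all homomorphism counts but differ in whether an isolated vertex exists. Then apply a transformation $T$ (attach a pendant vertex to each vertex, and place copies on the cycle structure) so that an isolated vertex in the input becomes a degree-$1$-neighbour witness in $T(\cdot)$. Homomorphism counts into $T(X)$ would need to be expressible as functions of the homomorphism counts into $X$, via a sum over how the pattern maps into the gadget.

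Directed graphs follow by orienting the edges symmetrically. The constant-embedding case is immediate, since the features carry no information beyond the counts.
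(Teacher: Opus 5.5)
Your expressibility half is the paper's route: $\varphi_{\sun}$ is an EML formula, then Points~1--3 of Theorem~\ref{thm:DENcombi} apply. The inexpressibility half, however, has a genuine gap.

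\textbf{What goes wrong.} First, you misread the property. The conjunct $\exists y\,(E(x_i,y)\wedge\exists^{=1}z\,E(y,z))$ requires each cycle vertex to have a neighbour of degree exactly~$1$, not a neighbour that itself has a degree-$1$ neighbour. So your $G$ with pendant paths $c_i - a_i - b_i$ fails the sun property, since $\deg(a_i)=2$, and (a) breaks at the outset. Second, and more seriously, you never produce the pair, and the ingredients you suggest would not yield it. Chen et al.\ only give, for each \emph{finite} family of patterns, graphs $A$ (with an isolated vertex) and $B$ (without) that agree on those counts. Graphs agreeing on \emph{all} homomorphism counts are isomorphic by Lov\'asz, so ``take their pair of graphs that agree on all homomorphism counts'' is impossible. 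Consequently the 1-WL/tree-count equivalence your step~(c) needs, for the GNN layers after the first, cannot be inherited from Chen et al.; it has to come from the combinatorics of the construction. Finally, your transformation $T$ (``attach a pendant vertex to each vertex'') creates degree-$1$ vertices whether or not the input has an isolated vertex, so it destroys exactly the signal you want to transfer.

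\textbf{The missing idea.} Use a cone: join a hub to \emph{all} vertices of a copy of $A$ or $B$. Then the hub has a degree-$1$ neighbour iff the copy is $A$. Keep one fixed skeleton and vary only which gadget is attached where. The paper takes $u_0,\dots,u_5$, each with a pendant leaf, and a hub $v_{ij}$ adjacent to $u_i,u_j$ for every pair $i<j$. The pairs receiving $A$ form $2C_3$ in $G$ and $C_6$ in $H$.
\begin{enumerate}
\item In $G$, $u_0$ has the sun property via $u_0,v_{01},u_1,v_{12},u_2,v_{02}$. In $H$ it fails: every $6$-cycle through $u_0$ is a subdivided skeleton triangle, and $C_6$ is triangle-free.
\item Pattern counts agree classwise. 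Split each homomorphism by which pattern vertices land in the skeleton. Each component landing in a gadget contributes a factor $|\Hom(C,A)|=|\Hom(C,B)|$; this is why Chen et al.\ must be applied to all induced subgraphs of the patterns, still a finite family. Skeleton automorphisms then identify the counts within each class.
\item The partition into clique, leaf, $A$-hub, $B$-hub and gadget vertices is equitable with identical parameters and class sizes in $G$ and $H$. For instance, each $u_i$ has two $A$-hub and three $B$-hub neighbours in both graphs. This gives the tree-count equivalence, i.e.\ 1-WL with readout, which your step~(c) asserts but leaves unproved.
\end{enumerate}
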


\section{An Upper Bound via More Expressive Counting}
\label{sect:HMLC}

To identify an upper bound
on the expressive power of $\sumagg$-DHNs, we propose a  logic HML+C that is in the spirit of first-order logic with counting as studied in descriptive complexity \cite{immerman2012descriptive}.
Our development closely follows  that of Grohe  \cite{DBLP:journals/theoretics/Grohe24} who introduces a certain version of first-order logic with counting called FO+C and then adapts his development to the guarded fragment GFO+C. HML+C
is an adaptation in the same style, based on HML.

HML+C is a two-sorted logic,
with an individual sort that ranges over the active domain of the database and a number sort, ranging over $\mathbb{N}$. We denote number variables with $n,m$ and individual variables with $x,y$. 
The \emph{(numerical) terms}
and \emph{(individual) formulas} of HML+C are defined by the following mutually recursive syntax rules:
\[
\begin{array}{rcl}
  t(x,\bar n) & ::= & c \mid n 
  \mid t + t \mid t \cdot t \mid \#(\bar y, m_1 < t_1,\dots,m_k < t_k) . \psi(x,\bar y,\bar n, m_1,\dots,m_k) \\[1mm]
  \varphi(x,\bar n) & ::= & \neg \varphi(x,\bar n) \mid
  \varphi(x,\bar n) \vee \varphi(x,\bar n) 
  \mid t(x,\bar n) \leq t'(x,\bar n)
\end{array}
\]
where $c$ ranges over $\mathbb{N}$,
each \emph{bounding term} $t_i$  may use $x$, all variables from $\bar y$, as well as the variables $m_1,\dots,m_{i-1}$ and all variables in $\bar n$, and
 $\psi$ is a conjunction
    of relational atoms  
    $R(\bar z)$ where $\bar z$ contains only $x$ and variables from $\bar y$, as well as   formulas with one free individual variable~$\chi(z,\bar n,m_1,\dots,m_k)$ where $z$ is $x$ or a variable from $\bar y$. We call such a formula $\psi$ a  \emph{pattern formula}. Note that HML+C may still be viewed as a modal logic in the sense that all formulas and terms have at most one free individual variable. On the other hand,
    the numerical part of HML+C
    is exactly that of FO+C: every FO+C formula that does not use individual variables is also an HML+C formula.

The semantics of HML+C is just the standard semantics of multi-sorted FO, with the exception of the counting terms. 
Let \(t(x,\bar n)\) be an  HML\(+\)C term, $D$ a database, 
\(u\in\mn{adom}(D)\), and \(\bar c\in\mathbb N^{|\bar n|}\). We
write
\(
\llbracket t\rrbracket^D(u,\bar c)
\)
to denote the value of the term \(t\) in \(D\) under the assignment that sends \(x\) to
\(u\) and \(\bar n\) to~\(\bar c\). The semantics of  counting terms is then defined by setting
$
\llbracket \#(\bar y, m_1 < t_1,\ldots,m_k < t_k) . \psi \rrbracket^{(D,\alpha)}
$
to be the number of tuples
$\bar a \bar b$ with
$\bar a \in  \mn{adom}(D)^{|\bar y|}$ and $\bar b = (b_1,\dots,b_k) \in \mathbb{N}^k
$
such that
\[
b_i < \llbracket t_i \rrbracket^{(D,\alpha[\bar y/\bar a][m_1/b_1]\cdots[m_{i-1}/b_{i-1}])}
\text{ for } 1 \leq i \leq k
\quad
\text{ and }
\quad 
\llbracket \psi \rrbracket^{(D,\alpha[\bar y/\bar a][\bar m/\bar b])} = 1
\]

where $\bar m=(m_1,\dots,m_k)$.
The following example HML+C formula 
expresses local transitivity:
\[
\#(y,z).(E(x,y)\wedge E(y,z))
\;\le\;
\#(y,z).(E(x,y)\wedge E(y,z)\wedge E(x,z)).
\]

Since our technical development mirrors the one in \cite{DBLP:journals/theoretics/Grohe24}, we likewise assume that all activation functions are Lipschitz continuous. A dyadic rational is a   rational number of
the form $\frac{z}{2^s}$ with $z \in \mathbb Z$ and $s \in \mathbb N$. Equivalently,  a rational  is dyadic if its binary expansion is finite.
We say that an FNN \emph{uses dyadic rational piecewise-linear activation} if the activation function is  piecewise-linear  and all slopes, intercepts, and thresholds are dyadic rationals. Then a  DHN \emph{uses dyadic rational piecewise-linear activation} if every transformation, combination, and classification function in it is represented by an FNN that does. Moreover, a DHN \emph{uses dyadic rational coefficients} if every
 FNN in it does.

Our main result in this section is the following.
\begin{theorem}
\label{thm:DHNstoHMLplucS}
   Every \sumagg-DHN that uses  dyadic rational piecewise-linear activation and dyadic rational coefficients is equivalent to an HML+C formula. 
\end{theorem}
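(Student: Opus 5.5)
We follow Grohe's translation of GNNs into GFO+C \cite{DBLP:journals/theoretics/Grohe24} and adapt it from edge-based message passing to homomorphism queries. The idea is to represent every real number occurring in the computation of the DHN exactly, as a dyadic rational, by a finite tuple of natural numbers. These numbers are then handled by numerical HML+C terms and formulas. Concretely, we encode a dyadic rational $q = (-1)^{s} \cdot a/2^{e}$ by numerical data $(s,a,e)$. Alternatively, following Grohe, we describe the binary expansion of $q$ by a formula $\varphi(x,n)$ that states ``bit $n$ of the value is $1$'', together with terms bounding its length. Since all coefficients and activation breakpoints are dyadic, and the input embedding is the empty vector, every intermediate value is dyadic. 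Sums, products with constants, and the piecewise-linear activations then map dyadic numbers to dyadic numbers.

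The induction runs over the layers. For every layer $i$ and every coordinate $j$ of $\lambda^i$, we construct HML+C formulas and terms with one free individual variable $x$ that define the encoding of $\lambda^i_{\calN,D}(x)_j$. The FNN parts (transformation, combination, classification) are evaluated pointwise on fixed-size tuples of values attached to a single value. For these we use Grohe's lemma that FO+C, on its numerical part, can evaluate piecewise-linear FNNs with dyadic weights on encoded inputs: addition, multiplication by constants, comparisons with thresholds, and case distinctions. Since the numerical part of HML+C coincides with that of FO+C, this transfers directly. The classification function is a final comparison, which yields a formula $\varphi(x)$.

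The new ingredient is the aggregation step $\sumagg\{\!\{\prod_{v}\mu_v(\lambda(h(v))) \mid h\in\Hom(F^\bullet,D^x)\}\!\}$. We enumerate $\mn{adom}(F)=\{x,y_1,\dots,y_\ell\}$ and take $\psi(x,\bar y)$ to be the conjunction of the facts of $F$. Then homomorphisms correspond exactly to tuples $\bar y$ satisfying the pattern formula $\psi$. We first bring the summands to a common denominator $2^{E}$ by multiplying each by a suitable power of two. Here $E$ is a term bounding the exponents; it can be obtained from a global bound, or via a counting term over the exponent. Then the sum of the numerators can be expressed with the counting operator: we count tuples $(\bar y, m)$ with $m < N(\bar y)$, where $N$ is the numerator of the product. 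The product itself is a product of finitely many terms, each attached to a single variable $y_k$, which HML+C allows through the $t\cdot t$ construction and the pattern-formula format. To separate signs, we compute positive and negative parts via two counting terms and subtract them by comparison. The conditions $\chi(z,\ldots)$ inside $\psi$, such as sign conditions on $\mu_v(\lambda(y_k))$, are unary in one individual variable, as required.

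The main obstacle is that counting terms count tuples whose per-tuple numerator $N(\bar y)$ depends on all of $\bar y$ through a product. We must check that such a numerator can legally serve as a bounding term $t_i$ in $\#(\bar y, m<t).\psi$; the syntax permits bounding terms to use $\bar y$, so this should work. We must also control the sizes of exponents and bit lengths, so that common denominators exist as terms in $x$ alone. We first attempt to use a uniform exponent bound that depends only on the network: products of $|\mn{adom}(F)|$ factors with bounded-exponent values, plus bounded-exponent coefficients. With such a bound the denominator stays a fixed power of two per layer and only numerators grow. Representing these numerators as HML+C terms avoids Grohe's bit-level encoding altogether. If the activations keep denominators bounded, which holds for piecewise-linear maps with dyadic coefficients, a fixed $2^{E_i}$ per layer suffices, and the argument reduces to exact arithmetic on natural-number terms.
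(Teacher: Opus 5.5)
Your route is correct, but it differs genuinely from the paper's. The paper transplants Grohe's bit-level machinery wholesale. Values are r-schemas (sign, bit-support formula, denominator-exponent term, length term) with \emph{variable} denominators. Aggregation goes through Lemma~\ref{lem:summation}: it splits positive and negative summands, computes a common denominator $d^+(x)$ with the $\mathsf{u\text{-}max}$ term of Lemma~\ref{lem:wrootedcounting}, adds the binary numerators via Lemma~\ref{lem:itaddintegers}, and subtracts with Grohe's arithmetic lemma. Products come from $\mathsf{mul}$, FNNs from Lemma~\ref{cor:grohe325}, and acceptance is read off a canonical representation.

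Your key observation makes all of this unnecessary. The input is the empty vector and all weights, slopes and intercepts are dyadic, so every coordinate of $\lambda^i_{\calN,D}$ lies in $2^{-E_i}\mathbb{Z}$ for a constant $E_i$ depending only on $\calN$. You can therefore carry each value as a sign plus a natural-number numerator term, or as a pair of terms $(p,n)$ with value $(p-n)/2^{E_i}$. Products are then given by $\cdot$ and comparisons by $\le$. Sums over homomorphisms are given by $\#(\bar y, m<\prod_k |a_k(y_k)|).(\psi_F\wedge\bigwedge_k \mathrm{sign}_k(y_k))$, summed over the even (respectively odd) sign patterns.

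Commit to this version and drop the hedging. Note that Grohe's FNN-evaluation lemma is stated for bit-level r-schemas, so in your unary version you should evaluate the FNNs directly. This is easy:
\begin{itemize}
\item dyadic affine maps become integer-linear on numerators after shifting by fixed powers of two;
\item each piece of a piecewise-linear activation is selected by a term $\#(m<t).\varphi(x)$, which equals $t$ if $\varphi$ holds and $0$ otherwise;
\item truncated subtraction is $\#(m<p).(n\le m)$.
\end{itemize}

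What your route buys is a short, self-contained proof. Its only multi-variable ingredient is a bounding term that is a polynomial in single-variable terms, which the stated syntax explicitly permits, together with unary sign conditions in the pattern. The paper instead substitutes multi-variable bit formulas (e.g.\ the $Y(i,\bar x,y)$ in Lemma~\ref{lem:itaddintegers}) into pattern formulas, which stretches the stated pattern syntax. What the paper's route buys is direct reuse of Grohe's lemmas, and independence from the a~priori denominator bound, which is special to dyadic piecewise-linear networks.
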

By Theorem~\ref{thm:DENcombi}, the same is also true for $\sumagg$-DENs and $\maxagg$-DENs. Also recall that, by Theorem~\ref{thm:combi},
$\maxagg$-DHNs
can even be translated into the fragment HML of HML+C. We believe that the expressive power of HML+C is strictly greater than that of $\sumagg$-DHNs. It was observed in 
\cite{DBLP:journals/theoretics/Grohe24} that the logic GFO+C studied there can define the vertex property of having even degree, and that GNNs cannot do that. It is easy to see
that also HML+C can express this property while there is no obvious way to express it using a DHN in which all functions are represented as FNNs, see Remark~7.7 in \cite{DBLP:journals/theoretics/Grohe24}. 

\section{Experiments}


To experimentally evaluate our theoretical findings, we concentrate on the fact that $\sumagg$-DHNs have higher expressive power than $\maxagg$-DHNs (Theorems~\ref{thm:combi} and~\ref{thm:DENcombi} and Lemma~\ref{lem:UQAFOUNFOC}), and also than GNNs with initial homomorphism counts (Theorem~\ref{thm:sun-expressivity}).
The relevant properties are 
local transitivity and the sun property.
%
Our experiments are based on synthetic datasets.
%
For local transitivity, we construct a dataset with a balanced number of positive and negative examples by starting with the disjoint union of sufficiently many linear-order graphs, and then randomly deleting edges. This results in 1921 out of 4000 vertices to be locally transitive.
For the sun property, we start with a number of 6-cycles and then randomly attach fresh vertices and other 6-cycles in a way that results in 600 out of 1200 vertices  satisfying the property.

For local transitivity, we equip DHNs with 13 homomorphism queries using at most three variables, in order to reduce the manual pattern-selection bias. For the sun property, we use a 6-cycle query and a single-edge query, as there are no other obvious candidate patterns for our dataset. The symmetric nature of the 6-cycle results in an excessive number of homomorphisms, leading to performance issues. We handle these by restricting to injective homomorphisms in the sun property case.
We represent DHN transformation and combination functions
by FNNs with 1 and 3 hidden layers respectively, and with leaky ReLU activation; the leakiness 
 serves to alleviate vanishing gradients.

We compare the performance of DHNs to  the standard GCN~\cite{DBLP:conf/iclr/KipfW17}, GraphSAGE~\cite{DBLP:conf/nips/HamiltonYL17}, and GIN~\cite{xu2019powerful}  implementations provided by PyTorch Geometric. For all models, we use an embedding dimension of 32, three message passing layers,  the Adam optimizer, and  single batch training. 
 DHNs are trained with a learning rate of 0.0003 for 1000 epochs, and
 the other models  with a learning rate of 0.001 for 10000 epochs. We split the datasets using 6:2:2 ratios in training, validation and test set, and report results on the test set. All experiments were performed on a CPU of a MacBook Pro with M1 Pro chip.
 More details on the experimental setup can be found in Appendix~\ref{app:experiments}.
Code and data will be made publicly available.

\begin{table}
    \begin{minipage}{.5\linewidth}
      \caption{Local Transitivity Experiments}
      \label{tab:exp-results-lt}
      \centering
        \begin{tabular}{lrr}\toprule
        Architecture & F1 & AUROC \\\midrule
            \sumagg-DHN & 0.98 & 1.00  \\
            \maxagg-DHN & 0.83 & 0.93 \\
            GCN & 0.76 & 0.86 \\
            GraphSAGE & 0.71 & 0.90 \\
            GIN & 0.94 & 0.98 \\
            \bottomrule
        \end{tabular}
    \end{minipage}%
    \begin{minipage}{.5\linewidth}
      \centering
              \caption{Sun Property Experiments}
                    \label{tab:exp-results-sp}
        \begin{tabular}{lrr}\toprule
        Architecture & F1 & AUROC \\\midrule
            \sumagg-DHN & 1.00 & 1.00 \\
            GIN with hom. counts & 0.88 & 0.94 \\
            GCN & 0.81 & 0.88 \\
            GraphSAGE & 0.67 & 0.50 \\
            GIN & 0.68 & 0.73 \\
            \bottomrule
        \end{tabular}
    \end{minipage} 
    \vspace{-0.3cm}
\end{table}


Results are reported in Tables~\ref{tab:exp-results-lt}
and~\ref{tab:exp-results-sp}.
Both experiments confirm the theoretical findings for which they were designed.  Moreover, DHNs 
perform consistently better than
all other models. Only the 
performance of GINs on the local transitivity dataset is remarkable.
We believe that it is due to the
relative simplicity of the data,
on which counting and comparing the number of successors and counting the number of descendants at distance two should 
enable a good approximation of local transitivity.




\section{Conclusion and Limitations}




DHNs provide a principled neural architecture for relational data, as homomorphism-based message passing connects directly to conjunctive queries. 
We have developed a logical account of the expressive power of DHNs over relational databases. Our results fully characterize \maxagg aggregation, while exact characterizations for \sumagg aggregation and \meanagg aggregation over databases of unrestricted degree remain open. These logical characterizations offer a stepping stone toward linking DHNs to SQL-style query languages, cf.~\cite{DBLP:journals/pvldb/BaranyCO12}. 
Further directions 
include tighter integration of DHNs with database management systems, both for optimized training and inference and for end-to-end training of neuro-relational pipelines combining DHNs with relational query processing. Finally, making DHNs usable out of the box will require principled pattern-selection methods. For this, results by Barcel{\'o} et al.~\cite{barcelo2021graph} can provide a starting point.

{\bf Limitations.} As discussed above, several theoretical questions remain open, as does a more systematic account of the relationship between DHNs and relational databases. Our implementation should be viewed as a proof of concept rather than a highly optimized system. Although some static analysis problems for DHNs turn out to be  decidable, their worst-case complexity is high, and developing practically efficient algorithms for them is a non-trivial challenge.

\bibliographystyle{plain}
\bibliography{DHN}

\newpage 
\appendix

\section{Additional Material for Section~\ref{sect:prelims}}

\subsection{Missing Definitions}

We first define some notions introduced in Section~\ref{sect:prelims}.
\begin{definition}
    The \emph{Gaifman graph} of an \(\Sbf\)-database \(D\) is the undirected graph \((V,E)\), where 
    \begin{itemize}
        \item the set of vertices \(V\) is the active domain of \(D\), and
        \item there is an edge between vertices \(v_1\) and \(v_2\) iff there is a fact \(R(u_1,\ldots,u_n)\) in \(D\) such that \(v_1,v_2\in \{u_1,\ldots,u_n\}\).
    \end{itemize}
\end{definition}

\medskip
\begin{definition}
The \emph{distance} between two 
values $v_1,v_2 \in \mn{adom}(D)$
in a database $D$
is the length of the shortest  path between $v_1$ and $v_2$ in the Gaifman graph of $D$, or $\infty$ if no such path exists. The \emph{diameter} of a pointed database $D^v$
is the smallest $n$ such that 
for every value $u \in \mn{adom}(D)$, 
the distance between $v$ and $u$ is at most $n$.
\end{definition}
\medskip
\begin{definition}
    A \emph{feed-forward neural network (FNN)} with \(n\) hidden layers and input dimension \(d_0\), hidden dimensions \(d_1,\ldots,d_n\) and output dimension \(d_{n+1}\) consists of 
    \begin{itemize}
        \item weight matrices \(A_i \in \mathbb{R}^{d_i\times d_{i+1}}\) for each \(1\leq i\leq n+1\),
        \item biases \(\bar b_i \in \mathbb{R}^{d_i}\) for each \(1\leq i\leq n+1\), and
        \item activation functions \(\sigma_i: \mathbb{R}\to\mathbb{R}\) for each \(1\leq i\leq n+1\).
    \end{itemize}
    Given an input vector \(\bar v_0\in \mathbb{R}^{d_0}\), the FNN computes the following vectors:
    \[\bar v_{i+1} = \sigma_{i+1}(v_{i} \cdot A_{i+1} +\bar b_{i+1}),\]
    where the activation function is applied component-wise.
    It outputs the vector \(\bar v_{n+1}\).
    Common choices for the activation function include \(\relu(x) =\max(0,x)\), \(\trrelu(x) =\min(1, \max(0,x))\) and the sigmoid function \(f(x)=\frac{1}{1+e^{-x}}\).
\end{definition}

\medskip
We make precise what we mean by a \emph{simple} DHN.
They are defined analogously to simple GNNs, which were introduced in \cite{Barcelo2020}.

\begin{definition}
    
 A \emph{simple} DHN is a DHN, where\footnote{All dimensions are chosen to fit the definition of DHNs.}
 \begin{itemize}

  \item each combination function must be of the form
\[\com(\bar{x}_1,\dots,\bar{x}_m) = \trrelu(\bar{x}_1\cdot A_1 + \cdots + \bar x_m \cdot A_m +\bar{b}),\]
where $\trrelu$ is applied component-wise, each
$A_i \in \mathbb{Q}^{d_i\times d'}$ is a matrix, and  $\bar{b} \in \mathbb{Q}^{d'}$ is a bias vector;

\item each transformation function must be of the form 
$t(x)=\trrelu(\bar x \cdot A + \bar b)$ where $\trrelu$ is applied component-wise, $A \in \mathbb{Q}^{d \times d'}$  is a matrix, and $\bar b \in \mathbb{Q}^{d'}$ is a bias vector;

\item for some
$c \in \mathbb{Q}$, $i \in \{1,\dots,d'\}$, and ${\sim} \in \{ {>}, {\geq} \}$, the classification function must be of the form 
    \[\cls(x_1,\dots,x_{d'}) = \begin{cases}
        1 & \text{ if } x_i \sim c\\
        0 & \text{otherwise}.
    \end{cases}\]
 \end{itemize}
 
\end{definition}
\medskip

\paragraph{Expressive Power of Value Classifiers.} Here we make precise what we mean by two formalisms having the same expressive power.
Both DHNs and FO formulas with one free variable may be viewed
as classifiers for values in databases. 
 A \emph{value property}  is a set $P$ of pointed 
databases. Two value classifiers are \emph{equivalent} if they express the same value property. We say that $P$ is \emph{expressible} by a class
of value classifiers $\calC$, such as $\calC = \text{DHN}$,
if there is a $C \in \calC$ such that the
pointed databases accepted by $C$ are exactly
those in~$P$.
For classes of value classifiers $\calC_1,\calC_2$, we
 say $\calC_2$ is \emph{at least as expressive} as $\calC_1$ to mean that every
property expressible by a classifier from $\calC_1$ 
is also expressible by a classifier from $\calC_2$.
We say \(\calC_1\) and \(\calC_2\) have the \emph{same expressive power} if \(\calC_1\) is at least as expressive as \(\calC_2\) and vice versa.

\subsection{Proof of Theorem~\ref{lovasz homomorphism count implies embedding count}}

We now turn our attention to the relation between homomorphism numbers and embedding numbers.
\lovaszhomcount*

 Before we prove this theorem, we introduce a central notion used in the proof.

\begin{definition}
  Let \(F^v\) be an \Sbf-database, \(A = \mn{adom}(F)\) and let \(\mathcal{P}\) be a partition of \(A\), that is a set of pairwise disjoint subsets of \(A\) such that \(\bigcup_{P\in \mathcal{P}} P = A\).

  Then \(F^v/\mathcal{P}\) is the \Sbf-database with \(\mn{adom}(F^v/\mathcal{P}) = \mathcal{P}\) such  that:
  \begin{itemize}
    \item \(R(X_1,\ldots,X_n) \in F^v/\mathcal{P}\) if and only if there exist \(x_1\in X_1,\ldots,x_n\in X_n\) such that \({R(x_1,\ldots,x_n) \in F^v}\),
    \item The distinguished value is the set \(P\in \mathcal{P}\) such that \(v\in P\).
  \end{itemize}
\end{definition}
\begin{proof}
  We use \(\Emb(F^\bullet, D^\bullet)\) and \(\Inj(F^\bullet, D^\bullet)\) to denote the set of embeddings and injective homomorphisms from \(F^\bullet\) to \(D^\bullet\).
  We start with the second point, that is, the number of homomorphisms from \(F^\bullet\) to \(D^\bullet\) is uniquely determined by the number of embeddings to \(D^\bullet\) from all \Sbf-databases with at most \(|\mn{adom}(F)|\) values.
  
  It is easy to see that each homomorphism \(h:\mn{adom}(F)\to \mn{adom}(D)\) can be written as
  \begin{itemize}
    \item the partition \(\mathcal{P}\) of \(\mn{adom}(F)\) induced by the equivalence \(v\simeq u \text{ iff } h(v) = h(u)\) and the homomorphism \({f(v) = \{u\mid h(u) = h(v)\}}\) from \(F^\bullet\) to \(F^\bullet/\mathcal{P}\),
    \item followed by an injective homomorphism \(g\) from \(F^\bullet/\mathcal{P}\) to \(D^\bullet\).
  \end{itemize}
  Thus, \[|\Hom(F^\bullet,D^\bullet)| = \sum_{\mathcal{P}} |\Inj(F^\bullet/\mathcal{P},D^\bullet)|,\]
  where the sum ranges over all partitions of \(\mn{adom}(F)\).

  Let \(F^v\) and \(G^v\) be two \Sbf-databases.
  We write \(G^v\succeq F^v\) if \(\mn{adom}(F) = \mn{adom}(G)\) and \({R(\bar{x}) \in F^v\text{ implies } R(\bar{x})\in G^v}\).

  Each injective homomorphism which maps from \(F^v\) to \(D^u\) can be written as
  \begin{itemize}
    \item a database \(G^v\succeq F^v\)
    \item and an embedding from \({G}^v\) to \(D^u\).
  \end{itemize}
  Therefore, \[|\Inj(F^\bullet, D^\bullet)| = \sum_{{G}^\bullet\succeq F^\bullet} |\Emb({G}^\bullet, D^\bullet)|,\]
  where the sum ranges over non-isomorphic databases that satisfy \(G^\bullet\succeq F^\bullet\).
  We can thus write the homomorphism count as
  \begin{equation}\label{eq:lovasz-emb-to-hom}
    |\Hom(F^\bullet, D^\bullet)| = \sum_{\mathcal{P}}\sum_{{G^\bullet}\succeq F^\bullet/\mathcal{P}} |\Emb({G}^\bullet, D^\bullet)|.
  \end{equation}
  Let \(\mathcal{F} = \{G^\bullet\mid \exists \mathcal{P}.\,G^\bullet\succeq F^\bullet/\mathcal{P}\}\) and let \(\leq\) be an order of \(\mathcal{F}\) such that \(|\mn{adom}(F_1)| < |\mn{adom}(F_2)|\text{ implies } F_1\leq F_2\) and \(F_1\succeq F_2\) implies \(F_1\leq F_2\).
  Let \(\Hom(\mathcal{F}, D^\bullet)\) be the vector of homomorphism counts, where the entries are sorted according to \(\leq\). Define \(\Emb(\mathcal{F}, D^\bullet)\) analogously.
  By Equation \ref{eq:lovasz-emb-to-hom}, there then exists a matrix \(M\) such that for all \(D^\bullet\), \[\Hom(\mathcal{F}, D^\bullet) = M \Emb(\mathcal{F}, D^\bullet).\]
  Moreover, \(M\) is a triangular matrix with all diagonal entries being \(1\).
  Thus, there exists an inverse \(M^{-1}\) and
  \[\Emb(\mathcal{F}, D^\bullet) = M^{-1} \Hom(\mathcal{F}, D^\bullet).\]
  This shows that the number of embeddings of \(F^\bullet\) to \(D^\bullet\) is uniquely determined by the number of homomorphisms to \(D^\bullet\) from all \Sbf-databases with at most \(|\mn{adom}(F)|\) values.
\end{proof}

\subsection{Proof of Lemma~\ref{lem:twocombined}}

We show that UNFOC and GHML have the same  expressive power.
This parallels the result from \cite{DBLP:journals/corr/SegoufinC13} that the corresponding logics without counting, UNFO and HML, can express the same properties.

Before we embark on the proof of Lemma~\ref{lem:twocombined}, we make
a remark on the definition of UNFOC. We have deliberately defined UNFO, and consequently also UNFOC, to not admit equality atoms $x=y$. This is a slight deviation from the definition of UNFO in \cite{DBLP:journals/corr/SegoufinC13}
where such atoms are included. It is easy to see that UNFO formulas in one free variable have the same expressive power with and without these atoms. The same, however, is not clear in the case of UNFOC.
In fact, there is no obvious way
to find a translation to GHML
for  UNFOC formulas in one free variable that may contain equality atoms. To get an idea of the challenges, the reader might consider the formula 
$\exists^{\geq 2} y\exists^{\geq 2} z \, ((x=z \wedge R(x,y,z)) \vee S(x,y,z))$.

%

%
%
Towards a proof of Lemma~\ref{lem:twocombined},
we first describe a transformation that allows us to pull out quantifiers.
\begin{lemma}\label{counting_quantifier_scope_change}
Let \(\varphi,\psi\) be two UNFOC formulas which have no free variable in common and let \(x\) be a free variable in \(\psi\).
Let \(c\) be the maximum counting bound in both formulas and let \(n\in\mathbb{N}\).
Then for all databases \(D\) with at least \(\max(n,c)\) values the following transformations pull out existential quantifiers:
\begin{align*}
    D\models (\varphi\circ \exists^{\geq n}x\, \psi)(d_1,\ldots,d_k) &\iff D\models(\exists^{\geq n}x\, (\varphi\circ \psi))(d_1,\ldots,d_k)\\
    D\models (\exists^{\geq n}x\, \psi\circ \varphi)(d_1,\ldots,d_k) &\iff D\models(\exists^{\geq n}x\, (\psi\circ \varphi))(d_1,\ldots,d_k),
\end{align*}
where \(\circ\in\{\land,\lor\}\) and \(d_1,\ldots,d_k\in\mn{adom}(D)\).
\end{lemma}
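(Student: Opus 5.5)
The proof will be a direct semantic argument by case distinction on whether $\varphi$ holds under the given assignment. I will assume $n \geq 1$, since $\exists^{\geq 0}$ is trivially true and the conjunctive case would otherwise fail. Fix a database $D$ with $|\mn{adom}(D)| \geq \max(n,c)$ and an assignment $\alpha$ sending the free variables of the formula to $d_1,\dots,d_k$.

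The key observation is that $x$ is not free in $\varphi$, because $\varphi$ and $\psi$ share no free variable. So for every $a \in \mn{adom}(D)$ we have $D \models \varphi$ under $\alpha[x/a]$ if and only if $D \models \varphi$ under $\alpha$. Write $S = \{a \in \mn{adom}(D) \mid D,\alpha[x/a] \models \psi\}$ and $S' = \{a \mid D,\alpha[x/a]\models \varphi\circ\psi\}$. The left-hand side says $\varphi \circ (|S| \geq n)$, and the right-hand side says $|S'| \geq n$.

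For $\circ = \wedge$ I distinguish two cases. If $\varphi$ holds under $\alpha$, then $S' = S$ by the observation, so both sides reduce to $|S| \geq n$. If $\varphi$ fails, then $S' = \emptyset$, so the right-hand side is false because $n \geq 1$, and the left-hand side is false as well.

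For $\circ = \vee$ I again distinguish two cases. If $\varphi$ fails, then $S' = S$ and both sides coincide. If $\varphi$ holds, the left-hand side is true, and $S' = \mn{adom}(D)$. Hence the right-hand side amounts to $|\mn{adom}(D)| \geq n$, and this is exactly where the size assumption on $D$ is used. The bound $c$ is not needed for this single step; I would remark that it is included so that the lemma can be applied repeatedly while pulling out several quantifiers of bound at most $c$.

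The second pair of equivalences follows from the first by commutativity of $\wedge$ and $\vee$, applied both outside and inside the quantifier. The only genuinely delicate point is the disjunctive case with $\varphi$ true, where the domain-size hypothesis is essential. I would also double-check that renaming bound variables keeps $x$ out of $\varphi$, which the hypothesis on free variables already guarantees.
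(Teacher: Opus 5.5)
Your proof is correct and is essentially the paper's argument. The paper handles the disjunctive case by replacing $\varphi$ with $\exists^{\geq n}x\,\varphi$, which is valid because $x$ is not free in $\varphi$ and $D$ has at least $n$ values; this amounts to your case split on the truth of $\varphi$, and the paper likewise obtains the second pair of equivalences by commutativity. Your explicit restriction to $n\geq 1$ in the conjunctive case is a worthwhile precision that the paper's ``analogously'' leaves implicit: for $n=0$ the left-hand side reduces to $\varphi$, while the right-hand side is always true.
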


\begin{proof}
    We only proof the first statement. The second one follows from the commutativity of \(\land\) and \(\lor\).
    Let \(D\) be a database with at least \(n\) elements and let \(d_1,\ldots,d_k\in\mn{adom}(D)\).
    
    \begin{align*}
        D\models (\varphi\lor\exists^{\geq n}x\, \psi)(d_1,\ldots,d_k) &\iff D\models \varphi(d_1,\ldots,d_k)\text{ or }  D\models (\exists^{\geq n} x\, \psi)(d_1,\ldots,d_k)\\
        \intertext{Since there are at least \(n\) values in \(D\) and \(x\) is not free in \(\varphi\):}
        &\iff D\models(\exists^{\geq n} x\,\varphi)(d_1,\ldots,d_k)\text{ or }  D\models (\exists^{\geq n} x\, \psi)(d_1,\ldots,d_k)\\
        &\iff D\models (\exists^{\geq n}x\, (\varphi\lor\psi))(d_1,\ldots,d_k). 
    \end{align*}
    The statement with \(\land\) instead of \(\lor\) can be proven analogously.
    For this case we do not need that \(D\) has at least \(n\) values, since this follows from \(D\models (\exists^{\geq n} x\, \psi)(d_1,\ldots,d_k)\).
\end{proof}

Databases that do not have enough values trivially cannot satisfy counting quantifiers with large constants.
\begin{lemma}\label{exists_eq_bot_for_small_db}
    Let \(\varphi(x_1,\ldots,x_k)\) be a UNFOC formula, let \(n\in\mathbb{N}\) and let \(D\) be a database with less than \(n\) values and let \(d_1,\ldots,d_k\in\mn{adom}(D)\). Then \(D\not\models\exists^{\geq n} x\, \varphi(d_1,\ldots,d_k)\).
\end{lemma}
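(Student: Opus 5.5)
This follows directly from the semantics of the counting quantifier, so no induction on $\varphi$ is needed. By definition, $D\models(\exists^{\geq n} x\,\varphi)(d_1,\ldots,d_k)$ holds iff the set
\[
W=\{e\in\mn{adom}(D)\mid D\models\varphi(d_1,\ldots,d_k,e)\}
\]
has at least $n$ elements, where $e$ is the value assigned to the bound variable $x$. If $x$ happens to coincide with one of the $x_i$, the quantifier rebinds it, and the same description applies with the assignment updated at $x$.

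The key observation is that $W\subseteq\mn{adom}(D)$, since quantifiers in our semantics range over the active domain. Hence $|W|\leq|\mn{adom}(D)|<n$, which contradicts the requirement $|W|\geq n$. Therefore $D\not\models\exists^{\geq n}x\,\varphi(d_1,\ldots,d_k)$.

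I expect no real obstacle here. The only points worth stating explicitly are that the quantifier domain is exactly $\mn{adom}(D)$, so the count of $|\mn{adom}(D)|$ values bounds every witness set, and that the argument does not depend on the internal structure of $\varphi$. In particular, it does not use the unary negation restriction or any other feature of UNFOC. The lemma is then ready to serve as the complement of Lemma~\ref{counting_quantifier_scope_change}: that lemma handles databases with many values, and this one handles small databases, where the quantifier can be replaced by falsity.
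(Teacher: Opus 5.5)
Your argument is correct and is exactly the one the paper relies on: the paper gives no separate proof and simply notes that databases with too few values trivially cannot satisfy counting quantifiers with large constants. The reason is the one you give: every witness for $x$ lies in $\mn{adom}(D)$, and this set has fewer than $n$ elements.
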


We also state De~Morgan laws for UNFOC and some well known equivalences for tautologies and formulas which are not satisfiable.
\begin{lemma}\label{unfoc_de_morgan}
   Let \(\neg(\varphi\lor \psi)\) and \(\neg(\varphi\land\psi)\) be two UNFOC formulas.
   Then \(\neg\varphi\land\neg\psi\) and \(\neg\varphi\lor\neg\psi\) are also UNFOC formulas and
   \begin{align*}
       \neg(\varphi\lor \psi) &\equiv \neg\varphi\land\neg\psi\\
       \neg(\varphi\land \psi) &\equiv \neg\varphi\lor\neg\psi.
   \end{align*}
\end{lemma}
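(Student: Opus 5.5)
The two equivalences will be verified semantically, and the one syntactic point is that the right-hand sides are again UNFOC formulas. For the syntactic part, note that in UNFOC negation may only be applied to a formula with at most one free variable. Since $\neg(\varphi\lor\psi)$ and $\neg(\varphi\land\psi)$ are UNFOC formulas, the formula $\varphi\lor\psi$ (respectively $\varphi\land\psi$) has at most one free variable, say among $\{x_i\}$. Because the free variables of a disjunction or conjunction are the union of those of its disjuncts or conjuncts, each of $\varphi$ and $\psi$ also has no free variables besides possibly $x_i$. In addition, $\varphi$ and $\psi$ are themselves UNFOC formulas, since they are subformulas of a well-formed formula produced by the grammar. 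The last clause of the UNFOC syntax therefore yields that $\neg\varphi$ and $\neg\psi$ are UNFOC formulas. Closure of UNFOC under $\land$ and $\lor$ then gives that $\neg\varphi\land\neg\psi$ and $\neg\varphi\lor\neg\psi$ are UNFOC formulas.

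For the semantic part, fix a database $D$ and an assignment of the (at most one) free variable to a value in $\mn{adom}(D)$. The semantics of $\neg$, $\land$, $\lor$ in UNFOC is the standard Tarskian semantics of FO, so the classical propositional De Morgan laws apply pointwise: $D\models\neg(\varphi\lor\psi)$ holds iff neither $\varphi$ nor $\psi$ holds, which is the case iff $D\models\neg\varphi\land\neg\psi$. The dual law follows in the same way.

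The only step requiring any care is the free-variable bookkeeping in the syntactic part, and there is no real obstacle. One should note that when $\varphi\lor\psi$ is a sentence, both $\varphi$ and $\psi$ are sentences, which the clause ``no free variables besides (possibly) $x_i$'' explicitly allows.
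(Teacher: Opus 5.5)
Your proposal is correct and follows essentially the same route as the paper. Both arguments observe that $\varphi\lor\psi$ (resp.\ $\varphi\land\psi$) has at most one free variable, so $\varphi$ and $\psi$ may be negated, and both then appeal to the classical De Morgan laws for the equivalences.
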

\begin{proof}
    Since  \(\neg(\varphi\lor\psi)\) is a UNFOC formula, \(\varphi\lor\psi\) has at most one free variable, thus \(\varphi\) and \(\psi\) both have at most one free variable.
    Therefore \(\neg\varphi\land\neg\psi\) is still a UNFOC formula.
    The same argument holds for \(\lor\) and \(\land\) swapped.
    The equivalences are well known De~Morgan laws.
\end{proof}
\begin{lemma}\label{top_bot_equivalences}
    Let \(\top\) be formula that is always satisfied, \(\bot\) be a formula that is never satisfied.
    Then the following equivalences hold for all formulas \(\varphi\):
    \begin{alignat*}{3}
        \neg\bot&\equiv\top\qquad&\bot\lor\varphi&\equiv \varphi\qquad&\bot\land\varphi&\equiv\bot\\
        \neg\top&\equiv\bot&\top\lor\varphi&\equiv\top&\top\land\varphi&\equiv\varphi.
    \end{alignat*}
\end{lemma}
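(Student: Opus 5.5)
The statement is immediate from the semantics, so the plan is a direct unfolding. Here $\top$ and $\bot$ are fixed formulas such that every pointed database (under every assignment) satisfies $\top$ and none satisfies $\bot$. Two formulas are equivalent when they are satisfied by exactly the same databases $D$ and tuples $\bar d$ of values from $\mn{adom}(D)$ for their free variables. So I will fix an arbitrary $D$ and assignment $\bar d$, and check each of the six equivalences by evaluating both sides under the Boolean semantics of $\neg$, $\vee$, $\wedge$.

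For $\neg\bot\equiv\top$ and $\neg\top\equiv\bot$: since $D\not\models\bot(\bar d)$, we get $D\models\neg\bot(\bar d)$, which holds for all $D,\bar d$, exactly like $\top$. Symmetrically, $\neg\top$ is never satisfied, exactly like $\bot$.

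For the disjunctions: $D\models(\bot\vee\varphi)(\bar d)$ holds iff $D\models\bot(\bar d)$ or $D\models\varphi(\bar d)$. The first disjunct is always false, so this reduces to $D\models\varphi(\bar d)$. Likewise $\top\vee\varphi$ is always satisfied because its first disjunct always holds.

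For the conjunctions: $\bot\wedge\varphi$ is never satisfied because its first conjunct fails. Also $D\models(\top\wedge\varphi)(\bar d)$ holds iff $D\models\varphi(\bar d)$, because the first conjunct always holds.

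The only point needing care is bookkeeping about free variables. If $\top$ and $\bot$ are taken as closed formulas, the free variables of $\bot\vee\varphi$ are exactly those of $\varphi$, so both sides are evaluated under the same assignment. No syntactic UNFOC restriction is at stake, since the lemma is a purely semantic claim. I expect no genuine obstacle; the proof is a short case check.
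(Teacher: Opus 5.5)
Your proposal is correct: the paper gives no proof at all and simply treats these as well-known equivalences, and your semantic case check under an arbitrary database and assignment is exactly the routine verification it implicitly relies on. Your remark that the claim is purely semantic is the right caveat. No UNFOC syntactic restriction, such as the one on negating formulas with several free variables, is involved.
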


\unfocghmlequiv*
\begin{proof}
  By definition, GHML is a fragment
  of UNFOC.
  Conversely, every UNFOC  formula \(\varphi\) in one free variable can be converted into a GHML formula as
  follows:
  We first use variable renaming such that no variable is bounded twice and no variable occurs free and bounded in \(\varphi\).
  Let \(c\) be the largest counting constant in \(\varphi\).
  Then \(\varphi\) is equivalent to 
  \[(\exists^{\geq c}x\, \top\land \varphi)\,\lor\bigvee_{1\leq i <c} (\exists^{\geq i}x\,\top\land \neg\exists^{\geq i+1}x\,\top \land \varphi).\]
  This is a UNFOC formula, since UNFOC is closed under conjunctions and disjunctions.

  The transformation of each occurrence of \(\varphi\) into GHML will be done in two steps.
    In the first step we will group quantifiers such that every subformula \(\exists^{\geq n} x\, \psi\) either has only one free variable or is preceded by another existential quantifier.
    In the second step we will transform these formulas into GHML.
  
  For the first disjunct \(\exists^{\geq c} x\,\top\land \varphi\), it suffices to apply transformations that are correct for all databases with at least \(c\) values but which might be incorrect for databases with less than \(c\) values.
  In the first step, we apply Lemma~\ref{counting_quantifier_scope_change} until every subformula \(\exists^{\geq n} x\, \psi\) either has only one free variable or is preceded by another existential quantifier.
  This can be done, since formulas \(\exists^{\geq n}x\, \psi\) with more than one free variable cannot be preceded by a negation in UNFOC.

  The second step is by induction on the quantifier depth.
  Let \(\psi\) be a subformula of \(\varphi\) that has one free variable and that starts with an existential quantifier.
  It has the form 
  \[\exists^{\geq k_1}x_1\cdots\exists^{k_n}x_n\,\xi(x,x_1,\ldots,x_n),\]
  where \(\xi\) is a Boolean combination of relational atoms and UNFOC formulas in one free variable.
  By induction hypothesis, the UNFOC formulas in one free variable in \(\xi\) can be transformed into equivalent GHML formulas.
  By using De~Morgan laws in Lemma~\ref{unfoc_de_morgan}, the negations can be pushed inwards in front of the GHML formulas.
  Distributivity laws can then be used to transform \(\xi\) into disjunctive normal form, that is, \[\xi\equiv \bigvee_{1\leq j\leq m} \xi_j,\]
  where each \(\xi_j\) is a conjunction of relational atoms, GHML formulas and negations thereof. 
  In particular, only unary GHML formulas are negated.
  Thus the resulting formula \(\varphi_c'\) is in GHML.
  Then \[\exists^{\geq c} x\,\top\land \varphi\equiv \exists^{\geq c}x\,\top \land \varphi_c',\]
  since these formulas are only satisfied by databases with at least \(c\) values.
  The second formula can be expressed in GHML as
  \[\exists^{\geq 1} z \, (\exists^{\geq c} x\, \top \land \varphi'_c),\]
  where \(z\) is a variable that does not occur in \(\varphi_c'\).

  For the other disjuncts, fix some \(1\leq i<c\).
  First find in \(\varphi\) every occurrence of \(\exists^{\geq n} x\,\psi\), where \(n >i\), and replace these subformulas with \(\bot\).
  Denote the resulting formula by \(\varphi_i\).
  By Lemma~\ref{exists_eq_bot_for_small_db} \[\exists^{\geq i}x\,\top\land \neg\exists^{\geq i+1}x\,\top \land \varphi\equiv \exists^{\geq i}x\,\top\land \neg\exists^{\geq i+1}x\,\top \land \varphi_i,\]
  since \(\exists^{\geq i}x\,\top\land \neg\exists^{\geq i+1}x\,\top\) can only be satisfied by databases with exactly \(i\) values.
  Now we can argue as before to transform this formula into GHML.
  In the first step, we can use Lemma~\ref{counting_quantifier_scope_change} to pull out existential quantifiers.
  In the second step we can use De~Morgan laws, distributivity laws and Lemma~\ref{top_bot_equivalences} to transform \(\varphi_i\) into a GHML formula \(\varphi_i'\).
  Then \[\exists^{\geq i}x\,\top\land \neg\exists^{\geq i+1}x\,\top \land \varphi_i\equiv \exists^{\geq i}x\,\top\land \neg\exists^{\geq i+1}x\,\top \land \varphi_i'.\]
  As before, the second formula can be expressed in GHML.

  Since GHML is closed under (unary) disjunction, \(\varphi\) is equivalent to the GHML formula 
  \[(\exists^{\geq c}x\, \top\land \varphi_c')\,\lor\bigvee_{1\leq i <c} (\exists^{\geq i}x\,\top\land \neg\exists^{\geq i+1}x\,\top \land \varphi_i').\]

\end{proof}






\section{Proofs for Section~\ref{sect:DHNs}}

\subsection{Max Aggregation: Expressive Power}

%
%
We prove Point~1 of Theorem~\ref{thm:combi}.
Since UNFO and HML are known to have the same expressive power \cite{DBLP:journals/corr/SegoufinC13}, we can disregard UNFO and work only with HML.
\begin{lemma}\label{lem:UNFO-to-max-DHN-uniform}
    Every  HML formula is equivalent to a simple  $\maxagg$-DHN.
  \end{lemma}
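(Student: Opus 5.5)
We argue by induction on the HML formula, maintaining the following invariant. For every HML formula $\varphi(x)$ we build a simple $\maxagg$-DHN $\calN_\varphi$, without the classification function, whose final embedding has a designated component equal to $1$ at $v$ if $D\models\varphi(v)$ and $0$ otherwise. Simple max-DHNs are easy to run in parallel. Given several of them, we pad the shallower ones to a common depth using trivial layers. Such a layer uses the single-value pattern $F^\bullet=\{\}$ with pointed value $v$, whose only homomorphism maps $v$ to the current value. Its transformation is $\trrelu(\bar x)$, which is the identity on $\{0,1\}$-vectors. We then concatenate the homomorphism queries layer by layer and use block-diagonal combination matrices. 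The base case is absorbed into the CQ case, since unary atoms are CQs with empty $\bar y$. The final classifier thresholds the designated component at $\geq 1$.

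\emph{Boolean cases.} For $\neg\varphi$ we add one layer with the trivial pattern and combination $\trrelu(1-x)$. For $\varphi_1\vee\varphi_2$ we run both networks in parallel and add a layer with combination $\trrelu(x_1+x_2)$. These are exactly the operations allowed in a simple layer: an affine map with rational coefficients followed by $\trrelu$.

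\emph{CQ case.} Let $\varphi(x)=\exists\bar y\,\psi(x,\bar y)$. Here $\psi$ is a conjunction of relational atoms together with unary subformulas $\chi_1(z_1),\dots,\chi_r(z_r)$, where each $z_j$ is $x$ or a variable from $\bar y$.
\begin{enumerate}
\item By induction we have networks for the $\chi_j$, and we run them in parallel so that each value carries a $\{0,1\}$-vector of truth values $(\chi_1,\dots,\chi_r)$.
\item Let $F$ be the database of relational atoms of $\psi$, pointed at $x$. To every variable $z$ in $\bar y$ or $x$ that does not occur in a relational atom, we attach a fresh unary relation is not available, so instead we observe that such a $z$ ranges over all of $\mn{adom}(D)$. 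It is then handled by a disconnected single-value component of $F$. This is allowed, because only connected DHNs forbid it. (Equivalently, $F$ is simply the set of atoms with the variables as values.) A subtle point: $F$ must contain all variables as values, but $\mn{adom}(F)$ only contains values appearing in facts. I would handle isolated variables $z$ by conjoining the tautological existence of the value. Concretely, I would replace $\chi(z)$ by an auxiliary atom-free component realized via a global max over all values. This is the fiddly point I expect to be the main obstacle, and I would resolve it by allowing $F$ with isolated values, or by splitting off such variables into separate homomorphism queries whose single non-pointed value ranges over the whole domain.
\item For each value $z$ of $F$ we set the transformation $\mu_z(\bar x)=\trrelu\big(\sum_{j:z_j=z}x_j-(k_z-1)\big)$, where $k_z$ is the number of $\chi_j$ attached to $z$, and $\mu_z\equiv 1$ if $k_z=0$. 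This is the conjunction of the $0/1$ inputs, and it is again a simple affine map followed by $\trrelu$.
\item The product $\prod_z\mu_z(\lambda(h(z)))$ is then $1$ exactly when $h$ witnesses $\psi$, and $0$ otherwise. Taking $\max$ over $\Hom(F^\bullet,D^v)$ therefore gives $1$ iff $D\models\varphi(v)$, using the convention $\max(\emptyset)=0$. The combination function is the $\trrelu$ of the identity.
\end{enumerate}

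Correctness follows since all intermediate values stay in $\{0,1\}$, so $\trrelu$ acts exactly as needed. The main care points are the handling of variables not occurring in relational atoms, and the synchronisation of depths when combining subnetworks. Both are resolved by the trivial single-value patterns described above.
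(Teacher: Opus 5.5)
Your proposal is correct and follows essentially the paper's construction. Subformulas are evaluated bottom-up as $0/1$ components, Boolean connectives are handled by a fact-free single-value pattern followed by an affine map and $\trrelu$, and each $\exists\bar y\,\psi$ is handled by one max-aggregated homomorphism query whose per-value transformation computes the conjunction of the attached unary subformulas. The only difference is organisational: you compose subnetworks in parallel with padding, whereas the paper enumerates all subformulas and computes one per layer while copying the rest via an extra single-value query. Your worry about variables that occur in no relational atom is settled exactly as the paper tacitly settles it. The paper takes $\mn{adom}(F_1)$ to contain $x$ and all of $\bar y$, just as it uses fact-free single-value patterns, so your first option of allowing isolated values in $F$ suffices and the ``global max'' alternatives are unnecessary.
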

\begin{proof}
  Let $\varphi$ be a  HML formula. The set of \emph{subformulas} of $\varphi$ is defined in the expected way, according to the syntax rule used to define HML formulas. Fix an enumeration  $\varphi_1,\ldots,\varphi_k$ of the subformulas of $\varphi$ such that if $\varphi_\ell$ is a subformula of $\varphi_k$,
then $\ell \leq k$. In particular, this implies that conjunctions of unary atoms that are used in $\varphi$ must occur at the beginning of the enumeration.

We construct a $\maxagg$-DHN $\calN=(\calL_1,\dots,\calL_k,\cls)$ with $k$ layers, all of output dimension $k$.
To encode satisfaction of a subformula $\varphi_i$ 
at a value~$v$, we store a $1$ in the $i$-th component of the feature vector of $v$; likewise, falsification
of $\varphi_i$ is encoded by $0$.
The DHN evaluates one subformula in each layer
so that for every $1\leq i\leq k$, from layer $i$ on all subformulas $\varphi_1,\ldots,\varphi_i$ are  encoded correctly.

We next define the layers $\calL_1,\dots,\calL_k$. Recall
that, to define a layer $\calL_i=(\calF_i,\com_i)$, we 
need to define a sequence of
homomorphism queries 
$\calF_i$ and a simple combination function $\com_i$.
All homomorphism queries will be of output dimension $k$.
We make a case distinction according to 
the form of the formula $\varphi_i$.

First assume that $\varphi_i$
takes the form $\varphi_{i_1} \vee \varphi_{i_2}$. Then $\calF_i$ consists of the homomorphism query $(F^\bullet,\mu,\maxagg)$
where
\begin{itemize}

    \item $F^\bullet$ is the single-value database without any facts.

  \item $\mu$ maps every 
        tuple $\bar x \in \mathbb{R}^k$ to the
        tuple that can be obtained from $\bar x$ by putting 
        \(\max(x_{i_1},x_{i_2})\) into the $i$-th position.

\end{itemize}
The combination function $\com_i$ is the identity.
It is straightforward to verify
that these transformation and combination functions are affine with 
$\trrelu$ activation.
For the function $\mu$ we need to choose an appropriate matrix $A$ and bias vector $\bar b$.
We can choose the identity matrix updated
so that entry $(i,i)$ is 0 rather than~1, and entries $(i_1,i)$ and \((i_2,i)\) are 1 rather than 0.
This yields sum rather than
max, but this is the same since the
component-wise inputs are guaranteed to be $(0,0)$, $(1,0)$, \((0,1)\), or $(1,1)$, and the $\trrelu$ corrects the sum of 2 back to 1,
attaining max as desired.

If \(\varphi_i = \neg\varphi_j\), then \(\mathcal{F}_i\) consists of the homomorphism query \((F^\bullet,\mu, \max)\), where \(F^\bullet\) is the single-value database without any facts and \(\mu\) maps every tuple \(\bar{x}\in \mathbb{R}^k\) to the tuple that can be obtained from \(\bar{x}\) by putting \(1-x_j\) into the \(i\)-th position.
The combination function is then the identity function.
For \(\mu\) we can choose the identity matrix updated such that the entry \((i,i)\) is \(0\), the entry \((j,i)\) is \(-1\), and the \(i\)-th entry in the bias is \(1\).

Finally assume that $\varphi_i$ takes the form 
$\exists \bar y\, \psi(x,\bar y)$.
It is equivalent to the form
\[\exists \bar y\,\Big(\psi_0\land \bigwedge_{z\in \{x\}\cup \bar{y}}\psi_z\Big),\]
where \(\psi_0\) is a conjunction of relational atoms and each \(\psi_z\) is a conjunction of (zero or more) HML formulas with the free variable \(z\).
Then $\calF_i$ consists of two homomorphism queries, one for evaluating all conjuncts in \(\psi\), and one for copying the previous embedding.

The first homomorphism query \((F_1^x, \mu_1,\maxagg)\) is defined as follows:
\begin{itemize}

  \item The active domain of \(F_1\) contains $x$ and the variables in $\bar y$,
  \item \(F_1\) contains a fact \(R(z_1,\ldots,z_k)\) for every atom \(R(z_1,\ldots,z_k)\) in \(\psi_0\),
  \item and \(\mu_1\) assigns to each value \(z\in \mn{adom}(F_1)\) the transformation function that maps every tuple \(\bar{x}\in \mathbb{R}^k\) to the tuple \((1)\) if the \(j\)-th position of \(\bar{x}\) is \(1\) for all conjuncts \(\varphi_j(z)\) in \(\psi_z\), and \((0)\) otherwise.
        It also returns \((1)\) if \(\psi_z\) is the empty conjunction.
\end{itemize}

The second query \((F_2^\bullet, \mu_2,\maxagg)\) is defined as follows:
\begin{itemize}
  \item a single-value database without any facts,
  \item if \(i=1\), \(\mu_2\) returns the \(k\)-dimensional vector with all entries set to \(0\), and
  \item if \(i > 1\), \(\mu_2\) is the identity function.
\end{itemize}
The combination function $\com_i$ then moves the result of the first homomorphism query to position \(i\) and deletes the \(i\)-th entry in the result of the second query.
It is again straightforward to verify that all transformation functions are affine with \(\trrelu\) activation.

As the classification function we use
    \[\cls(\bar{x}) = \begin{cases}
        1 & \text{ if } x_k \geq 1\\
        0 & \text{ otherwise}.
    \end{cases}\]
It is straightforward to show the correctness of the translation, that is:
\\[2mm]
{\bf Claim.} For all databases $D$, values $v \in \mn{adom}(D)$, and $i,j$
with $1 \leq i \leq j \leq k$, the following holds: \[(\lambda^j_{\calN,D}(v))_i = \begin{cases}
    1&\text{ if } D\models \varphi_i(v),\\
    0&\text{ otherwise.}
\end{cases}\]
\end{proof}



%
To show the second direction of Point~1 of Theorem~\ref{thm:combi}, we first prove that, across all \(\Sbf\)-databases, the set of embedding vectors ever computed by a $\maxagg$-DHN is finite.
\begin{lemma}
\label{lem:finitelymanyvectors}
Let \(\Sbf\) be a schema and let $\mathcal{N}=(\calL_1, \dots \calL_k,\cls)$ be a $\maxagg$-DHN over \(\Sbf\).
Then for $1 \leq \ell \leq k$, the set 
    \[
      \chi^\ell_{\calN}= \{ \lambda^\ell_{\calN,D}(v) \mid D \text{ an \(\Sbf\)-database and } v \in \mn{adom}(D) \}
    \]
    is finite.
\end{lemma}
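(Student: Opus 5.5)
We argue by induction on $\ell$, proving the slightly stronger statement with $\ell=0$ included. For $\ell=0$ the claim is trivial: $\lambda^0_{\calN,D}$ assigns the empty vector $()$ to every value, so $\chi^0_\calN=\{()\}$ has a single element.

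For the inductive step, assume $\chi^{\ell-1}_\calN$ is finite and let $\calL_\ell=(\calF,\com)$ with homomorphism queries $(F^\bullet_1,\mu_1,\maxagg),\dots,(F^\bullet_m,\mu_m,\maxagg)$. Fix a query $(F^\bullet_i,\mu_i,\maxagg)$ and a pointed database $D^u$.

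\begin{itemize}
\item Every homomorphism $h\in\Hom(F_i^\bullet,D^u)$ contributes the vector $\prod_{w\in\mn{adom}(F_i)}\mu_{i,w}(\lambda^{\ell-1}_{\calN,D}(h(w)))$.
\item Each argument $\lambda^{\ell-1}_{\calN,D}(h(w))$ lies in the finite set $\chi^{\ell-1}_\calN$, and $\mn{adom}(F_i)$ is finite.
\item Hence every contributed vector lies in the finite set $V_i=\{\prod_{w}\mu_{i,w}(\bar a_w)\mid \bar a_w\in\chi^{\ell-1}_\calN\}$.
\end{itemize}

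The key point is that $\maxagg$ is applied component-wise and is idempotent. The result therefore depends only on the \emph{set} of contributed vectors, not on their multiplicities, and each of its components is the maximum of the corresponding components of those vectors. If $\Hom(F_i^\bullet,D^u)=\emptyset$, the result is the zero vector by the convention $\max(\emptyset)=0$.

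Consequently, each component of $\res((F_i^\bullet,\mu_i,\maxagg),(D^u,\lambda^{\ell-1}_{\calN,D}))$ lies in the finite set of $j$-th components of vectors in $V_i$, together with $0$. So the output of query $i$ ranges over a finite product $W_i$ of finite sets. This is the one place where max aggregation matters: for $\sumagg$ the multiplicities would produce unboundedly many values. The concern about unbounded $|\Hom|$ disappears once we observe that max sees only which elements occur.

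It follows that $\lambda^\ell_{\calN,D}(u)\in\com(W_1\times\cdots\times W_m)$. This set is finite and independent of $D$ and $u$, so $\chi^\ell_\calN$ is finite. We place no assumptions on $\mu$, $\com$ or the schema beyond the finiteness of each $F_i$, so the argument works for arbitrary transformation and combination functions.
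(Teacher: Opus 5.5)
Your proof is correct and follows essentially the same route as the paper. Both argue by induction on $\ell$ and rely on the key observation that $\maxagg$ depends only on the set, not the multiset, of contributed vectors, each of which comes from a finite set determined by $\chi^{\ell-1}_{\calN}$. The only cosmetic difference is that you bound each output component separately by the finite set of attained $j$-th components together with $0$. The paper instead ranges over subsets $S_i$ of labelings $\lambda\colon \mn{adom}(F_i)\to\chi^{\ell-1}_{\calN}$. Either way the resulting range is finite and independent of $D$.
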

\begin{proof}
 We proceed  by induction on~$\ell$.
  For the induction start,
$\chi^0_{\calN}$ is the set that contains only the empty vector. For the induction step, let $\ell >0$.
Let  $\calL_\ell=(\calF_\ell,\com_\ell)$,
 with $\calF_\ell$ the sequence $(F^\bullet_1,
\mu_{1},\agg_1), \dots, (F^\bullet_m, \mu_{m},\agg_m)$.
Then the set $\chi^\ell_{\calN}$ 
consists only of vectors $\bar x$  that can be
obtained by choosing, for
$1 \leq i \leq m$, a set $S_i$ of mappings $\lambda:\mn{adom}(F_i) \rightarrow \chi_{\calN}^{\ell-1}$
and setting ${\bar x = \com_\ell(\bar y_1,\dots, \bar y_m)}$ where, for $1 \leq i \leq m$,
\begin{align*}
   \bar y_i := \underset{\lambda \in S_i}{\maxagg} \prod_{v \in \mn{adom}(F)} \mu_v(\lambda(v)).
\end{align*}
Clearly,
this implies that $\chi^\ell_{\calN}$ is finite. The crucial difference to sum and mean aggregation  is that we can work with a set  rather than with multisets since max aggregation
is indifferent to multiplicities.\end{proof}

\begin{lemma}\label{lem:max-DHN-to-UNFO-uniform}
    Every  $\maxagg$-DHN $\calN$ is equivalent to a  HML formula.
\end{lemma}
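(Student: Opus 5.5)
We build, by induction on the layer index $\ell$, a family of HML formulas $\{\varphi^\ell_{\bar c}(x) \mid \bar c \in \chi^\ell_{\calN}\}$ such that for every database $D$ and every $v \in \mn{adom}(D)$ we have $D \models \varphi^\ell_{\bar c}(v)$ iff $\lambda^\ell_{\calN,D}(v) = \bar c$. Lemma~\ref{lem:finitelymanyvectors} guarantees that each $\chi^\ell_{\calN}$ is finite, so each family is a finite set of formulas. The final formula is then the finite disjunction $\bigvee_{\bar c \in \chi^k_{\calN},\, \cls(\bar c)=1} \varphi^k_{\bar c}(x)$, which is an HML formula.

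For $\ell = 0$ the set $\chi^0_{\calN}$ contains only the empty vector, so we put $\varphi^0_{()} := \exists \emptyset\, \top$, e.g.\ the empty CQ with $\bar y$ empty. For the induction step, consider layer $\calL_\ell$ with homomorphism queries $(F^\bullet_i,\mu_i,\maxagg)$, $1 \leq i \leq m$. Max aggregation is insensitive to multiplicities, so $\res((F^\bullet_i,\mu_i,\maxagg),(D^v,\lambda^{\ell-1}))$ depends only on the \emph{set} $S_i(D,v)$ of labelings $\lambda \colon \mn{adom}(F_i) \to \chi^{\ell-1}_{\calN}$ that are realized by some homomorphism $h \in \Hom(F^\bullet_i,D^v)$, via $\lambda = \lambda^{\ell-1}_{\calN,D}\circ h$. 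The convention $\max(\emptyset)=0$ handles the case with no homomorphisms.

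For each such labeling $\lambda$ we write the HML formula
\[
\psi_{i,\lambda}(x) = \exists \bar y\, \Big(\bigwedge_{R(\bar z)\in F_i} R(\bar z) \wedge \bigwedge_{z \in \mn{adom}(F_i)} \varphi^{\ell-1}_{\lambda(z)}(z)\Big),
\]
where the distinguished value of $F_i$ is renamed to $x$ and the other values to the variables $\bar y$. Then $D \models \psi_{i,\lambda}(v)$ iff $\lambda \in S_i(D,v)$, by the induction hypothesis. The set $S_i(D,v)$ is a subset of a finite set $\Lambda_i$ of labelings. For each subset $S \subseteq \Lambda_i$ we form
\[
\theta_{i,S} = \bigwedge_{\lambda\in S}\psi_{i,\lambda} \wedge \bigwedge_{\lambda \in \Lambda_i\setminus S}\neg\psi_{i,\lambda}.
\]
This formula is HML, since unary conjunction is expressible from $\neg$ and $\vee$. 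It holds at $v$ exactly when $S_i(D,v) = S$.

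Every tuple $(S_1,\dots,S_m)$ of subsets determines the output vector
\[
\bar c(S_1,\dots,S_m) = \com_\ell\big(\bar y_1,\dots,\bar y_m\big), \qquad \bar y_i = \maxagg_{\lambda\in S_i}\prod_{u}\mu_{i,u}(\lambda(u)).
\]
We therefore set $\varphi^\ell_{\bar c} := \bigvee\{\bigwedge_i \theta_{i,S_i} \mid \bar c(S_1,\dots,S_m)=\bar c\}$. Correctness follows directly, and no constraints on the functions $\com_\ell$, $\mu$ or $\cls$ are needed.

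The main obstacle is purely bookkeeping: making sure every formula produced fits the HML syntax. Conjunctions inside the CQ are allowed, and negation is only applied to unary formulas, so this holds. Disconnected patterns $F_i$ are harmless, because HML allows arbitrary CQs. The remaining point is to justify the use of sets rather than multisets, which is exactly the argument in the proof of Lemma~\ref{lem:finitelymanyvectors}.
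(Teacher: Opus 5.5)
Your proposal is correct and follows essentially the same route as the paper: induction on $\ell$ using the finiteness of $\chi^\ell_{\calN}$ from Lemma~\ref{lem:finitelymanyvectors}, with the paper's partitions $X_i,Y_i$ of the labelings corresponding exactly to your $S$ and $\Lambda_i\setminus S$. The paper likewise builds one CQ-with-unary-labels formula per labeling, conjoins these formulas and the negations of the others, takes the disjunction over all choices that $\com_\ell$ maps to the target vector, and finally disjoins the formulas for the vectors accepted by $\cls$.
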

\begin{proof}
    Let $\calN = (\calL_1, \dots \calL_k,\cls)$ be a  $\maxagg$-DHN. By
    Lemma~\ref{lem:finitelymanyvectors},
    we may build an equivalent HML formula
    $\varphi(x)$ by constructing, for every
    $\ell \in \{1,\dots,k\}$ and embedding vector $\bar x \in \chi^\ell_{\calN}$,
    a  HML formula $\psi^\ell_{\bar x}(x)$ such that
    \[
      D \models \psi^\ell_{\bar x}(v) \text{ iff } \lambda^\ell_{\calN,D}(v)=\bar x
      \text{ for all pointed \(\Sbf\)-databases } D^v
    \]
    and then taking the disjunction of all
    formulas $\psi^k_{\bar x}$ such that $\cls(\bar x)=1$.

    The construction of the formulas  $\psi^\ell_{\bar x}$ is by induction on $\ell$. For $\ell=0$, the only relevant vector $\bar x$ is the empty
    vector and we may set $\psi^0_{()}(x)=\text{true}$. Now let $\ell>0$ and $\bar x \in \chi^\ell_{\calN}$. Further let  $\calL_\ell=(\calF_\ell,\com_\ell)$,
 with $\calF_\ell$ the sequence $(F^{u_1}_1,
\mu_{1},\maxagg), \dots, (F^{u_m}_m, \mu_{m},\maxagg)$. We define $\psi^\ell_{\bar x}$ to be a disjunction that contains one disjunct for every possible way to  choose for every
$i \in \{1,\dots,m\}$ a partition $X_i,Y_i$ of the set of mappings $\lambda:\mn{adom}(F_i) \rightarrow \chi_{\calN}^{\ell-1}$ such that 
$\com_\ell(\bar y_1,\dots, \bar y_m) = \bar x$ where, for $1 \leq i \leq m$,
\begin{align*}
   \bar y_i := \underset{\lambda \in X_i}{\maxagg} \prod_{v \in \mn{adom}(F)} \mu_v(\lambda(v)).
\end{align*}
Note that the definition of $\bar y_i$ corresponds to the evaluation of the homomorphism query $(F^{u_i}_i,
\mu_{i},\maxagg)$ under any homomorphism $h$ that is 
`compatible' with  some $\lambda \in X_i$, that is, $h$ maps each $v \in \mn{adom}(F_i)$ to a value
$h(v)$ in the input graph that carries the embedding vector $\lambda(v)$. Clearly, we may view
the  database $F_i$ as a
conjunction $\vartheta_i$ of relational
atoms, that is, $\vartheta_i$ contains
an atom $R(v_1,\ldots,v_k)$ for each fact 
$R(v_1,\ldots,v_k) \in F_i$ where values of
$F_i$ are viewed as variables except that
the distinguished value $u_i$ is renamed to the variable $x$ (and thus shared by all $\vartheta_1,\dots,\vartheta_m$). Let $\bar z_i$ denote the tuple of variables in $\vartheta_i$, ordered arbitrarily but without $x$. Then $\psi^\ell_{\bar x}(x)$
includes as a disjunct
\[
  \bigwedge_{1 \leq i \leq m} \Big (\bigwedge_{\lambda \in X_i}
  \exists \bar z_i \, \big (\vartheta_i
  \wedge \bigwedge_{v \in \mn{adom}(F_i)} \psi^{\ell-1}_{\lambda(v)}(v)
  \big )
  \wedge
 \bigwedge_{\lambda \in Y_i}
  \neg
  \exists \bar z_i \, \big (\vartheta_i
  \wedge \bigwedge_{v \in \mn{adom}(F_i)} \psi^{\ell-1}_{\lambda(v)}(v)
  \big ) \Big ).
\]
It is easy to verify that $\psi^\ell_{\bar x}$ is as required.
\end{proof}

\subsection{Sum Aggregation: Expressive Power}

We do not provide an explicit proof of Point~2 of Theorem~\ref{thm:combi}, because it is a consequence of the (later) Lemma~\ref{lem:UQAFOUNFOC} and Theorem~\ref{thm:DENcombi}.

\medskip


When translating GHML to $\sumagg$-DHNs, it is 
convenient to pass through a weaker variant GHML$^-$ of GHML. GHML$^-$ is defined like GHML, except that  counting formulas now take the form 
$\exists^{\geq k} \bar y \, \psi(x, \bar y)$ with $\psi$ a conjunction of relational atoms and unary GHML$^-$ formulas. Connectedness is defined
as for HML. Note that GHML$^-$ differs from GHML in two ways: (i)~there is now a \emph{single} counting quantifier over a \emph{tuple} of variables rather 
than a sequence of such quantifiers over single variables, and (ii)~$\psi$ is a conjunction rather than a disjunction of conjunctions.
We conjecture that either change decreases the expressive power of GHML. 
%
%
For instance, 
$\exists^{\geq 2} y_1 \exists^{\geq 2} y_2 \, \big ( E(x,y_1) \wedge E(x,y_2) \wedge  E(y_1,y_2)\big )$ is a GHML formula that seems impossible to express even when only restriction~(i) is adopted, and 
$\exists y\exists^{\geq 2}z (S(x,y,z)\lor R(x,y,z))$ seems hard
to express even when only restriction~(ii) is used. The reader might want to contrast this
with HML, for which it is easy to see
that neither of these variations makes a difference in expressive power. On databases of bounded degree, however, GHML and GHML$^-$ have the same expressive power.
 To show this, we first introduce some
preliminaries.

  Let \(F^v, D^u\) be  pointed databases,  \({\tau \subseteq \{x\neq y\mid x,y\in \mn{adom}(F)\}}\)  a set of inequalities,
  and let \({\Phi: \mn{adom}(F)\rightarrow L}\), where \(L\) is a set of unary formulas.
  A homomorphism $h$ from $F^v$ to $D^u$ \emph{satisfies}
  $\tau$ if $h(x)\neq h(y)$  for all $x\neq y\in \tau$
  and it \emph{satisfies} $\Phi$ if
  $D \models \Phi(x)(h(x)) \text{ for all } x\in \mn{adom}(F)$.
  We use $\Hom(F^v, D^u, \tau, \Phi)$ to denote the set of
  homomorphisms $h\in \Hom(F^v, D^u)$ that satisfy both $\tau$ and $\Phi$, and
  $\Inj(F^v, D^u, \Phi)$
  to denote the set of injective
  homomorphisms $h\in \Hom(F^v, D^u)$ that satisfy  $\Phi$.
%
%

We next observe that
we can compute the cardinality
of $\Inj(F^v, D^u,\Phi)$
from the cardinalities of certain
sets $\Hom(H^w, D^u,\emptyset, \Psi)$.
%
%
  Let \(F^v\) be a pointed database and
%
  let \(P\) be a partition of \(\mn{adom}(F)\).
%
For  
  \(\Phi: \mn{adom}(F)\rightarrow L\), we define \(\Phi/P\) as the function that maps \([v]_P\) to 
  $\bigwedge_{u\in [v]_P} \Phi(u).$

\begin{lemma}\label{lemma: compute number of injective hom}
  Let \(F^v\) be a pointed database and  \(\calP\) the set of all partitions of \(\mn{adom}(F)\).
  Then for every \(P\in \calP\) there exists  a coefficient \(\mu_P\) such that for all databases \(D^u\):
  \begin{align*}
    |\Inj(F^v, D^u,\Phi)| = \sum_{P\in\calP} \mu_P |\Hom(F^v/P, D^u,\emptyset, \Phi/P)|.
  \end{align*}
\end{lemma}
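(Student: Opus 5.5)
The identity is Möbius inversion on the lattice of partitions of \(\mn{adom}(F)\), with the unary side conditions \(\Phi\) carried along. I will first prove the reverse identity and then invert it.

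\emph{Step 1: the reverse identity.} For every pointed database \(H^w\) and labelling \(\Psi\) I claim
\[
|\Hom(H^w,D^u,\emptyset,\Psi)| = \sum_{Q} |\Inj(H^w/Q,D^u,\Psi/Q)|,
\]
where \(Q\) ranges over all partitions of \(\mn{adom}(H)\). This is the decomposition already used in the proof of Theorem~\ref{lovasz homomorphism count implies embedding count}. A homomorphism \(h\) determines its kernel partition \(Q\), with \(a\sim b\) iff \(h(a)=h(b)\). It then factors uniquely as the quotient map \(H^w\to H^w/Q\) followed by an injective homomorphism \(g\colon H^w/Q\to D^u\).

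\emph{Checking the side conditions in Step 1.} I need that \(h\) satisfies \(\Psi\) iff \(g\) satisfies \(\Psi/Q\). This holds because \(g([a]_Q)=h(b)\) for every \(b\in[a]_Q\). So \(D\models\bigwedge_{b\in[a]_Q}\Psi(b)\,(g([a]_Q))\) holds iff \(D\models\Psi(b)(h(b))\) for all \(b\in[a]_Q\). The distinguished value is also respected: \([w]_Q\) is the distinguished value of \(H^w/Q\) and \(g([w]_Q)=h(w)=u\). The map \(h\mapsto(Q,g)\) is therefore a bijection, and the identity follows.

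\emph{Step 2: apply the identity to quotients.} Apply Step 1 with \(H^w=F^v/P\) and \(\Psi=\Phi/P\). I would identify a partition of the blocks of \(P\) with a coarser partition \(P'\ge P\) of \(\mn{adom}(F)\). Then I check two bookkeeping facts: \((F^v/P)/Q\cong F^v/P'\) as pointed databases, and \((\Phi/P)/Q\) is equivalent to \(\Phi/P'\) up to reassociating conjunctions, which does not affect satisfaction. This yields, for all \(P\in\calP\),
\[
|\Hom(F^v/P,D^u,\emptyset,\Phi/P)| = \sum_{P'\ge P}|\Inj(F^v/P',D^u,\Phi/P')|.
\]
These equations form a linear system indexed by \(\calP\). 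Its matrix is the zeta matrix of the partition lattice: it is unitriangular with respect to any linear extension of refinement, and its entries do not depend on \(D^u\).

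\emph{Step 3: invert.} Inverting via the Möbius function of the partition lattice gives
\[
|\Inj(F^v/P,D^u,\Phi/P)| = \sum_{P'\ge P}\mu(P,P')\,|\Hom(F^v/P',D^u,\emptyset,\Phi/P')|.
\]
Specialising to the finest partition \(P_0\), we have \(F^v/P_0\cong F^v\) and \(\Phi/P_0\equiv\Phi\). So the claim holds with \(\mu_P=\mu(P_0,P)\), which is in fact \(\prod_{B\in P}(-1)^{|B|-1}(|B|-1)!\). Invertibility alone already suffices, since the coefficients depend only on \(F\) and not on \(D^u\) or \(\Phi\).

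The main obstacle is the careful identification in Step 2, where iterated quotients and iterated conjunctions of \(\Phi\) must line up so that the system is indexed consistently by \(\calP\). Everything else is routine.
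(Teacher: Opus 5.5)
Your proposal is correct and follows essentially the paper's route. The paper only remarks that the proof of the Lov\'asz homomorphism count theorem extends: factor each homomorphism through the quotient by its kernel partition, carrying $\Phi/P$ along, and invert the resulting unitriangular system, which is exactly your Steps~1--3 (you index the system by partitions rather than isomorphism types and additionally identify the coefficients as the M\"obius values $\mu_P=\prod_{B\in P}(-1)^{|B|-1}(|B|-1)!$).
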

\begin{proof}
The proof of
Theorem~\ref{lovasz homomorphism count implies embedding count}
can be extended straightforwardly to yield this result.
\end{proof}

We also state distributivity laws for large conjunctions and disjunctions.
\begin{lemma}\label{lemma:large_distributivity}
    Let \(n,m\in \mathbb{N}\) and let \(\varphi_{i,j}\) be a UNFOC formula for each \(1\leq i\leq n\) and \(1\leq j\leq m\).
    Let \(\calF =\{f: \{1,\ldots,n\}\to \{1,\ldots,m\}\}\).
    Then the following equivalence holds:
    \[\bigwedge_{i=1}^n\bigvee_{j=1}^m \varphi_{i,j} \equiv \bigvee_{f\in \calF}\bigwedge_{i=1}^n \varphi_{i,f(i)}.\]
\end{lemma}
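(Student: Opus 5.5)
This is the standard distributivity law, generalised from binary to finite conjunctions and disjunctions. I would prove it semantically rather than by iterating the binary law, since that avoids bookkeeping over bracketings. The one point to settle first is that the right-hand side is a legitimate UNFOC formula. UNFOC is closed under $\wedge$ and $\vee$ with arbitrary free variables; only negation is restricted. Each $\bigwedge_{i=1}^n \varphi_{i,f(i)}$ is therefore a UNFOC formula, and since $\calF$ is finite (it has $m^n$ elements), so is the disjunction over $f \in \calF$.

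For the equivalence, I fix a database $D$ and an assignment $\alpha$ of the free variables of all the $\varphi_{i,j}$ to values in $\mn{adom}(D)$. For the left-to-right direction, suppose $D \models \bigwedge_{i=1}^n\bigvee_{j=1}^m \varphi_{i,j}$ under $\alpha$. Then for each $i$ there is some $j$ with $D \models \varphi_{i,j}$ under $\alpha$. I define $f(i)$ to be, say, the least such $j$. This $f$ lies in $\calF$, and $\alpha$ satisfies $\bigwedge_{i} \varphi_{i,f(i)}$, hence the right-hand side. For the right-to-left direction, if $\alpha$ satisfies $\bigwedge_i \varphi_{i,f(i)}$ for some $f\in\calF$, then for every $i$ the disjunct $\varphi_{i,f(i)}$ witnesses $\bigvee_j \varphi_{i,j}$, so the left-hand side holds.

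Alternatively, one could argue by induction on $n$ using the binary distributive law $(\bigvee_j \psi_j)\wedge(\bigvee_{j}\varphi_{n,j}) \equiv \bigvee_{j,j'}(\psi_j\wedge\varphi_{n,j'})$. There is no real obstacle here. The only subtlety worth stating is that the semantic notion of equivalence is over all databases and assignments, and that neither direction introduces a negation. This is why membership in UNFOC is preserved even though the $\varphi_{i,j}$ may have several free variables.
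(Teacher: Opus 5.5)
Your proof is correct and follows the paper's route: a direct semantic argument that picks a witnessing $j_i$ for each $i$ to define $f$, with the converse read off immediately. The paper spells this out only for formulas without free variables and says the extension is straightforward; your version fixes an assignment from the start and also checks that the right-hand side is a UNFOC formula, so it covers both points the paper leaves implicit.
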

\begin{proof}
To keep this proof simple, we only show this lemma for \(\varphi_{i,j}\) without free variables. It is straightforward to extend this proof to formulas with free variables.

   If a database \(D\) satisfies the left side, then for each \(i\) exists a \(j_i\) such that \(D\) satisfies \(\varphi_{i,j_i}\).
   One can now define \(f(i)=j_i\). \(D\) then satisfies \(\bigwedge_{i=1}^n\varphi_{i,f(i)}\) and hence also the formula on the right side.

   Conversely, if a database \(D\) satisfies the right side, then there exists a \(f\in \cal F\) such that \(D\) satisfies each \(\varphi_{i,f(i)}\).
   Thus, the database satisfies \(\bigvee_{j=1}^m\varphi_{i,j}\) for each \(1\leq i\leq n\).
   Therefore, \(D\) also satisfies the formula on the left side.
\end{proof}

\begin{lemma}
\label{lem:GHMLvsGHMLminus}
   Let $B \geq 0$. Then
   \begin{enumerate}

       \item every  (connected) GHML$^-$ formula is equivalent to a (connected) GHML formula
       and
       
       \item on databases of degree at most $B$, every (connected) GHML formula is equivalent to a (connected) GHML$^-$ formula.
       
   \end{enumerate}
   %
\end{lemma}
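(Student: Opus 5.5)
The proof splits into two translations. Point~1 needs only syntactic rewriting. Point~2 uses the degree bound to reduce everything to finitely many local types.

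\emph{Point 1.} I would translate $\exists^{\geq k} y_1\cdots y_m\, \psi(x,\bar y)$, with $\psi$ a conjunction of atoms and unary GHML$^-$ formulas, by induction on $m$ (and on nesting depth for the unary subformulas). Write $N(x)$ for the number of tuples $\bar y$ satisfying $\psi$, and $f(y_1)$ for the number of tuples $(y_2,\dots,y_m)$ satisfying $\psi$. Then $N=\sum_{y_1} f(y_1)$. Hence $N\geq k$ holds iff there are $r\leq k$ and thresholds $c_1\geq\dots\geq c_r$, each $\leq k$, with $\sum_j c_j\geq k$, such that for every $j\leq r$ at least $j$ values $y_1$ satisfy $f(y_1)\geq c_j$.

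This gives a finite disjunction over the choices $(c_1,\dots,c_r)$ of conjunctions $\bigwedge_j \exists^{\geq j} y_1\, \exists^{\geq c_j}(y_2,\dots,y_m)\,\psi$. The inner counting quantifier has fewer variables, so the induction hypothesis applies to it. To land in GHML syntax I then pull the quantifiers into a single prefix with one matrix. I use only the conjunctive case of Lemma~\ref{counting_quantifier_scope_change}: $\alpha\wedge\exists^{\geq b}z\,\beta\equiv\exists^{\geq b}z\,(\alpha\wedge\beta)$ for $z$ not free in $\alpha$. As its proof notes, this equivalence needs no lower bound on the database size, so no size case split arises. For this I rename the bound variables so that each conjunct gets its own fresh copy of $y_2,\dots,y_m$.

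After pulling, the matrix is a conjunction of copies of the atoms of $\psi$ that all share $x$ (and $y_1$). Consequently, if $\psi$ is connected and uses $x$, so is the matrix, and connectedness is preserved. The remaining disjunction over the choices of thresholds is a unary disjunction, which GHML allows.

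\emph{Point 2, connected case.} Fix a GHML formula $\exists^{\geq k_1}y_1\cdots\exists^{\geq k_m}y_m\bigvee_i\psi_i$ whose unary subformulas have, by induction, been replaced with GHML$^-$ formulas from a finite set $\Sigma$. Each $\psi_i$ is connected and uses $x$, so every witness tuple lies within Gaifman distance $r\leq m$ of $x$. With degree at most $B$, the $r$-ball around $x$ has at most $N(B,r)$ values. Truth of the formula at $v$ is therefore determined by the isomorphism type of the $r$-ball $H^x$ around $v$, expanded by the truth values of the formulas in $\Sigma$ at each ball element. There are only finitely many such labelled types.

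For each type $H^x$ I would write a connected GHML$^-$ formula $\theta_H(x)$ stating that the labelled ball of $x$ is isomorphic to $H$. The formula asserts one embedding of $H$ with the prescribed labels and also that $H$ does not embed into any one-step larger bounded-degree labelled ball. Embedding counts are needed here, but GHML$^-$ has no inequalities. Using Lemma~\ref{lemma: compute number of injective hom} together with the Lov\'asz argument of Theorem~\ref{lovasz homomorphism count implies embedding count}, every labelled embedding count is a fixed linear combination of labelled homomorphism counts $|\Hom(H^x/P,D^v,\emptyset,\Phi/P)|$. Under the degree bound each such count is bounded by a constant $M$. So each count can be pinned to an exact value by $\exists^{\geq c}\bar y\,\cdots\wedge\neg\exists^{\geq c+1}\bar y\,\cdots$ with $c\leq M$. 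The (unlabelled) embedding condition for a given count vector is then a finite Boolean combination of such formulas. The quotients $H/P$ stay connected, so $\theta_H$ is connected. The original formula is equivalent on degree-$\leq B$ databases to $\bigvee\{\theta_H \mid H \text{ satisfies it}\}$.

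\emph{Main obstacle.} I expect the hardest step to be the non-connected version of Point~2. There, variables in components of $\psi_i$ not containing $x$ range over the whole database, so the ball argument fails. My plan is to split each $\psi_i$ into its Gaifman components and handle each component through its own bounded-radius pointed types. I would then express, with disconnected GHML$^-$ counting quantifiers, how many elements realise each relevant type, up to the thresholds $k_j$. A bounded-degree argument shows that each element lies in only boundedly many witnessing component instances. The point where I expect the bookkeeping to be delicate is recombining these per-type global counts into the nested per-variable thresholds $k_1,\dots,k_m$, because the disjunction over the $\psi_i$ mixes different component structures.
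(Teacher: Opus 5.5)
\textbf{Point~1.} Your threshold decomposition is correct; it is the paper's integer-partition formula in conjugate form. For $m\le 2$ the conjunctive pull-out works as you describe. For $m\ge 3$ it breaks.

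\begin{itemize}
\item The induction hypothesis applied to $\exists^{\geq c_j}(y_2,\dots,y_m)\,\psi$ in general returns a \emph{disjunction} over threshold choices. Its disjuncts have the two free variables $x,y_1$ and different quantifier prefixes, and the whole disjunction sits under $\exists^{\geq j}y_1$.
\item Being binary, this disjunction cannot be a conjunct of a GHML matrix, since only atoms and unary formulas are allowed there. Nor can it be moved outside, because $\exists^{\geq j}y_1(\alpha\vee\beta)\not\equiv\exists^{\geq j}y_1\alpha\vee\exists^{\geq j}y_1\beta$.
\item Merging the different prefixes into one prefix with a disjunctive matrix is exactly the disjunctive case of Lemma~\ref{counting_quantifier_scope_change}. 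That case holds only on databases with at least as many values as the thresholds.
\end{itemize}

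So your construction does need a size case split after all, and your argument for preserving connectedness falls with it. The paper avoids the syntactic problem by first producing a UNFOC formula, where $\vee$ may join formulas with several free variables. It then invokes Lemma~\ref{lem:twocombined}, whose translation performs precisely this size split. That split introduces conjuncts such as $\exists^{\geq i}x\,\top$, which are not connected. So the connected variant of Point~1 needs an argument that neither your route nor the paper's supplies.

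\textbf{Point~2, connected case.} Your argument is sound and differs from the paper's.
\begin{itemize}
\item You use locality and define labelled ball types through bounded homomorphism counts and Lov\'asz-style inversion, which is the technique of Lemma~\ref{lem:DHNtoGHML}.
\item The paper instead unfolds the counting prefix into existential patterns with sibling inequalities $\tau$ and takes a DNF over the choice functions $f$. It then splits over the partitions compatible with $\tau$ and expresses $|\Inj(H^u,D^v,\xi)|\ge 1$ by bounded homomorphism counts, without enumerating types.
\end{itemize}
Your argument needs only that the relational part of each $\psi_i$ be connected and contain all quantified variables. It therefore also covers the paper's ``top-level connected'' case.

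\textbf{Point~2, unrestricted case.} This is part of the statement, and you leave it as a plan. The paper reduces it to top-level connected pieces plus a size conjunct $\exists^{\geq k}y\,\top$. It treats the rewriting into a conjunction as routine and does not spell out the case you worry about, where disjuncts have different component structures.

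You can close this gap with your own tools:
\begin{itemize}
\item GHML is a fragment of FO. Apply Hanf locality in its threshold form (Fagin--Stockmeyer--Vardi), marking $v$ with a fresh unary predicate.
\item On databases of degree at most $B$, the truth of $\varphi(v)$ is then determined by two things: the isomorphism type of a ball of suitable radius around $v$, and, for each $r$-ball type, the number of elements of $D$ realizing it, capped at some threshold.
\item The first is expressed by your $\theta_H(x)$ and the second by formulas $\exists^{\geq k}y\,\theta_\sigma(y)$. Both are GHML$^-$.
\item There are finitely many types, so the result is a finite Boolean combination. This avoids the per-component bookkeeping entirely.
\end{itemize}
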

\begin{proof}
  We start with Point~1.
  In view of Lemma~\ref{lem:twocombined}, it suffices to show that 
 UNFOC can express tuple counting quantifiers $\varphi(\bar{x}) = \exists^{\geq n} \bar{y}\,\psi(\bar{x}, \bar{y}).$
 The proof is by induction on the length of \(\bar{y}\).
 For \(|\bar{y}| = 1\) there is nothing to do, since UNFOC can quantify single variables.

 For the induction step, let \(\bar{y} = (y_0, \bar{y}')\).
 
 An \emph{integer partition} of an $n \in \mathbb{N}$ is a sequence of positive
 integers $s = s_1, \ldots, s_\ell$ (with $\ell \geq 0$) such that $n = \sum_{1 \leq i \leq \ell} s_i$.
 Let $S_n$ be the set of all integer partitions of $n$.
 For a given integer partition $s = s_1, \ldots, s_\ell$ of $n$ and $k \in
 \mathbb{N}$, let $\#_{\geq k} (s)$ be the number of integers
 $s_i$ such that $s_i \geq k$.
 Given a database \(D\) and an interpretation \(\bar{a}\) for \(\bar{x}\), there exist \(n\) tuples \(\bar{b}_i\) such that \(D\models \psi(\bar{a},\bar{b}_i)\) if and only if there exist \(\ell\) values \(c_1,\ldots,c_\ell\) and an integer partition \(s_1,\ldots,s_\ell\) of \(n\), such that for each \(c_i\) there exist \(s_i\) tuples \(\bar{d}_i^1,\ldots,\bar{d}_i^{s_i}\) such that \(D\models \psi(\bar{a}, c_i, \bar{d}_i^{j})\).
 
In UNFOC we can express the latter property with the formula
\[
  \bigvee_{s \in S_n} \bigwedge_{1 \leq m \leq n} \exists^{\geq (\#_{\geq m}(s))} y_0 \,\exists^{\geq m} \bar{y}'\,\psi(\bar{x},y_0,\bar{y}').
\]
By induction hypothesis, each \(\exists^{\geq m}\bar{y}'\,\psi\) can be replaced with an equivalent UNFOC formula, resulting in a UNFOC formula that expresses the tuple counting quantifier \(\exists^{\geq n} \bar{y}\,\psi(\bar{x},\bar{y})\)
This concludes the proof of Point~1.

For Point~2, the proof is by induction on the structure of GHML formulas, and it suffices to treat the existential quantifiers.
 We start with GHML formulas that satisfy a weaker notion of connectedness.
 Let \(\psi(x) = \exists^{\geq k_1}x_1\cdots\exists^{\geq k_n}x_n \,\bigvee_{1\leq i\leq \ell}(\mu_i(x,x_1,\ldots,x_n)\land \nu_i(x,x_1,\ldots,x_n))\) be a GHML formula where each \(\mu_i\) is a conjunction of relational atoms and \(\nu_i\) is a conjunction of the remaining unary formulas.
 We say that $\psi$ is  \emph{top-level connected} if the Gaifman graph of \(\mu_i\) viewed as a database is connected and \(x\) and every \(x_i\) occurs in \(\mu_i\). 
 Notice that this definition does not require  the formulas in \(\nu_i\) to be  connected.
 For formulas $\psi$ of the same form, but without a free variable, top-level connectedness is defined analogously.

 Let \(\psi(x) = \exists^{\geq k_1}x_1\exists^{\geq k_2}x_2\cdots\exists^{\geq k_n}x_n\,\bigvee_{1\leq i\leq m}\varphi_i(x,x_1,\ldots,x_n)\) be a GHML formula. We first address the case where $\psi$ is top-level connected.
 It is equivalent to the following UNFOC formula with inequalities:
 \[\psi' = \exists \bar x_1 \cdots \exists \bar x_n \, (\tau\land \varphi'),\]
 where $\bar x_1,\dots,\bar x_n$ are tuples of
 variables as follows:
 \[
 \begin{array}{rcl}
   \bar x_1 &=& x^1_1 \cdots x^{k_1}_1 \\[1mm]
   \bar x_2 &=& x_2^{1,1} x_2^{1,2} \cdots x_2^{1,k_2} x_2^{2,1} \cdots x_2^{2,k_2} x_2^{3,1} \cdots x_2^{k_1,k_2} \\[1mm]
   && \cdots \\[1mm]
   \bar x_{n} &=& x_n^{1,\dots,1} \cdots x_n^{k_1,\dots,k_n}
 \end{array}
 \]
 and
 \[
 \begin{array}{rcl}
 \tau &=& \displaystyle\bigwedge_{j=1}^{n}\bigwedge_{\ell_1 = 1}^{k_1}\cdots \bigwedge_{\ell_{j-1}=1}^{k_{j-1}} \bigwedge_{1 \leq i < i' \leq k_j}  x_j^{\ell_1,\dots,\ell_{j-1},i}\neq x_j^{\ell_1,\dots,\ell_{j-1},i'} \\[7mm] 
\varphi' &=& \displaystyle \bigwedge_{\ell_1 = 1}^{k_1}\cdots \bigwedge_{\ell_n=1}^{k_n} \bigvee_{i=1}^m\varphi_i(x,x_1^{\ell_1},x_2^{\ell_1,\ell_2},\ldots,x_n^{\ell_1,\dots,\ell_n}).
 \end{array}
\]
 
 Let \(\calF =\{f:\{1,\ldots,k_1\}\times\cdots\times\{1,\ldots,k_n\}\to \{1,\ldots,m\}\}\).
 By using distributivity laws and Lemma~\ref{lemma:large_distributivity}, \(\tau\land \varphi'\) can be transformed into disjunctive normal form, that is
 \[\bigvee_{f\in \cal F}\Big(\tau\land\bigwedge_{\ell_1 = 1}^{k_1}\cdots \bigwedge_{\ell_n=1}^{k_n} \varphi_{f(\ell_1,\ldots,\ell_n)}(x,x_1^{\ell_1},x_2^{\ell_1,\ell_2},\ldots,x_n^{\ell_1,\dots,\ell_n})\Big).\]
 
 By using standard FO equivalences, \(\psi'\) is equivalent to
\[\bigvee_{f\in \cal F}\exists\bar x_1\cdots\exists\bar x_n\,\Big(\tau\land\bigwedge_{\ell_1 = 1}^{k_1}\cdots \bigwedge_{\ell_n=1}^{k_n} \varphi_{f(\ell_1,\ldots,\ell_n)}(x,x_1^{\ell_1},x_2^{\ell_1,\ell_2},\ldots,x_n^{\ell_1,\dots,\ell_n})\Big).\]

 Since GHML$^-$ is closed under unary disjunction, it suffices to translate each
 \[\exists\bar x_1\cdots\exists\bar x_n\,\Big(\tau\land\bigwedge_{\ell_1 = 1}^{k_1}\cdots \bigwedge_{\ell_n=1}^{k_n} \varphi_{f(\ell_1,\ldots,\ell_n)}(x,x_1^{\ell_1},x_2^{\ell_1,\ell_2},\ldots,x_n^{\ell_1,\dots,\ell_n})\Big)\]
 into GHML$^-$. Thus, we now fix one \(f\in \calF\).

 We may equivalently write the inner conjunction in the form
 \[\tau\land \bigwedge\xi_{0}\land \bigwedge \xi_i^{\ell_1,\ldots,\ell_i},\]
 where \(\xi_{0}\) is a conjunction of relational atoms and each \(\xi_i^{\ell_1,\ldots,\ell_i}\) is a conjunction of unary GHML formulas with the free variable \(x_i^{\ell_1,\ldots,\ell_i}\).
 Let \(\Xi\) be the function that maps \(x_i^{\ell_1,\ldots,\ell_i}\) to \(\xi_i^{\ell_1,\ldots,\ell_i}\).
 
 Let \(F^x\) be the database defined by \(\xi_0\).
 This database is connected: let \(x_i^{\ell_1,\ldots,\ell_i}\) be a variable in \(\bar{x}_i\), then \(x\) and \(x_i^{\ell_1,\ldots,\ell_i}\) are connected in the formula \(\varphi_{f(\ell_1,\ldots,\ell_i,1\ldots,1)}\), because \(\psi\) is top-level connected.
 This connection is also present in \(F^x\).

 Also, each of the unary formulas has a quantifier depth bounded by the quantifier depth of \(\bigvee_{1\leq i\leq m}\varphi_i\).
 Thus, by the induction hypothesis, each \(\varphi_{f(\ell_1,\ldots,\ell_n)}\) can be written as a conjunction of GHML$^-$ formulas.
 Additionally, also by induction hypothesis, if \(\psi\) is connected, then each of these GHML$^-$ formulas is connected.

 Now we have for each pointed database \(D^v\):
 \begin{align*}
   D \models \psi'(v) &\iff |\Hom(F^x, D^v, \tau, \Xi')| \geq 1
 \intertext{Let \(\calP_\tau\) be the set of all partitions of \(\mn{adom}(F)\) that satisfy \(\tau\). By construction of the databases $F^x/P$, we obtain}
   |\Hom(F^x, D^v, \tau, \Xi')| \geq 1 &\iff \sum_{P\in\calP_\tau} |\Inj(F^x/P, D^v, \Xi'/P)|\geq 1\\
     &\iff \bigvee_{P\in\calP_\tau} (|\Inj(F^x/P, D^v, \Xi'/P)|\geq 1).
 \end{align*}

Since GHML$^-$ is closed under unary disjunction, it thus remains to show how to express \(|\Inj(F^x/P, D^v,\Xi'/P)| \geq 1\).
 Notice that since \(F^x\) is connected, \(F^x/P\) is also connected for all \(P\).
 To simplify notation, we show how to express \(|\Inj(H^u, D^v, \xi)| \geq 1\) for all connected \(H^u\) and \(\xi\) that map \(\mn{adom}(H)\) to conjunctions of unary GHML$^-$ formulas.
 In this step we shall exploit the bounded degree of \(D^v\).
 
 Let \(H^u\) be a connected pointed database, \(\xi: \mn{adom}(H)\rightarrow L\) be a mapping with \(L\)  the set of conjunctions of unary GHML$^-$ formulas,
 and \(B\in\mathbb{N}\). We show that there exists a GHML$^-$ formula \(\chi\) such that \[D\models \chi(v) \iff |\Inj(H^u, D^v, \xi)| \geq 1 \] for all pointed databases \(D^v\) of degree at most \(B\).

 Let \(\calP\) be the set of all partitions of \(\mn{adom}(H)\).
 By Lemma \ref{lemma: compute number of injective hom} it suffices to find a GHML$^-$ formula that expresses \[\sum_{P\in\calP}\mu_P |\Hom(H^u/P, D^v, \emptyset, \xi/P)| \geq 1.\tag{\(*\)}\label{goal}\]
 The property \(|\Hom(H^u/P, D^v, \emptyset, \xi/P)|\geq n\), with \(P = \{p_0, p_1,\ldots, p_k\}\) and \(u\in p_0\),
 is expressible by the unary GHML$^-$ formula
 \[\rho (x_0) = \exists^{\geq n} \bar x \, (\bigwedge_{R(v_{i_1},\dots,v_{i_\ell})\in H^u/P} R(x_{i_1}, \dots,x_{i_\ell})\land \bigwedge_{i=1}^k \xi_i(x_i)),\]
 where $\bar x = x_1\cdots x_k$ and \(\xi_i = (\xi/P)(p_i)\).
 If each formula \(\xi(v)\) is connected, then the above formula is also connected, since \(H^u/P\) is a connected database and each \(\xi/P\) is connected.
 
 Since $H^u$ is connected and the degree of all relevant databases \(D^v\) is bounded by \(B\), the number of homomorphisms in \(\mn{Hom}(H^u/P, D^v,\emptyset, \xi/P)\) is also bounded. 
 Therefore, Property~(\ref{goal}) can be expressed as a finite Boolean combination of unary GHML$^-$ formulas, which itself is again a GHML$^-$ formula.
 This also preserves connectedness.
 
 We now consider the case where \(\psi(x) = \exists^{\geq k_1}x_1\cdots\exists^{\geq k_n}x_n \, \varphi(x,x_1,\ldots,x_n)\)  is not  top-level connected.
 Let \(k =\max(k_1,\ldots,k_n)\) then \[\psi(x)\equiv \exists^{\geq k}y\, \top \land \exists^{\geq k_1}x_1\cdots\exists^{\geq k_n}x_n \, \varphi(x,x_1,\ldots,x_n).\]
 It is now safe to remove \(\exists ^{\geq k_i}x_i\) if \(x_i\) does not occur in \(\varphi\).
Fix some \(B\in \mathbb{N}\) and restrict to databases of degree at most \(B\).
Using well-known equivalences of first-order logic, and the previous observation that unused quantifications can be removed, we can rewrite $\psi$ into a conjunction of existentially quantified formulas, each of which is top-level connected; exactly one
of these contains the free variable $x$. More precisely, \(\psi\) can be rewritten into a conjunction of
 \begin{enumerate}
         \item a formula \(\exists^{\geq \ell_1}y_1\cdots\exists^{\geq \ell_m}y_m \, \varphi'(x,y_1,\ldots,y_m)\) that is top-level connected;
         \item formulas \(\exists^{\geq \ell_1}y_1\cdots\exists^{\geq \ell_m}y_m\,\varphi'(y_1,\ldots,y_m)\)
         that are all top-level connected;
         \item the formula \(\exists^{\geq k} y\,\top\)
 \end{enumerate}
such that, for distinct formulas, the sets of variables quantified in the outermost quantifier block are disjoint.  We have already described how to transform the formula in Point~1 into a GHML$^-$ formula that is equivalent over all databases of degree at most \(B\).
 Each formula $\xi$ in Point~2 can be treated as follows. We first translate 
 \(\exists^{\geq \ell_2}y_2\cdots\exists^{\geq \ell_m}y_m\,\varphi'(y_1,\ldots,y_m)\)
 into a GHML$^-$ formula $\psi(y_1)$ that is equivalent over all databases of degree at most \(B\), as above. Then, \(\exists^{\geq \ell_1}y_1\, \psi'\) is equivalent to $\xi$ over all databases of degree at most \(B\) where $\psi'$ is obtained from $\psi$ by adding an atom $\mn{true}(z)$
 for a fresh variable $z$, to ensure that
 there is a free variable.
 The third formula is already in GHML\(^-\).
 Since GHML$^-$ is closed under conjunction, this finishes the proof.
\end{proof}


We now prove Point~3 of Theorem~\ref{thm:combi}, starting with the direction from GHML to $\sumagg$-DHNs.
The translation  does
not depend on  connectedness, but preserves it. By Lemma~\ref{lem:GHMLvsGHMLminus}, we may start from GHML$^-$ in place of GHML.
Note that, while the subsequent translation does not depend on bounded degree databases, Lemma~\ref{lem:GHMLvsGHMLminus} does.
\begin{lemma}\label{thm:UNFOC-to-sum-DHN-uniform}
    Every (connected) GHML$^-$ formula  is equivalent to a (connected) simple $\sumagg$-DHN.
\end{lemma}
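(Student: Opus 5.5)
We mirror the construction in the proof of Lemma~\ref{lem:UNFO-to-max-DHN-uniform}. Let $\varphi$ be a GHML$^-$ formula. Fix an enumeration $\varphi_1,\dots,\varphi_k$ of its subformulas with subformulas before superformulas, and build a simple $\sumagg$-DHN $\calN=(\calL_1,\dots,\calL_k,\cls)$ with all layers of output dimension $k$. The invariant is the Claim from that proof: from layer $i$ on, component $i$ of $\lambda^j_{\calN,D}(v)$ is $1$ if $D\models\varphi_i(v)$ and $0$ otherwise. The classification function tests $x_k\geq 1$.

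The Boolean cases carry over almost verbatim. For $\varphi_i=\neg\varphi_j$ and $\varphi_i=\varphi_{i_1}\vee\varphi_{i_2}$ we use a homomorphism query over the single-value database with no facts. This database has exactly one homomorphism into $D^v$, so $\sumagg$ returns the single transformed vector and behaves exactly like $\maxagg$. The transformation functions are the same $\trrelu$-affine maps as before, and the combination function is the identity. We also add a copy query of the same shape that keeps the already computed components. This query zeroes component $i$ of the copied vector, which makes the combination a plain sum followed by $\trrelu$.

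The essential case is $\varphi_i=\exists^{\geq n}\bar y\,\psi(x,\bar y)$ with $\psi=\psi_0\wedge\bigwedge_z\psi_z$. Here $\psi_0$ is the conjunction of relational atoms, and $\psi_z$ is a conjunction of earlier subformulas $\varphi_j(z)$. We take the query $(F^x,\mu,\sumagg)$, where $F$ is the database of atoms of $\psi_0$ over the variables $x,\bar y$. Variables of $\bar y$ that occur in no atom of $\psi_0$ are a subtle point; we add such variables to $\mn{adom}(F)$ anyway, since the query then still sums over all assignments to them. If $\psi_0$ is empty, we simply use the fact-free database on $\{x\}\cup\bar y$. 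For each variable $z$, $\mu_z$ outputs the one-dimensional value $\trrelu(\sum_{j\in J_z}x_j-|J_z|+1)$, where $J_z$ indexes the conjuncts of $\psi_z$. This value is the indicator of $D\models\psi_z(h(z))$ because the inputs are $0/1$ by the invariant. An empty conjunction gives the constant $1$ via the bias. The product over $\mn{adom}(F)$ is then the indicator that the homomorphism $h$ satisfies $\psi$. Homomorphisms from $F^x$ to $D^v$ correspond exactly to assignments to $\bar y$ satisfying $\psi_0$, so the sum equals the number of tuples $\bar w$ with $D\models\psi(v,\bar w)$. The combination function writes $\trrelu(s-n+1)$ into component $i$, where $s$ is this count. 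This yields $1$ iff $s\geq n$ and uses only integer weights. The remaining components are again taken from a copy query. All functions are affine with $\trrelu$ and rational weights, and the classifier is a threshold, so $\calN$ is simple.

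Correctness follows by induction on $j$, as in the Claim of Lemma~\ref{lem:UNFO-to-max-DHN-uniform}. Connectedness is preserved for the following reasons. The single-value queries are trivially connected. For counting subformulas, conditions (i) and (ii) of connectedness say exactly that the Gaifman graph of $F$ is connected and contains $x$. The main obstacle is getting the counting-to-homomorphism correspondence exactly right. This concerns unconstrained variables of $\bar y$ and also repeated variables: homomorphisms need not be injective, which matches the non-injective tuple semantics of $\exists^{\geq n}\bar y$. Under connectedness no variable can be isolated, so the problematic case of unconstrained variables does not arise there.
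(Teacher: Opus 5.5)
Your proposal is correct and follows the paper's proof essentially step by step. The Boolean cases reuse the single-value, fact-free queries from the $\maxagg$ construction, where $\sumagg$ and $\maxagg$ coincide. The counting case uses one $\sumagg$ query over the pattern database on $\{x\}\cup\bar y$ with $0/1$ indicator transformation functions, a copy query, and a combination that writes $\trrelu(s-n+1)$ into component $i$. You also spell out things the paper only calls ``easy to verify'': the explicit $\trrelu$ form of the indicators $\mu_z$, and the bijection between homomorphisms and (not necessarily injective) satisfying tuples, including the handling of unconstrained variables.
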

\begin{proof}
The proof follows the same strategy as that of Lemma~\ref{lem:UNFO-to-max-DHN-uniform}. The cases where a subformula $\varphi_i$ is a negation or of the form $\varphi_{i_1} \vee \varphi_{i_2}$ can 
be handled in exactly the same way as in the proof of Lemma~\ref{lem:UNFO-to-max-DHN-uniform}. In fact,  all homomorphism queries $(F^v,\mu,\maxagg)$ used in these cases satisfy ${\mn{adom}(F)=\{v\}}$ and therefore
for any $G^\bullet$ the set $\Hom(F^v,G^\bullet)$ has cardinality at most one. As a consequence, 
$\res((F^\bullet,\mu,\sumagg),(G^\bullet,\lambda)) =
\res((F^\bullet,\mu,\maxagg), (G^\bullet, \lambda))$.

We treat the remaining case where $\varphi_i$ takes the form 
$\exists^{\geq n} \bar y \, \psi(x, \bar y)$.
It is equivalent to \[\exists^{\geq n} \,\Big( \psi_0\land\bigwedge_{z\in \{x\}\cup \bar{y}}\psi_z\Big),\]
where \(\psi_0\) is a conjunction of relational atoms and each \(\psi_z\) is a conjunction of GHML$^-$ formulas with the free variable \(z\).
 We again use two homomorphism queries, one for evaluating the conjuncts in \(\psi\) and one for copying the previous embedding.

 The first  homomorphism query \((F_1^x, \mu_1, \sumagg)\) of output dimension $1$ is defined as follows:
 \begin{itemize}
   \item \(F_1\) contains as values the variable \(x\) and all variables in \(\bar{y}\),
   \item \(F_1\) contains the fact \(R(z_1,\ldots,z_k)\) for every atom \(R(z_1,\ldots, z_k)\) in \(\psi_0\),
   \item \(\mu_1\) assigns to each value \(z\in \mn{adom}(F_1^x)\) the transformation function that maps every tuple \(\bar{x}\in \mathbb{R}^k\) to the tuple \((1)\) if the \(j\)-th position of \(\bar{x}\) is \(1\) for all conjuncts \(\varphi_j(z)\) in \(\psi_z\), and the tuple \((0)\) otherwise.
 \end{itemize}
 Notice that \(F_1^x\) is connected if \(\varphi_i\) is connected.

 The second query \((F_2^\bullet, \mu_2,\sumagg)\) of output dimension \(k\) is defined as follows:
 \begin{itemize}
    \item \(F_2^\bullet\) is the graph with exactly one value, without any facts,
   \item if \(i=1\), \(\mu_2\) returns the \(k\)-dimensional vector with all entries set to \(0\), and
   \item if \(i>1\), \(\mu_2\) is the identity function.
 \end{itemize}

 The combination function \(\com_i: \mathbb{R}^{k+1}\to \mathbb{R}^k\)
 \begin{itemize}
   \item copies all entries except the first and the \(i+1\)-st,
   \item discards the \(i+1\)-st entry, and 
   \item subtracts \(n-1\) from the first entry, copies the result to the \(i+1\)-th entry, and applies \trrelu.
 \end{itemize}

 It is easy to verify that \(\mu_1,\mu_2,\) and \(\com_i\) are affine with \relu activation.
 It is also easy to verify that the homomorphism number regarding \((F_1^x,\mu_1, \sumagg)\) returns the number of tuples such that \(\psi(x, \bar{y})\) is satisfied.
 Thus, the combination function returns \(1\) in the \(i\)-th entry if and only if this number is at least \(n\).
\end{proof}

We next translate connected DHNs to connected GHML$^-$, assuming bounded degree databases. Under the assumption of bounded degree, it is possible to prove an analogue of Lemma~\ref{lem:finitelymanyvectors} 
also in the case of sum aggregation.
This is in fact implicit in the proof
of the following result.

%
%
%
\begin{lemma}
\label{lem:DHNtoGHML}
    Let $B \geq 0$.
    On databases of degree at most $B$, every connected DHN is equivalent to a connected GHML$^-$ formula.
\end{lemma}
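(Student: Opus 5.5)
We mirror the proof of Lemma~\ref{lem:max-DHN-to-UNFO-uniform}. Fix $B$ and a connected DHN $\calN=(\calL_1,\dots,\calL_k,\cls)$; the argument covers any aggregation function, since aggregation only sees a finite multiset. The key step is a bounded-degree analogue of Lemma~\ref{lem:finitelymanyvectors}. For each $\ell$, the set $\chi^\ell_{\calN,B}$ of vectors $\lambda^\ell_{\calN,D}(v)$, over all databases $D$ of degree at most $B$ and all $v\in\mn{adom}(D)$, is finite. We prove this by induction on $\ell$. Consider a connected homomorphism query $(F^\bullet,\mu,\agg)$ in layer $\ell$ and a pointed database $D^v$ of degree at most $B$. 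Every homomorphism maps $F$ into the ball of radius $|\mn{adom}(F)|$ around $v$ in the Gaifman graph of $D$, because $F$ is connected. The degree bound also bounds the Gaifman degree by some $B'$ depending on $B$ and the schema. Hence $|\Hom(F^\bullet,D^v)|\le N_F$ for a constant $N_F$ depending only on $F$ and $B$. The aggregated value is determined by a multiset of size at most $N_F$ over the finite set of maps $\mn{adom}(F)\to\chi^{\ell-1}_{\calN,B}$. There are only finitely many such multisets, which gives finiteness.

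Next, by induction on $\ell$ we build, for each $\bar x\in\chi^\ell_{\calN,B}$, a connected GHML$^-$ formula $\psi^\ell_{\bar x}(x)$. It should satisfy $D\models\psi^\ell_{\bar x}(v)$ iff $\lambda^\ell_{\calN,D}(v)=\bar x$ for all $D^v$ of degree at most $B$. At $\ell=0$ we take $\psi^0_{()}=\mn{true}$. For $\ell>0$ we proceed per query $(F_i^{u_i},\mu_i,\agg_i)$. Let $\vartheta_i$ be the conjunction of atoms of $F_i$ with $u_i$ renamed to $x$, and let $\bar z_i$ be the remaining variables. For a labelling $\lambda:\mn{adom}(F_i)\to\chi^{\ell-1}_{\calN,B}$, the number of homomorphisms compatible with $\lambda$ is the number of tuples $\bar z_i$ satisfying
\[\vartheta_i\wedge\bigwedge_{w}\psi^{\ell-1}_{\lambda(w)}(w).\]
Since the formulas $\psi^{\ell-1}_{\bar y}$ for distinct $\bar y$ are mutually exclusive, each homomorphism is compatible with exactly one $\lambda$, so these counts partition the homomorphisms. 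The statement ``exactly $c$ compatible homomorphisms'' is the GHML$^-$ formula $\exists^{\ge c}\bar z_i(\cdots)\wedge\neg\exists^{\ge c+1}\bar z_i(\cdots)$. It is connected because $F_i$ is connected and contains $x$, and because the $\psi^{\ell-1}$ are connected by induction. If $\bar z_i$ is empty, the count is $0$ or $1$ and $\exists^{\ge 1}$ with an empty tuple handles it.

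$\psi^\ell_{\bar x}$ is then the disjunction, over all families of count functions $c_i$ (from labellings to $\{0,\dots,N_{F_i}\}$) with $\com_\ell(\bar y_1,\dots,\bar y_m)=\bar x$, of the conjunctions of these exact-count formulas. Here $\bar y_i$ is $\agg_i$ applied to the multiset containing $c_i(\lambda)$ copies of $\prod_w\mu_{i,w}(\lambda(w))$. This is a finite disjunction by the bounds above. Finally, $\calN$ is equivalent on degree-$\le B$ databases to $\bigvee\{\psi^k_{\bar x}\mid\bar x\in\chi^k_{\calN,B},\ \cls(\bar x)=1\}$.

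The main obstacle is the finiteness and boundedness step: we must check that connectedness plus bounded degree really bounds $|\Hom(F^\bullet,D^v)|$ uniformly. The degree counts facts, so the Gaifman degree is at most $B\cdot(\max\mn{ar}-1)$. A breadth-first choice of images along a spanning tree of $F$'s Gaifman graph then gives $N_F\le (B')^{|\mn{adom}(F)|-1}$. The remaining care points are routine: the mutual exclusivity of the $\psi^{\ell-1}_{\bar y}$ across labellings, and the fact that only connected patterns containing $x$ are ever quantified.
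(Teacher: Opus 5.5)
Your proof is correct, but it takes a different route from the paper's. The paper does not simulate the network layer by layer. It argues by locality: the output of a connected DHN with $k$ layers, whose patterns have at most $n$ values, depends at $v$ only on the radius-$kn$ ball around $v$. Under the degree bound there are only finitely many such balls up to isomorphism. Each ball type is then pinned down by a Boolean combination of homomorphism-count formulas $\exists^{\geq c}\bar y\,\psi_H(w,\bar y)$, using Lov\'asz' theorem (Theorem~\ref{lovasz homomorphism count implies embedding count}) to pass from homomorphism counts to embedding counts. What that route buys is the normal form noted right after the lemma: the resulting formula never nests counting quantifiers.

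You instead lift the $\maxagg$-DHN-to-HML proof. You make explicit the bounded-degree analogue of Lemma~\ref{lem:finitelymanyvectors}, which the paper only calls implicit. You also replace the presence/absence split $X_i,Y_i$ by exact-count formulas for each labelling, relying on the fact that the $\psi^{\ell-1}_{\bar y}$ partition the values. This needs neither Lov\'asz' theorem nor a characterisation of ball isomorphism types, a step the paper only sketches and where one must restrict to connected patterns $H$ to stay inside connected GHML$^-$. In your construction connectedness comes for free, because only the network's own patterns are ever quantified. The price is nesting depth $k$, so you do not obtain the flat normal form.

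Two small slips, neither affecting the argument:
\begin{itemize}
\item Homomorphisms may identify adjacent values, so the bound should read $N_F\le (B'+1)^{|\mn{adom}(F)|-1}$ rather than $(B')^{|\mn{adom}(F)|-1}$.
\item For a single-value pattern without facts, the quantified formula has no atom containing $x$ and so violates connectedness condition~(i). In that case use $\psi^{\ell-1}_{\lambda(x)}(x)$ or its negation directly, without a quantifier.
\end{itemize}
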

\begin{proof}
Let $B \geq 0$ and let $\calN=(\calL_1, \dots \calL_k,\cls)$ be a connected DHN. Further let $n$ be the maximum size of databases in homomorphism queries in $\cal N$, measured in the number of values.  Every pointed database $G^v$ induces a pointed database of diameter $kn$, simply by taking the restriction of $G^v$ to the values \(u\) that are in distance of at most \(kn\) from \(v\).
 
  Take two pointed databases $G_1^{v_1}$ and $G_2^{v_2}$.
  It follows directly from the definition of connected DHNs that if $G_1^{v_1}$ and $G_2^{v_2}$ induce the
      same database of diameter $kn$, then
  $\lambda^k_{\calN,G_1}(v_1) = \lambda^k_{\calN,G_2}(v_2)$ and thus
  $\calN$ accepts both of these databases
  or neither of them.
      
      
     
  %

  If a pointed database $G^\bullet$ has diameter at most $kn$ and degree at most $B$, then  the number of values in $G^\bullet$ is at most $(B+1)^{kn+1}-1$, and thus the number of such databases is bounded as well.


   It thus suffices to find a connected GHML$^-$ formula $\varphi(x)$ such that 
   for every pointed database $G^v$ of degree at most $B$, 
   $G \models \varphi(v)$ if and only if
   $G^v$ induces a pointed database $F^v$ of diameter $kn$ (and also of degree at most $B$) such that $\cal N$
   accepts $F^v$. Since there are only finitely many such $F^v$ (up to isomorphism), it suffices to find for each of them 
   a GHML$^-$ formula $\psi_{F^v}$  such that $G \models \psi_{F^v}(v)$ if and only
   if $G^v$ induces exactly $F^v$ and then take the disjunction.

   To construct the formulas $\psi_{F^v}$, in turn, we may use Lov\'asz' homomorphism count theorem.
   We can consider all pointed databases $H^w$ with $|\mn{adom}(H)| \leq |\mn{adom}(F)|$ and count
   the exact number of homomorphism from
   $H^w$ using Boolean combinations of
   formulas of the form $\exists^{\geq k} \bar y \, \psi(w, \bar y)$ where $\psi$
   is the conjunction that has one atom $R(z_1,\ldots,z_k)$ for every fact $R(z_1,\ldots,z_k) \in H$. Based on Theorem~\ref{lovasz homomorphism count implies embedding count}, we can then find a Boolean combination $\psi_{F^v}$ that is as required.
\end{proof}
We remark that in the proof of the previous lemma, the constructed GHML$^-$ formula does not nest the operators $\exists^{\geq k} \bar y \, \psi(x, \bar y)$.
This remark and Lemma~\ref{thm:UNFOC-to-sum-DHN-uniform} show that, on bounded degree databases, every connected GHML$^-$ formula is equivalent to a connected GHML$^{-}$ formula that does not nest these particular operators.

\medskip

We next show that the characterizations in Point~3 of Theorem~\ref{thm:combi} cannot be lifted to databases of unbounded degree, even if we restrict the \(\sumagg\)-DHNs to properties definable in first-order logic.
\sumdhnnotinunfocuniform*
To prove this statement, we will first show that local transitivity can be expressed by \(\sumagg\)-DHNs.
To show that GHML cannot express this property, we will then introduce GHML games.
They are defined analogously to {Ehrenfeucht-Fra\"iss\'e} games, which are a useful tool to prove which properties are (in-)expressible in first-order logic.

\begin{lemma}\label{lem:dhns can express local transitivity}
    There is a connected simple $\sumagg$-DHN over the graph schema that accepts a
    pointed database $D^v$ if and only if $v$ is locally transitive in $D$.
\end{lemma}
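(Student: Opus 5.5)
The plan is to compare two homomorphism counts, exactly as in the HML+C formula for local transitivity. Let $P^x$ be the pointed path database $\{E(x,y),E(y,z)\}$ and $T^x$ the database $\{E(x,y),E(y,z),E(x,z)\}$. Every homomorphism $h$ from $T^x$ to $D^v$ restricts to a homomorphism from $P^x$ with the same images. Conversely, a homomorphism from $P^x$ extends to one from $T^x$ iff $E(v,h(z))\in D$. Hence
\[
|\Hom(T^x,D^v)| \leq |\Hom(P^x,D^v)|,
\]
with equality iff every pair $(u_1,u_2)$ with $E(v,u_1),E(u_1,u_2)\in D$ also satisfies $E(v,u_2)\in D$. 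That is precisely local transitivity of $v$. Both patterns are connected, since their Gaifman graphs are a path and a triangle on $x,y,z$. This holds also in degenerate cases such as $y=x$ or $z=x$, because homomorphisms need not be injective and the argument never uses injectivity.

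The network has a single layer $\calL_1$ with two homomorphism queries $(P^x,\mu,\sumagg)$ and $(T^x,\mu,\sumagg)$. Here every $\mu_w$ is the constant function with value $(1)$, realized in simple form as $\trrelu(\bar b)$ with empty input and bias $1$. The product over values is then $1$ for each homomorphism, so the sums return the two homomorphism counts. The combination function is $\com(a,b)=\trrelu(a-b)$, which is affine with rational weights, followed by $\trrelu$. It outputs $0$ if $a=b$ and $1$ if $a>b$, the latter because counts are integers, so $a-b\geq 1$. The classification function accepts iff $x_1\leq 0$.

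The only wrinkle is that simple classifiers use $>$ or $\geq$. To fit this, I will instead output $\trrelu(1-a+b)$, which equals $1$ exactly when $a=b$, and classify with $x_1\geq 1$. Both expressions are valid simple combinations. I will conclude by verifying, via the displayed inequality, that the output is $1$ iff $v$ is locally transitive. I expect no real obstacle. The one point needing care is that the inequality holds for arbitrary homomorphisms, not just injective ones, so the counts are comparable; this follows from the restriction/extension correspondence above.
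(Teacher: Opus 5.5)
Your proof is correct and takes the paper's approach: the same two rooted patterns (the path $x\to y\to z$ and the transitive triangle) with constant-$1$ transformation functions, which reduces local transitivity to equality of the two homomorphism counts. The only difference is the equality test. The paper uses two layers, computing $y_1=\trrelu(x_2-x_1)$ and $y_2=\trrelu(x_1-x_2)$ and then $\trrelu(1-y_1-y_2)$. You instead use the containment $\Hom(T^x,D^v)\subseteq\Hom(P^x,D^v)$, which the paper leaves implicit in its ``clearly'', so the one-sided test $\trrelu(1-a+b)$ in a single layer suffices.
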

\begin{proof}
For this, we use two homomorphism queries $(F^{v_1}, \mu, \sumagg)$ and
$(F^{\prime v_1'}, \mu', \sumagg)$, where ${F = \{ (v_1, v_2), (v_2, v_3) \}}$ and
$\mu_{v_i}$ is the constant $1$ for $i \in \{1, 2, 3\}$, $F' = \{ (v'_1, v'_2),
(v'_2, v'_3), (v'_1, v'_3) \}$, and $\mu'_{v_i}$ is the constant $1$ for $i \in
\{1, 2, 3 \}$.

Observe that by choice of $\mu$ as well as the $\sumagg$,
$\mn{eval}((F^{v_1}, \mu, \sumagg), (D^v, \lambda)) = |\mn{Hom}(F^{v_1}, D^{v})|$ and similarly
$\mn{eval}((F^{\prime v_1'}, \mu', \sumagg), (D^v, \lambda)) = |\mn{Hom}(F^{\prime v_1'}, D^{v})|$.
Clearly, $v$ is locally transitive in $D$ if and only if $|\mn{Hom}(F^{v_1},
D^{v})| = |\mn{Hom}(F^{\prime v_1'}, D^{v})|$.
The desired DHN uses two layers.
The first one computes 
$x_1 = \mn{eval}((F^{v_1}, \mu, \sumagg), (D^v, \lambda))$ and $x_2 =
\mn{eval}((F^{\prime v_1'}, \mu', \sumagg), (D^v, \lambda))$, and then
uses the combination function to compute \(y_1 =\trrelu(x_2-x_1)\) and \(y_2 =\trrelu(x_1-x_2)\).
The second layer computes \(z = \trrelu(1-y_1-y_2)\),
which is \(1\) iff \(x_1 =x_2\).
The final classification function accepts $D^v$ if $z = 1$ and rejects otherwise.
\end{proof}

To show
Proposition~\ref{thm:sum-DHN-not-in-UNFOC-uniform} it remains to prove that local
transitivity cannot be expressed by a GHML formula.

The \emph{depth} of a GHML formula $\varphi$ is defined as the number of times
subformulas of the form 
$ \exists^{\geq k_1} y_1 \cdots \exists^{\geq k_m} y_m \, \bigvee_{1 \leq i \leq \ell} \psi_i(x,y_1,\dots,y_m)$ are nested in $\varphi$.
This means that, there are no GHML formulas of depth $0$, and the formula $ \exists^{\geq 2} y R(x, y)$ has depth $1$.
The \emph{width} of a GHML formula $\varphi$ is the smallest number $n$ such that
$n \geq m$
for all subformulas of the form
$ \exists^{\geq k_1} y_1 \cdots \exists^{\geq k_m} y_m \, \bigvee_{1 \leq i \leq \ell} \psi_i(x,y_1,\dots,y_m)$.
The \emph{counting bound} of a GHML formula $\varphi$ is the smallest number $c$
such that $c \geq k_i$ for all subformulas of the form
$\exists^{\geq k_1} y_1 \cdots \exists^{\geq k_m} y_m \, \bigvee_{1 \leq i \leq \ell} \psi_i(x,y_1,\dots,y_m)$ and
all $i$ with $1 \leq i \leq m$.

A homomorphism from a (non-pointed) database $F$ to a (non-pointed) database $D$
is defined as expected. For a database $D$ and a set $X\subseteq\mn{adom}(D)$, we denote with $D|X$ the restriction of $D$ to the values
in $X$.

\begin{definition}[GHML game] \label{def:counting_unfo_games}
    Let $d, n, c \geq 1$. The $d$-round GHML game of width $n$ with
    counting bound $c$ is played between two players, \emph{Spoiler} and
    \emph{Duplicator}, on two databases $D_1$ and $D_2$. 
    The positions of the game are pairs $(v_1, v_2) \in \mn{adom}(D_1) \times \mn{adom}(D_2)$. 

    In each of the $d$ rounds, \emph{Spoiler} first picks an $s \in \{1, 2\}$.
    The players then repeat the following steps \(n\) times to construct tuples $\bar u_1 \in \mn{adom}(D_1)^n$ and $\bar u_2
    \in \mn{adom}(D_{2})^n$.
    \begin{itemize}
      \item \emph{Spoiler} picks distinct values $w_s^{(1)},\ldots,w_s^{(k)}\in \mn{adom}(D_s)$ where \(k\) is at most \(c\).
      \item \emph{Duplicator} responds with distinct values $w_{3-s}^{(1)},\ldots,w_{3-s}^{(k)}\in \mn{adom}(D_{3 - s})$.
            If there are less than \(k\) values in \(D_{3-s}\), \emph{Spoiler} wins immediately.
      \item \emph{Spoiler} picks an index $j$ and appends \(w_1^{(j)}\) and \(w_2^{(j)}\) to \(\bar u_1\) and \(\bar u_2\) respectively.
    \end{itemize}

        After the tuples are completed, it is checked whether there exists a homomorphism \(h\) from \(D_s|\{v_s,u_s^{(1)},\ldots,u_s^{(n)}\}\) to \(D_{3-s}|\{v_{3-s},u_{3-s}^{(1)},\ldots,u_{3-s}^{(n)}\}\)
    that satisfies \(h(v_s) = v_{3-s}\) and \({h(u_s^{(i)}) = u_{3-s}^{(i)}}\). 
    If this is not the case, Spoiler wins immediately.
    Otherwise,  \emph{Spoiler}  picks an index $i$. The pair $(u_1^{(i)}, u_2^{(i)})$ becomes the position for the next round.

    \emph{Duplicator} wins if \emph{Spoiler} does not win after $\ell$ rounds.
\end{definition}

\begin{lemma}\label{lem:winning strategy implies indistinguishable}
    For all $d \geq 1$ and $n, c \geq 0$ and all pointed databases $D_1^{v_1}$,
    $D_2^{v_2}$ the following holds:

    If \emph{Duplicator} has a winning strategy in the $d$-round GHML
    game of width $n$ with counting bound $c$ on $D_1$ and $D_2$ from initial
    position $(v_1, v_2)$, then for all GHML formulas $\varphi(x)$ of depth
    at most $\ell$, width at most \(n\)
    and counting bound at most $c$,
    \[
        D_1 \models \varphi(v_1)\quad \text{ if and only if }\quad D_2 \models \varphi(v_2).
    \]
\end{lemma}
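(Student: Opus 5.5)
The plan is to prove a slightly stronger statement by induction on $d$: for every position $(v_1,v_2)$ from which Duplicator has a winning strategy in the $d$-round game (width $n$, counting bound $c$), $v_1$ and $v_2$ agree on all GHML formulas of depth at most $d$, width at most $n$ and counting bound at most $c$. I read the bound ``$\ell$'' in the statement as the number of rounds $d$. Two easy facts are used throughout. First, if Duplicator wins $d$ rounds from a position, she also wins $d-1$ rounds from it, since she can play the truncated strategy. Second, formulas of depth $0$ are absent, so the base case is vacuous. Within a fixed $d$, I will do an inner induction on the structure of $\varphi$. Negation and disjunction are immediate, so only the quantified case needs work.

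Let $\varphi(x)=\exists^{\geq k_1}y_1\cdots\exists^{\geq k_m}y_m\,\bigvee_{i}\psi_i(x,\bar y)$ with $m\le n$ and $k_j\le c$. Argue by contraposition: suppose $D_s\models\varphi(v_s)$ but $D_{3-s}\not\models\varphi(v_{3-s})$, and let Spoiler open the round by choosing $s$. For a prefix $a_1\dots a_{j}$, write $\varphi_j(x,y_1,\dots,y_j)$ for the formula with the first $j$ quantifiers stripped. The invariant Spoiler maintains is that $D_s\models\varphi_j(v_s,\bar a^s)$ while $D_{3-s}\not\models\varphi_j(v_{3-s},\bar a^{3-s})$. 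At step $j+1$, the set of $w$ with $D_s\models\varphi_{j+1}(v_s,\bar a^s,w)$ has at least $k_{j+1}$ elements, so Spoiler picks $k_{j+1}$ of them. If $D_{3-s}$ has fewer than $k_{j+1}$ values, Spoiler wins outright. Otherwise, since $D_{3-s}$ falsifies $\varphi_j$ at the current prefix, at most $k_{j+1}-1$ of Duplicator's $k_{j+1}$ distinct answers satisfy $\varphi_{j+1}$, so Spoiler appends an index whose answer is bad. After $m$ steps, Spoiler pads the remaining $n-m$ steps with single arbitrary picks ($k=1$); these add no constraints used later, and they only shrink the set of homomorphisms that could exist.

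At the end of the round, $D_s\models\psi_i(v_s,\bar u_s)$ for some $i$, but $D_{3-s}$ satisfies no disjunct at $(v_{3-s},\bar u_{3-s})$. If the required homomorphism $h$ fails to exist, Spoiler wins. Otherwise $h$ maps $v_s\mapsto v_{3-s}$ and $u^{(r)}_s\mapsto u^{(r)}_{3-s}$, so $h$ transports every relational atom of $\psi_i$ to $D_{3-s}$. Hence some unary conjunct $\chi(z)$ of $\psi_i$, where $z$ is $x$ or some $y_r$, holds at the $D_s$-side value and fails at the $D_{3-s}$-side value. Such a $\chi$ has depth at most $d-1$ and the same width and counting bounds. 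If $z=y_r$, Spoiler chooses index $r$ and moves to the position $(u^{(r)}_1,u^{(r)}_2)$. From there Duplicator must still win $d-1$ rounds, which contradicts the outer induction hypothesis. If $z=x$, no move is needed: by the first easy fact, Duplicator wins $d-1$ rounds from $(v_1,v_2)$, and the induction hypothesis again gives a contradiction. Repeated values in $\bar u$ cause no trouble, because $h$ is a function defined on the restricted databases.

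The main obstacle is the quantifier-block case. The invariant has to be stated carefully so that a single round handles an entire block of counting quantifiers at once; this is what Spoiler's choice of $k_j$ witnesses followed by selecting a bad index achieves. One also has to check that transferring the relational atoms through $h$ is enough to localize the failure to one unary conjunct, including the case where that conjunct is about $x$ itself rather than a quantified variable.
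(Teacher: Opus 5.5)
Your proof is correct and follows the paper's argument: induction on $d$, with Spoiler spending one round to play the $k_j$ witnesses of the counting block, the homomorphism check transporting the relational atoms, and the index choice together with the induction hypothesis transporting the unary conjuncts. Your version is more explicit than the paper's informal forward argument on two steps. The first is the contrapositive invariant, in which Spoiler always selects an index whose Duplicator answer fails the residual formula. The second is your separate treatment of unary conjuncts on $x$ via the truncated $(d-1)$-round strategy.
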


\begin{proof}
We show the lemma by induction on $d$. In the induction start $d = 1$ and we have to show this result for formulas \(\exists^{\geq k_1} y_1 \cdots \exists^{\geq k_m} y_m \, \bigvee_{1 \leq i \leq \ell} \psi_i(x,y_1,\dots,y_m)\),
where each \(\psi_i\) is  a conjunction of relational atoms.
Since this is only a special case of the induction step, where the \(\psi_i\) can also contain unary formulas,
we refer to the induction step.

In the induction step, let $d > 1$ and assume that the statement holds for
all $d'$ with $d' < d$.
Further, assume that \emph{Duplicator} has a winning strategy in the
$d$-round game of width $n$ with counting bound $c$ from the position $(v_1, v_2)$.
Let $\varphi(x)$ be a GHML formula of depth $d$, width at most $n$
and counting bound $c$.
To show the statement, it suffices to show that for all subformulas of $\varphi$ of the form
$\exists^{\geq k_1} y_1 \cdots \exists^{\geq k_m} y_m \, \bigvee_{1 \leq i \leq \ell} \psi_i(x,y_1,\dots,y_m)$ with \(m\leq n\) and \(k_i\leq c\)
it holds that
\begin{align*}
& D_1 \models \exists^{\geq k_1} y_1 \cdots \exists^{\geq k_m} y_m \, \bigvee_{1 \leq i \leq \ell} \psi_i(v_1,y_1,\dots,y_m) \\
& \qquad \text{if and only if } \\
& D_2 \models \exists^{\geq k_1} y_1 \cdots \exists^{\geq k_m} y_m \, \bigvee_{1 \leq i \leq \ell} \psi_i(v_2,y_1,\dots,y_m).
\end{align*}

Now assume that there is such a subformula with
\[
D_s \models \exists^{\geq k_1} y_1 \cdots \exists^{\geq k_m}y_m\, \bigvee_{1\leq i\leq\ell} \psi_i(v_1, y_1, \ldots, y_m).
\]
If \emph{Spoiler} plays in $D_s$, \emph{Duplicator} must be able to respond in
$D_{3-s}$ appropriately.
In particular, this means that when \emph{Spoiler} picks $k_1$ distinct values
$w_s^{(1)},\ldots,w_s^{(k_1)}$ that satisfy $\exists^{\geq k_2} y_2 \cdots \exists^{\geq k_m} y_m
\,\bigvee_{1\leq i\leq \ell}\psi_i(v_s, w_s^{(i)}, y_2, \ldots, y_m)$, then \emph{Duplicator} must be able to respond
with $k_1$ distinct values $w_{3-s}^{(1)},\ldots,w_{3-s}^{(k_1)}$ of which \emph{Spoiler} can pick any to continue.
After $m$ steps, \emph{Spoiler} can thus arrive at tuples $\bar u_1$ and $\bar u_2$
such that\
\[
  D_s \models \bigvee_{1\leq i\leq \ell}\psi_i(v_s, \bar u_s).
\]
Hence, there exists an \(1\leq i\leq\ell\)
such that
\[D_s\models \psi_i(v_s,\bar u_s).\]
As \emph{Duplicator} has a winning strategy for the $d$-round game, it must
be the case that (i) \(h\) with \(h(v_s) = v_{3-s}\) and \(h(u_s^{(i)}) = u_{3-s}^{(i)}\) is a homomorphism from \(D_s|\{v_s,u_s^{(1)},\ldots,u_s^{(m)}\}\) to \(D_{3-s}|\{v_{3-s},u_{3-s}^{(1)},\ldots,u_{3-s}^{(n)}\}\), and
(ii) that for any $i$ that \emph{Spoiler} picks, \emph{Duplicator} has a
winning strategy for the $(d - 1)$-round game in the pair \((u_1^{(i)}, u_2^{(i)})\).
Point~(i) implies that \(v_{3-s}\) and \(\bar{u}_{3-s}\) satisfy all relational atoms in $\psi_i$.
Point~(ii) implies, via the induction hypothesis, that \(v_{3-s}\) and \(\bar{u}_{3-s}\) satisfy
all unary GHML formulas in $\psi_i$. Together it
follows that
\[
D_{3-s} \models \psi_i(v_{3-s}, \bar u_{3-s}) \quad \text{ and }\quad D_{3-s} \models \bigvee_{1\leq i\leq \ell} \psi_i(v_{3-s},\bar u_{3-s}).\]
As \emph{Duplicator} must be able to respond to all choices of \emph{Spoiler},
it also follows that
\[
  D_{3-s} \models \exists^{\geq k_1} y_1 \cdots \exists^{\geq k_m}y_m\, \bigvee_{1\leq i\leq \ell}\psi_i(v_
  {3-s}, y_1, \ldots, y_m).
\qedhere
\]
\end{proof}

Using GHML games, we now proceed to show that local transitivity is not
expressible in GHML.

\begin{lemma}\label{lem:winning strategy exists}\label{prop:nolocaltransinGHMLstar}
    For all $d, n, c \geq 0$, there exist databases $D_1^{v_1}$, $D_2^{v_2}$
    over the graph schema such that $D_1^{v_1}$ is locally transitive,
    $D_2^{v_2}$ is not locally transitive and
    \emph{Duplicator} has a winning strategy in the $d$-round GHML game
    of width $n$ and counting bound $c$ starting in position $(v_1, v_2)$.
\end{lemma}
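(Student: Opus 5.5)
The plan is to build two layered directed graphs that differ only in "diagonal" edges. The graphs are homogeneous enough that Duplicator can answer every Spoiler move layer by layer. The homomorphism test at the end of each round only checks that Spoiler's edges are preserved; it never checks non-edges. So the one missing edge that breaks transitivity can always be avoided, because there are many vertices to choose from.

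\textbf{Construction.} Fix $d,n,c$, and choose $N > (n+1)\cdot c + n + 1$. Let $L=2$; more layers would also work. Both databases have value set $\{0,\dots,L\}\times\{1,\dots,N\}$, where the first coordinate is the \emph{layer}. Let $D_1$ contain all edges $((i,x),(j,y))$ with $i<j$. Then $E(D_1)$ is transitive as a relation, so every vertex is locally transitive; in particular $v_1=(0,1)$ is. Let $D_2$ be $D_1$ with the edges $((i,x),(j,x))$ for $j\ge i+2$ removed. In $D_2$ the vertex $v_2=(0,1)$ has the path $(0,1)\to(1,2)\to(2,1)$ but no edge $(0,1)\to(2,1)$, so $v_2$ is not locally transitive. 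Two facts do the work. First, every bijection of $\{1,\dots,N\}$, applied to second coordinates, is an automorphism of both $D_1$ and $D_2$. Second, $E(D_2)\subseteq E(D_1)$, and both edge sets respect the layer order.

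\textbf{Duplicator's strategy.} Duplicator maintains the invariant that the current position $(u_1,u_2)$ has $u_1,u_2$ in the same layer. The initial position $(v_1,v_2)$ satisfies it. Within a round, whenever Spoiler picks $k\le c$ distinct values in $D_s$, Duplicator answers with $k$ distinct values in the same layers. This is possible because each layer has $N\ge c$ elements. Duplicator also makes sure that the second coordinates of all values chosen so far in $D_{3-s}$ during this round, including the position value, are pairwise distinct whenever they could matter. She can do this since at most $(n+1)c$ values are ever offered in a round and $N$ exceeds that.

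Now check the homomorphism test at the end of the round. Consider the map $h$ sending $v_s\mapsto v_{3-s}$ and $u_s^{(i)}\mapsto u_{3-s}^{(i)}$, and note that it preserves layers.
\begin{itemize}
\item If $s=2$, every edge of $D_2$ restricted to the chosen values goes from a lower to a higher layer, so its image is an edge of $D_1$.
\item If $s=1$, every edge goes from a lower to a higher layer among values whose images in $D_2$ have pairwise distinct second coordinates. Such an image is never a removed diagonal edge, so it is an edge of $D_2$.
\end{itemize}
Here $h$ is a well-defined function because Duplicator answers repeated Spoiler values consistently, which keeps the mapping functional. After the round, Spoiler picks a new position $(u_1^{(i)},u_2^{(i)})$, which again lies in a common layer.

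\textbf{Why this suffices, and the main obstacle.} Because of the coordinate-permuting automorphisms, any two positions in the same layer look alike from Duplicator's point of view. Her strategy at the start of a round therefore depends only on the layer, and the same argument repeats for all $d$ rounds. Together with Lemma~\ref{lem:winning strategy implies indistinguishable}, this gives the required pair of pointed databases.

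I expect the main obstacle to be the bookkeeping in the $s=1$ direction. Duplicator must commit to her $k$ answers before Spoiler chooses which one is kept, so all $k$ candidates have to avoid every second coordinate already used in the round. The bound on $N$ above is meant to cover exactly this. A secondary obstacle is checking that the mapping stays consistent when Spoiler reuses a value, including the position value itself.
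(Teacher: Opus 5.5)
Your proof is correct, and it takes a different route from the paper's.

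\textbf{The paper's route.} The paper uses the transitive tournament $F_1^m$ on $2m+3$ vertices and $F_2^m$, which is $F_1^m$ minus the single edge $E(v_{m+1},v_{m+3})$. There, adjacency depends on the linear order, so Duplicator has to reproduce the relative order of all chosen vertices. She must also copy vertices near both ends exactly and steer clear of choosing both $v_{m+1}$ and $v_{m+3}$. This requires a five-zone partition of the order and spacing gaps $c\,2^{n+1-k}-c+1$ that shrink step by step. As a result, the databases have size exponential in $n$ and growing linearly in $d$.

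\textbf{Your route.} You spread the defect uniformly: you delete every edge from layer $0$ to layer $2$ that keeps the column fixed. Then adjacency in both databases depends only on the layers and on whether the two columns coincide. Duplicator only has to copy layers and keep her answers in pairwise distinct columns. A bound $N \ge n+c+1$ suffices and is independent of $d$. The case $s=2$ holds because $D_1$ contains all upward edges. The case $s=1$ holds because distinct columns never hit a deleted edge.

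\textbf{Two tightenings.} First, replace the hedge ``whenever they could matter'' by ``always''. Within each round, maintain a layer-preserving injective partial map defined on $v_s$ and on the values already appended to $\bar u_s$, whose image uses pairwise distinct second coordinates. Answer previously selected values by their image, and answer fresh values by fresh columns. This single invariant gives well-definedness of $h$, distinctness of Duplicator's $k$ answers, and the homomorphism property. Second, the automorphism remark is unnecessary, since the per-round argument works from any position whose two values lie in the same layer.

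\textbf{Comparison.} The paper's version yields instances that differ in a single edge, with $D_1$ a plain linear order. Yours gives a much shorter and more robust argument for the lemma as stated.
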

\begin{proof}
  To show the lemma, we use the following families $F_1^m$, $F_2^m$ of databases
  over the graph schema.
For $m \geq 0$, set
\[
  F_1^m = \{ E(v_a, v_b) \mid  1 \leq a < b \leq 2m + 3 \}
\]
and $F_2^m = F_1^m \setminus \{E(m+1, m + 3) \}$. Clearly, \(v_{m+1}\) in $F_1^m$ is locally
transitive and \(v_{m+1}\) in $F_2^{m}$ is not locally transitive.
The structure of $F_1^m$ and $F_2^m$ is visualized in Figure~\ref{fig:local-transitive}.
We define the function \(\ind(v_i)= i\).

\begin{figure}
  \centering
    \begin{tikzpicture}[node distance = 0.65cm, font = \scriptsize]
        \node[circle] (a1) {$v_1$};
        \node[circle, right = of a1] (a3) {$\cdots$};
        \node[circle, right = of a3] (a4) {$v_{m+1}$};
        \node[circle, right = of a4] (a5) {$v_{m + 2}$};
        \node[circle, right = of a5] (a6) {$v_{m + 3}$};
        \node[circle, right = of a6] (a7) {$\cdots$};
        \node[circle, right = of a7] (a9) {$v_{2m + 3}$};
        \draw[->] (a1) to (a3);
        \draw[->, bend left] (a1) to (a4);
        \draw[->, bend left] (a1) to (a5);
        \draw[->, bend left] (a1) to (a6);
        \draw[->, bend left] (a1) to (a9);
        \draw[->] (a3) to (a4);
        \draw[->] (a4) to (a5);
        \draw[->, bend left] (a4) to (a9);
        \draw[->] (a5) to (a6);
        \draw[->, bend left] (a5) to (a9);
        \draw[->] (a6) to (a7);
        \draw[->, bend left] (a6) to (a9);
        \draw[->] (a7) to (a9);
        \draw[->, bend left, very thick, red] (a4) to (a6);
    \end{tikzpicture}
    \caption{Structure of the databases $F_1^m$ and $F_2^m$ with $F_1^m \setminus F_2^m$ highlighted in red.}
    \label{fig:local-transitive}
  \end{figure}

We now show the following claim which implies the lemma.

\noindent
\textbf{Claim.}
For all $d, n, c\geq 0$,
if $m \geq d (c\, 2^{n + 1}-c+1)$, then \emph{Duplicator} has a winning
strategy in the $d$-round GHML game of width $n$ and counting bound $c$ on $F_1^m$ and $F_2^m$
for any starting position $(x_1,x_2)$ such that $\ind(x_1) = \ind(x_2)$ or
\[d (c\, 2^{n +1} -c + 1)+ 1 \leq \ind(x_1),\ind(x_2)\leq 2m + 3 - d (c\,2^{n+1 }-c+1)\tag{\(*_d\)}\]

We start by describing the idea behind the constant \(c\, 2^{n+1}-c+1\).
Assume that two integers \(x,y \in \mathbb{N}\) satisfy \(y-x \geq c\, 2^{n+1}-c +1\).
Then there exist \(c\) integers \({z_1 = x + c\,2^{n}-c + 1},\ldots, {z_c = x+c\,2^{n}}\) that satisfy
\[z_i - x \geq c\,2^n-c+1\] and \[y-z_i \geq c\, 2^{n+1}-c + 1 + x - z_i \geq c\,2^n-c+1.\]

\noindent
\textbf{Proof of the claim.} 
We prove that \emph{Duplicator} has a winning strategy where they never choose \(v_{m+1}\) and \(v_{m+3}\) in \(F_{3-s}^m\), with the exception that they may choose the starting position.
Within one round, they have to ensure that, at any time, there is enough space between chosen vertices in \(F_{3-s}^m\) and that
they respond with vertices that have the same index if \emph{Spoiler} chooses vertices near one of the ends.
We show that the Condition (\(*_d\)) gives \emph{Duplicator} enough freedom, without allowing \emph{Spoiler} to win.

For readability, set $N_k = c\,2^{n + 1 - k} -c +1 $ and $M = 2m + 3$.
We show the claim by induction on $d$.
In the induction start, $d = 0$ and the statement holds trivially.

Now assume that the claim holds for $d - 1$, and that an $d$-round 
GHML game of width $n$ and counting bound $c$
starts in position $(x_1,x_2)$ for some $x_1,x_2$ that satisfy $\ind(x_1)=\ind(x_2)$ or ($*_d$).

Let \(s\in\{1,2\}\) be the choice of \emph{Spoiler}, that is, they play in \(F_s^m\) and
\emph{Duplicator} responds with vertices in \(F_{3-s}^m\).

\emph{Duplicator} has to deal with the following problems:
If \(\ind(x_1)<\ind(x_2)\), then there are fewer vertices right of \(x_2\) than there are vertices right of \(x_1\),
and additionally, \emph{Duplicator} may have to shift the indices of their response to avoid choosing \(v_{m+1}\) or \(v_{m+3}\).

For this, let \(p_1 =(d-1)N_0+1\) and \(p_2 =M-(d-1)N_0\).
\emph{Duplicator} divides the graphs into five sections. We write them as intervals of indices.
\begin{itemize}
    \item Two sections at the ends of the graphs defined by \({A = [1,p_1-1]\cup[p_2+1,M]}\).
    \item Two sections with the intervals \(B_1 =[p_1, p_1+N_0-1]\) and \(B_2 =[p_2-N_0 +1, p_2]\).
    \item The center defined by the interval \({C =[p_1+N_0 , p_2-N_0]}\).
\end{itemize}
\emph{Duplicator} will play as follows:
If \emph{Spoiler} chooses a vertex in \(A\), \emph{Duplicator} will respond with the same vertex in \(F_{3-s}\).
Otherwise, \emph{Duplicator} will choose a vertex in \(B_1\cup B_2\) which avoids choosing \(v_{m+1}\) and \(v_{m+3}\) which are in section \(C\).
\emph{Duplicator} may choose \(x_{3-s}\) even if it is in \(C\).
The starting positions satisfy \(x_1\in C\) iff \(x_2\in C\), and if \(x_1,x_2 \in A\cup B_1\cup B_2\), then by assumption \(\ind(x_1)=\ind(x_2)\).
This strategy ensures
that each selected pair \((u_1,u_2)\) will satisfy \(\ind(u_1)=\ind(u_2)\) or (\(*_{d-1}\)),
since the bounds of these zones are equal to the bounds of (\(*_{d-1}\)).
By induction hypothesis, \emph{Duplicator} then has a winning strategy in the \(d-1\)-round GHML game with width \(n\) and counting bound \(c\).
%

Thus, we only have to show that \emph{Duplicator} can choose vertices according to this strategy without loosing this round.
Within this round, both players will construct tuples \(\bar u_1\) and \(\bar u_2\) with \(n\) elements.
We show by induction on \(n\) that \emph{Duplicator} can choose vertices such that these tuples satisfy the following conditions.
\begin{enumerate}
\item \emph{Duplicator} responds with vertices according to the sections above.
They don't choose vertices in \(C\) except \(x_{3-s}\).
\item Let \((u_1,u_2), (u_1',u_2')\) be two played pairs which are in the interval \([p_1,p_2]\). Then
        \[\ind(u_s) \sim \ind(u_s') \text{ iff } \ind(u_{3-s})\sim \ind(u_{3-s}') \text{ for all }\sim\,\in\{<,=,>\}.\]
\item 
If \(x_1,x_2\not\in B_i\) for some \(i\in\{1,2\}\):
After selecting \(k\) vertices, let \((u_1,u_2), (u_1',u_2')\) be two played pairs such that \(u_{3-s},u_{3-s}'\in B_i\).
Then they satisfy:
\begin{itemize}
  \item if \(\ind(u_{3-s})\neq \ind(u_{3-s}')\) then \(|\ind(u_{3-s})-\ind(u_{3-s}')| \geq N_k\), and
  \item  \(|\ind(u_{3-s})-p_i|\geq N_k\) and \(|p_i+N_0-1 - \ind(u_{3-s})| \geq N_k\).
\end{itemize}
\item If \(x_1,x_2\in B_i\) for some \(i\in\{1,2\}\):
Let \((u_s,u_{3-s})\) be a played pair such that \(\ind(u_s)\in B_i\). Then \(\ind(u_s) =\ind(u_{3-s})\).
\end{enumerate}
These conditions are obviously satisfied for \(k=0\).

Now let \(\bar u_1\) and \(\bar u_2\) be tuples with \(k\) elements for some \(0\leq k < n\).
The game proceeds with \emph{Spoiler} choosing vertices $w_s^{(1)}, \ldots, w_s^{(c')}$, where \(c'\) is at most \(c\).
\emph{Duplicator} now has to choose \(c'\) distinct elements \(w_{3-s}^{(1)},\ldots,w_{3-s}^{(c')}\) such that \(\bar u_1, w_1^{(i)}, \bar u_2\), and \(w_2^{(i)}\) satisfy the conditions above for each \(i\).

Fix some \(w_s^{(i)}\).
If \(w_s^{(i)}\in A\) or \(x_1,x_2,w_s^{(i)}\in B_j\) then \emph{Duplicator} chooses the vertex with index \(\ind(w_s^{(i)})\) in \(F_{3-s}^m\).
If \(w_s^{(i)} =x_s\) then \emph{Duplicator} chooses \(w_{3-s}^{(i)} =x_{3-s}\).
We next clarify how \emph{Duplicator} chooses their response in the other cases, that is \(w_s^{(i)}\in C\) or \(w_s^{(i)}\in B_j\) and \(x_1,x_2\not\in B_j\).
\emph{Duplicator} then chooses a \(j\in \{1,2\}\) such that their response must be in \(B_j\) to satisfy Condition~2.
If \(x_1,x_2\in B_{3-j}\) then \emph{Duplicator} can choose \(B_j\). 
They now find the interval \([e_1,e_2]\) within \(B_j\) in which their response has to be to satisfy Condition~2,
such that both, \(e_1\) and \(e_2\) are the index of some played vertex or one of the limits of \(B_j\).
If \(e_1 = e_2\), then \(w_s^{(i)}\) was chosen by \emph{Spoiler} before and is thus in \(\bar u_s\) and \emph{Duplicator} can respond with the corresponding vertex in \(\bar u_{3-s}\).
Otherwise, observe that by the Condition~3, \(e_2-e_1 \geq N_k\).
Therefore, there exist at least \(c\) vertices \(v_d\) such that \(d\) satisfies \(e_2-d \geq N_{k+1}\) and \(d-e_1 \geq N_{k+1}\).
\emph{Duplicator} then chooses one of these elements.
Because there are always \(c\) of these elements, \emph{Duplicator} can always choose a fresh vertex for each \(w_s^{(i)}\).
After that, \emph{Spoiler} chooses one \(w_s^{(i)}\) and the corresponding response and appends these vertices to \(\bar u_1\) and \(\bar u_2\), which now contain \(k+1\) vertices.
It is easy to see that the four conditions above are still satisfied.

This strategy ensures that in the final tuple $\bar u_{3-s}$ the order of the values
is the same as in the tuple $\bar u_s$.
And additionally by choice of the intervals,
it is never the case that \emph{Duplicator} plays $v_{m + 1}$ and \(v_{m+3}\) at the same time.
Thus, \(h\) with \(h(x_s)=x_{3-s}\) and \(h(u_s^{(i)}) = u_{3-s}^{(i)}\) is a homomorphism from $F_s^m|\{v_s, u_s^{(1)},\ldots, u_s^{(n)}\}$ to $F_{3-s}^m|\{v_{3-s},u_{3-s}^{(1)},\ldots, u_{3-s}^{(n)}\}$, as required.
Hence, \emph{Duplicator} does not lose this round.
As described before, by induction hypothesis, \emph{Duplicator} thus has a winning strategy in the \(d\)-round GHML game with width \(n\) and counting bound \(c\).
\end{proof}

Lemma~\ref{lem:winning strategy exists} and Lemma~\ref{lem:winning strategy implies indistinguishable}
now together imply that no GHML formula can
express local transitivity. 
This completes the proof of Proposition~\ref{thm:sum-DHN-not-in-UNFOC-uniform}.

\subsection{Max and Sum Aggregation: Static Analysis}
We now use the logical equivalences to find upper and lower bounds for
the complexity of the emptiness and subsumption problems for DHNs.
We show each point of Theorem~\ref{thm:staticanalysiscombi}.

For \(\maxagg\)-DHNs, we can use known complexity results for UNFO.
\begin{theorem}
  The emptiness problem and the subsumption problem for $\maxagg$-DHNs are decidable.
\end{theorem}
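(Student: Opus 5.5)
The plan is to reduce both problems to finite satisfiability of UNFO, which is known to be decidable \cite{DBLP:journals/corr/SegoufinC13}. Since the two problems are polynomially interreducible, I will concentrate on emptiness and then indicate the change needed for subsumption.

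\textbf{Step 1: effective translation.} Given a $\maxagg$-DHN $\calN$ whose functions are FNNs with rational coefficients, I compute an equivalent HML formula by making the proof of Lemma~\ref{lem:max-DHN-to-UNFO-uniform} effective. Two ingredients are needed:
\begin{itemize}
\item the finite sets $\chi^\ell_\calN$ of Lemma~\ref{lem:finitelymanyvectors};
\item for each layer, the enumeration of all partitions $X_i,Y_i$ of the (finitely many) maps $\lambda:\mn{adom}(F_i)\to\chi^{\ell-1}_\calN$, together with evaluation of $\com_\ell$ on the resulting max-aggregates.
\end{itemize}
The sets $\chi^\ell_\calN$ are computed inductively, following the proof of Lemma~\ref{lem:finitelymanyvectors}. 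Take every choice of sets $S_i$ of maps into $\chi^{\ell-1}_\calN$, evaluate the transformation functions, take products and componentwise maxima, and apply $\com_\ell$. All quantities are finite, so this is a terminating procedure. One technicality remains: some choices of $S_i$ may not be realizable by any database, and then the computed set is a superset of $\chi^\ell_\calN$. This does not hurt correctness. Every formula $\psi^\ell_{\bar x}$ is still built correctly, and unrealizable vectors simply yield unsatisfiable formulas. Two small conventions also need checking: with $\maxagg(\emptyset)=0$, the choice $S_i=\emptyset$ corresponds to no homomorphisms, and the distinguished value always has at least one homomorphism only when $F_i$ maps in. Both are covered by allowing all subsets.

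\textbf{Step 2: arithmetic of the FNNs.} The functions are FNNs with rational weights, so evaluating them at points built from rational vectors stays within the rationals, provided the activations preserve rationals. This holds for $\relu$ and $\trrelu$; for general activations I will assume they are computable on the rationals. The classification function is likewise evaluated exactly, which lets us select the vectors $\bar x\in\chi^k_\calN$ with $\cls(\bar x)=1$. I expect this effectiveness issue to be the main obstacle. The natural fix is to state the decidability under the standing assumption that all FNN components use computable, rational-preserving activations such as (truncated) ReLU, as fixed in the static-analysis setting.

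\textbf{Step 3: reduction to UNFO.} The resulting HML formula $\varphi_\calN(x)$ translates effectively into UNFO, since the translation of \cite{DBLP:journals/corr/SegoufinC13} is constructive. Moreover, $\calN$ is empty iff $\exists x\,\varphi_\calN(x)$ is not finitely satisfiable, which is a UNFO sentence. Subsumption of $\calN_1$ by $\calN_2$ is the unsatisfiability of $\exists x\,(\varphi_{\calN_1}(x)\wedge\neg\varphi_{\calN_2}(x))$. This is again UNFO, because the negation is applied to a formula with one free variable. Decidability of finite satisfiability for UNFO, which coincides with satisfiability thanks to the finite model property, then yields the theorem.
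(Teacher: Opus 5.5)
Your proposal is correct and follows the paper's own argument. The paper makes the translation of $\maxagg$-DHNs into HML (and hence UNFO) effective, decides emptiness via decidability of UNFO satisfiability together with the finite model property, and handles subsumption by testing unsatisfiability of $\varphi_{\calN_1}\wedge\neg\varphi_{\calN_2}$. Your extra details on effectiveness fill in what the paper only asserts: working with a computable superset of $\chi^\ell_{\calN}$, and exact rational evaluation under rational-preserving activations. One small remark: no separate HML-to-UNFO translation is needed, since HML is already syntactically a fragment of UNFO.
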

\begin{proof}
  Decidability of the emptiness problem is an immediate consequence of the facts that (i)~the translations in the proof of Point~1 of Theorem~\ref{thm:combi} 
are effective, (ii)~the satisfiability problem for UNFO is decidable, and (iii)~UNFO has the finite model property \cite{DBLP:journals/corr/SegoufinC13}.
These facts also imply decidability of the subsumption problem. Given DHNs $\calN_1$ and $\calN_2$, we construct
equivalent UNFO sentences $\varphi_1$
and $\varphi_2$ and decide whether
$\varphi_1 \wedge \neg \varphi_2$
is unsatisfiable (equivalently: 
finitely unsatisfiable).
\end{proof}
On databases of bounded degree and for connected DHNs, decidability follows from a counting argument.
Lower bounds for connected \(\sumagg\)-DHNs can again be obtained from complexity results for the corresponding logics.
\begin{theorem}
\label{cor:sumstaticdecidableconnectedbounded}
  Let $B \geq 4$.
  On databases of degree at most $B$, the emptiness problem and the subsumption problem for  connected $\sumagg$-DHNs are decidable and {\sc coNExpTime}-complete. The lower bounds already hold on vertex-labeled directed graphs.
\end{theorem}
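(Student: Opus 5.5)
Emptiness and subsumption reduce to each other in polynomial time, so it suffices to place emptiness in \textsc{coNExpTime} for the upper bound and to prove \textsc{coNExpTime}-hardness of emptiness for the lower bound. For the upper bound I would follow the proof of Lemma~\ref{lem:DHNtoGHML}. Let $\calN$ be a connected $\sumagg$-DHN with $k$ layers, and let $n$ be the largest number of values in a pattern. Then whether $\calN$ accepts $G^v$ depends only on the ball of radius $kn$ around $v$ in the Gaifman graph of $G$. If $G$ has degree at most $B$, this ball has at most $(B+1)^{kn+1}$ values, which is exponential in the size of $\calN$. Hence $\calN$ is non-empty on degree-$\le B$ databases iff some pointed database $F^v$ of diameter at most $kn$, degree at most $B$, and exponential size is accepted by $\calN$.

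Non-emptiness is then decided by guessing such an $F^v$ and simulating $\calN$ on it. The simulation processes the layers in order. For each value and each homomorphism query, it enumerates all homomorphisms from the pattern; there are at most $|\mn{adom}(F)|^{n}$ of them, which is exponential. It evaluates the rational FNNs exactly. The one thing to verify is that the numbers stay of exponential bit-length. Sums over exponentially many terms add only polynomially many bits. Products over pattern values and affine maps with rational weights also add polynomially many bits per layer, and there are $k$ layers. With $\trrelu$ or $\relu$ activations everything remains rational, so exact arithmetic is possible. This puts non-emptiness in \textsc{NExpTime} and emptiness in \textsc{coNExpTime}. For subsumption, one guesses $F^v$ accepted by $\calN_1$ and rejected by $\calN_2$, using the larger of the two radii.

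For the lower bound I would reduce from the \textsc{NExpTime}-complete problem of tiling a $2^m\times 2^m$ torus, and show that finite satisfiability of connected GHML on vertex-labeled directed graphs of degree at most $B$ is \textsc{NExpTime}-hard. By Lemma~\ref{lem:GHMLvsGHMLminus} and Lemma~\ref{thm:UNFOC-to-sum-DHN-uniform}, this transfers to emptiness of connected simple $\sumagg$-DHNs, provided those translations are polynomial for the formulas produced. I would therefore write the hardness formula directly in GHML$^-$, whose translation to DHNs is clearly polynomial, using only counting of single successors. The encoding of the grid uses labels $X_1,\dots,X_m,Y_1,\dots,Y_m$ for the coordinate bits and tile labels, with cells as vertices. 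Each cell has a horizontal successor and a vertical successor, marked by edge-type labels or intermediate labeled vertices so that degree stays at most 4. Binary-counter constraints on the bits of neighbouring cells are expressible with formulas of size polynomial in $m$.

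The crux is enforcing grid-like confluence, namely that the right-then-up neighbour equals the up-then-right neighbour, within a depth-bounded connected unary-negation logic, and this is where I expect the main obstacle. A connected CQ pattern can require a commuting square locally: $\exists y\,z\,w\,(E_h(x,y)\wedge E_v(y,w)\wedge E_v(x,z)\wedge E_h(z,w))$. Together with ``exactly one $h$-successor and exactly one $v$-successor'', this forces the square to close. The difficulty is that the formula must hold at every relevant vertex, while a GHML formula has bounded radius. I would get around this by not requiring a global grid. Instead I would use the standard trick for bounded-degree logics: a root vertex $v$ whose neighbourhood encodes, through nested quantifiers over the counter bits, that every reachable cell satisfies the local constraints. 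Since the formula only sees depth polynomial in $m$, I would encode the exponential grid with cells fanning out as a tree of depth $O(m)$ under the root, each leaf storing a cell address in its bits, together with local consistency to its horizontal and vertical neighbours identified by address. Checking that this local-to-global argument is sound and keeps degree at most $B=4$ is the step I would spend most care on, and the constant $4$ in the statement suggests the gadget needs exactly that much branching. Finally, databases reduce to vertex-labeled directed graphs, which gives hardness on those as claimed.
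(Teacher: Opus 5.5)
Your upper bound is the paper's argument: acceptance depends only on the radius-$kn$ ball, which under the degree bound has exponentially many values, so you guess such a ball and simulate $\calN$ on it. One slip: the product over pattern values multiplies bit-lengths by up to $n$ per layer rather than adding polynomially many bits. After $k$ layers the bit-length is therefore about $n^k$ times a polynomial. Since $n^k=2^{k\log n}$, this is still exponential, so the \textsc{NExpTime} bound for non-emptiness stands.

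\textbf{The gap is in the lower bound, at exactly the step you flag.} A polynomial-size connected formula sees only polynomial radius. So the $2^m\times 2^m$ grid must be flattened into a tree of depth $O(m)$ whose leaves carry addresses. The matching conditions then have to be imposed by a negated union of polynomially many polynomial-size CQs rooted at $x$, which detects two leaves with adjacent addresses and incompatible tiles. ``Neighbours identified by address'' does not say how to do this. With bits stored as unary labels, $\mathrm{addr}_Y(\ell_1)=\mathrm{addr}_Y(\ell_2)$ is a conjunction of $m$ disjunctions, and rewriting it as a union of CQs under $\neg\exists$ costs $2^m$ disjuncts.

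The tree shape rescues one coordinate. If the tree branches on the $Y$-bits first and the $X$-bits last, then ``same $Y$, $X$ plus one'' becomes one CQ per position of the lowest $0$-bit: two branches from a common ancestor, labelled $0,1,\dots,1$ and $1,0,\dots,0$. For vertical adjacency, however, the $X$-bits then lie on two disjoint root-to-leaf paths and the blow-up returns, and no branching order serves both coordinates. This is not a detail. Tiling with only the one-dimensional constraints the tree can enforce is solvable in polynomial time by repeated squaring of the compatibility matrix, so all the hardness sits in the part your plan leaves open. Neither the square-closing CQ nor counting helps, since the grid is never enforced globally.

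The paper does not build such a gadget itself; it imports the hardness from Lutz (IJCAR~2008). There, deciding whether a modal formula $\varphi$ entails a connected unary CQ $q(x)$ is shown to be \textsc{coNExpTime}-hard, and the proof works over finite vertex-labeled directed graphs of degree at most $4$; this is where $B\geq 4$ comes from. The formula $\varphi\wedge\neg q$ is connected HML, hence connected GHML$^-$, and it is finitely unsatisfiable iff $\varphi$ entails $q$. The translation of Lemma~\ref{thm:UNFOC-to-sum-DHN-uniform} is polynomial, so emptiness of connected simple $\sumagg$-DHNs is \textsc{coNExpTime}-hard. To finish your own route, you would need a gadget that lets one polynomial-size negated CQ relate the full addresses of two far-apart leaves in both coordinates, and you would have to check its degree. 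Otherwise, cite this result as the paper does.
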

\begin{proof}[Proof (sketch).]

    We first describe an algorithm that decides the non-emptiness problem of \sumagg-DHNs in {\sc NExpTime}.
    This algorithm receives as input
    a \sumagg-DHN \(\calN\) over some schema $\Sbf$. Whether we consider the schema \(\Sbf\) as fixed or part of the input makes no difference for the subsequent proof.
        Let \(\ell\) be the number of layers of \(\calN\) and let \(k\) be the size of the largest database \(F^\bullet\) that occurs in some homomorphism query of \(\calN\).
    For each database \(D^v\), the output of \(\calN\) on \(D^v\) only depends on the \(\ell k\) neighborhood of \(v\).
    Thus, there exists a database that \(\calN\) accepts iff there exists a database of diameter at most \(\ell k\) that \(\calN\) accepts.
    Such databases have at most \(B^{\ell k}\) values and \(B^{\ell k r}\) facts, where $r$ is the sum of the arities of the relation symbols in \(\Sbf\).
    Hence, these databases have a size at most single exponential in \(\ell, k, r\) which are all bounded from above by the input length.
    Thus, a {\sc NExpTime} algorithm can guess a database of this size and then check in polynomial time (in the size of the database) whether it is accepted by \(\calN\).


%
To show that the non-emptiness problem of connected \sumagg-DHNs is {\sc NExpTime}-hard, we use the translation of GHML$^-$ formulas to \sumagg-DHNs provided in the proof of Lemma~\ref{thm:UNFOC-to-sum-DHN-uniform}, which is polynomial. 
We thus have to show that finite satisfiability in connected GHML$^-$
is {\sc NExpTime}-hard.
To start with, note that connected GHML$^-$ contains modal logic (ML) as
a proper fragment and (finite) satisfiability of uni-modal ML formulas is {\sc PSpace}-hard. The standard proof by reduction from quantified Boolean formulas even yields this result on directed graphs of degree at most~3. Thus the non-emptiness problem for \sumagg-DHNs is at least
{\sc PSpace}-hard, even on vertex-labeled directed graphs of degree at most~3. However, it follows from the results in \cite{Lutz-IJCAR08} that finite satisfiability in connected HML (and thus also in connected GHML$^-$) is even {\sc NExpTime}-hard, already on vertex-labeled directed graphs of degree at most~4.
More precisely, it is proved in 
\cite{Lutz-IJCAR08} that for a given uni-modal ML formula $\varphi$ 
and unary connected conjunctive query~$q(x)$,
it is {\sc coNExpTime}-hard to decide whether $\varphi$ implies $q$ in the sense that for all (potentially infinite) pointed vertex-labeled directed graphs $(G,v)$, $(G,v) \models \varphi$ implies $(G,v) \models q(x)$.\footnote{Note that \cite{Lutz-IJCAR08} uses a different language. The stated result is Theorem~1 of that article.}
It is implicit in the proofs of \cite{Lutz-IJCAR08} that this problem coincides with the restricted version where only graphs $G$ are considered that are finite and of degree at most~4. Now, it is easy to see that $\varphi$ (finitely) implies $q$ if and only if
the HML formula $\varphi \wedge \neg q$ is not finitely satisfiable. Note that here we view the unary CQ $q$ as a HML formula of the form $\exists \bar y \, \psi(x, \bar y)$.
Clearly,  {\sc NExpTime}-hardness of finite satisfiability in connected HML follows. 






The subsumption problem can be solved as follows: \(\calN_1\) does not subsume \(\calN_2\) iff there exists a database that is accepted by \(\calN_2\) but not by \(\calN_1\).
The size of such a database can be bounded as in the non-emptiness case.
Thus, the subsumption problem can be solved in {\sc coNExpTime}.
Subsumption for GHML$^-$ is {\sc coNExpTime}-hard since a GHML$^-$ formula \(\varphi\) is satisfiable iff \(\bot\) does not subsume \(\varphi\).
By using the translation in the proof of Lemma~\ref{thm:UNFOC-to-sum-DHN-uniform}, we can reduce subsumption of GHML$^-$ formulas to subsumption of \sumagg-DHNs.
Thus, the subsumption problem of \sumagg-DHNs is {\sc coNExpTime}-complete.
\end{proof}

The following is Point~3 of Theorem~\ref{thm:staticanalysiscombi}.
\begin{theorem}
    The emptiness problem and the subsumption problem for  connected   $\sumagg$-DHNs are undecidable, already for the graph schema.
    The same is true for (not necessarily connected)  $\sumagg$-DHNs on graphs of any bounded degree $B \geq 5$.
\end{theorem}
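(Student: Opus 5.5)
Both parts go by reduction from an undecidable problem. Since the translation from logic to simple $\sumagg$-DHNs (Lemma~\ref{thm:UNFOC-to-sum-DHN-uniform}) is effective, and DHNs can additionally compare two homomorphism counts as in Lemma~\ref{lem:dhns can express local transitivity}, it suffices to encode an undecidable problem into a property that a DHN can check. Emptiness and subsumption are interreducible, so we only treat emptiness. The natural source problem is Hilbert's tenth problem: given two polynomials $p,q$ with natural coefficients, decide whether $p(\bar n)=q(\bar n)$ has a solution in $\mathbb{N}$.

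\textbf{Connected DHNs on unbounded degree.} Let $v$ be the root. For each variable $n_i$, the number $n_i$ is stored as the count of $v$-successors carrying a vertex color $P_i$; colors can be simulated on the graph schema by gadgets of distinct loop or cycle patterns. A monomial $n_1^{a_1}\cdots n_k^{a_k}$ is then exactly the homomorphism count of a star pattern rooted at $v$, with $a_i$ rays to $P_i$-colored vertices. The color checks are done by the transformation functions, using the first-layer color bits as in the proof of Lemma~\ref{lem:UNFO-to-max-DHN-uniform}. Summing over monomials with the coefficients as weights shows that one $\sumagg$ layer computes $p(\bar n)$ and $q(\bar n)$ at $v$. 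Equality is tested exactly as in the local transitivity lemma, using two $\trrelu$ differences. Every pattern is a star, hence connected.

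The main care point is making colors unambiguous. I would require each neighbor of $v$ to carry exactly one color and to be distinguishable via a depth-1 check, which is expressible as a GHML property and hence by a simple DHN. Soundness is then immediate: any accepting graph yields a solution. Completeness is also immediate: a solution yields a star graph with the required numbers of successors.

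\textbf{Disconnected DHNs on bounded degree $B\ge 5$.} Here large numbers cannot sit in a single neighborhood, so I would use global counts via disconnected patterns instead. A disconnected homomorphism query consisting of the root together with a separate single $P_i$-vertex evaluates, by $\sumagg$, to the global number $n_i$ of $P_i$ vertices. Products arise from disjoint unions of patterns, so a pattern made of $a_i$ isolated $P_i$-vertices for each $i$ counts the monomial. All vertices may be isolated or of low degree, so bounded degree is preserved. The obstacle is the degree bound $B\ge 5$ together with simulating colors on graphs. I would encode colors by small attached gadgets, such as paths of distinct lengths or loops, each of degree at most $5$, and recognize them with connected local patterns. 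This local recognition is then combined with global counting through the disconnected queries.

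If the color gadgets turn out to be the stumbling block, the alternative is to reduce from tiling or grid encodings using local transitivity-style count equalities, but I expect the Diophantine encoding to suffice.
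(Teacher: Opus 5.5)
Your proposal is correct, and it takes a genuinely different route from the paper's.

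\textbf{The paper's route.} It reduces the finite-rectangle tiling problem to finite satisfiability of GHML$^-$. The grid is described by conjunctive queries, and functionality of the simulated grid relations is enforced with $\neg\exists^{\geq 2}$. The global constraint $\forall x\,\vartheta(x)$ is expressed via a disconnected pattern in the bounded-degree case, and via a spypoint of unbounded degree in the connected case. Unary predicates are then eliminated by self-loop gadgets, which is where $B\geq 5$ comes from. Only at the end is the effective translation from GHML$^-$ to simple $\sumagg$-DHNs applied.

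\textbf{Your route.} You bypass the logic altogether. The weighted homomorphism count of a star whose $a_i$ rays are checked for colour $i$ is $\prod_i n_i^{a_i}$. In the bounded-degree part, patterns made of the root plus isolated checked vertices give the same products, now of global colour counts. So one $\sumagg$ layer evaluates $p$ and $q$, two $\trrelu$ differences test $p=q$, and Matiyasevich's theorem finishes the argument.

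\textbf{Comparison.} Your argument is much shorter and stays within simple DHNs. It also handles much smaller degree bounds in the second part. It pinpoints the source of undecidability: patterns with several independent branches let $\sumagg$-DHNs multiply unbounded counts, whereas $\sumagg$-GNNs combine them only piecewise linearly. The paper's route is more laborious but proves more. It shows that finite satisfiability of (connected) GHML$^-$ is itself undecidable. Hence undecidability already holds for the DHNs produced by the logic translation, which compare counts only against constants, never two unbounded quantities against each other.

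\textbf{Two minor remarks.} First, requiring colours to be unambiguous is unnecessary. Whatever $0/1$ colour bits the DHN computes, every graph yields natural numbers $n_i$, so soundness is automatic. Only completeness needs gadgets that create no spurious coloured vertices; in the connected part, ``out-degree exactly $i$'' works as colour $i$. Second, in the bounded-degree part the gadget degree must not grow with the number of unknowns. This is easy to arrange: let colour $i$ mean ``in-degree $0$, a directed walk of length $i$ but none of length $i+1$''. Realize it by disjoint directed paths of length $i$, plus a self-loop vertex as root; the degree stays at most $2$.
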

\begin{proof}
  \newcommand{\freevar}{\alpha}
  We start with  $\sumagg$-DHNs that are not necessarily connected and graphs of bounded or unbounded degree and later sketch the modification necessary for connected  $\sumagg$-DHNs on graphs of unbounded degree.
 By Lemma~\ref{thm:UNFOC-to-sum-DHN-uniform}, it suffices to prove
  that the finite satisfiability of GHML$^-$ formulas is undecidable, both on unrestricted graphs and on graphs of bounded degree. 
 We remark that finite satisfiability in UNFOC is known to be undecidable
 on unrestricted graphs, a proof is implicit in  the proof of Theorem~7.1 of \cite{DBLP:journals/corr/SegoufinC13}. But since GHML$^-$ is less expressive than UNFOC and we are also interested in graphs of bounded
 degree, 
 we sketch a direct proof by reduction of the tiling problem that
 asks for tiling a finite rectangle using restricted sets of tiles at the borders. In addition to the edge relation $E$ of graphs, we use unary relation symbols and later argue how these can be removed. 

A \emph{tiling system} takes the form $S=(T,L,R,U,D,H,V)$
where $T$ is a finite set of tile types, $L,R,U,D \subseteq T$
are sets of tiles to be used at the left, right, upper, and lower borders, 
and $H,V \subseteq T^2$ are the
 horizontal and vertical matching conditions. A \emph{solution} for $S$ is a triple $(n,m,f)$ with $n,m \in \mathbb{N}$ and  $f:\{1,\dots,n\} \times \{1,\dots,m\} \rightarrow T$
 such that the following conditions are satisfied:
 \begin{enumerate}

\item for $1 \leq i < n$ and $1 \leq j \leq m$: $(f(i,j),f(i+1,j)) \in H$;

\item for $1 \leq i \leq n$ and $1 \leq j< m$:  $(f(i,j),f(i,j+1)) \in V$;

\item for $1 \leq j \leq m$: $f(1,j) \in L$ and $f(n,j) \in R$;

\item for $1 \leq i \leq n$: $f(i,1) \in D$ and $f(i,m) \in U$.
         
 \end{enumerate}
 It is shown in \cite{berger1966undecidability} that the problem of satisfiable tiling systems is undecidable.

 We show how to construct
 a GHML$^-$ formula $\varphi_S$ such that 
 \begin{enumerate}

 \item[(a)] if there is a solution for $S$, then $\varphi_S$ is finitely satisfiable in a graph of degree at most~4 and 
 \item[(b)] if $\varphi_S$ is finitely satisfiable, then there is a solution for $S$.

 \end{enumerate}
 We simulate two binary relations
 that form a grid, a horizontal and a vertical one. Each edge of these relations is simulated via two consecutive edges and a propositional letter $H$ or $V$
 satisfied at the middle point. 
 We also introduce a propositional
 letter $P_T$ for every  tile type $t \in T$, a letter $G$ to identify vertices on the grid (as opposed to middle vertices on simulated edges), and
 letters $B^\leftarrow, B^\rightarrow, B^\downarrow, B^\uparrow$ to identify the borders.
 We set \[\varphi_S = \exists \freevar \, (B^\leftarrow(\freevar) \wedge B^\downarrow(\freevar) \wedge G(\freevar)) \wedge \forall x \, \vartheta(x)\]
 where $\vartheta$ is a conjunction that consists of the following conjuncts:
\begin{enumerate}

  \item\label{enum:tiling-border-no-neighbor} The vertices at the borders are exactly the vertices that have no corresponding neighbor
        \begin{align*}
          &(G(x)\land \neg B^{\rightarrow}(x)) \leftrightarrow \exists y_1y_2 \,(E(x,y_1)\land H(y_1)\land E(y_1,y_2)\land G(y_2))\\
          \land\,&(G(x)\land \neg B^{\leftarrow}(x)) \leftrightarrow \exists y_1y_2 \,(E(y_1,x)\land H(y_1)\land E(y_2,y_1)\land G(y_2))\\
          \land\,&(G(x)\land \neg B^{\uparrow}(x)) \leftrightarrow \exists y_1y_2 \,(E(x,y_1)\land V(y_1)\land E(y_1,y_2)\land G(y_2))\\
          \land\,&(G(x)\land \neg B^{\downarrow}(x)) \leftrightarrow \exists y_1y_2 \,(E(y_1,x)\land V(y_1)\land E(y_2,y_1)\land G(y_2))\\
        \end{align*}

    \item\label{enum:tiling-create-cell} Every grid vertex,  that is not on the  upper or right border is the origin of a grid cell:
    
    \begin{alignat*}{2}
        (G(x) \wedge \neg B^\uparrow(x) \wedge \neg B^\rightarrow(x))\qquad\\[1mm] \rightarrow \exists y_1y_2y_3z_1z_2z_3u\, 
      \big (&
        E(x,y_1)\wedge V(y_1) \wedge E(y_1,y_2) \wedge G(y_2)& \\[1mm]
        &\land\, E(x,z_1) \wedge  H(z_1) \wedge E(z_1,z_2) \wedge G(z_2)& \\[1mm]
         &\land\, E(y_2,y_3) \wedge  H(y_3) \wedge E(y_3,u)& \\[1mm]
        &\land\, E(z_2,z_3) \wedge  V(z_3) \wedge E(z_3,u) \wedge G(u)
      &\big ) 
    \end{alignat*}
    
    \item\label{enum:tiling-functional-neighbors} The horizontal and vertical grid relations are functional:
    \begin{align*}      
      G(x) \rightarrow& \big (\neg \exists^{\geq 2} y\, (E(x,y) \wedge V(y)) \wedge \neg \exists^{\geq 2} y\,( E(y,x) \wedge V(y)) \\
      &\phantom{\big(} \land \neg \exists^{\geq 2} y\, (E(x,y) \wedge H(y))\wedge \neg \exists^{\geq 2} y \,(E(y,x) \wedge H(y)) \big) \\[1mm]
        \land\, (V(x) \vee H(x)) \rightarrow& \big(\neg \exists^{\geq 2} y \,E(x,y) \land  \neg\exists^{\geq 2} y\, E(y,x)\big)
      \end{align*}

    \item\label{enum:tiling-consistent-borders} The borders are consistent:
        \begin{align*}
          \forall y_1y_2\, \Big(&\big((E(x,y_1)\lor E(y_1,x ))\land (E(y_1,y_2)\lor E(y_2,y_1))\land V(y_1)\big)\\
          &\rightarrow \big((B^\downarrow(x)\leftrightarrow B^\downarrow(y_2))\land (B^\uparrow(x)\leftrightarrow B^\uparrow(y_2))\big)\Big)\\
          \land\, \forall y_1y_2\, \Big(&\big((E(x,y_1)\lor E(y_1,x ))\land (E(y_1,y_2)\lor E(y_2,y_1))\land H(y_1)\big)\\
          &\rightarrow \big((B^\rightarrow(x)\leftrightarrow B^\rightarrow(y_2))\land (B^\leftarrow(x)\leftrightarrow B^\leftarrow(y_2))\big)\Big)\\
        \end{align*}    
    \item\label{enum:tiling-unique} Every grid vertex is uniquely tiled:
    \[
       G(x) \rightarrow \bigvee_{t \in T} \Big ( P_t(x) \wedge 
       \bigwedge_{t' \in T \setminus \{ t \}} \neg P_{t'}(x) \Big )
    \]
    
    \item\label{enum:tiling-matching} The matching conditions are satisfied:
    %
    \begin{align*}
        (G(x) &\wedge \neg B^\uparrow(x))\\
        &\rightarrow \bigvee_{(t,t') \in V}
      \Big( P_t(x) \wedge \exists y_1\, \big ( E(x,y_1) \wedge V(y_1) \wedge \exists y_2\, ( E(y_1,y_2) \wedge P_{t'}(y_2) \big ) \Big )\\[2mm]
      \wedge\ 
         (G(x) &\wedge \neg B^\rightarrow(x)) \\
         &\rightarrow \bigvee_{(t,t') \in H}
      \Big( P_t(x) \wedge \exists y_1 \,\big ( E(x,y_1) \wedge H(y_1) \wedge \exists y_2\, ( E(y_1,y_2) \wedge P_{t'}(y_2) \big ) \Big ) 
    \end{align*}

    \item\label{enum:tiling-border-matching} The borders are tiled as required:
    \begin{align*}
      &(B^\leftarrow(x) \rightarrow \bigvee_{t \in L} P_t(x)) \ \wedge \
        (B^\rightarrow(x) \rightarrow \bigvee_{t \in R} P_t(x)) \\ 
        \wedge \ &
          (B^\downarrow(x) \rightarrow \bigvee_{t \in D} P_t(x))  \ \wedge \ 
            (B^\uparrow(x) \rightarrow \bigvee_{t \in U} P_t(x)). 
    \end{align*}
    
\end{enumerate}
It is routine to verify that $\varphi_{S}$ is a GHML$^-$ formula and that it satisfies Condition~(a).  
We argue that it satisfies Condition~(b) as well.

Given a model \(\mathfrak{A}=(A, G^\mathfrak{A}, V^\mathfrak{A}, H^\mathfrak{A}, {B^\leftarrow}^\mathfrak{A}, {B^\rightarrow}^\mathfrak{A}, {B^\uparrow}^\mathfrak{A}, {B^\downarrow}^\mathfrak{A}, E^\mathfrak{A})\) we first construct an intermediate partial function \(g:\mathbb{N}\times \mathbb{N}\to G^\mathfrak{A}\) such that all defined \(g(i,j),g(i+1,j)\) are connected with a common vertex \(v\) that satisfies \(H(v)\), all defined \(g(i,j), g(i,j+1)\) are connected with a common vertex \(v\) that satisfies \(V(v)\), and such that borders are consistent.
Given this function \(g\) we can then construct a solution \(f\) for \(S\).

We construct \(g\) inductively, starting with the lower border, and then defining more rows moving upwards.
The lower border is also defined inductively, and we first set \(g(1,1)\) to a value that satisfies \(B^\rightarrow(\freevar)\land B^\downarrow(\freevar)\land G(\freevar)\).

If \(g(1,1),\ldots,g(k,1)\) are defined, there are now two cases to consider
\begin{itemize}
  \item If \(g(k,1)\in {B^\rightarrow}^{\mathfrak{A}}\), then the lower border is finished.
  \item If \(g(k,1)\not\in {B^\rightarrow}^{\mathfrak{A}}\), then by Formula \ref{enum:tiling-border-no-neighbor}, there exists an element \(a\in G^{\mathfrak{A}}\) such that \(g(k,1)\) has a \(H^\mathfrak{A}\) successor, of which \(a\) is a successor.
  We set \(g(k+1,1)\) to \(a\).
        These formulas also require \(a\in G^\mathfrak{A}\).
        The Formulas \ref{enum:tiling-border-no-neighbor} and \ref{enum:tiling-consistent-borders} require that \(a\not\in {B^\leftarrow}^\mathfrak{A}\) and \(a\in {B^\downarrow}^\mathfrak{A}\).
        Thus, the borders stay consistent.
\end{itemize}
The first case has to be satisfied after finitely many steps. Assume the contrary, then after at most \(|A|+1\) iterations there exist \(i < j\) such that \(g(i,1) = g(j,1)\).
W.l.o.g. this is the only pair of indices that satisfy this equality.
Since \(g(j,1)\) has an \(H\)-predecessor, \(g(i,1)\not\in {B^\leftarrow}^{\mathfrak{A}}\) must hold, thus \(i\neq 1\) and both \(g(i-1,1)\) and \(g(j-1,1)\) are defined and they satisfy \(g(i-1,1)\neq g(j-1,1)\).
Therefore, \(g(j,1)\) has two \(H\)-predecessors or its \(H\)-predecessor has two \(G\) predecessors, which contradicts Formula \ref{enum:tiling-functional-neighbors}.
Hence, the first case is satisfied after finitely many steps and the first row is well-defined.

For the next rows, assume \(g(i,j)\) is defined for all \(1\leq i\leq n\) and \(1\leq j\leq m\).
Then there are again two cases:
\begin{itemize}
  \item If there exists an \(i\) such that \(g(i, m)\in {B^\uparrow}^\mathfrak{A}\), then by Formula \ref{enum:tiling-consistent-borders}, all \(g(j,m)\) are in \({B^\uparrow}^\mathfrak{A}\).
        We can thus stop, and proceed with the definition of the solution \(f\).
        This case has to be satisfied eventually, which can be shown by using Formula \ref{enum:tiling-functional-neighbors} with a similar argument as before.
  \item Otherwise, \(g(1,m)\) has a \(V\) successor, which has a \(G\)-successor which we set as \(g(1,m+1)\). It satisfies \(g(1,m+1)\in {B^\leftarrow}^\mathfrak{A}\), \(g(1,m+1)\not\in {B^\downarrow}^\mathfrak{A}\) and \(g(1,m+1)\in {B^\rightarrow}^\mathfrak{A}\iff g(1,m)\in {B^\rightarrow}^\mathfrak{A}\) by Formula \ref{enum:tiling-consistent-borders}.

        For \(1\leq i< n\), let \(g(i+1,m+1)\) be the vertex \(v\) such that \(g(i,m+1)\) has an \(H\) successor that has \(v\) as a successor and such that \(g(i+1,m)\) has a \(V\) successor that has \(v\) as a successor.
        This vertex exists because of Formula \ref{enum:tiling-create-cell}.
        By Formulas \ref{enum:tiling-border-no-neighbor} and \ref{enum:tiling-consistent-borders}, \(v\not \in {B^\leftarrow}^\mathfrak{A}\), \(v\not\in {B^\downarrow}^\mathfrak{A}\) and \(v\in {B^\rightarrow}^\mathfrak{A}\) iff \(i+1 = n\).
\end{itemize}

Using the function \(g\), it is easy to construct a solution \(f\) for \(S\).
Formula \ref{enum:tiling-unique} can be used to define \(f\), while formulas \ref{enum:tiling-matching} and \ref{enum:tiling-border-matching} ensure that \(f\) is a valid solution.

To get rid of the unary atoms, we replace \(P_k(x)\) with the formula
\begin{alignat*}{1}
  \exists yy_1\cdots y_\ell z \,\big(E(y_k,y_k)\land E(x,y)\land E(y,y)\land E(y,y_1)\land \bigwedge E(y_i,y_{i+1})\land E(y_\ell, z)&\\
  \land \neg\exists z'\, E(z,z') \land \bigwedge E(y_i,y_i)\rightarrow \exists^{\geq 2}y_i'\, E(y_i,y_i')&\big),
\end{alignat*} 
where \(\ell\) is the number of unary relation symbols.

Since \(\varphi_S\) does not require any self-loops, they can be used safely to indicate unary facts.
The variable \(y\) is used to not confuse the facts of neighboring vertices, and \(z\) can be used to construct models in which all \(y_1,\ldots,y_\ell\) are forced to bind to different vertices.
It is easy to adapt the proof above to this setting, although the degree is now bounded by \(5\) instead of \(4\).

For the connected but unbounded case, we use the correspondence between connected GHML$^-$ and connected \sumagg-DHNs.
We will adapt the above formula such that it can be translated into a connected GHML$^-$ formula.
For this we have to introduce a free variable, which we achieve by removing the first existential quantifier in \(\varphi_S\):
\[\varphi_S'(\freevar) = B^\leftarrow(\freevar)\land B^\downarrow(\freevar)\land G(\freevar)\land \forall z\, \vartheta(z).\]
Since removing outermost existential quantifiers does not change satisfiability of first-order formulas, \(\varphi_S'\) is satisfiable iff \(\varphi_S\) is.
The universal quantifier is still unconnected to \(\freevar\), hence we use \(\freevar\) as a ''spypoint'' and introduce edges from \(\freevar\) to every grid vertex with an intermediate vertex that satisfies a new unary relation \(C\).
We thus adapt \(\varphi_S'\) as follows:
\begin{itemize}
  \item The free variable \(\freevar\) has to be connected to itself, therefore we add the conjunct\\ \(\exists y\, (E(\freevar,y)\land C(y)\land E(y,\freevar))\).
  \item We universally quantify all vertices connected to \(\freevar\): \(\forall yx\, ((E(\freevar,y)\land C(y)\land E(y,x))\rightarrow \vartheta(x))\).
        This formula is equivalent to \[
        \neg \exists yx \,(E(\freevar,y)\land C(y)\land E(y,x)\land \neg \vartheta(x)),
        \]
        and is thus still in GHML$^-$.
  \item Since \(\vartheta\) is negated in the formula above, we cannot introduce \(\alpha\) as a new free variable.
        To rediscover \(\alpha\) we first require that each grid vertex has at most one connected vertex that is in the bottom left. We add the following conjunct to \(\vartheta\):
        \[\neg \exists^{\geq 2} yz\, (E(y,x)\land C(y)\land E(z,y)\land B^{\downarrow}(z)\land B^{\leftarrow}(z)\land G(z))\]
  \item We can now adapt Formula~\ref{enum:tiling-create-cell} such that the newly introduced grid vertex also has a connection to a vertex in the bottom left.
        In particular we can require that originating and the new vertex have a common neighbor in the bottom left, which by the previous item is unique.
        We add to Formula~\ref{enum:tiling-create-cell} the following conjunct to the right side:
        \begin{align*}
            \exists c_1c_2c_3\, (&B^\leftarrow(c_1)\land B^\downarrow(c_1)\land G(c_1)\\
            &\land E(c_1,c_2)\land C(c_2)\land E(c_2,x)\land E(c_1,c_3)\land C(c_3)\land E(c_3,u)).
        \end{align*}
  \item Analogously, in Formula~\ref{enum:tiling-border-no-neighbor} we add to the right side:
        \begin{align*}
            \exists c_1c_2c_3\, (&B^\leftarrow(c_1)\land B^\downarrow(c_1)\land G(c_1)\\
            &\land E(c_1,c_2)\land C(c_2)\land E(c_2,x)\land E(c_1,c_3)\land C(c_3)\land E(c_3,y_2)).
        \end{align*}
  \end{itemize}

  It is easy to verify that this formula is satisfiable iff \(\varphi_S\) is satisfiable, and that this formula can be translated into a connected GHML$^-$ formula.
  Since connected GHML$^-$ formulas can be transformed into connected DHNs, this proves the undecidability of connected DHNs in the unbounded degree case.
%
\end{proof}
%

\subsection{Mean Aggregation}

We first give a more thorough introduction to RHML.


RHML is an extension of HML since existential quantifiers can be expressed in RHML as follows:
let \(\varphi =\exists\bar{y}\,(\psi_1(x,\bar{y})\land \psi_2(x,\bar{y}))\) be an HML formula where \(\psi_1\) is a conjunction of relational atoms and \(\psi_2\) is a conjunction of unary HML formulas.
Then \(\varphi\) is equivalent to \[\exists_{> 0}\, \bar{y} \,\big(\psi_2(x,\bar{y}), \psi_1(x,\bar{y})\big).\]
Since \(\exists_{>0}\,\bar y\,(\mu,\nu)\) corresponds to existential quantifiers of HML, we define that \(\exists_{>t}\,\bar y\,(\mu,\nu)\) is not satisfied if there are no values that satisfy \(\nu\).
Analogously, since \(\exists_{\geq 1}\,\bar y \,(\mu,\nu)\) corresponds to the negation of existential quantifiers, we define that \(\exists_{\geq t}\,\bar y \,(\mu,\nu)\) is satisfied if there are no values that satisfy \(\nu\).

As a useful abbreviation, we define 
\begin{align*}
  \exists_{=t}\,\bar{y}&\,(\mu(x,\bar{y}), \nu(x,\bar{y}))
  =\exists_{\geq t}\,\bar{y}\,(\mu(x,\bar{y}), \nu(x,\bar{y}))
                          \land \neg \exists_{>t}\,\bar{y}\,(\mu(x,\bar{y}),\nu(x,\bar{y})).
\end{align*}
Notice that \(\exists_{=t}\,\bar{y}(\,\mu,\nu)\) is also satisfied if there are no values that satisfy \(\nu\).

An RHML formula \(\varphi\) is \emph{connected} if for every subformula \(\exists_{\sim t}\,\bar{y}\,(\mu(x,\bar{y}),\nu(x,\bar{y}))\) of \(\varphi\), the database induced by \(\nu\) is connected and uses \(x\).

We now show how RHML and \(\meanagg\)-DHNs correspond to each other.
We start with translating RHML formulas into \(\meanagg\)-DHNs.
The following lemma proves Point~1 and one of the directions in Point~2 of Theorem~\ref{thm:meancombinew}.
\begin{lemma}\label{lem:ratio-UNFO-to-mean-DHN-bounded-degree}
    \phantom{}
  \begin{enumerate}
  \item Every (connected) RHML formula is equivalent to a (connected) \meanagg-DHN;
  \item let $B \geq 0$;
    on databases of degree at most $B$, every connected RHML formula is equivalent to a connected simple  \meanagg-DHN.
    \end{enumerate}
\end{lemma}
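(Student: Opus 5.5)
Both points are proved by the same layer-by-layer scheme as Lemma~\ref{lem:UNFO-to-max-DHN-uniform}. I enumerate the subformulas $\varphi_1,\dots,\varphi_k$ of the given RHML formula in an order compatible with the subformula relation. I then build a \meanagg-DHN with $k$ layers of output dimension $k$. After layer $i$, component $j\le i$ of every value's embedding is $1$ if $\varphi_j$ holds there and $0$ otherwise.

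Negation and disjunction are handled exactly as in Lemma~\ref{lem:UNFO-to-max-DHN-uniform}. The single-value, fact-free pattern has at most one homomorphism, so \meanagg coincides with \maxagg. A second single-value query copies the previous embedding, as before.

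The interesting case is $\varphi_i=\exists_{\sim t}\,\bar y\,(\mu,\nu)$ with ${\sim}\in\{\geq,>\}$. I take the pattern $F^x$ given by the relational atoms of $\nu$, with $x$ distinguished and one value per variable in $\bar y$. I assume for now that all variables of $\bar y$ occur in $\nu$; otherwise I add the trivially true unary atom for them, or treat them separately. Unary atoms $P(z)$ occurring in $\nu$ are part of $F$. Then $\Hom(F^x,D^v)$ is exactly the set of tuples $\bar w$ with $D\models\nu(v,\bar w)$.

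Let $\mu_z$ output $1$ iff all stored unary conjuncts of $\mu$ about $z$ are $1$. This is an affine map followed by \trrelu: sum the relevant coordinates and subtract (their number minus one). The product over $z$ is then the indicator of $\mu(v,\bar w)$, so the mean equals the required ratio $r$. To account for the denominator-$0$ convention, I add a second query on the same pattern $F^x$ with constant transformation $1$; its mean is $1$ if $\nu$ has a match and $0$ otherwise (since $\mean(\emptyset)=0$). Call this value $e$.

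The combination sets the new coordinate to $[r\sim t]$ if $e=1$, and to $1$ (for $\geq$) or $0$ (for $>$) if $e=0$. For Point~1 I may use a Heaviside activation in the combination FNN, which makes this immediate, as in~\cite{AAAIMean}. Connectedness is preserved because $F^x$ is the database induced by $\nu$.

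For Point~2 the combination must be simple, i.e.\ \trrelu-affine with rational weights, so the discontinuous test $r\geq t$ has to be realised by a continuous map. This is the main obstacle, and bounded degree resolves it. On databases of degree at most $B$, a connected pattern $F^x$ has at most some $N$ homomorphisms rooted at $v$, with $N$ depending only on $B$ and $|F|$. Hence $r$ takes values in the finite set $Q_N=\{a/b\mid 0\le a\le b\le N\}$, and there is a gap $\delta>0$ between $t$ and the nearest element of $Q_N\setminus\{t\}$.

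I first try the following for ${\geq}$: the coordinate becomes $\trrelu((r-t)/\delta+1)$, which equals $1$ if $r\ge t$ and $0$ if $r\le t-\delta$. For ${>}$ I use $\trrelu((r-t)/\delta)$. The empty case is folded in via $e$. For ${\geq}$ I compute $\trrelu(\,\cdot\,+(1-e)\cdot 2)$, which forces $1$ when $e=0$. For ${>}$ I compute $\trrelu(\,\cdot\,-(1-e)\cdot 2)$, which forces $0$ when $e=0$. When $e=0$ we also have $r=0$, so these offsets suffice.

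If a single affine layer cannot realise this combination, I spread it over two layers; each layer only needs one affine map plus \trrelu. All weights are rational since $t,\delta\in\mathbb{Q}$. The transformation functions are the \trrelu-affine conjunction maps, and the classification function is a threshold on the last coordinate. Finally, correctness is proved by the same induction claim as in Lemma~\ref{lem:UNFO-to-max-DHN-uniform}.
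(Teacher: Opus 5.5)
This is the paper's own proof: one layer per subformula, and the same three queries for $\exists_{\sim t}\,\bar y\,(\mu,\nu)$ (the $\nu$-pattern with the $\mu$-indicator, the $\nu$-pattern with constant $1$ to detect an empty denominator, and a single-value copy query), with a discontinuous combination for Point~1 and, for Point~2, finiteness of the possible ratios under bounded degree and connectedness. Your explicit gap map $\trrelu((r-t)/\delta+1)$ (resp.\ $\trrelu((r-t)/\delta)$) is more precise than the paper's appeal to FNNs realizing arbitrary functions on finitely many inputs, since it gives the single affine map plus $\trrelu$ that a simple combination requires; one slip: if the empty-denominator correction goes into that same affine map, the offset must be at least $|t|/\delta$ rather than $2$ (at $e=0$, $r=0$ the pre-activation for ${\geq}$ is $3-t/\delta$, which is $0$ for $t=1/2$, $N=3$, $\delta=1/6$), whereas $2$ is fine if applied in a second layer to the $\{0,1\}$-valued output, as your fallback allows.
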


\begin{proof}
  We first describe the parts that both points have in common.
  They will only differ in their combination function for the translation of quantifiers.
  Let \(\varphi\) be an RHML formula and let \(\varphi_1,\ldots,\varphi_k\) be an enumeration of the subformulas of \(\varphi\) such that if \(\varphi_\ell\) is a subformula of \(\varphi_k\), then \(\ell\leq k\).
  In analogy with Lemma \ref{lem:UNFO-to-max-DHN-uniform}, we will construct a \meanagg-DHN \(\calN=(\mathcal{L}_1,\ldots,\mathcal{L}_k,\cls)\), with each layer of output dimension \(k\).
 
  We encode satisfaction of \(\varphi_i\) by a \(1\) in the \(i\)-th component of the feature vectors, and we store a \(0\) if \(\varphi_i\) is not satisfied.
  The DHN will evaluate the \(i\)-th subformula in the \(i\)-th layer, that is, after \(\mathcal{L}_i\) all subformulas \(\varphi_1,\ldots,\varphi_i\) are encoded correctly.
  
  We have seen in Lemma \ref{lem:UNFO-to-max-DHN-uniform} that DHNs can realize negations and disjunctions.
  For these operators, we only used homomorphism queries with one value and without any facts, thus we can use \meanagg instead of \maxagg.

  For the translation of quantifiers, let \(\varphi_\ell(y_0) = \exists_{\sim t}\, y_1\cdots y_n\,(\mu(\bar{y}), \nu(\bar{y}))\), where \({\sim\,\in \{>, \geq\}}\), and let \(\mu = \mu_0(y_0)\land\cdots\land\mu_n(y_n)\), where each \(\mu_i\) is a conjunction of unary formulas.
  
  By assumption, \(\nu\) defines a database \(G_\ell^{y_0}\), which is connected if \(\nu(\bar{y})\) is connected.
  
  The layer \(\mathcal{L}_\ell\) uses the following homomorphism queries:
  \begin{itemize}
    \item \((G_\ell^{y_0}, \xi, \meanagg)\), where \(\xi_i(\bar{x})\) returns 1 if in \(\bar{x}\) all entries that correspond to conjuncts in \(\mu_i\) are set to 1, and returns 0 otherwise.
    \item \((G_\ell^{y_0}, \xi', \meanagg)\), where \(\xi_i'(\bar{x})\) always returns 1.
    \item \((F^\bullet, \chi, \meanagg)\), where \(F^\bullet\) is the database with one value and without any facts, and where \(\chi(\bar{x}) = \bar{x}\) if \(\ell\geq 1\), and \(\chi(\bar{x})\) returns the vector with all entries set to \(0\) otherwise.
  \end{itemize}
  The combination function \(\com_\ell(x_1,x_2,\bar{x}_3)\) does the following:
  if \(x_2\) is \(0\), which only happens if there are no homomorphisms from \(G_\ell^{y_0}\) to the input database, and \(\sim\) is \(\geq\), then \(\com_\ell\) replaces the \(\ell\)-th entry in \(\bar{x}_3\) with \(1\).
  If \(x_2\) is \(0\) and \(\sim\) is \(>\), then \(\com_\ell\) replaces the \(\ell\)-th entry in \(\bar{x}_3\) with \(0\).
  Otherwise, \(\com_\ell\) replaces the \(\ell\)-th position in \(\bar{x}_3\) by \(1\) if \(x_1\sim t\) and by \(0\) otherwise.
  In all cases it does not return \(x_1\) and \(x_2\).

  In the general setting of Point~1 this function is clearly non-continuous at \((t, 1, \bar{x}_3)\) for each \(\bar{x}_3\).
  In the restricted settings of Point~2, there are only finitely many possible input values for \(\com_\ell\).
  \(x_1\) is a rational number and its denominator is bounded by the number of homomorphisms, which itself is bounded because in this case  \(G_\ell^{y_0}\) is connected and we only consider databases of degree at most \(B\).
  Additionally, \(x_2\) and each entry of \(\bar{x}_3\) is either \(0\) or \(1\).
  It is well known that FNN with \trrelu can realize functions with finitely many input values.

  For both points we use the same classification function, namely
  \[    \cls(\bar{x}) = \begin{cases}
        1 & \text{ if } x_k \geq 1\\
        0 & \text{ otherwise}.
    \end{cases}
\]
  It is straightforward to show the correctness of the translation, that is:\\
{\bf Claim.} For all databases $G$, values $v \in V(G)$, and $i,j$
with $1 \leq i \leq j \leq k$, the following holds: \[(\lambda^j_{\calN,G}(v))_i = \begin{cases}
    1&\text{ if } G\models \varphi_i(v),\\
    0&\text{ otherwise.}
\end{cases}\]%
\end{proof}

We now show the second direction of Point~2 in Theorem~\ref{thm:meancombinew}.
\begin{lemma}\label{lem:mean-DHN-to-ratio-UNFO-bounded-degree}
    Let $B\geq 0$. On databases of degree at most $B$, every connected \meanagg-DHN is equivalent to a connected RHML formula.
\end{lemma}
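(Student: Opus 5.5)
We follow the strategy of the proof of Lemma~\ref{lem:DHNtoGHML}, replacing counting by ratio quantifiers. Let $\calN=(\calL_1,\dots,\calL_k,\cls)$ be a connected \meanagg-DHN and let $n$ be the maximum number of values in a database of a homomorphism query of $\calN$. Since all queries are connected, $\lambda^k_{\calN,G}(v)$ depends only on the restriction of $G^v$ to values at distance at most $kn$ from $v$. On databases of degree at most $B$ this restriction has at most $(B+1)^{kn+1}$ values, so there are only finitely many such pointed neighbourhoods $F^v$ up to isomorphism. It therefore suffices to construct, for each such $F^v$, a connected RHML formula $\psi_{F^v}(x)$ that holds at $v$ in $G$ iff the radius-$kn$ neighbourhood of $v$ in $G$ is isomorphic to $F^v$. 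We then take the disjunction of $\psi_{F^v}$ over all $F^v$ accepted by $\calN$.

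The difficulty is that RHML only measures \emph{ratios}, not absolute counts. We cannot directly copy the GHML$^-$ argument, which pins down exact homomorphism counts and then applies Theorem~\ref{lovasz homomorphism count implies embedding count}. Instead, we build $\psi_{F^v}$ inductively along radius, in the style of the max-DHN-to-HML translation (Lemma~\ref{lem:max-DHN-to-UNFO-uniform}), but applied to the finitely many neighbourhood types. Concretely, we proceed as follows.

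\begin{enumerate}
\item Recover existential and universal conditions. Use $\exists_{>0}$ for ``there is a homomorphism with property $\mu$'' and $\exists_{\geq 1}$ for ``all homomorphisms have property $\mu$'' (the latter holds vacuously when there is none). These already give HML inside RHML.
\item Fix an absolute count with ratios. To fix the exact count of a connected pattern $H^w$ rooted at $x$, compare it with a pattern whose count is already known. Take the pattern $H'$ obtained by adding a disjoint-but-connected copy, or better, adding an edge-path witness. The cleanest approach is to use ratios $\exists_{=p/q}\,\bar y\,(\mu,\nu)$ where $\nu$ describes the whole neighbourhood $F$ as a CQ rooted at $x$, and $\mu$ marks which variables carry given unary types.
\item Characterize $F^v$ as a formula. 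A neighbourhood $F^v$ is characterized up to isomorphism by requiring that (i) there is an embedding of $F^v$ (expressible as in EML by $\exists_{>0}$ with a CQ for $F$, where injectivity and non-edges are handled by the finitely many quotients and supersets of $F$, via HML negations); and (ii) there is no embedding of any strictly larger connected database of radius $kn$ and degree $\le B$ extending it. Since degree is bounded, only finitely many one-value/one-fact extensions need to be excluded, each by a negated $\exists_{>0}$.
\end{enumerate}

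This reduces the problem to expressing ``there is an embedding of $H^w$'' for connected $H$ in connected RHML. We do this as in the proof of Lemma~\ref{lem:GHMLvsGHMLminus}: combine, via Lemma~\ref{lemma: compute number of injective hom}, the homomorphism counts of the quotients $H/P$.

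The main obstacle is that this last step uses absolute counts (integer linear combinations of homomorphism counts being $\geq 1$), which RHML cannot state directly. What I would try first is to avoid counting entirely by using an inductive, mutually exclusive description. Define the unary formulas $\tau_{F}(z)$ (``$z$ has radius-$r$ neighbourhood type $F$'') by induction on $r$. For a fixed $F$ we use the homomorphism pattern $\nu_H$ for each quotient $H$ of each candidate. The fraction of homomorphisms from the one-edge patterns (for each relation and position of $x$) whose other endpoints have type $\tau_{F'}$ is a ratio with bounded denominator, since the degree is at most $B$. Knowing all these ratios for every edge pattern and every type $F'$, together with the $\exists_{>0}$ assertions that each pattern has a homomorphism, determines the exact multiset of neighbour types. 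This works because the total number of homomorphisms of a single-fact pattern rooted at $x$ is itself determined by the ratios of its refinements according to which positions coincide. These identities are fixed by an $\exists_{=t}$ formula for each pair (pattern, type), of which there are finitely many. If this local-multiset approach leaves a residual ambiguity in the total count, we would fall back on a parity/product trick: compare ratios for patterns with one versus two copies of the edge to pin down the number of neighbours exactly within $\{1,\dots,B\}$. This is the step where I expect the real work to lie.
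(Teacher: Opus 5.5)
There is a genuine gap in the reduction of your first paragraph: the formulas $\psi_{F^v}$ you ask for do not exist, because connected RHML cannot see multiplicities. Take $D_1=\{E(v,a)\}$ and $D_2=\{E(v,a),E(v,b)\}$, both of degree at most $2$. For every connected pattern $\nu$ rooted at $v$ or at $a$, the homomorphisms into $D_2$ arise from the at most one homomorphism into $D_1$ by choosing $a$ or $b$ independently for each non-root variable sent to $a$. An induction on connected RHML formulas therefore shows that $a$ in $D_1$ and $a,b$ in $D_2$ satisfy the same formulas, and likewise $v$ in $D_1$ and $v$ in $D_2$. The reason is that every ratio occurring in a quantifier $\exists_{\sim t}$ is $0$ or $1$ in both databases alike, and its denominator vanishes in both or in neither. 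Yet the neighbourhoods of $v$ are not isomorphic.

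The same example refutes step~3(i): ``there is an embedding of $E(x,y)\wedge E(x,z)$ with $y\neq z$'' separates $D_1^v$ from $D_2^v$. So injectivity cannot be recovered through quotients as in Lemma~\ref{lem:GHMLvsGHMLminus}, which relied on absolute counts. Your fallbacks fail for the same reason, since ratios are scale invariant. For $\nu=E(x,y)\wedge E(x,z)$, the fraction of homomorphisms satisfying $\tau(y)\wedge\tau'(z)$ is $p_\tau p_{\tau'}$, where $p_\tau$ is the proportion of out-neighbours of type $\tau$, regardless of how many out-neighbours there are. RHML fixes the distribution of neighbour types, never the multiset.

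The missing idea is that you never need to identify neighbourhoods, only embedding vectors, and a mean query sees exactly the proportions that ratio quantifiers express. Proceed layer by layer, as in Lemma~\ref{lem:max-DHN-to-UNFO-uniform}. Your locality argument shows that on databases of degree at most $B$ the set $\chi^{\ell}_{\calN}$ of vectors computed at layer $\ell$ is finite. One then builds formulas $\psi^{\ell}_{\bar x}$ with $G\models\psi^{\ell}_{\bar x}(v)$ iff $\lambda^{\ell}_{\calN,G}(v)=\bar x$.

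Consider a query $(F_i^{v_0},\mu_i,\meanagg)$ with $\mn{adom}(F_i)=\{v_0,\dots,v_n\}$ and pattern $\nu_i$. Its result is determined by the distribution $r_i$ that assigns to each labelling $\lambda\colon\mn{adom}(F_i)\to\chi^{\ell-1}_{\calN}$ the fraction of homomorphisms $h$ with $\lambda^{\ell-1}_{\calN,G}\circ h=\lambda$. As $F_i$ is connected and the degree is bounded, these fractions have bounded denominators, so only finitely many $r_i$ occur. Each $r_i$ is pinned down by the conjunction over all $\lambda$ of $\exists_{=r_i(\lambda)}\,x_1\cdots x_n\,(\bigwedge_{j=0}^{n}\psi^{\ell-1}_{\lambda(v_j)}(x_j),\nu_i)$. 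When $r_i\neq 0$, add $\exists_{>0}\,x_1\cdots x_n\,(\top,\nu_i)$ to exclude the case without homomorphisms, where all $\exists_{=t}$ hold vacuously and the mean is $0$.

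Disjoining over the distributions that yield each query result, and over the tuples of results that $\mathsf{com}_\ell$ maps to $\bar x$, gives $\psi^{\ell}_{\bar x}$; connectedness is inherited from the queries. Your step~2 device is the right tool; it just has to be aimed at embedding vectors rather than at isomorphism types.
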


\begin{proof}
\newcommand{\homcount}{\ensuremath{{((B+1)^{d+1})}^{|V(F_i)|}}} 
  Let \(\calN = (\calL_1,\ldots,\calL_k, \cls)\) be a \(\meanagg\)-DHN and let \(d\) be the largest diameter of a homomorphism query in \(\calN\).
  Then for \(1\leq \ell\leq k\), the set
  \begin{align*}
	 \chi_\calN^\ell = \{\lambda_{\calN, G}^\ell(v)\mid\ &G\text{ is a database of degree at most } B,~v\in \mn{adom}(G), \text{ and}\\&\text{there is no value of distance  from  $v$ exceeding } d\ell \}
  \end{align*}
  is finite because there are only finitely many databases that satisfy these conditions.

  It is easy to see that \[\chi_\calN^\ell = \{\lambda_{\calN, G}^\ell(v)\mid G\text{ is a database of degree at most } B, v\in \mn{adom}(G)\},\]
  since $\lambda_{\calN, G}^\ell(v)$ clearly only depends on the \(\ell d\)-neighborhood of  \(v\).

  For each \(\bar{x}\in \chi_\calN^\ell\) we can now proceed to construct a formula \(\psi_{\bar{x}}^\ell(y)\) such that \[G\models \psi_{\bar{x}}^\ell(v)\text{ iff } \lambda_{\calN, G}^\ell(v) = \bar{x}\text{ for all databases } (G,v) \text{ of bounded degree } B.\]
  The construction is by induction on \(\ell\).
  Since the input dimension of DHNs is \(0\), \(\chi_\calN^0\) only contains the empty vector and \(\psi^0_{()}\) can be expressed by any tautology.
  For the induction step, let \(\ell > 0\) and let \(\calL_\ell = ((F_1^\bullet, \mu_1,\meanagg),\ldots,(F_m^\bullet,\mu_m,\mean), \com_\ell)\).
  Then every vector  \(\bar{x}\in\chi_\calN^\ell\) can be represented as \(\com_\ell(\bar{y}_1,\ldots,\bar{y}_m)\) where each \(\bar{y}_i\) is the result of the homomorphism query \((F_i^\bullet, \mu_i, \meanagg)\).
  For  \(1\leq i\leq m\), the computation of \(\bar{y}_i\) can be described as follows:
  there exists a mapping \(r_i\) that describes the distribution of homomorphisms and labelings that results in \(\bar{y}_i\). More precisely, \(r_i\) takes the form
  \[r_i: \left(\chi_\calN^{\ell-1}\right)^{V(F_i)}\to \left\{\frac{i}{j}\mid 0\leq i \leq j\leq \homcount\right\}\] and satisfies \(\sum_\lambda r_i(\lambda) = 1\) or \(\sum_\lambda r_i(\lambda)=0\) and 
  \[\bar{y}_i = \sum_{\lambda\in \left(\chi_\calN^{\ell-1}\right)^{V(F_i)}} \Big(r_i(\lambda) \cdot \prod_{v\in V(F_i)} \mu_v(\lambda(v))\Big).\]
  Note that we use \(\sum_\lambda r_i(\lambda)=0\) to describe the case that there are no homomorphisms from \(F_i^\bullet\) to~\(G^v\), which results in \(\bar{y}_i =\bar{0}\).

  The restriction to mappings \(r_i\) with range \(\{\frac{i}{j}\mid 0\leq i\leq j\leq \homcount\}\) suffices, since we only consider databases of bounded degree \(B\) and connected homomorphism queries.
  That is, the \(d\)-neighborhood of each value contains at most \(\sum_{i=0}^d B^i \leq (B+1)^{d+1}\) values and there are at most \(\homcount\) homomorphisms from \(F_i\) to such a neighborhood. 

  Next we will define an RHML formula which is satisfied by \(G^v\) iff the result of the homomorphism query \((F^v_i, \mu_i, \meanagg)\) is \(\bar{y}_i\).
  To this end, let \(V(F^{v_0}) = \{v_0,v_1,\ldots,v_n\}\) and let \(R_{\bar{y}_i}\) be the set of all functions \(r_i\) that describe a distribution of homomorphisms and labelings that result in \(\bar{y}_i\).
  Let \(\nu\) be the conjunction of atoms which define \(F^v_i\).
  To handle the edge case where there is no homomorphism from \(F_i^v\), we define \(\zeta_{r_i}\) as 
  \[\zeta_{r_i} =\exists_{>0}\,x_1\cdots x_n\,(\top, \nu(x_0,\ldots,x_n)),\]
  if \(\sum_{\lambda}r_i(\lambda) =1\) and \(\zeta_{r_i} =\top\) otherwise, where \(\top\) is any tautology.
  We can now define a formula that checks whether the embedding in the \(\ell\)-th layer for a given value is \(\bar y_i\): \[\xi_{\bar{y}_i}^\ell(x_0) = \bigvee_{r\in R_{\bar{y}_i}} \bigg(\zeta_r\,\land\bigwedge_{\lambda\in \left(\chi_\calN^{\ell-1}\right)^{V(F_i)}}\exists_{=r(\lambda)}\, x_1\cdots x_n\, \Big(\bigwedge_{i=0}^{n}\psi_{{\lambda({v}_i)}}^{\ell-1}(x_i), \nu(x_0,\ldots,x_n)\Big)\bigg).\]
  

  For each \(1\leq i\leq m\) let \(Y_i\) be the set of all possible vectors \(\bar{y}_i\). This set is again finite, since there are only finitely many functions \(r_i\).
  We can thus define \(\psi_{\bar{x}}^\ell\) as follows:
  \[\psi_{\bar{x}}^\ell = \bigvee_{\substack{\bar{x} = \com_\ell(\bar{y}_1,\ldots, \bar{y}_m)\\\bar{y}_i\in Y_i\text{ for all } i}}\bigwedge_{i=1}^m \xi_{\bar{y}_i, i}^\ell.\]
  Since the disjunction enumerates all reachable inputs to \(\com_\ell\) that result in \(\bar{x}\), the formula \(\psi_{\bar{x}}^\ell\) is satisfied by \(G^v\) iff \(\lambda_{\calN, G}^\ell(v) = \bar{x}\).
\end{proof}

\section{Proofs for Section~\ref{sect:DENs}}

We start with a more in-depth introduction to UQAFO.   We say that a UQAFO formula is 
   \emph{of the existential kind} if
   it can be generated by the 
\(\varphi_\exists(\bar{x})\) production, and likewise for formulas of the \emph{universal kind} and \(\varphi_\forall(\bar{x})\).
\begin{example}
  The formula  \(\psi(x_1,x_2,y_1,y_2) = \exists z \, (E(x_1,z) \land E(z,x_2)) \land \forall z \, (E(y_1,z) \land E(z,y_2))\) is a UQAFO formula, but neither
  of the existential kind nor of the universal
  kind. In contrast, \(\exists x \exists y\, (\forall z\, R(x,z) \land \forall z\,S(y,z))\) is a UQAFO formula of the existential kind. 
\end{example}

A straightforward analysis of the defining grammar shows that for every UQAFO formula \(\varphi(\bar{x})\) of the existential kind,  the negation normal form of \(\neg \varphi(\bar{x})\) is a UQAFO formula of the universal kind, and vice versa. We thus obtain the following basic observation.
%
\begin{lemma}\label{lemma:uqafo_neg_exis_univ_swap}
  For every UQAFO formula \(\varphi(\bar{x})\) of the existential kind, there is a UQAFO formula \(\psi(\bar{x})\) of the universal kind such that \(\psi(\bar{x}) \equiv \neg \varphi(\bar{x})\), and vice versa.
  %
\end{lemma}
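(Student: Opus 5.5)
We argue by simultaneous structural induction on the grammar of UQAFO, proving the two halves of the statement together: for every formula $\varphi(\bar x)$ of the existential kind there is a formula $\psi(\bar x)$ of the universal kind with $\psi \equiv \neg\varphi$, and symmetrically with the kinds swapped. Concretely, we define a syntactic dualization $\varphi \mapsto \varphi^d$ that pushes negation inward in the usual way, i.e. that computes the negation normal form of $\neg\varphi$. We then show by induction that $\varphi^d \equiv \neg\varphi$ and that $\varphi^d$ can be derived from the opposite production of the grammar.

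The cases follow the productions of $\varphi_\exists$; the $\varphi_\forall$ cases are exactly symmetric.
\begin{itemize}
\item Atoms: $R(\bar x)$ maps to $\neg R(\bar x)$ and $\neg R(\bar x)$ maps to $R(\bar x)$. Likewise $x_i = x_j$ and $x_i \neq x_j$ map to each other. All of these atoms are available in both productions, so the result is trivially of the universal kind.
\item Connectives: $(\varphi_1 \circ \varphi_2)^d = \varphi_1^d \circ' \varphi_2^d$, where $\circ'$ is the dual connective. By the induction hypothesis both $\varphi_i^d$ are of the universal kind, and that production is closed under $\vee$ and $\wedge$. Correctness is De Morgan.
\item Quantifier: $(\exists \bar y\, \varphi_\exists(\bar x,\bar y))^d = \forall \bar y\, \varphi_\exists^d(\bar x,\bar y)$. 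This is allowed by the production $\forall \bar y\, \varphi_\forall(\bar x,\bar y)$, and correctness is the quantifier duality.
\item Unary subformula: a subformula $\varphi_\forall(x_i)$ occurring inside an existential formula is sent to $\varphi_\forall(x_i)^d$. By the induction hypothesis for the other kind, this is of the existential kind with the single free variable $x_i$. It is therefore an instance of the clause $\varphi_\exists(x_i)$ of the universal production.
\end{itemize}

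The induction is well-founded on formula size, since each case only invokes the hypothesis on strictly smaller subformulas, possibly of the other kind. This is why the two statements must be proved together.

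The only delicate point is the last case, the crossover clause. There we must check that the free variables are preserved, i.e. that $\varphi^d$ has exactly the same free variables as $\varphi$. This holds by a trivial side induction, since dualization never renames or removes variables. Hence the unary restriction in the clauses $\varphi_\forall(x_i)$ and $\varphi_\exists(x_i)$ is respected. Everything else is routine.
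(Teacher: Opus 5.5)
Your argument is correct and is the paper's argument: the paper only asserts that a straightforward analysis of the grammar shows the negation normal form of $\neg\varphi$ is of the opposite kind, and your dualization $\varphi \mapsto \varphi^d$ is exactly that computation, checked production by production. One correction: the crossover clause need not shrink the formula (e.g.\ $\forall y\, R(x,y)$ is of the existential kind only via the clause $\varphi_\forall(x)$), so you should induct on the height of the grammar derivation rather than on formula size, and then the appeal to the hypothesis for the other kind is legitimate.
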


\lemUQAFOUNFOC*
%
\begin{proof}
The main step towards proving that UQAFO is at least as expressive as UNFOC is to show that for each GHML formula there is an equivalent UQAFO formula that is both of the existential and of the universal kind. We do this by structural induction.
  It is easy to see that this class of UQAFO formulas is closed under disjunction. By
  Lemma~\ref{lemma:uqafo_neg_exis_univ_swap}, it is also closed under negation. It thus
  remains to treat formulas \(\exists^{\geq n_1}y_1\cdots \exists^{\geq n_k}y_k \, \bigvee_i \psi_i(x,y_1,\ldots,y_k)\),
  which constitute the induction start if each $\psi_i$ contains only
  atomic formulas, but which also occur in the induction step. We shall not distinguish between these two kinds of occurrences.
  
  First note that each conjunction \(\psi_i(x,y_1,\ldots,y_k)\) is equivalent to
  a UQAFO formula of the existential kind, by induction hypothesis and since formulas of the existential kind are closed under conjunction.
  Since they are also closed under disjunction,
  \(\bigvee_i \psi_i(x,y_1,\ldots,y_k)\) is also a UQAFO formula of the existential kind.
  The counting quantifiers can
  be rewritten using existential quantifiers and inequalities.
  The resulting formula is of the existential kind and since there is only one free variable, it is also of the universal kind.


By Lemma~\ref{lem:twocombined} and what we have just shown, every UNFOC formula in one free variable is equivalent to a UQAFO formula. Note that every UNFOC formula, independently of the number of free variables, can be written as a combination in terms of $\wedge$ and $\vee$ of the following: atoms $R(\bar x)$, 
formulas $\exists^{\geq n} x \, \varphi$, and formulas $\neg \varphi(x)$. It is easy to see that every such combination  is equivalent to an existential UQAFO formula. In particular,
$\exists^{\geq n} x \, \varphi$ can be rewritten
using existential quantifiers and inequalities
and for formulas $\neg \varphi(x)$ we use that every UNFOC formula in one free variable is equivalent to a UQAFO formula.

  It remains to show that UQAFO is strictly more expressive than GHML. This is the case since the UQAFO formula \[\forall yz\, ((E(x,y)\land E(y,z))\rightarrow E(x,z))\] expresses local transitivity which by Lemma~\ref{prop:nolocaltransinGHMLstar} cannot be expressed in GHML.
\end{proof}

We next argue that EML does not become more expressive when counting quantifiers are added.
  Tuple counting quantifiers of the form \(\exists^{\geq n} \bar{y}\,\psi(x,\bar{y})\) are the most straightforward to express in EML.
  They can be written as \[\exists \bar{y}_1\cdots\bar{y}_n\, \big(\bigwedge_{1\leq i\leq n} \psi(x,\bar{y}_i)\land\bigwedge_{1\leq i<j\leq n} \bar{y}_i \neq \bar{y}_j\big),\]
  where \(\bar{y}_i\neq \bar{y}_j\) is an abbreviation for \(\bigvee_{1\leq k\leq |\bar{y}|} (\bar{y}_i)_k \neq (\bar{y}_j)_k\).
  Using distributivity laws, the disjunctions can be moved in front of the existential quantifiers.
  The resulting formula is then an EML formula.
  By applying this approach inductively, we can express counting quantifiers of the form \(\exists^{\geq n_1}\bar{y}_1\cdots\exists^{\geq n_k}\bar{y}_k \,\psi(x,\bar{y}_1,\ldots,\bar{y}_k)\) as well:
  We can replace the outermost counting quantifier in \(\exists x_1\cdots x_\ell \exists^{\geq n_1}\bar{y}_1\cdots \exists^{\geq n_k}\bar{y}_k\, \psi(x,x_1,\ldots,x_\ell,\bar{y}_1,\ldots,\bar{y}_k)\) by introducing \(n_1\) copies of the variables \(\bar{y}_1\) and we get the equivalent formula
\[\exists x_1\cdots x_\ell\, \bar{y}_1^1\cdots \bar{y}_1^{n_1}\!\!\bigwedge_{1\leq i\leq n_1}\!\!\big( \exists^{\geq n_2}\bar{y}_2\cdots \exists^{\geq n_k}\bar{y}_k\, \psi(x,x_1,\ldots,x_\ell,\bar{y}_1^i, \bar{y}_2,\ldots,\bar{y}_k)\land \!\!\bigwedge_{1\leq i<j\leq n_1}\!\! \bar{y}_1^i \neq \bar{y}_1^j \big).\]
By using variable renaming and scope changes, we can move the conjunction between existential quantifiers inwards behind all existential quantifiers.
We can deal with the disjunctions introduced by \(\bar{y}_1^i\neq \bar{y}_1^j\) as before.

\subsection{Point~1 of Theorem~\ref{thm:DENcombi}}

We prove Point~1 of
Theorem~\ref{thm:DENcombi} 
and the inclusion part of Point~2. To start with, note that, trivially,
every EML formula is a UQAFO formula.
\begin{lemma}
  Every unary UQAFO formula is equivalent to an EML formula.
\end{lemma}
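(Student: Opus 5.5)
We prove the statement by structural induction along the UQAFO grammar. The induction hypothesis will be a stronger statement about formulas of the existential kind with arbitrarily many free variables. For a formula $\varphi_\exists(\bar x)$ of the existential kind, the claim is that it is equivalent to a finite disjunction of formulas of the form
\[
\exists \bar y \, \big(\alpha(\bar x,\bar y) \wedge \textstyle\bigwedge_{z} \chi_z(z)\big),
\]
where $\alpha$ is a conjunction of relational atoms, negated relational atoms, equalities and inequalities, and each $\chi_z$ is a conjunction of unary EML formulas on variables $z$ from $\bar x\bar y$. We call such a disjunction an \emph{existential EML normal form}. A formula of the universal kind is handled dually: by Lemma~\ref{lemma:uqafo_neg_exis_univ_swap}, its negation is equivalent to one of the existential kind, so it has a normal form that is a negated existential EML normal form.

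The induction runs over the cases of the grammar.

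\begin{itemize}
\item \emph{Atoms, negated atoms, (in)equalities.} These are already in the normal form, with $\bar y$ empty.
\item \emph{Disjunction.} Take the union of the two families of disjuncts.
\item \emph{Conjunction.} Rename bound variables apart, distribute $\wedge$ over $\vee$, and merge the quantifier blocks. This is valid because the quantified variables are disjoint.
\item \emph{Existential quantification $\exists \bar y$.} Push the quantifier through the disjunction and absorb $\bar y$ into each block.
\item \emph{The switch $\varphi_\forall(x_i)$ inside the existential production.} By the induction hypothesis applied to its negation, $\varphi_\forall(x_i)$ is the negation of a unary existential EML normal form. Each disjunct $\exists\bar y(\ldots)$ of that normal form, viewed with the single free variable $x_i$, is an EML formula of the shape $\exists \bar y\,\psi(x_i,\bar y)$, and EML is closed under unary negation and disjunction. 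Hence $\varphi_\forall(x_i)$ is a unary EML formula, and it can be placed as a conjunct $\chi_{x_i}$.
\end{itemize}

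The universal production is symmetric after negation.

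At the top level, a unary UQAFO formula $\varphi(x)$ is a $\wedge/\vee$-combination of unary formulas of the existential or universal kind. Before turning these into EML we must make sure the variable being pointed is the free variable $x$. Each disjunct $\exists \bar y(\alpha(x,\bar y)\wedge\bigwedge_z\chi_z)$ with one free variable $x$ is exactly an EML formula $\exists\bar y\,\psi(x,\bar y)$, since EML allows negated atoms and inequalities inside $\psi$.

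Two bookkeeping issues arise here, and both are mild.

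\begin{itemize}
\item \emph{Equalities.} Either eliminate them by substitution, or note that an equality $x_i=x_j$ is simply an identification of variables in the CQ.
\item \emph{Disconnected or $x$-free subformulas.} EML does not require connectedness, so sentence-like pieces such as $\exists \bar y\,\psi(\bar y)$ are fine. If needed, they are attached to $x$ as $\exists \bar y\,(\psi(\bar y)\wedge \top(x))$, where $\top(x)$ is a trivial unary formula, for instance $\neg\exists\emptyset\,\bot$ or a tautological disjunction $\chi(x)\vee\neg\chi(x)$.
\end{itemize}

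Combining via $\wedge$ and $\vee$ then stays within EML. Conjunction is available either as $\exists$ with an empty tuple over two unary conjuncts, or through De Morgan using negation and disjunction.

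The main obstacle is the induction step for conjunction and quantification when free variables are shared among several subformulas. One has to check carefully that after distributing and merging, every nested unary formula still has exactly one free variable. This holds because the grammar only ever switches kinds at the unary production $\varphi_\forall(x_i)$ or $\varphi_\exists(x_i)$. Consequently no $\forall$ scopes over more than one outer variable, which is exactly why the merged blocks remain of the form $\exists\bar y\,\psi$ with unary side formulas.
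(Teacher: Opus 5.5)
Your proof is correct and is essentially the paper's argument. Both rely on the fact that a same-kind block contains only $\wedge$, $\vee$, literals and one kind of quantifier, so it can be merged into a single quantifier prefix and brought into disjunctive normal form with the disjunctions pulled out of $\exists$; $\forall$-blocks are read as negated $\exists$-blocks (the negation being unary), and equalities are removed by substitution. The only difference is bookkeeping: you carry the DNF as an explicit invariant of one induction, whereas the paper first groups quantifiers by induction on quantifier depth and only afterwards rewrites $\forall\bar y$ as $\neg\exists\bar y\neg$ and normalizes.

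One detail worth making explicit: the switch production $\varphi_\exists(\bar x)::=\varphi_\forall(x_i)$ does not shrink the formula, so your induction should be on derivations, since the dual formula from Lemma~\ref{lemma:uqafo_neg_exis_univ_swap} has a derivation of the same size as the universal one. Alternatively, you can prove the dual normal form for the universal kind directly in the same simultaneous induction.
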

\begin{proof}
  To transform a unary UQAFO formula \(\varphi\) into an EML formula, we group the quantifiers first, and then we replace universal quantifiers with negated existential quantifiers.
  After grouping the quantifiers each subformula \(\psi(\bar{x}) = \exists y\, \psi'(\bar{x},y)\) of \(\varphi\) should satisfy:
  \begin{itemize}
    \item \(\psi'\) is of the form \(\exists z\, \psi''(\bar{x},y,z)\), or
    \item \(\psi'\) is a Boolean combination of relational atoms, (in-)equalities, and unary UQAFO formulas.
  \end{itemize}
  Analogous conditions should be satisfied by subformulas \(\forall y\, \psi'(\bar{x},y)\).

  The translation of UQAFO formulas into one that satisfies these conditions is by induction on the quantifier depth.
  For formulas of quantifier depth \(0\) there is nothing to do.
  For the induction step, we describe how to translate formulas that start with an existential quantifier.
  The case for formulas that start with a universal quantifier is analogous.
  Let \(\psi(\bar{x}) = \exists y\, \psi'(\bar{x},y)\) be a formula of quantifier depth \(i+1\) that does not satisfy any of the above conditions.
  We can write \(\psi'\) as a Boolean combination of relational atoms, (in-)equalities, and UQAFO formulas which \emph{start immediately with a quantifier}.
  Let \(\chi_1,\ldots,\chi_n\) be the relational atoms, (in-)equalities and unary UQAFO formulas, and let \(\xi_1,\ldots,\xi_k\) be the remaining non-unary UQAFO formulas of this Boolean combination.
  Since \(\psi\) is a UQAFO formula, all \(\xi_j\) have to start with an existential quantifier.
  By induction hypothesis, all of  \(\xi_1,\ldots,\xi_k\) can be transformed such that they satisfy the first condition above, in particular, they can be written as \(\xi_j \equiv \exists\bar{y}_j\,\xi_j'\), where \(\xi_j'\) does not start with an existential quantifier.
  By variable renaming, we can assume that all \(\bar{y}_i\) as well as \(\bar{x}\) and \(y\) are pairwise disjoint.
  It is thus safe to move all quantifiers to the front, and therefore \(\psi\) is equivalent to
  \[\exists y\ \bar{y}_1\cdots\bar{y}_k\, \psi'',\]
  where \(\psi''\) is the result of replacing every occurrence \(\xi_i\) with \(\xi_i'\) in \(\psi'\) for all \(1\leq i\leq k\).
  \(\psi''\) is then a Boolean combination of \(\chi_1,\ldots,\chi_n\) as well as the atoms, (in-)equalities and unary UQAFO formulas in \(\xi_1',\ldots,\xi_k'\).
  Hence, \(\psi\) and its subformulas satisfy the conditions above.

  We can now get rid of all universal quantifiers, by replacing each \(\forall \bar{y}\,\psi(x,\bar{y})\), where \(\psi\) is a Boolean combination of relational atoms, (in-)equalities and unary UQAFO formulas with \[\neg\exists\bar{y}\,\neg\psi(x,\bar{y}).\]
  The negation in front of the existential quantifier is allowed in EML since it negates a unary formula.
  This step also replaces every universal quantifier, since the formula satisfies the conditions above.
  By using well-known logical equivalences, we can move the negation in front of \(\psi\) inwards, until it is in front of the atoms, (in-)equalities and unary UQAFO formulas.
  
  In the last step, for each formula \(\exists \bar{y}\, \psi(x,\bar{y})\), we transform \(\psi(x,\bar{y})\) in disjunctive normal form, move the disjunctions in front of the existential quantifiers and use variable renaming to get rid of the remaining equalities.
\end{proof}

We next show that every EML formula can be converted into a certain strict form. In a formula in this form, every subformula  \(\exists \bar{y} \, \psi(x,\bar{y})\),  describes a proper embedding, rather than only a `partial' embedding.
In subsequent constructions, it will be useful to assume that EML formulas are in strict form.

  We say that an EML formula \(\varphi\) is in \emph{strict form} if for each subformula \(\exists \bar{y} \, \psi(x,\bar{y})\), \(\psi\) contains at least the following conjuncts:
  \begin{itemize}
    \item every inequality \(z_1\neq z_2\) for every pair of variables \(z_1,z_2\in \{x\}\cup\bar{y}\) with \(z_1\neq z_2\).
    \item every relational atom \(R(z_1,\ldots,z_k)\) or its negation \(\neg R(z_1,\ldots,z_k)\), for each \(z_1,\ldots,z_k\in \{x\}\cup\bar{y}\) and each relational symbol in the schema.
  \end{itemize} 
\medskip
\begin{lemma}
  For each EML formula there exists an equivalent strict EML formula. 
\end{lemma}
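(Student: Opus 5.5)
We proceed by structural induction on the EML formula $\varphi$. Negations and disjunctions are passed through unchanged, so the only case that needs work is a subformula $\exists \bar y\, \psi(x,\bar y)$. By the induction hypothesis, every unary subformula occurring in $\psi$ has already been replaced by an equivalent strict formula.

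Fix such a subformula and put $Z=\{x\}\cup\bar y$. The basic tool is a case split over all possible ``types''.

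First, split on the equality type. Let $\calP$ range over all partitions of $Z$. For a given partition, identify the variables in each block, choosing the representative of $x$'s block to be $x$ itself. Add all inequalities between distinct representatives. Discard the partition if the result contradicts an inequality $z_1\neq z_2$ already present in $\psi$ (i.e.\ $z_1,z_2$ in the same block). This gives
\[
\exists \bar y\,\psi \;\equiv\; \bigvee_{P\in\calP}\exists \bar y_P\,\Big(\psi_P\wedge\bigwedge_{z\neq z'\text{ in }\bar y_P\cup\{x\}} z\neq z'\Big),
\]
where $\psi_P$ is $\psi$ with variables renamed to block representatives. This step uses only first-order equivalences and the fact that every assignment induces exactly one partition.

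Second, split on the atomic type. Let $Z_P$ be the set of representatives. Let $\Theta$ be the finite set of atoms $R(\bar z)$ with $R\in\Sbf$ and $\bar z\in Z_P^{\mn{ar}(R)}$. For each $S\subseteq\Theta$ form the complete diagram
\[
\delta_S=\bigwedge_{\alpha\in S}\alpha\wedge\bigwedge_{\alpha\in\Theta\setminus S}\neg\alpha .
\]
Since $\bigvee_S\delta_S$ is a tautology, we can replace each disjunct by the disjunction over $S$ of the formula with $\delta_S$ added to the conjunction. Any $S$ that contradicts a (negated) relational atom already in $\psi_P$ is dropped, because that disjunct is unsatisfiable. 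The disjunctions over $P$ and over $S$ are then pulled outside the existential quantifier, which is allowed because $\exists$ distributes over $\vee$.

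The result is a finite disjunction of formulas $\exists\bar y'\,\psi'$. Each $\psi'$ contains all pairwise inequalities and, for every atom over its variables, either that atom or its negation, so each disjunct is strict. If every disjunct is dropped, the formula is unsatisfiable; we replace it by a trivially false strict formula, for instance a strict $\exists$-formula over the single variable $x$ with contradictory conjuncts $R(x,\dots,x)\wedge\neg R(x,\dots,x)$.

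We have to check that the final formula is still EML. The unary subformulas in $\psi$ were only renamed, inequalities and negated atoms are allowed in EML, and the only disjunctions we introduced sit outside existential quantifiers, where they act on unary formulas. One edge case: a schema with nullary relations, where the diagram includes nullary atoms; these are handled the same way.

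There is no real obstacle. The only delicate point is keeping track of which variable represents $x$'s block after the identification: it must be $x$ so that the free variable is preserved. We also need $\Sbf$ finite so that $\Theta$ is finite, and it is by definition.
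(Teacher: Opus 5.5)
Your proposal is correct and follows essentially the same route as the paper: induction on the formula, where the only nontrivial case $\exists\bar y\,\psi$ is handled by expanding the atomic part of $\psi$ into a disjunction of complete types (an equality pattern plus a full atomic diagram), moving the disjunction outside the quantifier, and removing equalities by variable renaming. You also spell out two details the paper leaves implicit: choosing $x$ as the representative of its own block, and what to do when every disjunct turns out to be inconsistent.
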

\begin{proof}

We show by induction on the structure of EML formulas how to transform an EML formula into a strict one.
We only have to show in the induction step how to transform existential quantifiers.
Let \(\exists \bar{y}\, \psi(x,\bar{y})\) be an EML formula.
Then \(\psi(x,\bar{y})\) can be written as \(\psi_1(x,\bar{y})\land \psi_2(x,\bar{y})\), where \(\psi_1\) is a conjunction of relational atoms, negated relational atom and inequalities, while \(\psi_2\) is a conjunction of unary EML formulas.
By the induction hypothesis, each conjunct in \(\psi_2\) can be translated into a strict EML formula, thus \(\psi_2\) is equivalent to a conjunction of strict unary EML formulas.
We denote this formula by \(\psi_2'\).
\(\psi_1\) is equivalent to a disjunction \(\bigvee_{i} \varphi_i(x,\bar{y})\) where each \(\varphi_i\) is a conjunction such that:
\begin{itemize}
    \item For each relational atom \(R\) of arity \(k\) and \(z_1,\ldots,z_k\in \{x\}\cup\bar{y}\) it either contains \(R(z_1,\ldots,z_k)\) or \(\neg R(z_1,\ldots,z_k)\) as a conjunct.
    \item For each pair of variables \(z_1,z_2\in \{x\}\cup\bar{y}\), it contains either \(z_1 =z_2\) or \(z_1\neq z_2\) as a conjunct.
\end{itemize}
The formula \(\exists\bar{y}\, \psi(x,\bar{y})\) is then equivalent to \[\bigvee_i \exists \bar{y}\, (\varphi_i(x,\bar{y}) \land \psi_2'(x,\bar{y}).\]
Each equivalence can be removed by variable renaming.
The resulting formula is then a strict EML formula.
\end{proof}

We  next show that max-DENs can realize every strict EML formula and thus also every unary UQAFO formula.
\begin{lemma}
  Every strict EML formula 
is  equivalent to a max-DEN. 
\end{lemma}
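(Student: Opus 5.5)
We mirror the proof of Lemma~\ref{lem:UNFO-to-max-DHN-uniform}, replacing homomorphism queries by embedding queries. Fix an enumeration $\varphi_1,\dots,\varphi_k$ of the subformulas of the strict EML formula $\varphi$ in which every subformula precedes the formulas containing it. We build a $\maxagg$-DEN $\calN=(\calL_1,\dots,\calL_k,\cls)$ in which all layers have output dimension $k$. The intended invariant is that after layer $j$, for every $i\le j$, the $i$-th component of $\lambda^j_{\calN,D}(v)$ is $1$ if $D\models\varphi_i(v)$ and $0$ otherwise. The classifier tests whether $x_k\ge 1$. Negation and disjunction are handled exactly as before. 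They use the single-value, fact-free embedding query with an affine $\trrelu$ transformation and identity combination. For such a query there is at most one embedding, namely the identity on the distinguished value, so $\maxagg$ simply returns the transformed vector. Hence nothing changes with respect to the homomorphism case.

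The only new case is $\varphi_i=\exists\bar y\,\psi(x,\bar y)$. By strictness, $\psi=\psi_0\wedge\bigwedge_z\psi_z$. Here $\psi_0$ contains all inequalities between distinct variables of $\{x\}\cup\bar y$, and, for every relation symbol and every tuple over these variables, either the atom or its negation. Each $\psi_z$ is a conjunction of unary EML formulas in $z$. Let $F^x$ be the database whose values are the variables $x,\bar y$ (all distinct) and whose facts are exactly the positive atoms of $\psi_0$.

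The key claim is that for every pointed database $D^v$, the embeddings of $F^x$ into $D^v$ are exactly the assignments $\bar y\mapsto\bar w$ with $D\models\psi_0(v,\bar w)$. Injectivity corresponds to the inequalities. The condition ``$R(h(\bar z))\in D$ iff $R(\bar z)\in F$'' corresponds to the complete list of positive and negated atoms. This is precisely why strict form is needed: without it, an embedding would additionally force negated atoms that $\psi$ does not require. With the claim in hand, we use the embedding query $(F^x,\mu_1,\maxagg)$. Here $\mu_{1,z}$ outputs $1$ iff all components $j$ with $\varphi_j(z)$ a conjunct of $\psi_z$ equal $1$, and outputs $(1)$ when $\psi_z$ is empty. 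Each $\mu_{1,z}$ is the affine map $\trrelu(\sum_j x_j-(c-1))$, where $c$ is the number of such conjuncts. The product over values is $1$ exactly when all unary conjuncts hold, and the maximum over embeddings is $1$ iff some witness tuple satisfies $\psi$. A second single-value query copies the previous embedding, using the zero vector if $i=1$. As before, $\com_i$ places the result of the first query at position $i$.

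Correctness then follows by induction on $j$, as in the Claim of Lemma~\ref{lem:UNFO-to-max-DHN-uniform}. The main point requiring care is the bijection between embeddings and satisfying tuples of $\psi_0$. This includes the edge cases of nullary relations and atoms with repeated variables, such as $R(x,x)$. Strictness lists all tuples, including those with repetitions, so these atoms and their negations are covered by the induced-substructure condition as well.
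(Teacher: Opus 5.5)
\textbf{Gap: the single-value, fact-free query need not have an embedding.} Your existential case is correct and matches the paper, including the bijection between embeddings of $F^x$ and tuples satisfying $\psi_0$. The gap is in the claim you rely on for negation, disjunction, and the copy query: ``there is at most one embedding \dots\ so $\maxagg$ simply returns the transformed vector.''

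The fact-free single-value query has \emph{at most} one embedding at $u$, but not necessarily \emph{exactly} one. An embedding must also reflect facts. So the map sending the distinguished value to $u$ is an embedding only if $D$ contains no nullary fact and no fact all of whose arguments equal $u$, such as $P(u)$, a loop $E(u,u)$, or $R(u,u,u)$. Otherwise there is no embedding, the query returns $\maxagg(\emptyset)=0$, and your layer outputs the zero vector at $u$. For example, with a unary symbol $P$ in the schema, every value $u$ with $P(u)\in D$ loses all previously computed components at the first negation or disjunction layer. In an existential layer, the copy query zeroes every position except $i$. The invariant therefore fails. This is the same induced-substructure effect you correctly identified as the reason strict form is needed in the existential case.

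\textbf{The paper's fix.} Replace the single fact-free query by the family $(F^\bullet_1,\mu,\maxagg),\dots,(F^\bullet_\ell,\mu,\maxagg)$, where $F^\bullet_1,\dots,F^\bullet_\ell$ enumerate all single-value databases over $\Sbf$, that is, all possible sets of facts over one value, including nullary facts. Exactly one of these embeds at any value of any database. Using $\com'(\bar x_1,\dots,\bar x_\ell)=\com(\bar x_1+\cdots+\bar x_\ell)$ therefore reproduces the homomorphism-case value, and $\com'$ is still affine followed by $\trrelu$. The copy query in the existential case needs the same family, with $\com_i$ summing these outputs before overwriting position $i$ with the result of the first query. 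With this change your argument goes through.
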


%
\begin{proof}
  This proof is analogous to the proof of Lemma~\ref{lem:UNFO-to-max-DHN-uniform}, that is, we translate each strict EML formula to a \maxagg-DEN.
  In particular, we use the definition of \emph{subformula} according to the syntax rules of EML.
  Let \(\varphi\) be a strict EML formula and let \(\varphi_1,\ldots,\varphi_k\) be an enumeration of the subformulas in \(\varphi\) such that if \(\varphi_\ell\) is a subformula of \(\varphi_k\) then \(\ell\leq k\).

  We construct a \maxagg-DEN \(\calN = (\calL_1, \ldots,\calL_k, \cls)\) with \(k\) layers, all of output dimension \(k\).
  As usual, we encode satisfaction of \(\varphi_i\) with \(1\) and falsification with \(0\) in the \(i\)-th component.
  The DEN evaluates one subformula in each layer so that for every \(1\leq i\leq k\), from layer \(i\) on all subformulas \(\varphi_1,\ldots,\varphi_i\) are encoded correctly.

  For formulas \(\varphi_i = \varphi_{i_1} \lor \varphi_{i_2}\) and \(\varphi_i = \neg \varphi_j\) we can adapt the homomorphism queries in the proof of Lemma~\ref{lem:UNFO-to-max-DHN-uniform}. 
  There we used exactly one homomorphism query \((F^\bullet, \mu,\max)\), where \(F^\bullet\) was a single-value database without any facts  and where \(\com\) was an affine function followed by \trrelu.
  For \maxagg-DENs we cannot use \(F^\bullet\) as the only database, but we now must use multiple embedding queries to exactly match the relational atoms satisfied at the values in the input database.
  We thus use \((F^\bullet_1, \mu,\max),\ldots,(F^\bullet_\ell,\mu,\max)\), where \(F^\bullet_1,\ldots,F^\bullet_\ell\) is an enumeration of all single-value databases over the schema \(\Sbf\).
  We replace the combination function by \(\com'(\bar{x}_1,\ldots,\bar{x}_\ell) = \com(\bar{x}_1+\cdots+\bar{x}_\ell)\).
  Since addition is affine, \(\com'\) is still an affine function followed by \trrelu activation.
  Additionally, the embedding query will return the same value as the homomorphism query, since exactly one \(F^\bullet_i\) can embed at each value.
  This translation allows us to reuse the homomorphism queries which we used in Lemma~\ref{lem:UNFO-to-max-DHN-uniform} to encode negations and disjunctions.

  For the last case, assume that \(\varphi_i\) takes the form \(\exists \bar{y}\, \varphi_k(x,\bar{y})\), which is equivalent to the form
  \[\exists \bar{y} \,\Big(\psi_0\land \bigwedge_{z\in \{x\}\cup \bar{y}} \psi_z  \Big),\]
  where \(\psi_0\) is a conjunction of relational atoms and each \(\psi_z\) is a conjunction of (zero or more) EML formulas with the free variable \(z\).
  We use the following embedding query to check whether \(\varphi_i\) itself is satisfied:
  
  \begin{itemize}
    \item \(F_1^x\) is the database that is described by \(\psi_0\). Since \(\varphi_i\) is a strict EML formula, \(F_1^x\) is uniquely defined.
    \item \(\mu_1\) assigns to each value \(z\in \mn{adom}(F_1)\) the transformation function that maps every tuple \(\bar{x}\in \mathbb{R}^k\) to the tuple \((1)\) if the \(j\)-th position of \(\bar{x}\) is \(1\) for all conjuncts \(\varphi_j(z)\) in \(\psi_z\), and \((0)\) otherwise.
          It also returns \((1)\) if \(\psi_z\) is the empty conjunction.
  \end{itemize}
  To copy the previous feature vector, we use the embedding queries \((G^\bullet_1,\mu_2,\max),\ldots,(G^\bullet_\ell,\mu_2,\max)\), where
  \begin{itemize}
    \item \(G^\bullet_1,\ldots,G^\bullet_\ell\) is an enumeration of all single-value databases over the schema \(\Sbf\),
    \item if \(i=1\), \(\mu_2\) returns the \(k\)-dimensional vector with all entries set to \(0\),
    \item if \(i>1\), \(\mu_2\) is the identity function.
  \end{itemize}
  The combination function \(\com_i\) adds all vectors returned by \((G^\bullet_i,\mu_2,\max)\), and overrides the \(i\)\nobreakdash-th position in this sum by the result of the first embedding query.
  It is easy to verify that all transformation functions are affine with \trrelu activation.

  As the classification function we use
  
    \[\cls(\bar{x}) = \begin{cases}
        1 & \text{ if } x_k \geq 1\\
        0 & \text{ otherwise}.
    \end{cases}\]
It is straightforward to show the correctness of the translation, that is:
\\[2mm]
{\bf Claim.} For all databases $D$, values $v \in \mn{adom}(G)$, and $i,j$
with $1 \leq i \leq j \leq k$, the following holds: \[(\lambda^j_{\calN,G}(v))_i = \begin{cases}
    1&\text{ if } G\models \varphi_i(v),\\
    0&\text{ otherwise.}
\end{cases}\]
\end{proof}

\begin{corollary} \label{lem:uqafo_to_sum_den}
  Every unary UQAFO formula is equivalent to a sum-DEN. 
\end{corollary}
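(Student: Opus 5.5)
The plan is to route through EML in strict form and then check that the max-DEN built in the previous lemma keeps computing the same thing when every $\maxagg$ is replaced by $\sumagg$, after a small adjustment. Given a unary UQAFO formula, the preceding lemmas first turn it into an EML formula and then into a strict EML formula $\varphi$. We then run the construction from the proof of the previous lemma with $\sumagg$ as the aggregation function, keeping the $\{0,1\}$-encoding where the $i$-th component records satisfaction of the subformula $\varphi_i$.

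The layers for negation, disjunction and copying use only single-value embedding queries $(F^\bullet_j,\mu,\cdot)$. For a fixed value $u$ of the input database, exactly one single-value database $F_j^\bullet$ embeds into $D^u$, and it does so in exactly one way. Each such query therefore has at most one embedding, so $\sumagg$ and $\maxagg$ give the same result, exactly as in the proof of Lemma~\ref{thm:UNFOC-to-sum-DHN-uniform}. These layers can be kept unchanged.

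The only case needing care is a subformula $\varphi_i=\exists\bar y\,\psi(x,\bar y)$. The embedding query $(F_1^x,\mu_1,\cdot)$ contributes $1$ for each embedding whose image satisfies all the unary conjuncts $\psi_z$. Under $\maxagg$ the result is the indicator that such an embedding exists. Under $\sumagg$ it is instead the number $N\in\mathbb{N}$ of such embeddings. Because $\varphi$ is in strict form, these embeddings correspond exactly to the witnesses $\bar y$ for $\psi$, so $D\models\varphi_i(u)$ holds iff $N\geq 1$.

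We therefore change $\com_i$ so that it writes $\trrelu(N)$ into position $i$, which equals $\min(N,1)$ for integer $N\ge 0$. This is still an affine map followed by $\trrelu$, as in the $\exists^{\geq n}$ case of Lemma~\ref{thm:UNFOC-to-sum-DHN-uniform} with $n=1$. All other components are copied exactly as before. The main point to verify is this $\{0,1\}$-invariant: by induction over layers, every component stays in $\{0,1\}$, so the transformation functions $\mu_1$ keep outputting $0$ or $1$ and the product over $\mn{adom}(F_1)$ is exactly the indicator that all unary conjuncts hold. With the invariant established, the correctness claim in the previous proof goes through verbatim with the same classification function.

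Alternatively, one can conclude directly from Point~1 of Theorem~\ref{thm:DENcombi} by simulating each max-embedding query via sum and $\trrelu$. However, the argument above stays self-contained and needs only the previous lemma.
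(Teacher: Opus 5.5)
Your proposal is correct and follows the paper's argument. You start from the $\maxagg$-DEN built for a strict EML formula, note that all transformation outputs (and hence all products) lie in $\{0,1\}$, replace $\maxagg$ by $\sumagg$, and recover the maximum as $\trrelu$ of the sum; the paper applies this truncation uniformly, while you apply it only in the existential case, which is the only place it is needed. Your closing ``alternative'' via Point~1 of Theorem~\ref{thm:DENcombi} would not work as stated: an arbitrary $\maxagg$-DEN need not produce $\{0,1\}$-valued products, so you need the specific construction.
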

\begin{proof}
  To lift the previous result to sum-DENs, one can easily verify that in the previous construction all \(\mu\) return vectors with entries in \(\{0,1\}\).
  Thus each entry in the product of the generalized embedding number is in \(\{0,1\}\).
  In this setting for each multiset \(M\) over \(\{0,1\}^d\) \[\trrelu\left(\sum M\right) = \max (M).\]
  To construct a sum-DEN, one can thus replace max with sum and apply ReLU before applying the FNN.
\end{proof}

To show that each \(\maxagg\)-DEN can be transformed into an equivalent EML formula, we adapt the proofs of Lemma~\ref{lem:finitelymanyvectors} and Lemma~\ref{lem:max-DHN-to-UNFO-uniform}.
\begin{lemma}\label{lem:finitelymanyvectors-den}
  Let \(\Sbf\) be a schema and let $\mathcal{N}=(\calL_1, \dots \calL_k,\cls)$ be a $\maxagg$-DEN over \(\Sbf\).
  Then for $1 \leq \ell \leq k$, the set 
  \[
  \chi^\ell_{\calN}= \{ \lambda^\ell_{\calN,G}(v) \mid G \text{ a \(\Sbf\)-database and } v \in \mn{adom}(G) \}
  \]
  is finite.
\end{lemma}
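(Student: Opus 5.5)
We argue by induction on $\ell$, closely following the proof of Lemma~\ref{lem:finitelymanyvectors}; the only change is that homomorphisms are replaced by embeddings, and this does not affect the counting argument. For the base case, $\chi^0_\calN$ contains only the empty vector, since $\lambda^0_{\calN,G}$ assigns $()$ to every value of every database $G$.

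For the induction step, let $\ell>0$. Assume $\chi^{\ell-1}_\calN$ is finite, and let $\calL_\ell=(\calF_\ell,\com_\ell)$ with embedding queries $(F^\bullet_1,\mu_1,\maxagg),\dots,(F^\bullet_m,\mu_m,\maxagg)$. Fix a database $G$ and a value $v$. Every embedding $h$ from $F_i^\bullet$ to $G^v$ induces a labelling $\lambda_h\colon \mn{adom}(F_i)\to\chi^{\ell-1}_\calN$, given by $\lambda_h(w)=\lambda^{\ell-1}_{\calN,G}(h(w))$. The term that $h$ contributes to the aggregated multiset, namely $\prod_{w}\mu_{i,w}(\lambda^{\ell-1}_{\calN,G}(h(w)))$, depends only on $\lambda_h$. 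Define $S_i$ as the \emph{set} of labellings $\lambda_h$ that arise from embeddings $h$.

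Since $\maxagg$ is idempotent, multiplicities do not matter, so the $i$-th query evaluates to $\maxagg_{\lambda\in S_i}\prod_{w}\mu_{i,w}(\lambda(w))$. If $S_i=\emptyset$, this value is $0$ by the convention $\max(\emptyset)=0$. Consequently $\lambda^\ell_{\calN,G}(v)$ is determined by the tuple $(S_1,\dots,S_m)$. Each $S_i$ is a subset of the finite set of maps $\mn{adom}(F_i)\to\chi^{\ell-1}_\calN$, so there are only finitely many such tuples, and therefore $\chi^\ell_\calN$ is finite.

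The one point that needs checking is that nothing in this argument relied on homomorphism-specific properties. It relied only on the fact that each element of the aggregated multiset is a function of the labelling pattern, and that property holds for embeddings exactly as it does for homomorphisms.
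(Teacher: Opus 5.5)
Your proposal is correct and takes the same route as the paper. The paper gives no separate proof for DENs and instead adapts the proof of Lemma~\ref{lem:finitelymanyvectors}. That proof uses the same induction and the same observation: since $\maxagg$ ignores multiplicities, $\lambda^\ell_{\calN,G}(v)$ is determined by a choice of subsets of the finitely many maps $\mn{adom}(F_i)\to\chi^{\ell-1}_\calN$, whether these arise from homomorphisms or from embeddings.
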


\begin{lemma}
    Every  $\maxagg$-DEN $\calN$ is equivalent to an EML formula.
\end{lemma}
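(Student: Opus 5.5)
We follow the structure of the proof of Lemma~\ref{lem:max-DHN-to-UNFO-uniform}, with homomorphisms replaced by embeddings and HML replaced by EML. By Lemma~\ref{lem:finitelymanyvectors-den}, the sets $\chi^\ell_{\calN}$ are finite for all $\ell$. This holds because $\maxagg$ ignores multiplicities: every vector in $\chi^\ell_\calN$ is determined by choosing, for each embedding query, a set of labelings $\lambda:\mn{adom}(F_i)\to\chi^{\ell-1}_\calN$. The goal is then to build, by induction on $\ell$, EML formulas $\psi^\ell_{\bar x}(x)$ for each $\bar x\in\chi^\ell_\calN$ such that $D\models\psi^\ell_{\bar x}(v)$ iff $\lambda^\ell_{\calN,D}(v)=\bar x$. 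The final formula is the disjunction of all $\psi^k_{\bar x}$ with $\cls(\bar x)=1$. For the base case we set $\psi^0_{()}=\text{true}$.

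For the induction step, let $\calL_\ell$ have embedding queries $(F_i^{u_i},\mu_i,\maxagg)$, $1\le i\le m$. We take a disjunction over all choices of partitions $X_i,Y_i$ of the finite set of labelings $\lambda:\mn{adom}(F_i)\to\chi^{\ell-1}_\calN$ with $\com_\ell(\bar y_1,\dots,\bar y_m)=\bar x$, where $\bar y_i=\maxagg_{\lambda\in X_i}\prod_v\mu_v(\lambda(v))$ and $\maxagg(\emptyset)=0$. For $F_i$ we form the formula $\vartheta_i$ with the distinguished value renamed to $x$. Unlike in the DHN case, $\vartheta_i$ is a complete diagram: it contains every fact of $F_i$, the negation $\neg R(\bar z)$ of every non-fact over $\mn{adom}(F_i)$ and the schema, and all inequalities $z\neq z'$ between distinct values. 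The disjunct is then
\[
\bigwedge_{i}\Big(\bigwedge_{\lambda\in X_i}\exists\bar z_i\,\big(\vartheta_i\wedge\textstyle\bigwedge_{v}\psi^{\ell-1}_{\lambda(v)}(v)\big)\wedge\bigwedge_{\lambda\in Y_i}\neg\exists\bar z_i\,\big(\vartheta_i\wedge\textstyle\bigwedge_v\psi^{\ell-1}_{\lambda(v)}(v)\big)\Big).
\]
This is a syntactically valid EML formula. The quantified bodies consist of relational atoms, negated atoms, inequalities and unary EML formulas. Negation is only applied to unary formulas, and conjunction of unary formulas is expressible via De~Morgan. In fact the formula is already in strict form.

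The main step is correctness, namely that $\exists\bar z_i(\vartheta_i\wedge\dots)$ holds at $v$ exactly when some embedding $h$ of $F_i^{u_i}$ into $D^v$ has $\lambda^{\ell-1}_{\calN,D}(h(w))=\lambda(w)$ for all $w$. The inequalities force injectivity, and the positive and negative atoms force $R(h(\bar w))\in D\iff R(\bar w)\in F_i$, which is the definition of an embedding. The induction hypothesis then identifies the labels. Since the $\psi^{\ell-1}_{\bar y}$ for distinct $\bar y$ are mutually exclusive, every embedding is compatible with exactly one $\lambda$. Therefore the set of labelings realized at $v$ is exactly one $X_i$, and $\bar y_i$ equals the evaluated query. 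The only delicate point is that the complete diagram must range over all tuples, including those with repeated variables, so that the embedding condition is captured exactly. Once this is handled, the argument is routine bookkeeping mirroring Lemma~\ref{lem:max-DHN-to-UNFO-uniform}.
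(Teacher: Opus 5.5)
Your proposal is correct and follows the paper's proof essentially step for step. It uses finiteness of $\chi^\ell_\calN$ via Lemma~\ref{lem:finitelymanyvectors-den}, induction on $\ell$ with a disjunction over partitions $X_i,Y_i$ of labelings, and $\vartheta_i$ as the diagram of $F_i$ with negated atoms and inequalities. Your insistence that $\vartheta_i$ be the complete diagram over all tuples, including those with repeated variables, makes precise what the paper leaves implicit.
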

\begin{proof}
    Let $\calN = (\calL_1, \dots \calL_k,\cls)$ be a  $\maxagg$-DEN.
    By Lemma~\ref{lem:finitelymanyvectors-den},
    we may build an equivalent EML formula
    $\varphi(x)$ by constructing, for every
    $\ell \in \{1,\dots,k\}$ and embedding vector $\bar x \in \chi^\ell_{\calN}$,
    an EML formula $\psi^\ell_{\bar x}(x)$ such that
    \[
      G \models \psi^\ell_{\bar x}(v) \text{ iff } \lambda^\ell_{\calN,G}(v)=\bar x
      \text{ for all pointed \(\Sbf\)-databases } (G,v)
    \]
    and then taking the disjunction of all
    formulas $\psi^k_{\bar x}$ such that $\cls(\bar x)=1$.

    The construction of the formulas  $\psi^\ell_{\bar x}$ is by induction on $\ell$.
    For $\ell=0$, the only relevant vector $\bar x$ is the empty
    vector and we may set $\psi^0_{()}(x)=\text{true}$.
    Now let $\ell>0$ and $\bar x \in \chi^\ell_{\calN}$.
    Further let  $\calL_\ell=(\calF_\ell,\com_\ell)$,
 with $\calF_\ell$ the sequence $(F^{u_1}_1,
\mu_{1},\maxagg), \dots, (F^{u_m}, \mu_{m},\maxagg)$.
We define $\psi^\ell_{\bar x}$ to be a disjunction
that contains one disjunct for every  possible way to choose for every
$i \in \{1,\dots,m\}$ a partition $X_i,Y_i$ of the set of mappings $\lambda:\mn{adom}(F_i) \rightarrow \chi_{\calN}^{\ell-1}$ such that 
$\com_\ell(\bar y_1,\dots, \bar y_m) = \bar x$ where, for $1 \leq i \leq m$,
\begin{align*}
   \bar y_i := \underset{\lambda \in X_i}{\maxagg} \prod_{v \in \mn{adom}(F)} \mu_v(\lambda(v)).
\end{align*}
Note that the definition of $\bar y_i$ corresponds to the evaluation of the embedding query $(F^{u_i}_i,
\mu_{i},\maxagg)$ under any embedding $h$ that is 
`compatible' with  some $\lambda \in X_i$, that is, $h$ maps each $v \in \mn{adom}(F_i)$ to a vertex
$h(v)$ in the input graph that carries the embedding vector $\lambda(v)$. Clearly, we may view
the  database $F_i$ as a
conjunction $\vartheta_i$ of relational
atoms and negations thereof and inequalities.
Let $\bar z_i$ denote the tuple of variables in $\vartheta_i$, ordered arbitrarily but without $x$. Then \(\psi^\ell_{\bar{x}}(x)\) includes as a disjunct
\[
  \bigwedge_{1 \leq i \leq m} \Big (\bigwedge_{\lambda \in X_i}
  \exists \bar z_i \, \big (\vartheta_i
  \wedge \bigwedge_{v \in \mn{adom}(F_i)} \psi^{\ell-1}_{\lambda(v)}(v)
  \big )
  \wedge
 \bigwedge_{\lambda \in Y_i}
  \neg
  \exists \bar z_i \, \big (\vartheta_i
  \wedge \bigwedge_{v \in \mn{adom}(F_i)} \psi^{\ell-1}_{\lambda(v)}(v)
  \big ) \Big ).
\]
It is easy to verify that $\psi^\ell_{\bar x}$ is as required.
\end{proof}

\subsection{Point~2 of Theorem~\ref{thm:DENcombi}}

To prove that UQAFO does not capture \(\sumagg\)-DENs, we use a strategy similar to the one in the proof of Proposition~\ref{thm:sum-DHN-not-in-UNFOC-uniform}.
That is, we  first show that PQR-orders can be defined by \(\sumagg\)-DENs, and then introduce UQAFO games and use them to show that UQAFO cannot express PQR-orders.

\begin{lemma} \label{lem: sum-DEN can express P}
There exists a sum-DEN that accepts pointed graphs iff they are PQR-orders.
\end{lemma}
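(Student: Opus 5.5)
The idea is to realize PQR-order as a conjunction of \emph{local} conditions that a $\sumagg$-DEN can check with equality tests between embedding counts. The equality tests use the $\trrelu$ gadget from the proof of Lemma~\ref{lem:dhns can express local transitivity}. Local correctness is then propagated from all descendants of $v$ to $v$ with a second layer. Because embeddings are induced, counting embeddings of a small pattern counts exactly the configurations we want to forbid. When a condition only constrains some of the edges among the pattern's values, we sum over all finitely many completions of the pattern.

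\textbf{Layer 1.} It computes, for every vertex $u$, the following quantities.
\begin{enumerate}
\item[(i)] Indicators for ``$u$ carries exactly one of $P,Q,R$'' and for which one. These come from single-value embedding queries, one per unary type.
\item[(ii)] The number of violations of the following local order axioms, each obtained as a sum of counts of embeddings of small patterns rooted at $u$:
\begin{itemize}
\item a loop at $u$;
\item a 2-cycle $u\leftrightarrow y$;
\item a path $u\to y\to z$ with no edge $u\to z$ (local transitivity);
\item two distinct out-neighbours $y,z$ of $u$ that are not adjacent.
\end{itemize}
\item[(iii)] The numbers $a_u,b_u,c_u$ of out-neighbours of $u$ labelled $P,Q,R$. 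These use the single-edge query with transformation $1$ on the root and $1$ on the target iff it carries the respective label.
\end{enumerate}
The combination function then outputs a bit $\mathit{ok}(u)$, using the ReLU gadgets, which is $1$ iff all violation counts are $0$, the label is unique, and the label matches the counts according to the rule
\begin{itemize}
\item $P$: $b_u=c_u=a_u+1$,
\item $Q$: $c_u=a_u+1=b_u+1$,
\item $R$: $a_u=b_u=c_u$.
\end{itemize}
It also outputs a bit $\mathit{root}(u)$ expressing that $u$ is labelled $P$ and $b_u=c_u=a_u+1$. This is precisely the $P$-case of $\mathit{ok}$, which encodes $n \mod 3 = 2$ with $v_0$ labelled $P$.

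\textbf{Layer 2.} It computes at $v$ the number of out-neighbours $u$ with $\mathit{ok}(u)=0$, via a single-edge embedding query with transformation $1-\mathit{ok}$ on the target. It accepts iff this number is $0$, $\mathit{ok}(v)=1$ and $\mathit{root}(v)=1$. The equality and zero tests are realised as in Lemma~\ref{lem:dhns can express local transitivity}, so the whole network is a legitimate $\sumagg$-DEN.

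\textbf{Correctness.} There are three steps.
\begin{enumerate}
\item \emph{The descendants of $v$ are exactly its out-neighbours.} If $v$ and all its out-neighbours are locally transitive, induction on the length of paths shows that every descendant of $v$ is an out-neighbour of $v$, so all descendants satisfy $\mathit{ok}$.
\item \emph{The descendants form a finite strict linear order.} Irreflexivity, antisymmetry and transitivity together with the pairwise comparability of the out-neighbours of $v$ make $\{v\}\cup\{\text{descendants}\}$ a finite strict linear order $v=v_0<\dots<v_n$. Moreover, the out-neighbours of $v_i$ are exactly $v_{i+1},\dots,v_n$.
\item \emph{The labelling is forced.} The three count conditions are mutually exclusive, so the label of $v_i$ is determined by the labels of $v_{i+1},\dots,v_n$. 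Downward induction from $v_n$ then forces the pattern. $v_n$ has no successors, so only $R$ fits. Going down, the labels are read off as $\dots,P,Q,R$, i.e.\ $v_i$ is labelled according to $(n-i) \mod 3$ with the top element labelled $R$. Finally, $\mathit{root}(v)$ forces $v_0$ to be $P$ and hence $n \mod 3=2$.
\end{enumerate}
Conversely, every PQR-order clearly satisfies all the checks.

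The main obstacle is the last step: the rule for $\mathit{ok}$ must determine the label uniquely from the successor counts. Otherwise a locally consistent but globally wrong labelling could slip through. A secondary nuisance is writing each order axiom as a finite sum of induced-pattern counts; I would handle it by enumerating all completions over the pattern's values.
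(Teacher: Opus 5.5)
Your proposal is correct and follows essentially the paper's route: enforce the linear-order axioms, compare the numbers of $P$-, $Q$- and $R$-labelled successors, and use one more sum aggregation to require this at every successor. Your count rule is the paper's rule with $u$'s own label removed from the counts, since the paper's orders are reflexive and its counts include $u$. The differences are that you assemble the order checks directly from embedding counts instead of via the UQAFO formula and Corollary~\ref{lem:uqafo_to_sum_den}, and that your downward induction makes explicit the step the paper only calls easy to verify.

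Two small repairs are needed. First, in the first layer the input embedding is empty, so the labels in (iii) cannot be read by transformation functions; they must be fixed inside the patterns, which means your completions must range over unary facts as well as edges. Second, the paper takes linear orders to be reflexive (its formula requires $E(x,x)$), so under that convention you should demand the loop rather than forbid it. Nothing else changes, because embedding counts only see successors distinct from $u$.
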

\begin{proof}
  We first use UQAFO formulas and the translation in Corollary~\ref{lem:uqafo_to_sum_den} to build a \sumagg-DEN that checks some of these properties.
  We then describe how to extend this DEN to check all requirements of a PQR-order.
\begin{itemize}
  \item The following UQAFO formula describes linear orders in which each vertex satisfies either \(P, Q,\) or \(R\), and the vertex bound to the free variable satisfies \(P\), and the last vertex satisfies \(R\).
        \begin{alignat*}{2}
          &\forall yz\, \big((E(x,y)\land E(y,z)) \rightarrow E(x,z)\big) &\\
          &\land\, \forall yz\, \big((E(x,y)\land E(x,z)\land y\neq z) \rightarrow (E(y,z) \leftrightarrow \neg E(z,y))\big) &\\
          &\land\, \forall uvw\, \big((E(x,u)\land E(u,v)\land E(v,w))\rightarrow E(u,w)\big) & \\
          &\land\, E(x,x) \land \forall y \,\big(E(x,y) \rightarrow E(y,y)\big)&\\
          &\land\, P(x)\land \forall y\, \big((E(x,y)\land \forall z\, (y=z\lor\neg E(y,z)))\rightarrow R(y)\big)&\\
          &\land\,\forall y \,\big( E(x,y)\rightarrow ((Q(y)\land\neg P(y)\land\neg R(y))&\\
          &\phantom{\land\,\forall y \,\big( E(x,y)\rightarrow (}\ 
          \lor (\neg Q(y)\land P(y)\land \neg R(y))
          \lor (\neg Q(y)\land\neg P(y)\land R(y)) )\big)&
        \end{alignat*}
        By Corollary \ref{lem:uqafo_to_sum_den}, this can be expressed by a sum-DEN.
        The first conjunct ensures that there does not exist a two-hop successor.
        The second, third, and fourth conjuncts describes linear orders, that is, \(E\) is total, antisymmetric, transitive and reflexive.
        And the last two conjuncts ensure that the free variable satisfies \(P\), the last vertex satisfies \(R\) and every vertex satisfies either \(P,Q\) or \(R\).
  \item The usual approach in FO to express that the immediate successor satisfies \(\psi\) is not in UQAFO:
        \(\varphi(x) = \exists y\, (x\neq y \land E(x,y) \land \forall z\, (x = z\lor y = z\lor\neg E(x,z) \lor \neg E(z,y)) \land \psi(y))\) contains a quantifier alternation with more than one free variable.
        But the DEN can use its counting to check an equivalent property.
        It instead computes and compares the number of \(P,Q\) and \(R\) successors. 
        It is easy to verify that it suffices to check the following property for all vertices \(u\) in the linear order:
        We denote the number of successors of \(u\) that satisfy \(P,Q\) and \(R\) by \(n_P,n_Q\) and \(n_R\) respectively.
        \begin{itemize}
          \item If \(u\) satisfies \(P\), then \(n_P = n_Q = n_R\),
          \item if \(u\) satisfies \(Q\), then \(n_P+1 = n_Q = n_R\), and
          \item if \(u\) satisfies \(R\), then \(n_P+1 = n_Q+1 = n_R\).
        \end{itemize}
        
        All these properties can be computed with a \sumagg-DEN with \trrelu activation, since \(x\geq y\) can be checked for \(x,y\in\mathbb{N}\) with \(\trrelu(x-y+1)\) and FNNs can realize conjunctions.
  \item Last but not least, with another sum aggregation, the DEN can check for \(v\) that all its successors satisfy the property above.
\end{itemize}
\end{proof}

To show that this property cannot be expressed by a UQAFO formula, we 
introduce {Ehrenfeucht-Fra\"iss\'e} games for UQAFO.
We have to define the game for formulas with more than one free variable, thus we allow databases to have a tuple of distinguished values which may contain more than one entry.

\begin{definition}[UQAFO game]
  Let \(\ell \geq 0\). The \(\ell\)-round UQAFO game is played between two players, \emph{Spoiler} and \emph{Duplicator}, on two pointed databases \((D_1,v_1^{(1)},\ldots,v_1^{(n)})\) and \((D_2,v_2^{(1)},\ldots,v_2^{(n)})\).
  A configuration consists of both pointed databases and additionally an index \(s\in\{1,2\}\) that describes which database \emph{Spoiler} is currently playing in. 
  For the initial configuration \emph{Spoiler} can choose \(s\in \{1,2\}\).
  In each of the \(\ell\) rounds, \emph{Spoiler} can make one of the following moves:

  \begin{itemize}
    \item \emph{Spoiler} can play a value \(u_s\) in \(D_s\). Then \emph{Duplicator} responds with a value \(u_{3-s}\) in \(D_{3-s}\). The next configuration consists of the pointed databases \((D_1,v_1^{(1)},\ldots, v_1^{(n)}, u_1)\) and \((D_2, v_2^{(1)},\ldots,v_2^{(n)}, u_2)\).
          The index \(s\) stays unchanged.
    \item \emph{Spoiler} can decide to play in \(D_{3-s}\). To do so, they choose one index \(1\leq k\leq n\) to remember the distinguished values \(v_1^{(k)}\) and \(v_2^{(k)}\).
    Spoiler then chooses a value \(u_{3-s}\) in \(D_{3-s}\). \emph{Duplicator} responds with a value \(u_s\) in \(D_s\).
    The next configuration consists of the databases \((D_1, v_1^{(k)}, u_1)\) and \((D_2, v_2^{(k)}, u_2)\) and index \(3-s\).
  \end{itemize}
  \emph{Spoiler} wins this game if for any configuration the set \(\{(v_1^{(k)}, v_2^{(k)})\mid 1\leq k\leq n\}\) does not describe a partial isomorphism between \(D_1\) and \(D_2\).
  \emph{Duplicator} wins if \emph{Spoiler} does not win after \(\ell\) rounds.
\end{definition}

Intuitively, Spoiler shows in this game that there exists some subdatabase in \(D_s\) while Duplicator tries to find the same subdatabase in \(D_{3-s}\).
If Spoiler plays a value in \(D_s\), they extend this subdatabase.
As in GHML game, Spoiler extends this subdatabase every time they play in \(D_s\).
One could also define an equivalent game more in line with EML in which Spoiler is forced to play the embedding within one step.
Since UQAFO and EML are equally expressive, this does not weaken Spoilers abilities.
That is, given two pointed databases, if Spoiler wins the UQAFO game in \(\ell\) rounds then Spoiler wins the EML game in \(\ell'\) rounds, where \(\ell'\) only depends on \(\ell\).

We now show that Spoiler has a winning strategy in the UQAFO game on databases \(D_1,D_2\), iff there exists a UQAFO formula that distinguishes both databases.

\begin{lemma}\label{lem:uqafo_games_correspondence}
  For all \(\ell\geq 0\), all pointed databases \((D_1,v_1^{(1)},\ldots, v_1^{(n)}),(D_2,v_2^{(1)},\ldots,v_2^{(n)})\) the following statements are equivalent:
  \begin{enumerate}
    \item Spoiler has a winning strategy in the \(\ell\)-round UQAFO game when playing on both databases.
    \item There exists a UQAFO formula \(\varphi(x_1,\ldots,x_n)\) with quantifier depth at most \(\ell\) such that \[D_1\models \varphi(v_1^{(1)}, \ldots,v_1^{(n)})\iff D_2\not\models \varphi(v_2^{(1)},\ldots,v_2^{(n)}).\]
  \end{enumerate}
\end{lemma}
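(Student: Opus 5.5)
I will prove both directions simultaneously by induction on $\ell$, following the standard Ehrenfeucht--Fra\"iss\'e pattern. The one twist is that the game configuration carries the index $s$, and this index governs which kind of quantifier Spoiler may use without paying a ``switch''. Concretely, I will prove a slightly stronger, $s$-sensitive statement. Spoiler wins the $\ell$-round game from a configuration with index $s$ iff there is a formula of quantifier depth at most $\ell$ that distinguishes the two tuples, where the formula is of the existential kind with $D_s$ as the structure satisfying it (or, dually by Lemma~\ref{lemma:uqafo_neg_exis_univ_swap}, of the universal kind satisfied in $D_{3-s}$). A formula whose outermost quantifier block alternates may only do so through a unary subformula, and this corresponds exactly to the switch move, which forgets all but one distinguished value. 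Since Spoiler picks $s$ initially, the statement of the lemma follows by taking the disjunction over both choices of $s$. Moreover, every UQAFO formula is a Boolean combination of formulas of the existential and universal kinds, so a distinguishing formula yields a distinguishing formula of one of the two kinds.

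For the base case $\ell=0$, Spoiler wins iff the distinguished tuples do not form a partial isomorphism. This happens iff some atom, negated atom, equality or inequality separates them, and such a formula is quantifier-free, hence of both kinds.

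For (2)$\Rightarrow$(1), I will first normalise a distinguishing formula of depth $\ell$ into a Boolean combination of atoms, (in)equalities, formulas $\exists y\,\psi$ with $\psi$ of the existential kind, formulas $\forall y\,\psi$, and unary formulas $\chi(x_k)$ of the opposite kind. Some component must distinguish the tuples. If it is $\exists y\,\psi$ true in $D_s$, Spoiler makes an ordinary move choosing a witness $u_s$. Whatever $u_{3-s}$ Duplicator answers, $\psi$ distinguishes the extended tuples with depth $\ell-1$, so the induction hypothesis applies. If it is a unary formula $\chi(x_k)$ of the opposite kind, Spoiler uses the switch move remembering index $k$. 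The formula $\chi$ begins with a quantifier of the opposite polarity, which is now an existential quantifier read in $D_{3-s}$, so Spoiler picks a witness there and continues by induction with depth $\ell-1$. The remaining cases are dual and are handled by negating via Lemma~\ref{lemma:uqafo_neg_exis_univ_swap}.

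For (1)$\Rightarrow$(2), I use the usual finiteness argument. Up to equivalence there are only finitely many UQAFO formulas of depth at most $\ell-1$ in $n+1$ free variables, as well as of each kind. I then form the characteristic formulas (``types'') of depth $\ell-1$ and build $\bigvee_\tau \exists y\,\tau$ and its universal dual for ordinary moves. For switch moves I build $\exists y\,\tau(x_k,y)$ with $\tau$ a depth-$(\ell-1)$ type of the kind appropriate to $D_{3-s}$; this is a unary formula and so may legally be embedded in a formula of the other kind. If Spoiler's first winning move is an ordinary one, the corresponding type formula distinguishes the tuples by the induction hypothesis; if it is a switch, the unary formula does.

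The main obstacle is the bookkeeping of kinds. I must ensure that the types used after a switch are of the correct kind and have only the two free variables $x_k,y$, so that wrapping them in a quantifier of the opposite polarity remains inside the grammar. I also must check that the induction hypothesis is strong enough to provide types of a prescribed kind. I would resolve this by indexing the induction hypothesis by both $s$ and the kind, and by proving finiteness of types separately for each kind.
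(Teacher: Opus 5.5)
Your proposal is correct and follows the paper's proof essentially step for step. It uses the same strengthening (Spoiler wins from index $s$ iff some existential-kind formula holds in $D_s$ and fails in $D_{3-s}$), matches ordinary moves with $\exists$, matches switch moves with unary universal subformulas obtained via Lemma~\ref{lemma:uqafo_neg_exis_univ_swap}, and relies on finiteness up to equivalence to keep conjunctions finite.

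The only difference is cosmetic: you conjoin full existential-kind types, whereas the paper conjoins just the formulas that the induction hypothesis supplies for each Duplicator answer. Note that the disjunction $\bigvee_\tau \exists y\,\tau$ is not needed and would not in general distinguish the tuples; the distinguishing formula is the single $\exists y\,\tau$ for the type of the element Spoiler chooses.
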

\begin{proof}
  We show the implication \(1.\implies 2.\) by induction on \(\ell\).
  To be more precise, we additionally show that if \(s\) is the index of the current configuration then there exists a formula \(\varphi(x_1,\ldots,x_n)\) of the existential kind of UQAFO such that \(D_s\models \varphi(v_s^{(1)},\ldots,v_s^{(n)})\) and \({D_{3-s}\not\models\varphi(v_{3-s}^{(1)},\ldots,v_{3-s}^{(n)})}\).

  Spoiler wins the \(0\)-round UQAFO game if and only if the set of pairs \(\{(v_1^{(k)}, v_2^{(k)})\mid 1\leq k\leq n\}\) does not describe a partial isomorphism between \(D_1\) and \(D_2\). It is easy to verify that there exists an FO formula \(\varphi(x_1,\ldots,x_n)\) without quantifiers that describes the subdatabase induced by \(v_s^{(1)},\ldots, v_s^{(n)}\) up to isomorphism.
  Then, \(D_s\models \varphi(v_s^{(1)},\ldots,v_s^{(n)})\) and \(D_{3-s}\not\models \varphi(v_{3-s}^{(1)},\ldots,v_{3-s}^{(n)})\).
  Since UQAFO does not restrict the use of the Boolean operators within formulas without quantifiers, \(\varphi\) is a UQAFO formula of the existential kind.

  For the induction step, assume that the statement holds for some value of \(\ell\). We have to show that this statement also holds for \(\ell+1\).
  Assume that Spoiler has a winning strategy in the \(\ell+1\) round game, starting with index \(s\) and pointed databases \((D_1,v_1^{(1)},\ldots,v_1^{(n)}), (D_2, v_2^{(1)},\ldots,v_2^{(n)})\).
  We distinguish the two cases how Spoiler can play:
  \begin{itemize}
    \item\emph{Spoiler plays in \(D_s\):} That is, Spoiler plays a value \(u_s\) of \(D_s\) and no matter which value \(u_{3-s}\) of \(D_{3-s}\) Duplicator responds with, Spoiler has a winning strategy in the \(\ell\)-round UQAFO game on 
          \((D_1,v_1^{(1)},\ldots,v_1^{(n)},u_1)\) and \((D_2,v_2^{(1)},\ldots,v_2^{(n)},u_2)\).
          By the induction hypothesis, for each choice \(u_{3-s}\) of Duplicator, there exists a UQAFO formula \(\varphi_{u_{3-s}}(x_1,\ldots,x_{n+1})\) of the existential kind such that \(D_s\models \varphi_{u_{3-s}}(v_s^{(1)},\ldots,v_s^{(n)},u_s)\) and \(D_{3-s}\not\models\varphi_{u_{3-s}}(v_{3-s}^{(1)},\ldots,v_{3-s}^{(n)}, u_{3-s})\).
    Let \(\Phi = \{\varphi_{u_{3-s}}\}\) be the set of all such formulas, which w.l.o.g does not contain two equivalent formulas. Since there are only finitely many pairwise non-equivalent UQAFO formulas of quantifier depth at most \(\ell\), \(\Phi\) is a finite set.
    We now claim that \(\psi = \exists x_{n+1} \bigwedge_{\varphi\in \Phi}\varphi\) is the formula we are searching for. Obviously, \((D_s,v_s^{(1)},\ldots,v_s^{(n)})\) is a model of \(\psi\).
    And \((D_{3-s},v_{3-s}^{(1)},\ldots,v_{3-s}^{(n)})\) is not a model of \(\psi\) since for each choice of \(u_{3-s}\) for \(x_{n+1}\), \(\varphi_{u_{3-s}}\) is not satisfied.
    \(\psi\) is also of the existential kind, since UQAFO formulas of the existential kind are closed under conjunctions as well as binding variables with existential quantifiers.
    \item\emph{Spoiler plays in \(D_{3-s}\)}: That is, Spoiler chooses an index \(k\) to remember the distinguished values \(v_1^{(k)}\) and  \(v_2^{(k)}\) and then plays a value \(u_{3-s}\). As in the first case and by the induction hypothesis, no matter what value \(u_s\) Duplicator chooses, there always exists a UQAFO formula \(\varphi_{u_{s}}(x_1,x_2)\) of the existential kind such that \((D_{3-s}, v_{3-s}^{(k)}, u_{3-s})\) is a model and \((D_s, v_s^{(k)}, u_s)\) is not a model.
    As before, let \(\Phi = \{\varphi_{u_s}\}\) which again only contains finitely many pairwise non-equivalent formulas.
    Analogously to the first case, \(\psi(x_1) = \exists x_2\, \bigwedge_{\varphi\in\Phi} \varphi\) is satisfied by \(D_{3-s}\) but not by \(D_s\) and it is a UQAFO formula of the existential kind.
          Therefore, \(\neg \psi(x_1)\) has the desired models and by Lemma \ref{lemma:uqafo_neg_exis_univ_swap} there exists an equivalent formula in UQAFO of the universal kind.

    Since \(\neg \psi(x_1)\) is a formula with only one free variable, it is also of the existential kind since it can be derived using the rule \(\varphi_\exists(x_1)\to \varphi_\forall(x_1)\) first.      
   \end{itemize}
  This finishes the implication \(1.\implies 2.\)
  We also show the implication \(2.\implies 1.\) by induction on \(\ell\).
  To be more precise, we show the following statement: Let \((D_1, v_1^{(1)},\ldots v_1^{(n)})\) and \((D_2, v_2^{(1)}, \ldots, v_2^{(n)})\) be two databases and let \(\varphi(x_1,\ldots,x_n)\) be a UQAFO formula such that \((D_1,v_1^{(1)},\ldots,v_1^{(n)})\) is a model of \(\varphi\) and \((D_2,v_2^{(1)},\ldots,v_2^{(n)})\) is not a model of \(\varphi\).
  If \(\varphi\) is of the existential kind then Spoiler has a winning strategy in the UQAFO game when \(s = 1\), and if \(\varphi\) is of the universal kind, then Spoiler has a winning strategy when \(s = 2\).
  The case if \(\varphi\) is neither is also covered, because then \(\varphi\) is a Boolean combination of formulas of existential and formulas of the universal kind.
  Thus, there exists a subformula \(\psi(x_1,\ldots,x_n)\) of \(\varphi(x_1,\ldots,x_n)\) such that \(\psi\) is of the existential or the universal kind and \((D_1,v_1^{(1)},\ldots,v_1^{(n)})\) is model of \(\psi\) and \((D_2,v_2^{(1)},\ldots,v_2^{(n)})\) is not a model of \(\psi\).
  The statement of the lemma follows, since Spoiler can choose in the first round whether they want to play in \(D_1\) or \(D_2\).

  For \(\ell=0\), \(\varphi\) is a formula without quantifiers, that is, it is a Boolean combination of atoms.
  Therefore, the subdatabases induced by \(v_1^{(1)},\ldots,v_1^{(n)}\) and \(v_2^{(1)},\ldots,v_2^{(n)}\) are not isomorphic and thus Spoiler wins the \(0\)-round UQAFO game independent of the database \(D_s\) they are playing in.

  For the induction step, assume this statement is valid for all formulas with quantifier depth at most \(\ell\). We show that the statement also holds for \(\ell+1\).
  We start with the case that \(\varphi\) is of the existential kind.
  Let \(\varphi(x_1,\ldots,x_n)\) be such a UQAFO formula with quantifier depth \(\ell+1\) such that \(D_1\) is a model and \(D_2\) is not a model.
  It is a \(\{\land,\lor\}\)-combination of literals and UQAFO formulas that immediately start with a quantifier and which have quantifier depth at most \(\ell+1\).
  If one of the literals distinguishes \(D_1\) and \(D_2\), then the subdatabases induced by \(v_1^{(n)}\ldots, v_1^{(n)}\) and \(v_2^{(1)},\ldots, v_2^{(n)}\) are again not isomorphic and Spoiler wins the game immediately.
  Otherwise, there exists a subformula \(\psi\) that immediately start with a quantifier that distinguishes \(D_1\) and \(D_2\).
  And since there are no negations at this level, \(D_1\) is a model of \(\psi\) and \(D_2\) is not a model of \(\psi\).
  If \(\psi\) has a quantifier depth of at most \(\ell\), we can apply the induction hypothesis.
  Otherwise, we distinguish two cases:
  \begin{itemize}
    \item \emph{\(\psi = \exists x_{n+1}\,\psi'\):} Then there exists a value \(v_1^{(n+1)}\) of \(D_1\) such that \((D_1,v_1^{(1)},\ldots v_1^{(n+1)})\) is a model of \(\psi'\), and for all \(v_2^{(n+1)}\) of \(D_2\), \((D_2,v_2^{(1)},\ldots,v_2^{(n+1)})\) is not a model of \(\psi'\).
          Since \(\psi\) is of the existential kind, \(\psi'\) has to be of the existential kind of UQAFO as well and by induction hypothesis, Spoiler wins the \(\ell\)-round UQAFO game on these two databases when \(s=1\).
          Therefore, by choosing \(v_1^{(n+1)}\), Spoiler can win the \(\ell+1\)-round UQAFO game on \((D_1,v_1^{(1)}, \ldots,v_1^{(n)})\) and \((D_2, v_2^{(1)},\ldots, v_2^{(n)})\).
    \item \emph{\(\psi = \forall x_{n+1}\,\psi'\):} Since \(\psi\) is of the existential kind, it can have at most one free variable, and by definition, \(\psi'\) is of the universal kind.
          Let this free variable be \(x_i\). 
          There exists a value \(u_2\) of \(D_2\) such that \((D_2,v_2^{(i)},u_2)\) does not satisfy \(\psi'(x_i, x_{n+1})\).
          But for all values \(u_1\) of \(D_1\), \((D_1,v_1^{(i)},u_1)\) is a model of \(\psi'\), and by induction hypothesis, Spoiler wins the \(\ell\)-round UQAFO game when playing on both databases where \(s = 2\).
          Therefore, the winning strategy for Spoiler in the \(\ell+1\) round game on databases \((D_1,v_1^{(1)},\ldots,v_1^{(n)})\) and \((D_2, v_2^{(1)},\ldots, v_2^{(n)})\) where \(s = 1\) is to change the database they are playing in, choosing \((v_1^{(i)}, v_2^{(i)})\) to be remembered, and then choosing \(u_2\).
  \end{itemize}

  To prove the case in which \(\varphi\) is of the universal kind, we can use Lemma \ref{lemma:uqafo_neg_exis_univ_swap} to get a formula \(\varphi'\) which is of the existential kind and which is satisfied by \(D_2\) but not by \(D_1\).
  Using the above argument, we can conclude that Spoiler has a winning strategy on \(D_1\) and \(D_2\) where \(s = 2\).
\end{proof}

To prove that UQAFO formulas cannot define PQR-orders, we define for each natural number \(n\in\mathbb{N}\) two graphs \(G^n_1, G_2^n\):
\(G^n_1\) is a PQR-order that contains \(3\cdot 2^{n+1}+9\) vertices and where the distinguished vertex is the first element in the order.
\(G^n_2\) is defined similarly, but it is missing the \(2^n+2\)-nd vertex that satisfies \(R\).
\(G_1^n\) and \(G_2^n\) are visualized in Figure~\ref{fig:uqafo_indistinguish_graphs}.
\begin{figure}
    \centering

    \begin{tikzpicture}[node distance = 0.65cm]
        \node[circle] (a3) {$\bullet^{P}$};
        \node[circle, left=0.2em of a3] (a2) {$G_1^{v_1}$:}; 
        \node[left= -1em of a3] {\(^{v_1}\)};
        \node[circle, right = of a3] (a4) {$\bullet^{Q}$};
        \node[circle, right = of a4] (a5) {$\bullet^{R}$};
        \node[circle, right = of a5] (a6) {$\bullet^P$};
        \node[circle, right = of a6] (a6b) {$\bullet^Q$};
        \node[circle, right = of a6b] (a7) {{$\bullet^R$}};
        \node[circle, right = of a7] (a9) {$\bullet^P$};
        \node[circle, right = of a9] (a10) {$\bullet^Q$};
        \node[circle, right = of a10] (a11) {$\bullet^R$};

         \path [draw=black,  ->, ] (a3) -- (a4);
         \path [draw=black,  ->, ] (a4) -- (a5);
         \path [draw=black,  ->, ] (a5) -- (a6);
         \path [draw=black,  ->, ] (a6) -- (a6b);
         \path [draw=black,  ->, ] (a6b) -- (a7);
         \path [draw=black,  ->, ] (a7) -- (a9);
         \path [draw=black,  ->, ] (a9) -- (a10);
         \path [draw=black,  ->, ] (a10) -- (a11);

        \draw [decorate, decoration={calligraphic brace, amplitude=5pt}]  (a5.270) -- (a3.270) node[pos=0.5, below=5pt] {$2^n+1$ times};
        \draw [decorate, decoration={calligraphic brace, amplitude=5pt}]  (a11.270) -- (a9.270) node[pos=0.5, below=5pt] {$2^n+1$ times};

        \node[circle, below =4em of a3] (b3) {$\bullet^{P}$};
        \node[circle, left=0.2em of b3] (b2) {$G_2^{v_2}$:}; 
        \node[left= -1em of b3] {\(^{v_2}\)};
        \node[circle, right = of b3] (b4) {$\bullet^{Q}$};
        \node[circle, right = of b4] (b5) {$\bullet^{R}$};
        \node[circle, right = of b5] (b6) {$\bullet^P$};
        \node[circle, right = of b6] (b7) {$\bullet^Q$};
        \node[circle, right = of b7] (b6b) {\phantom{$\bullet^R$}};
        \node[circle, right = of b6b] (b9) {$\bullet^P$};
        \node[circle, right = of b9] (b10) {$\bullet^Q$};
        \node[circle, right = of b10] (b11) {$\bullet^R$};

        \path [draw=black, ->, ] (b3) -- (b4);
        \path [draw=black, ->, ] (b4) -- (b5);
        \path [draw=black, ->, ] (b5) -- (b6);
        \path [draw=black, ->, ] (b6) -- (b7);
        \path [draw=black, ->, ] (b7) -- (b9);
        \path [draw=black, ->, ] (b9) -- (b10);
        \path [draw=black, ->, ] (b10) -- (b11);

        \draw [decorate, decoration={calligraphic brace, amplitude=5pt}]  (b5.270) -- (b3.270) node[pos=0.5, below=5pt] {$2^n+1$ times};
        \draw [decorate, decoration={calligraphic brace, amplitude=5pt}]  (b11.270) -- (b9.270) node[pos=0.5, below=5pt] {$2^n+1$ times};

    \end{tikzpicture}

    \caption{The graphs \(G_1^n\) (above) and \(G_2^n\) (below). We omit reflexive and transitive edges.
    }
    \label{fig:uqafo_indistinguish_graphs}
\end{figure}
In \(G_1^n\) we denote the vertices satisfying \(P\) by \(p_1^{1},\ldots,p_1^{2^{n+1}+3}\).
Analogously, we denote the vertices satisfying \(Q\) and \(R\) by \(q_1^{1},\ldots,q_1^{2^{n+1}+3}\) and \(r_1^{1},\ldots,r_1^{2^{n+1}+3}\) respectively.
The naming of the vertices in \(G_2^n\) is analogous, but we skip \(r_2^{2^n+2}\),
that is, we denote them by \(r_2^{1},\ldots,r_2^{2^n+1}\) and \(r_2^{2^n+3},\ldots,r_2^{2^{n+1}+3}\).

We will now show that Duplicator can win the \(n\) round UQAFO game on \((G_1^n,v_1)\) and \((G_2^n,v_2)\).
We will show that Duplicator can choose vertices such that all configurations satisfy some properties depending on the number \(\ell\) of rounds that are left to play.
We call such configurations \(\ell\)-valid.

\begin{definition}[\(\ell\)-valid]\label{def:ell-valid}
We define \(\ind(x_i^k) = k\) and \(\pred(x_i^k) = x\) for each vertex \(x_i^k\) in \(G_i\) with \(x\in \{p,q,r\}\).
Let \((G_1^n,v_1^{(1)},\ldots,v_1^{(m)})\), \((G_2^n,v_2^{(1)}, \ldots,v_2^{(m)})\), and index \(s\) be a configuration and let \(0\leq \ell\leq n\).
We call this configuration \(\ell\)-valid if it satisfies the following conditions:
\begin{enumerate}
  \item \label{item:uqafo_game_iso}The set of pairs describes a partial isomorphism, that is, the vertices in each pair satisfy the same unary relations and the sets of played vertices is ordered the same in both graphs.
        Formally  \(\pred(v_1^{(i)}) = \pred(v_2^{(i)})\) for each \(1\leq i\leq m\) and \(\ind(v_1^{(i)})\leq \ind(v_1^{(j)})\leftrightarrow \ind(v_2^{(i)})\leq \ind(v_2^{(j)})\) for each \(1\leq i , j\leq m\).
  \item\label{item:uqafo_game_end}
        If the configuration contains a vertex near the end of one of the graphs, then the corresponding vertex in the other graph has the same position.
        Formally, for all \(1\leq i \leq m\): if 
        \(\ind(v_1^{(i)})\leq 2^\ell\) or \(\ind(v_1^{(i)})\geq 2^{n+1}-2^\ell+4\) or \(\ind(v_2^{(i)})\leq 2^\ell\) or \(\ind(v_2^{(i)})\geq 2^{n+1}-2^\ell+4\), then \(\ind(v_1^{(i)})=\ind(v_2^{(i)})\).
  \item\label{item:uqafo_game_mid}
        There are enough vertices satisfying \(P,Q,\) and \(R\) between played vertices in \(G_{3-s}\).
        Formally, for all \(1\leq i , j \leq m\) such that \(\ind(v_s^{(i)}) > \ind(v_s^{(j)})\), \(v_{3-s}^{(i)}\) and \(v_{3-s}^{(j)}\) satisfy
        \[\ind(v_{3-s}^{(i)})-\ind(v_{3-s}^{(j)}) \geq \min (2^\ell, \ind(v_s^{(i)})-\ind(v_s^{(j)})) +c,\]
        where \[c = \begin{cases}
          1&\qquad\text{ if } \ind(v_{3-s}^{(j)})\leq 2^n+1 \text{ and } \ind(v_{3-s}^{(i)}) > 2^n+1,\\
          0&\qquad\text{ otherwise.}
        \end{cases}\]

        Intuitively, the constant \(c\) ensures that Duplicator can skip vertices with index \(2^n+1\) and thus it is not an issue that \(r^{2^n+1}_2\) is missing.
        Strictly speaking, the constant \(c\) is only necessary if Spoiler plays in \(G_1^n\), but to avoid superfluous case distinctions, we also require it if \(s = 2\).
\end{enumerate}
\end{definition}

We first notice that each \(\ell\)-valid configuration can be extended by the first and last vertex of \(G_1^n\) and \(G_2^n\).
\begin{lemma}\label{lemma:uqafo_config_add_both_ends}
  Let \(\ell\leq n\) and let \(\mathcal{C} = ((G_1^n,v_1^{(1)},\ldots,v_1^{(m)}), (G_2^n,v_2^{(1)}, \ldots,v_2^{(m)}), s)\) be an \(\ell\)-valid configuration.
  Then the configuration consisting of \((G_1^n,v_1^{(1)},\ldots,v_1^{(m)}, p_1^1, r_1^{2^{n+1}+3})\) and \((G_2^n,v_2^{(1)}, \ldots,v_2^{(m)}, p_2^1, r_2^{2^{n+1}+3})\) with the same index \(s\) is also \(\ell\)-valid.
\end{lemma}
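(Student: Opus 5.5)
We verify the three conditions of Definition~\ref{def:ell-valid} for the extended configuration directly. Write $M = 2^{n+1}+3$, so that $p_i^1$ has index $1$ and $r_i^{M}$ has index $M$ in both graphs; these are the first and last vertex of the orders. The new pairs are $(p_1^1,p_2^1)$ and $(r_1^M,r_2^M)$. For all pairs among the old vertices, the conditions hold by assumption. It therefore suffices to check pairs in which at least one vertex is new.

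For Condition~\ref{item:uqafo_game_iso}, the unary labels agree: both first vertices satisfy $P$ and both last vertices satisfy $R$. For the order, index $1$ is minimal and index $M$ is maximal in each graph, so each is placed identically relative to every old vertex in both graphs. Equality with an old vertex, e.g.\ $\ind(v_1^{(i)})=1$, forces $\ind(v_2^{(i)})=1$ by Condition~\ref{item:uqafo_game_end}, since $1\le 2^\ell$. The same holds at the upper end, since $M \ge 2^{n+1}-2^\ell+4$. Hence ties are preserved too. Condition~\ref{item:uqafo_game_end} holds trivially for the new pairs, because their indices coincide.

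The main work is Condition~\ref{item:uqafo_game_mid}. Take an old vertex $v^{(j)}$ and the new first vertex, with $\ind(v_s^{(j)}) > 1$. We split into two cases.
\begin{itemize}
\item If $\ind(v_s^{(j)}) \le 2^\ell$, then by Condition~\ref{item:uqafo_game_end} the index is preserved. The required inequality becomes $\ind(v_{3-s}^{(j)}) - 1 \ge (\ind(v_s^{(j)})-1) + c$. Here $c=0$, because $\ind(v_{3-s}^{(j)})\le 2^\ell\le 2^n < 2^n+1$.
\item Otherwise $\ind(v_s^{(j)}) > 2^\ell$, and Condition~\ref{item:uqafo_game_end} applied in the contrapositive direction for $G_{3-s}$ gives $\ind(v_{3-s}^{(j)}) > 2^\ell$. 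We need $\ind(v_{3-s}^{(j)}) - 1 \ge 2^\ell + c$, where $c=1$ exactly when $\ind(v_{3-s}^{(j)}) > 2^n+1$.
\end{itemize}
In the second case, if $c=0$ the bound $\ind(v_{3-s}^{(j)})\ge 2^\ell+1$ suffices. If $c=1$, then $\ind(v_{3-s}^{(j)})\ge 2^n+2\ge 2^\ell+2$, since $\ell\le n$.

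The last vertex is handled symmetrically, with $c$ evaluated using the old vertex as the lower one. If $\ind(v_s^{(j)}) \ge 2^{n+1}-2^\ell+4$, indices are preserved; the case $c=1$ then needs a little care, since the index $2^n+1$ in $G_2$ is shared. Otherwise both indices are at most $2^{n+1}-2^\ell+3$, so $M - \ind \ge 2^\ell$. The $+1$ needed when $c=1$ comes from $\ind(v_{3-s}^{(j)})\le 2^n+1$, which gives $M-\ind(v_{3-s}^{(j)}) \ge 2^n+2\ge 2^\ell+1$. The pair consisting of the first and last vertex themselves satisfies $M-1\ge 2^\ell+1$.

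The only real obstacle is bookkeeping: the constant $c$ and the asymmetric naming of the missing $r_2$ vertex. The slack $2^{n+1}+3$ versus $2^\ell$ makes every case go through.
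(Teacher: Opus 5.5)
Your proof follows the paper's route. You verify the three conditions only for pairs involving a new vertex, split on whether the old vertex lies within $2^\ell$ of the relevant end, and use Condition~\ref{item:uqafo_game_end} (or its contrapositive) to transfer this to $G_{3-s}$. The paper splits on the index in $G_{3-s}$ rather than in $G_s$, which makes no difference. The argument is correct except for one sub-case that you leave open.

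That sub-case is the last-vertex case with $\ind(v_s^{(j)})\ge 2^{n+1}-2^\ell+4$, where you say that $c=1$ ``needs a little care''. Write $k=\ind(v_{3-s}^{(j)})=\ind(v_s^{(j)})$ and $M=2^{n+1}+3$. Then $M-k\le 2^\ell-1$, so $\min(2^\ell,M-k)=M-k$ and the required inequality reads $M-k\ge (M-k)+c$. For $c=1$ this would \emph{fail}, so no extra care could rescue it. What actually closes the case is that $c=1$ cannot occur. Since $\ell\le n$, we have $k\ge 2^{n+1}-2^\ell+4\ge 2^n+4>2^n+1$, hence $c=0$ and the inequality holds with equality.

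Add that line and drop the remark about index $2^n+1$ being shared, which plays no role in this lemma; the proof is then complete.
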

\begin{proof}
  Since \(\mathcal{C}\) is \(\ell\)-valid, Conditions \ref{item:uqafo_game_iso} and \ref{item:uqafo_game_end} can be verified easily.
  For Condition~\ref{item:uqafo_game_mid}, we only have to check the newly introduced pairs.
  \(\ind(p_{3-s}^1)-\ind(r_{3-s}^{2^{n+1}+3})\geq 2^\ell + 1\) is true since \(\ell \leq n\).
  We now argue that this condition is satisfied for \(p_{3-s}^1\) and each \(v_{3-s}^{(i)}\).
  The arguments for \(r_{3-s}^{2^{n+1}+3}\) are analogous.
  If \(\ind(v_{3-s}^{(i)})\geq 2^\ell+1\), it is easy to verify that Condition~\ref{item:uqafo_game_mid} is satisfied.
  Otherwise, \(\ind(v_{3-s}^{(i)}) \leq 2^\ell\) and by Condition~\ref{item:uqafo_game_end}, \(\ind(v_s^{(i)}) = \ind(v_{3-s}^{(i)})\).
  Therefore, Condition~\ref{item:uqafo_game_mid} is also satisfied in this case.
  \end{proof}

\begin{theorem}\label{thm:p_not_expressible_in_uqafo}
 Duplicator has a winning strategy in the \(n\) round UQAFO game on graphs \((G_1^n, v_1)\) and \((G_2^n, v_2)\).
\end{theorem}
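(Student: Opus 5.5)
We prove the theorem by showing that Duplicator can maintain $\ell$-validity (Definition~\ref{def:ell-valid}) as an invariant. The claim, proved by induction on $\ell$ from $n$ down to $0$, is that from every $\ell$-valid configuration Duplicator can answer any single Spoiler move so that the resulting configuration is $(\ell-1)$-valid.

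The initial configuration $((G_1^n,v_1),(G_2^n,v_2),s)$ is $n$-valid for either choice of $s$. Indeed $v_i=p_i^1$, so Conditions~\ref{item:uqafo_game_iso} and~\ref{item:uqafo_game_end} hold trivially, and Condition~\ref{item:uqafo_game_mid} is vacuous for a single pair. Since Condition~\ref{item:uqafo_game_iso} is exactly the partial isomorphism requirement, this invariant shows that Spoiler never wins within $n$ rounds. Here the reflexive edges of the linear order, and the fact that the induced order among played vertices is preserved, make the induced substructures agree. Once the invariant step is established, the theorem follows.

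\textbf{Spoiler plays in $G_s$.} Spoiler picks a vertex $u_s$. Before answering, Duplicator applies Lemma~\ref{lemma:uqafo_config_add_both_ends} to add both endpoints. Then $u_s$ lies strictly between two consecutive played vertices $a_s<u_s<b_s$ in index order, with corresponding vertices $a_{3-s},b_{3-s}$. Duplicator distinguishes two cases.

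If $u_s$ is within $2^{\ell-1}$ of $a_s$ (or of $b_s$), Duplicator copies the offset: she picks the vertex with the same predicate at the same index distance from $a_{3-s}$ (resp.\ $b_{3-s}$). This vertex exists because Condition~\ref{item:uqafo_game_mid} gives a gap of at least $2^\ell$ plus the correction $c$. The correction $c$ lets her step over the missing $r_2^{2^n+2}$ when Spoiler plays in $G_1$. The copied distance also keeps Condition~\ref{item:uqafo_game_end} for level $\ell-1$: near the ends the anchor is an endpoint, so indices coincide.

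Otherwise both gaps on the $s$-side are at least $2^{\ell-1}$. Duplicator then picks a vertex with the right predicate roughly in the middle of $[a_{3-s},b_{3-s}]$, leaving at least $2^{\ell-1}$ on each side (plus $c$ where the gap straddles index $2^n+1$). Halving a gap of at least $2^\ell$ makes this possible, which is why the thresholds are powers of two.

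\textbf{Spoiler switches to $G_{3-s}$.} Spoiler keeps one pair $(v_1^{(k)},v_2^{(k)})$ and plays $u_{3-s}$ in $G_{3-s}$. The new configuration has only two pairs and index $3-s$, so Condition~\ref{item:uqafo_game_mid} must now be checked with the roles swapped. Duplicator answers in $G_s$ using the same copy-or-midpoint rule, now relative to the single kept vertex and the two endpoints (again added via Lemma~\ref{lemma:uqafo_config_add_both_ends}).

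The main obstacle is this role swap. The old invariant only guarantees large gaps on the $3-s$ side, while Duplicator now needs room on the $s$ side, where the missing $R$-vertex lives in $G_2$. To handle it, I would first check that $\ell$-validity of the single kept pair, via Conditions~\ref{item:uqafo_game_iso} and~\ref{item:uqafo_game_end}, already forces its distance to each end in both graphs to be either equal or at least $2^\ell$ in both. This makes the condition essentially symmetric for a two-element configuration. The remaining bookkeeping is to verify that the constant $c$ in Condition~\ref{item:uqafo_game_mid} is available in both directions, since $G_1$ has one extra $R$-vertex exactly at the position $c$ compensates for. I expect this case analysis around index $2^n+1$, together with the case where $u$ falls near that index, to be the delicate part of the proof.
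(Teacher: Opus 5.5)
Your overall route is the paper's. You use the same invariant ($\ell$-validity), pin the endpoints via Lemma~\ref{lemma:uqafo_config_add_both_ends}, and on a switch reduce to the single kept pair, for which Condition~3 is vacuous. Your ``copy from the nearer neighbour, else go to the middle'' rule is an inessential variant of the paper's rule $i'=\ind(v^{(j)}_{3-s})+\min(2^\ell,i-\ind(v^{(j)}_s))+c$.

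The problem is the step you postpone. Near the missing vertex, $\ell$-validity as defined is \emph{not} preserved, so the bookkeeping cannot be completed without changing the invariant. Let $D=2^n+2$, so that $r_2^{D}$ is the vertex missing from $G_2^n$, and let $n\ge 3$.
\begin{itemize}
\item Round~1: Spoiler starts with $s=2$ and plays $q_2^{D}$; Duplicator answers some $q_1^j$ with $j$ in the middle.
\item Round~2: Spoiler switches to $G_1^n$, keeps this pair and plays $p_1^{j+1}$. Your copy rule (index offset $1$ from the kept pebble) answers $p_2^{D+1}$, and so does the paper's rule. The resulting configuration is $(n-2)$-valid. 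The only short gap is $\ind(p_2^{D+1})-\ind(q_2^{D})=1\ge\min(2^{n-2},1)+c$ with $c=0$, because $c$ only fires when the left index is at most $2^n+1$.
\item Round~3: Spoiler plays $r_1^j$, which lies strictly between $q_1^j$ and $p_1^{j+1}$. But $p_2^{D+1}$ is the immediate successor of $q_2^{D}$, so Duplicator has no answer and Spoiler wins.
\end{itemize}
So the inductive claim ``$\ell$-valid implies Duplicator can reach an $(\ell-1)$-valid configuration'' is false.

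Condition~1 also fails independently. Once the pair $(q_1^j,q_2^{D})$ is kept and Spoiler plays $r_1^j$ in $G_1^n$, every legal answer $r_2^m$ has $m\ge D+1$. This violates the index-preorder clause, even though the partial isomorphism is fine.

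To repair this, you need three changes:
\begin{itemize}
\item[(i)] Formulate Condition~1, and the choice of the neighbours $a_s,b_s$, in terms of the position order rather than the index preorder.
\item[(ii)] Let the correction fire exactly when the $G_{3-s}$-interval contains the slot of the missing vertex, i.e.\ $c=1$ iff $\ind(v^{(j)}_{3-s})\le 2^n+2<\ind(v^{(i)}_{3-s})$.
\item[(iii)] Let Duplicator add $1$ whenever her copied index crosses that slot or would land on $r_2^{2^n+2}$. In the scenario above, $p_1^{j+1}$ must be answered by $p_2^{D+2}$, which leaves room for $r_2^{D+1}$.
\end{itemize}
The gap estimates you sketch, and the endpoint lemma used for the switch, then have to be re-verified for the corrected invariant. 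You cannot import them from the paper. Its write-up has the same off-by-one: its $c$ and its remark refer to index $2^n+1$, while $G_2^n$ lacks $r_2^{2^n+2}$. It also uses the same index-based Condition~1.
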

\begin{proof}
We show that Duplicator can play such that the \((n-\ell)\)-th configuration is \(\ell\)-valid.
The theorem follows, since Condition~\ref{item:uqafo_game_iso} ensures that Spoiler does not win.

At the beginning of the game, that is, for \(\ell = n\), the initial configuration consists of the two graphs \((G_1, v_1)\), \((G_2,v_2)\), and index \(s\), where \(s\in \{1,2\}\) is chosen by Spoiler.
For both choices of \(s\), the configuration is \(n\)-valid.

For \(0<\ell+1 \leq n\) let \(\mathcal{C} = ((G_1,v_1^{(1)},\ldots,v_1^{(m)}),(G_2,v_2^{(1)}, \ldots,v_2^{(m)}),s)\) be a \((\ell+1)\)-valid configuration.
We show that for each move of Spoiler, Duplicator has a response such that the next configuration is \(\ell\)-valid.
We distinguish the two possible options for Spoiler:

  If Spoiler plays in \(G_s^n\), we distinguish the following cases:
        \begin{itemize}
          \item If \emph{Spoiler chooses \(x_s^i\), where \(x\in \{p,q,r\}\) and \(i \leq 2^{\ell}\) or \(i\geq 2^{n+1}-2^{\ell}+4\)},
                then Duplicator responds with \(x_{3-s}^i\).
                This vertex exists, since \(i\neq 2^n+2\) and \(r_1^{2^n+2}\) is the only vertex that has no copy in \(G_2^n\).
                Condition \ref{item:uqafo_game_end} is obviously satisfied.
                Condition \ref{item:uqafo_game_iso} follows from Condition~\ref{item:uqafo_game_end} since it requires matching indices near the ends of the graphs and it implies \(\ind(v_s^{(i)}) < 2^{n+1}-2^\ell+4\leftrightarrow \ind(v_{3-s}^{(i)}) < 2^{n+1}-2^\ell+4\) and \(\ind(v_s^{(i)}) > 2^\ell\leftrightarrow \ind(v_{3-s}^{(i)}) > 2^\ell\).
                We now argue that Condition~\ref{item:uqafo_game_mid} is satisfied if \(i\leq 2^\ell\). The arguments are analogous if \(i\geq 2^{n+1}-2^\ell+4\).
                Since \(\mathcal{C}\) is \(\ell+1\)-valid by induction hypothesis, it suffices to show that this condition is satisfied for \(x_{3-s}^i\) and each \(v_{3-s}^{(j)}\).
                Let \(1\leq j\leq m\).
                If \(\ind(v_{3-s}^{(j)}) < i\), then Condition~\ref{item:uqafo_game_mid} immediately follows from Condition~\ref{item:uqafo_game_end}.
                Otherwise, if \(\ind(v_{3-s}^{(j)})-i \geq 2^\ell+1\), Condition~\ref{item:uqafo_game_mid} is obviously satisfied.
                Otherwise, \(\ind(v_{3-s}^{(j)})\leq i + 2^\ell \leq 2^{\ell+1}\) and by induction hypothesis and Condition~\ref{item:uqafo_game_end}, \(\ind(v_{s}^{(j)}) = \ind(v_{3-s}^{(j)})\).
                Since \(\ell+1\leq n\), this can be used to verify Condition~\ref{item:uqafo_game_mid} easily.
            \item
                If \emph{Spoiler chooses \(x^i_s\), where \(2^{\ell} < i < 2^{n+1}-2^{\ell}+4\)}, then Duplicator responds as follows.
                By Lemma \ref{lemma:uqafo_config_add_both_ends} we can assume that \((p_1^1, p_2^1)\) and  \((r_1^{2^{n+1}+3}, r_2^{2^{n+1}+3})\) are in \(\mathcal{C}\).
                W.l.o.g. we can also assume that Spoiler does not play a vertex already in some pair of \(\mathcal{C}\).
                Let \(v_s^{(j)}\) be the played vertex left of \(x_s^i\), that is, such that \(\ind(v_s^{(j)})\) is maximal with \(\ind(v_s^{(j)}) < i\).
                Duplicator now responds with \(x_{3-s}^{i'}\) such that \(i' = \ind(v_{3-s}^{(j)}) + \min(2^\ell, i- \ind(v_s^{(j)}))+c  \) where \(c\) is
                \[c = \begin{cases}
          1&\quad\text{if } \ind(v_{3-s}^{(j)})\leq 2^n+1 \text{ and } \ind(v_{3-s}^{(j)}) + \min(2^\ell, (i - \ind(v_s^{(j)}))) > 2^n +1,\\
          0&\quad\text{otherwise.}
        \end{cases}\]

                We now show that this configuration is \(\ell\)-valid.
                We start with Condition~\ref{item:uqafo_game_end} by arguing that \(2^\ell < i' < 2^{n+1}-2^\ell+4\).
                First, assume to the contrary that \(i'\leq 2^\ell\).
                Then \(i' = \ind(v_{3-s}^{(j)}) + i - \ind(v_s^{(j)})\) and \(\ind(v_{3-s}^{(j)}) \leq 2^\ell\) and by induction hypothesis \(\ind(v_{3-s}^{(j)}) = \ind(v_{s}^{(j)})\).
                Thus, \(i' = i\) which is a contradiction to \(i > 2^\ell\).
                Now assume \(2^{n+1}-2^\ell+4 \leq i'\).
                Thus, \(\ind(v_{3-s}^{(j)}) \geq 2^{n+1}-2^{\ell+1} + 4-c\), where \(c = 0\), because \(2^{n+1}-2^{\ell+1}+3 \geq 2^n + 1\).
                Again, by induction hypothesis, \(\ind(v_{3-s}^{(j)}) = \ind(v_s^{(j)})\) and \(i=i'\).
                
                For Conditions~\ref{item:uqafo_game_iso} and \ref{item:uqafo_game_mid}, let \(v_{s}^{(k)}\) be the vertex right to \(x^i_s\).
                We show that \(i' < \ind(v_{3-s}^{(k)})\) and \(\ind(v_{3-s}^{(k)})- i' \geq \min(2^\ell, \ind(v_s^{(k)})-i) + c'\), where \(c'\) is defined as in Definition~\ref{def:ell-valid}.
                 Obviously, the former follows from the latter, since \(\min(2^\ell, \ind(v_s^{(k)})-i)+c'\) is at least \(0\).
                As a first case, assume that \(\ind(v_{3-s}^{(j)})\leq 2^n+1\) while \(\ind(v_{3-s}^{(k)})> 2^n+1\).
                By induction hypothesis, \[\ind(v_{3-s}^{(k)})-\ind(v_{3-s}^{(j)}) \geq \min (2^{\ell+1}, \ind(v_s^{(k)})-\ind(v_s^{(j)}))+1.\] 
                Thus,
                \begin{align*}
                  \ind(v_{3-s}^{(k)})- i' &= \ind(v_{3-s}^{(k)}) - \ind(v_{3-s}^{(j)}) - \min(2^\ell, i-\ind(v_s^{(j)})) - c\\
                  &\geq \min(2^{\ell+1}, \ind(v_s^{(k)})- \ind(v_s^{(j)}))-\min(2^\ell, i-\ind(v_s^{(j)})) - c + 1\\
                  &\geq \min(2^{\ell}, \ind(v_s^{(k)}) - i) + (1-c). 
                \end{align*}
                By definition, \(c = 1\) iff \(i' > 2^n+1\), thus by choosing \(c' = (1-c)\) both conditions are satisfied.\\
                As the second case, assume that \(\ind(v_{3-s}^{(j)})\leq 2^n+1\) and \(\ind(v_{3-s}^{(k)}) \leq 2^n+1\), thus \(c=0\).
                By induction hypothesis, \[\ind(v_{3-s}^{(k)})-\ind(v_{3-s}^{(j)}) \geq \min (2^{\ell+1}, \ind(v_s^{(k)})-\ind(v_s^{(j)})).\]
                Therefore,
                \begin{align*}
                  \ind(v_{3-s}^{(k)})- i' &= \ind(v_{3-s}^{(k)}) - \ind(v_{3-s}^{(j)}) - \min(2^\ell, i-\ind(v_s^{(j)}))\\
                  &\geq \min(2^{\ell+1}, \ind(v_s^{(k)})- \ind(v_s^{(j)}))-\min(2^\ell, i-\ind(v_s^{(j)}))\\
                  &\geq \min(2^{\ell}, \ind(v_s^{(k)}) - i). 
                \end{align*}
                The third case, \(\ind(v_{3-s}^{(j)}),\ind(v_{3-s}^{(k)}) > 2^n+1\) is analogous.
        \end{itemize}
  If Spoiler plays in \(G_{3-s}^n\), they first change the configuration to some \(\mathcal{C}' = ((G_1,v^{(i)}_1),(G_2, v^{(i)}_2), {3-s})\). We show that \(\mathcal{C}'\) is \(\ell+1\)-valid. We can then apply the argumentation above.
        Conditions \ref{item:uqafo_game_iso} and \ref{item:uqafo_game_end} immediately carry over to \(\mathcal{C}'\).
        Condition \ref{item:uqafo_game_mid} is also satisfied, since it is only relevant for configurations in which at least two pairs were played.

\end{proof}

\begin{corollary}
    There does not exist a UQAFO formula that accepts exactly the pointed graphs that are PQR-orders.
\end{corollary}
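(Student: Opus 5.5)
The corollary follows by combining the game characterization with the Duplicator strategy. Suppose, towards a contradiction, that some unary UQAFO formula $\varphi(x)$ accepts exactly the pointed graphs that are PQR-orders. Let $\ell$ be the quantifier depth of $\varphi$. We choose $n \geq \ell$ and consider the pointed graphs $(G_1^n,v_1)$ and $(G_2^n,v_2)$ from Figure~\ref{fig:uqafo_indistinguish_graphs}.

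The first step is to check that the two graphs fall on opposite sides of the property. By construction, $G_1^n$ is a PQR-order rooted at its first vertex. In $G_2^n$ the vertex $r^{2^n+2}$ is deleted, so the label sequence contains two consecutive $Q$-blocks, namely $q^{2^n+2}$ followed by $p^{2^n+3}$ after $q$. Consequently the successor of $q_2^{2^n+2}$ satisfies $P$ rather than $R$. Equivalently, the total number of vertices is $\equiv 1 \pmod 3$, so the condition $n\bmod 3=2$ on the last index fails. Either way, $(G_2^n,v_2)$ is not a PQR-order. Hence $G_1^n\models\varphi(v_1)$ and $G_2^n\not\models\varphi(v_2)$.

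The second step is to turn this into a win for Spoiler and derive the contradiction. By the direction $2\Rightarrow 1$ of Lemma~\ref{lem:uqafo_games_correspondence}, Spoiler has a winning strategy in the $\ell$-round UQAFO game on the two pointed graphs. On the other hand, Theorem~\ref{thm:p_not_expressible_in_uqafo} gives Duplicator a winning strategy in the $n$-round game. Since $\ell\leq n$, this strategy restricted to the first $\ell$ rounds is winning for the $\ell$-round game. The restriction is valid because the winning condition is checked at every configuration, so surviving $n$ rounds includes surviving the first $\ell$. Alternatively, one can note directly that every $n$-valid configuration is also $\ell$-valid. Spoiler and Duplicator cannot both have winning strategies in the same game, which is the desired contradiction.

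The only point requiring care is the choice of $n\geq\ell$ together with the monotonicity of Duplicator's wins in the number of rounds. Both are immediate from the definitions, so I expect no real obstacle.
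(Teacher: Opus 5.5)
Your proof is correct and follows the paper's argument: you fix the quantifier depth $\ell$ of a putative formula, note that $G_1^n$ is a PQR-order while $G_2^n$ is not, and play Lemma~\ref{lem:uqafo_games_correspondence} off against Theorem~\ref{thm:p_not_expressible_in_uqafo}, spelling out the choice $n\geq\ell$ and the monotonicity in the number of rounds that the paper leaves implicit. One harmless arithmetic slip: $G_2^n$ has $3\cdot 2^{n+1}+8$ vertices, which is $2$ (not $1$) modulo $3$, so it is the last index $3\cdot 2^{n+1}+7$ that is $1$ modulo $3$; your first reason, that the successor of $q_2^{2^n+2}$ is labelled $P$ instead of $R$, already suffices on its own.
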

\begin{proof}
  By Theorem~\ref{thm:p_not_expressible_in_uqafo} and Lemma~\ref{lem:uqafo_games_correspondence}, for each \(n\in\mathbb{N}\) there exists no UQAFO formula with quantifier depth \(n\) that distinguishes \(G_1^n\) and \(G_2^n\).
  Since all \(G_1^n\) are PQR-orders and all \(G_2^n\) are not PQR-orders, there exists no UQAFO formula that accepts exactly PQR-orders.
\end{proof}

\subsection{Points~3 and~4 of Theorem~\ref{thm:DENcombi}}

Point 3 follows from the homomorphism–embedding count translations in Theorem~\ref{lovasz homomorphism count implies embedding count}, applied layer-wise to sum aggregation: every embedding query can be simulated by finitely many homomorphism queries and conversely every homomorphism query can be simulated by finitely many embedding queries, with the required linear combinations absorbed into the combination functions.
For Point~4, $\maxagg$-DENs are at least as expressive as $\maxagg$-DHNs because every HML formula is an EML formula. Strictness follows from local transitivity: by Point~1 of Theorem~\ref{thm:DENcombi}
and Lemma~\ref{lem:UQAFOUNFOC}, it is expressible by a max-DEN, while by Point~1 of Theorem~\ref{thm:combi}, $\maxagg$-DHNs have the same expressive power as HML, and local transitivity is not expressible there.

\subsection{Proof of Theorem \ref{thm:sun-expressivity}: Separating \texorpdfstring{$\sumagg$-DHNs}{sum-DHNs} from GNNs with Homomorphism Counts}

We prove that, already over undirected unembedded graphs, simple $\sumagg$-DHNs are strictly more expressive than homomorphism-count enriched GNNs, by showing that the EML formula $\varphi_{\sun}$ is not expressed by a GNN with homomorphism count features and global readout:
\begin{align*}
\varphi_{\sun}(x_1) =
\exists x_2\cdots x_6 \,\Big(\bigwedge_{1 \leq i \leq 6}\!(E(x_i,x_{(i\bmod 6)+1}) \wedge \exists y\,(E(x_i,y) \wedge \exists^{=1} z\,E(y,z)) 
\land\!\!\!\bigwedge_{i < j\leq 6}\!\!\! x_i\neq x_j)
\Big)
\end{align*}
In this section we consider undirected graphs, i.e. databases over schema $\{E\}$ with $\mn{ar}(E)=2$ such that the relation $E$ is irreflexive and symmetric. Given a graph database $D$ we denote the set of vertices $\mn{adom}(D)$ with $V(D)$ and the edge relation as $E(D)$. We use the constant embedding $\lambda_0$ onto $(0)$.
\begin{definition}
Let $P$ be a set of undirected graphs. We say $P$ \emph{is determined by finitely many homomorphism counts} if there are  graphs $F_1, \dots, F_n$ ($n \geq 1$) such that for all graphs $G$ and $H$,
$|\Hom(F_i,G)| = |\Hom(F_i, H)|$ for all $i\leq n$ implies that 
$G \in P \Leftrightarrow H \in P$.

Similarly, a value property of pointed databases $P$ is determined by finitely many homomorphism counts if there are pointed graphs $F^\bullet_1,\dots,F^\bullet_n$ ($n \geq 1$) such that for all pointed graphs $G^\bullet$ and $H^\bullet$,
$|\Hom(F^\bullet_i,G^\bullet)| = |\Hom(F^\bullet_i, H^\bullet)|$ for all $i\leq n$ implies that
$G^\bullet \in P \Leftrightarrow H^\bullet \in P$.

We will refer to the $F_i$ (respectively, $F^\bullet_i$) as pattern graphs.
\end{definition}
We use the following result by Chen et al.:
\begin{theorem}[\cite{chen2025algorithms}, Theorem 6.1]
\label{thm:no_hom_alg_for_isolated}
The set $P_{iso}$ of graphs that include an isolated vertex is not determined by finitely many homomorphism counts.
\end{theorem}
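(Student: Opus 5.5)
We will argue by contradiction. Assume pattern graphs $F_1,\dots,F_n$ determine $P_{iso}$. We will construct two graphs $G$ and $H$ with $|\Hom(F_i,G)|=|\Hom(F_i,H)|$ for all $i$, where $G$ has an isolated vertex and $H$ does not. The two structural facts we rely on are the following.
\begin{itemize}
\item Homomorphism counts are multiplicative over the connected components of the pattern: $|\Hom(F,G)|=\prod_C|\Hom(C,G)|$.
\item For a connected pattern $C$, homomorphism counts are additive over disjoint unions of targets: $|\Hom(C,G\sqcup G')|=|\Hom(C,G)|+|\Hom(C,G')|$.
\end{itemize}
So it suffices to find $G,H$ that agree on $|\Hom(C,\cdot)|$ for every connected component $C$ of some $F_i$. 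Let $C_1,\dots,C_m$ list these components up to isomorphism, with the single-vertex graph $K_1$ among them; its count is $|V(\cdot)|$.

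For a connected graph $X$ with at least one edge, define the vector $w(X)=(|\Hom(C_1,X)|,\dots,|\Hom(C_m,X)|)\in\mathbb{N}^m$. The isolated vertex $K_1$ has $w(K_1)=e_{K_1}$: it contributes $1$ to the $K_1$-coordinate and $0$ to every other coordinate, because every other $C_j$ has an edge. The key claim is that $e_{K_1}$ lies in the rational span of $\{w(X)\mid X \text{ connected with an edge}\}$. Granting the claim, write $e_{K_1}=\sum_j \alpha_j w(X_j)$ with $\alpha_j\in\mathbb{Q}$. Multiply through by a common denominator $N$ and move the negative terms to the left. This gives nonnegative integers $a_j,b_j$ with $N e_{K_1}+\sum_j a_j w(X_j)=\sum_j b_j w(X_j)$. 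Now set
\[G=N\cdot K_1\sqcup\bigsqcup_j a_j X_j,\qquad H=\bigsqcup_j b_j X_j,\]
adding one further common copy of some $X$ to both sides if needed so that $H$ is nonempty. By additivity, $G$ and $H$ agree on every $C_j$. By multiplicativity, they then agree on every $F_i$. But $G$ has isolated vertices and $H$ has none, which is the desired contradiction.

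The main obstacle is the spanning claim. It is equivalent to the statement that no nontrivial linear relation among the functions $|\Hom(C_j,\cdot)|$ holds on all connected graphs with an edge unless the $K_1$-coefficient vanishes. We would prove the stronger statement that the functions $X\mapsto|\Hom(C_j,X)|$ for pairwise non-isomorphic connected $C_j$ are linearly independent on this class of targets. The approach mirrors the proof of Theorem~\ref{lovasz homomorphism count implies embedding count}. First, rewrite each homomorphism count as a triangular combination of injective or embedding counts from quotients and supergraphs; connectedness is preserved under both operations. Second, evaluate on suitable targets and read off a unitriangular system.

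The delicate point is that $K_1$ itself is not an admissible target. We would therefore test the functions on targets of the form $X\sqcup K_2$, which is a disjoint union and hence allowed by additivity, or more simply on $K_2$ and connected graphs with at least two vertices. Subtracting the $K_2$-evaluations should isolate the $K_1$-coordinate. If this triangularity argument becomes messy, the fallback is a direct dimension count. Since the vertex count is unbounded while all other coordinates are fixed nonnegative combinations, it should suffice to exhibit explicit families, such as stars $K_{1,t}$ and paths, whose vectors are affinely independent in $t$, and thereby separate the vertex coordinate.
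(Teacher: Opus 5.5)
Your reduction is correct: multiplicativity over pattern components and additivity over target components reduce everything to the spanning claim, and your construction of $G$ and $H$ from that claim works. ($H$ is automatically nonempty, since the $K_1$-coordinate of the right-hand side equals $N>0$.) The gap is that the spanning claim is never proved, and it carries the entire content of the theorem. Conversely, a relation $c_1|V(X)|+\sum_{j\geq 2}c_j|\Hom(C_j,X)|=0$ with $c_1\neq 0$ holding on all isolated-vertex-free graphs would let the $C_j$-counts compute the number of isolated vertices of any graph. The paper gives no proof of this statement; it cites Chen et al., so your argument has to supply this step itself.

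Your main method fails exactly where it matters. The unitriangular argument from the proof of Lov\'asz' theorem needs every member of the closed family $\mathcal{D}$ of quotients and supergraphs as a target, including $K_1$. Evaluating only on $\mathcal{D}\setminus\{K_1\}$ gives a system that is triangular in the other coefficients but leaves the $K_1$-coefficient free. For the family $\{K_1,K_2\}$, the only admissible target in the family is $K_2$, which yields just $2c_1+2c_2=0$. So targets outside the pattern family are indispensable, and the ones you propose do not help. By additivity, $X\sqcup K_2$ contributes only $w(X)+w(K_2)$. Also $w(K_2)=2e_{K_1}+2\sum_{j\in J}e_{C_j}$, where $J$ indexes the bipartite $C_j\neq K_1$, so $K_2$ isolates the $K_1$-coordinate only when $J=\emptyset$.

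The stars in your fallback are the right family, but affine independence in $t$ is not what makes them work. The missing observation is this. For connected $C$ with an edge, $|\Hom(C,K_{1,t})|=t^{|A|}+t^{|B|}$ if $C$ is bipartite with parts $A,B$, and $0$ otherwise. This is a polynomial in $t$ of degree at most $|V(C)|-1$ with \emph{zero constant term}. By contrast, $|V(K_{1,t})|=t+1$. Both formulas remain valid at $t=0$, where $K_{1,0}=K_1$. Choose $d\geq\max(1,\max_j|V(C_j)|-1)$ and use that $\sum_{t=0}^{d+1}(-1)^t\binom{d+1}{t}p(t)=0$ for every polynomial $p$ of degree at most $d$. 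This gives
\[
e_{K_1}=\sum_{t=1}^{d+1}(-1)^{t+1}\tbinom{d+1}{t}\,w(K_{1,t}).
\]
Hence $G=\bigsqcup_{t\ \mathrm{even}}\binom{d+1}{t}\cdot K_{1,t}$ and $H=\bigsqcup_{t\ \mathrm{odd}}\binom{d+1}{t}\cdot K_{1,t}$, with $0\leq t\leq d+1$, agree on every $C_j$. Moreover, $G$ contains $K_{1,0}=K_1$, while $H$ has no isolated vertex. With this in place of the triangularity sketch, your argument becomes a complete proof.
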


We define GNNs as specific instances of DHNs:
\begin{definition}[GNN]
A GNN with global readout $\calN = (\calL_1, \dots, \calL_k)$ is a DHN over the graph schema $\{E\}$, where for $i = 1, \dots k-1$, $\calL_i = (\calF, \com)$ has two homomorphism queries $(F^\bullet_1, \mu, \agg), (F^\bullet_2, \mu, \agg)$ such that $F^\bullet_1$ is a single vertex $\bullet$, $F^\bullet_2$ is a single edge $\bullet - \circ$ and $\mu$ is the identity function. $\calL_k$ has the same two homomorphism queries and a third query $(F^\bullet_3,\mu)$, where $F^\bullet_3$ consists of two unconnected vertices $\bullet \,\,\, \circ$, and $\mu$ is again the identity function.
\end{definition}
We use a characterization of the expressive power of GNNs in terms of homomorphism counts from trees. Given embedded graphs $(G^\bullet,\lambda_G)$, $(H^\bullet,\lambda_H)$ let $\Hom((G^\bullet,\lambda_G), (H^\bullet,\lambda_H))$ be the set of homomorphisms $h \in \Hom(G^\bullet, H^\bullet)$ such that for all $u \in V(G^\bullet)$, $\lambda_H(h(u))=\lambda_G(u)$:
\begin{theorem}
\label{thm:GNN_hom_counts}
    Let $(G,\lambda_G),(H,\lambda_H)$ be embedded graphs and $u \in V(G), v \in V(H)$.
    The following are equivalent:
    \begin{enumerate}
        \item For every GNN with global readout $\calN$, $\calN(G,\lambda_G)(u) = \calN(H, \lambda_H)(v)$.
        \item For every embedded pointed tree $(F^\bullet,\lambda)$ the following two equalities hold:
        \begin{itemize}
            \item 
        $|\Hom((F^\bullet,\lambda),(G^u,\lambda_G))| = |\Hom((F^\bullet,\lambda),(H^v,\lambda_H))|$
        \item
        $|\Hom((F,\lambda),(G,\lambda_G))| = |\Hom((F,\lambda),(H,\lambda_H))|$
        \end{itemize}
    \end{enumerate}
\end{theorem}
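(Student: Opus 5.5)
The plan is to route both directions through colour refinement (the $1$-dimensional Weisfeiler--Leman algorithm) run on the embedded graphs. The initial colour of a vertex $w$ is $\lambda(w)$. Refinement sets $c^{t+1}(w)=(c^t(w),\llbrace c^t(w')\mid \{w,w'\}\in E\rrbrace)$. We then show two things: (a) condition~1 is equivalent to ``$c^\infty(u)=c^\infty(v)$ and the multisets $\llbrace c^\infty(w)\mid w\in V(G)\rrbrace$ and $\llbrace c^\infty(w)\mid w\in V(H)\rrbrace$ coincide''; and (b) this in turn is equivalent to condition~2. Here $c^\infty$ denotes the stable colouring, taken on the disjoint union of $G$ and $H$ so that colours are comparable. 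Stabilisation happens after at most $|V(G)|+|V(H)|$ rounds.

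For (a), one direction is a routine induction on the layers of a GNN with global readout $\calN=(\calL_1,\dots,\calL_k)$. The two queries of $\calL_i$ (single vertex, single edge) compute, respectively, a function of $\lambda^{i-1}(w)$ and an aggregate of the neighbour multiset of $\lambda^{i-1}$-values. Hence $\lambda^{i}(w)$ is a function of $c^{i}(w)$. The third query of $\calL_k$ aggregates over all of $V(G)$, since it is a pair of unconnected vertices with identity transformation. It is therefore a function of the colour histogram of depth $k-1$, which is determined by the stable histogram. Conversely, we construct $\sumagg$-GNNs that separate whenever (a) fails. As in Morris et al.\ and Xu et al., there are only finitely many colours occurring in $G$ and $H$. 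We can therefore choose combination functions (arbitrary functions are allowed here) that map $\mathbb{R}$-encodings of $(c^t(w),\text{sum of one-hot codes of neighbour colours})$ injectively to a new one-hot code. With sum aggregation, neighbour multisets are then recovered exactly. If the rooted colours differ, we read off $c^\infty(u)$ in the final layer. If the histograms differ, the readout query sums the one-hot codes over the whole graph and thus yields the histogram.

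For (b) we use the Dell--Grohe--Rattan correspondence in its labelled form. For embedded pointed trees $(F^r,\lambda)$ we have the recursion
\[
|\Hom((F^r,\lambda),(G^w,\lambda_G))| = [\lambda(r)=\lambda_G(w)] \prod_{r' \text{ child of } r}\ \sum_{\{w,w'\}\in E(G)} |\Hom((F_{r'}^{r'},\lambda),(G^{w'},\lambda_G))|.
\]
By induction on the depth of $F$, this shows that $c^t(w)$ determines all rooted counts from trees of depth $\le t$. Summing over the root then shows that the histogram determines all unrooted counts. The converse is the harder step. We show by induction on $t$ that equal rooted counts for all trees of depth $\le t$ imply equal $c^t$. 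Given a finite set of candidate colours $c^{t-1}$, the induction hypothesis lets us pick trees whose rooted counts separate them. The multiset of neighbour colours is then recovered from the products above via a Vandermonde/polynomial-moment argument. Concretely, we take trees whose root has $m$ copies of a given subtree, which yields power sums of neighbour counts, and the power sums determine the multiset. The unrooted case is handled in the same way: the histogram of $c^t$-colours is recovered from the vector of unrooted counts of trees of depth $\le t$ by the same moment argument applied to the sum over all root images.

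The main obstacle is this converse direction, i.e.\ inverting the tree counts into colours. The difficulty is the bookkeeping needed to make the moment/Vandermonde argument precise, in the rooted case and simultaneously for the global histogram. My first attempt will be to cite Dell, Grohe and Rattan (and Dvořák) for unlabelled graphs and argue that vertex labels $\lambda$ are handled by the indicator factor in the recursion, since only finitely many label values occur.
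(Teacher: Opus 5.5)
Your overall route is the paper's. It gives no argument of its own and simply cites the GNN/colour-refinement correspondence (Morris et al., Xu et al.) and the colour-refinement/tree-homomorphism correspondence (Dvo\v{r}\'ak; Dell, Grohe and Rattan; Grohe for the pointed case). Relying on those citations, as you plan, is fine.

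Your self-contained sketch of the hard converse of (b), however, fails at its concrete step. Write $h_T(w)$ for the rooted count of $T$ at $w$. By your own recursion, a tree whose root has $m$ children, each carrying a copy of $T'$, has count $[\lambda(r)=\lambda_G(w)]\bigl(\sum_{w'\sim w}h_{T'}(w')\bigr)^m$. That is a power of a single sum and adds nothing beyond the case $m=1$. Gluing the $m$ copies at the root instead gives $h_{T'}(w)^m$, which says nothing about neighbours. The power sums $\sum_{w'\sim w}h_{T'}(w')^m$ appear only if the root has a single child at which the $m$ copies are glued (using $h_{T_1\cdot T_2}=h_{T_1}h_{T_2}$ for gluing at roots). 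Even then, power sums of individual trees determine only the marginals of the multiset of count vectors $(h_{T_1}(w'),\dots,h_{T_p}(w'))_{w'\sim w}$. Marginals do not determine the joint multiset: compare $\llbrace(0,0),(1,1)\rrbrace$ with $\llbrace(1,0),(0,1)\rrbrace$. Nothing guarantees a single tree separating all $c^{t-1}$-colours.

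What is missing is the multivariate step. One option is mixed moments $\sum_{w'\sim w}\prod_i h_{T_i}(w')^{a_i}$, obtained by hanging the rooted product $\prod_i T_i^{a_i}$ below a single child of the root. A cleaner option works on the finitely many $c^{t-1}$-classes of $G\sqcup H$:
\begin{enumerate}
\item Restricted to these classes, the span of the functions $h_T$ with $T$ of depth at most $t-1$ is closed under pointwise product (gluing).
\item It contains the constant $1$, as the sum of single-vertex trees over the occurring labels.
\item By the induction hypothesis it separates the classes, so it contains every class indicator.
\end{enumerate}
Hence the number of $a$-coloured neighbours of $w$ is a fixed linear combination of rooted counts of trees consisting of a root joined to some $T_j$, with the same coefficients for $G$ and $H$. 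The histogram is handled the same way, summing over all vertices instead of over neighbours.

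There is also a smaller issue with your separating GNN in (a). In the paper's definition the edge and readout queries use the identity transformation at \emph{both} pattern vertices, so they return $\lambda(w)\odot\sum\lambda(w')$. With one-hot codes this erases everything except the root's own colour. Use codes without zero entries instead (e.g.\ $\mathbf{1}+e_c$), and let $\com$ divide by the known $\lambda(w)$. This affects only the direction $1\Rightarrow2$, not the one the paper actually uses.
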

This result stems from a characterization of the separating power of GNNs as that of the Weisfeiler-Leman algorithm \cite{morris2019weisfeiler,xu2019powerful}, and a characterization of the latter as distinguishability under homomorphism counts from trees (see \cite{dell2018Lovasz,dvovrak2010recognizing}, and an explicit proof for pointed homomorphisms in \cite{grohe2020word2vec}). 

A GNN with global readout and homomorphism count features can be equivalently defined as a DHN of which the first layer has homomorphism queries $((F^\bullet_1, \mu, \sumagg), \dots, (F^\bullet_n, \mu, \sumagg))$ where $\mu$ is the identity function, the $F^\bullet_i$ are arbitrary pattern graphs and the subsequent layers constitute a GNN with global readout. We now show that such GNNs do not express $\varphi_{\sun}$. Let $P_{\sun}$ be the value property that contains $G^u$ if and only if $G \models \varphi_{\sun}(u)$:

\begin{figure}
    \centering
    \includegraphics[width=0.5\linewidth]{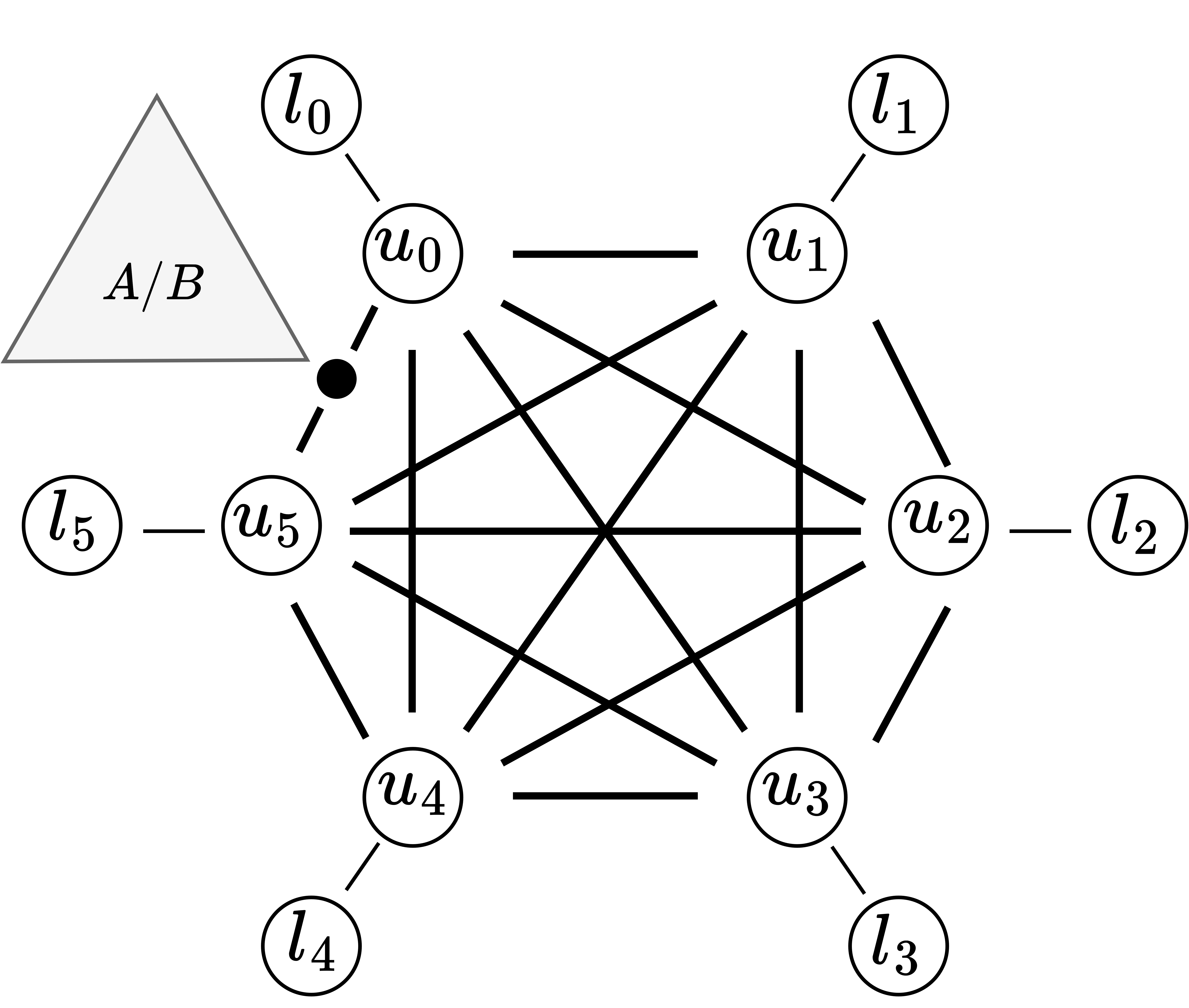}
   \caption{In the skeleton graph $S^{u_0}$ each clique vertex $u_i$ is connected to a leaf vertex $l_i$ and each pair of clique vertices $u_i,u_j$ is connected via a connecting vertex $v_{ij}$. Each connecting vertex is connected to all nodes of a copy of $A$ or $B$ to construct $G^{u_0}$ or $H^{u_0}$.}
   \label{fig:skeleton}
\end{figure}
\begin{theorem}
\label{thm:GNN_does_not_express_phisun}
For every
tuple of pattern graphs $\calK = (F^\bullet_1, \dots, F^\bullet_n)$ there are pointed graphs $G^u \in P_{\sun}$ and $H^v \not\in P_{\sun}$ 
such that for all GNNs $\calN$ with global readout and homomorphism counts from $\calK$, $\calN(G^u, \lambda_0) = \calN(H^v,\lambda_0)$. Thus, no $\calK$-enriched GNN expresses $P_{\sun}$.
\end{theorem}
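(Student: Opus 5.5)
Since $P_{iso}$ is not determined by finitely many homomorphism counts (Theorem~\ref{thm:no_hom_alg_for_isolated}), I will build the two pointed graphs from a fixed skeleton with plug-in gadgets, as in Figure~\ref{fig:skeleton}. Fix $\calK=(F^\bullet_1,\dots,F^\bullet_n)$ and let $N$ exceed the number of vertices of every $F_i$. Applying Theorem~\ref{thm:no_hom_alg_for_isolated} to the finite family of all (unpointed) graphs with at most $N$ vertices yields graphs $A$ and $B$ with:
\begin{itemize}
  \item $A$ has an isolated vertex and $B$ has none;
  \item $|\Hom(F,A)|=|\Hom(F,B)|$ for every graph $F$ with at most $N$ vertices.
\end{itemize}
Homomorphism counts from connected graphs are multiplicative over components, and small disconnected patterns only involve small connected ones, so I may pass to pattern sets closed under this.

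The skeleton $S^{u_0}$ contains:
\begin{itemize}
  \item clique vertices $u_0,\dots,u_5$, each with a pendant leaf $l_i$;
  \item a connecting vertex $v_{ij}$ subdividing the link between each pair $u_i,u_j$.
\end{itemize}
$G^{u_0}$ attaches a disjoint copy of $A$ to each $v_{ij}$, with $v_{ij}$ joined to every vertex of the copy; $H^{u_0}$ does the same with $B$. The copy structure is chosen so that a vertex isolated in $A$ gives $v_{ij}$ a route to a degree-one neighbour: a vertex whose only neighbour is $v_{ij}$. The plan is then to arrange that $v_{ij}$ plays the role of a cycle vertex needing a degree-$1$ neighbour, so that $u_0$ lies on a $6$-cycle all of whose vertices carry a degree-$1$ neighbour in $G$ but not in $H$. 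This is the first thing I would check carefully, adjusting the gadget if needed. For instance, I would take the $6$-cycle to run alternately through clique vertices and connecting vertices, $u_0,v_{01},u_1,v_{12},u_2,v_{20}$, with the $u_i$ covered by their leaves $l_i$ and the $v_{ij}$ covered exactly when the attached copy has an isolated vertex. I would also check that $H$ has no alternative $6$-cycle through $u_0$ in which every vertex has a pendant neighbour. This holds because in $H$ every $v_{ij}$'s neighbours have degree at least $2$, and a cycle must use some connecting vertex, since clique vertices are pairwise nonadjacent in the subdivided skeleton. This gives $G^{u_0}\in P_{\sun}$ and $H^{u_0}\notin P_{\sun}$.

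Indistinguishability splits into two parts:
\begin{itemize}
  \item \textbf{Initial features.} For every vertex $w$ of $G$, the feature vector $(|\Hom(F^\bullet_i,G^w)|)_i$ must equal that of the corresponding vertex of $H$. Here "corresponding" means the identity on skeleton vertices, together with a matching between the vertices of the copies of $A$ and $B$ that preserves feature vectors as a multiset.
  \item \textbf{GNN layers.} By Theorem~\ref{thm:GNN_hom_counts} applied to the feature-embedded graphs, it suffices that for every feature-labelled tree $T$ the pointed and unpointed homomorphism counts into $(G,\lambda)$ and $(H,\lambda)$ coincide.
\end{itemize}
The main technique for both is to decompose homomorphisms from a pattern $F$ into $G$ according to the preimage of the skeleton. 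Removing the skeleton vertices from $F$ leaves pieces mapped entirely into gadget copies. Each piece is counted by a homomorphism count into $A$ from a graph of size at most $|F|$, which is a rooted, or apex-augmented, small pattern. Because $v_{ij}$ is adjacent to the whole copy, a piece adjacent to $v_{ij}$ imposes no constraint beyond the pattern itself, so the count reduces to $|\Hom(F',A)|$ for small $F'$. Equal counts for $A$ and $B$ then give equal totals.

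For the trees in Theorem~\ref{thm:GNN_hom_counts} the sizes are unbounded, so this argument alone is insufficient. Instead I would need the labels: a labelled tree homomorphism maps vertices only to vertices carrying that feature vector. I would ensure the feature vectors $\lambda$ are equal at matched positions and that the \emph{refined} colour multisets agree, i.e.\ that $G$ and $H$ are $1$-WL equivalent after the initial labelling. The simplest route is to additionally choose $A$ and $B$ with equal feature-vector multisets and equal 1-WL colour refinement. Alternatively, I would strengthen the choice of $A,B$ from Theorem~\ref{thm:no_hom_alg_for_isolated} by including all trees of bounded size together with a bound on 1-WL stabilisation. Whether Chen et al.'s construction can deliver this strengthened agreement, which refinement-based (tree) counts need, is the step I expect to be the main obstacle.
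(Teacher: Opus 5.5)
The step you flag as the main obstacle is not just hard but impossible, and the cause is your gadget assignment. If every $v_{ij}$ carries a copy of $A$ in $G$ and a copy of $B$ in $H$, then $G^{u_0}$ and $H^{u_0}$ differ locally in a way plain message passing sees. In $G$, each isolated vertex of an $A$-copy becomes a degree-$1$ neighbour of $v_{ij}$. In $H$, every neighbour of $v_{ij}$ has degree at least $2$. Three sum-aggregation layers suffice to separate them: compute degrees, then count neighbours of degree exactly $1$, then count neighbours that have such a neighbour. This gives $u_0$ the value $5$ in $G$ and $0$ in $H$, without any homomorphism-count features. So no choice of $A,B$ rescues this construction.

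Your proposed repair also fails for two reasons. First, $A$ has a degree-$0$ vertex and $B$ does not, so $A$ and $B$ are never $1$-WL equivalent. Second, Theorem~\ref{thm:no_hom_alg_for_isolated} gives only equal unpointed counts from finitely many patterns. It does not give a matching of the vertices of $A$ and $B$ with equal pointed count vectors. If $\calK$ contains the rooted edge, no such matching between your $A$-copies and $B$-copies exists, because of those degree-$1$ vertices.

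The missing idea is to put the same gadgets into both graphs and hide the difference in a global arrangement that colour refinement cannot see. The paper attaches $A$ to six connecting vertices and $B$ to the other nine, in both graphs. In $G$ the $A$-attached pairs form the triangles on $\{u_0,u_1,u_2\}$ and $\{u_3,u_4,u_5\}$. In $H$ they form the hexagon $u_0u_1\cdots u_5u_0$. This is the $2C_3$ versus $C_6$ trick.

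With this assignment, every clique vertex has one leaf, two $A$-attached and three $B$-attached connecting neighbours in both graphs. Both graphs also have the same number of vertices of each type. So the tree-counting induction behind Theorem~\ref{thm:GNN_hom_counts} goes through, including the global readout.

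For the initial features you never compare an $A$-vertex with a $B$-vertex. Skeleton vertices of the same type get equal counts by exactly your decomposition: each piece outside the skeleton contributes $|\Hom(C,A)|=|\Hom(C,B)|$, and both graphs have $6$ copies of $A$ and $9$ of $B$. A vertex in a copy of $A$ (resp.\ $B$) is compared with the same vertex in a copy of $A$ (resp.\ $B$).

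Separation still holds. The cycle $u_0v_{01}u_1v_{12}u_2v_{02}$ witnesses the sun property in $G$. In $H$, every $6$-cycle through $u_0$ has the form $u_0v_{0a}u_av_{ab}u_bv_{0b}$. Since the hexagon is triangle-free, such a cycle uses a $B$-attached connecting vertex, which has no degree-$1$ neighbour.
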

\begin{proof}    
By Theorem \ref{thm:no_hom_alg_for_isolated} there exist graphs $A$ and $B$ where $A$ has an isolated vertex, $B$ does not, and for every $F^\bullet_i \in \calK:$ and $V\subseteq V(F_i)$, $|\Hom(F_i|V,A)|=|\Hom(F_i|V,B)|$.
We define a skeleton graph $S^{u_0}$ depicted in Figure \ref{fig:skeleton}, consisting of \emph{clique vertices} $u_0, \dots, u_5$, \emph{leaf vertices} $l_0, \dots l_5$ and \emph{connecting vertices} $v_{ij}$ for each $i,j=0,\dots,5$ such that $i<j$. $S^{u_0}$ has edges $(u_i,l_i)$,$(u_i, v_{ij}),(u_j,v_{ij})$ for each $u_i,u_j$. We now construct $G^{u_0}$ and $H^{u_0}$ by attaching to each connecting vertex a copy of either $A$ or $B$, by adding edges from the connecting vertex to all vertices in the copy of $A$ or $B$.
For $G^{u_0}$ divide the clique vertices into  $(u_0,u_1,u_2)$ and $(u_3,u_4,u_5)$, and add $A$ to $v_{ij}$ if $u_i,u_j$ are in the same triple, and $B$ otherwise. For $H^{u_0}$ add $A$ to $v_{ij}$ if $j=i+1$ or $i=0,j=5$, and $B$ otherwise. We show that homomorphism counts from $\calK$ to vertices in $V(S)$ are equal for $G$ and $H$. More precisely:
\begin{itemize}
    \item[(\(*\))] Let $s_1,s_2 \in V(S)$ be both clique vertices, both leaf vertices, or both connecting vertices, and let $F^\bullet_i \in \calK$. Then:
    \begin{align*}
        |\Hom(F^\bullet_i,G^{s_1})| = |\Hom(F^\bullet_i,H^{s_2})|.
    \end{align*}
\end{itemize}
For some pointed subgraph $F^\bullet_S$ of $F^\bullet_i$ and some $s \in V(S)$ let $h_S \in \Hom(F^\bullet_S,S^s)$. Let $C_1,\dots C_m$ be the connected components in $V(F^\bullet_S)\setminus V(F^\bullet_i)$. Suppose $h_S$ is extended to a $h \in \Hom(F^\bullet_i|(V(F^\bullet_S) \cup V(C_j)),G^s)$ for some $j \leq m$ such that the vertices in $C_j$ are mapped outside of $S^s$. 
Since $C_j$ is a connected component it must be mapped homomorphically by $h$ to a copy of either $A$ or $B$. If $C_j$ is not connected to $F_S$, every such map is an extension of $h$. Otherwise, if $C_j$ is connected to vertices in $F_S$, they must all be mapped by $h_S$ to a single vertex $s_j$ in $S$. Then $h$ is a homomorphism extending $h_S$ if and only if it homomorphically maps $C_j$ onto a single copy of $A$ or $B$, depending on whether $s_j$ is attached to $A$ or $B$. The total number of such homomorphisms $h$ extending $h_S$ is then:
\begin{align*}
 &\prod\llbrace 6\cdot |\Hom(C_j,A)| + 9\cdot |\Hom(C_j,B)| \,\,\mid\,\, j = 1, \dots, m \text{ and }C_j \text{ is not connected to } F_S\rrbrace\\  
 &\qquad \cdot \prod\llbrace|\Hom(C_j, A)| \,\,\mid\,\, j = 1, \dots, m \text{ and }C_j \text{ is connected to } F_S \text{ and } s_j \text{ is attached to $A$}\rrbrace\\
 &\qquad \cdot \prod\llbrace|\Hom(C_j, B)| \,\,\mid\,\, j = 1, \dots, m \text{ and }C_j \text{ is connected to } F_S \text{ and } s_j \text{ is attached to $B$}\rrbrace
\end{align*}
Since $|\Hom(C_j, A)| = |\Hom(C_j, B)|$ for each induced subgraph $C_j$ of $F^\bullet_i$,
the number of extensions of $h_S$ to a homomorphism $h$ from $F^\bullet_i$ to $G^{s}$ only depends on the homomorphisms onto $S^{s}$. The same exact argument applies for homomorphisms to $H^s$. 
Since permutations between pairs of clique vertices, connecting vertices or leaf vertices are automorphisms on $S$ this proves (\(*\)). Analogous reasoning shows:
\begin{itemize}
    \item[(\(**\))] Let $c \in V(A) \cup V(B)$ and let $F^\bullet_i \in \calK$. Then:
    \begin{align*}
    |\Hom(F^\bullet_i,G^{c})| = |\Hom(F^\bullet_i,H^{c})|
    \end{align*}
\end{itemize}
where now if $c \in V(A)$, instead of counting the number of extensions $h \in \Hom(F^\bullet_i, G^\bullet)$ of each homomorphism $h_S \in \Hom(F^\bullet_S, S^\bullet)$ to a homomorphism, we now count the number of extensions $h \in \Hom(F^\bullet_i, G^\bullet)$ of each homomorphism  $h_{S\cup A}$ from a subgraph $F^\bullet_{S \cup A}$ of $F^\bullet_i$ onto $S$ with a single copy of $A^c$ attached, and analogously if $c \in V(B)$.

We now argue that from ($*$) and ($**$) the theorem follows. Firstly, $G^{u_0} \in P_{\sun}$ since $u_0$ is on a $6$ cycle of vertices with degree $1$ neighbors in $G$: $u_0, v_{01}, u_1, v_{12}, u_2, v_{02}, u_0$, since $v_{01}, v_{12}, v_{02}$ are all connected to a copy of $A$, which has an isolated vertex. Further, $H^{u_0} \not\in P_{\sun}$, since every $6$-cycle from $u_0$ with $6$ distinct vertices passes a connecting vertex attached to $B$.
Let $\calN$ be a GNN with homomorphism counts from $\calK$. 
Suppose $s_1 \in V(G), s_2 \in V(H)$ are both clique vertices, leaf vertices, connecting vertices both attached to $A$ or both attached to $B$, or the same vertices in two copies of either $A$ or $B$. Then by ($*$) and ($**$), the first layer of $\calN$ assigns the same embedding to $s_1$ and $s_2$. 
Let $\lambda_G,\lambda_H$ be the embeddings generated by the first layer of $\calN$, then if $T^r$ is a depth $0$ tree, $|\Hom((T^r,\lambda), (G^{s_1},\lambda_G))| = |\Hom((T^r,\lambda), (G^{s_2},\lambda_H))|$. Now suppose this holds for trees of depth $d\geq 0$ and let $(T^r,\lambda)$ be a embedded tree of depth $d+1$ with subtrees from children $T^{c_1}_1, \dots, T^{c_m}_m$. Then since $s_1,s_2$ have the same number of neighbors that are clique vertices, leaf vertices, connecting vertices attached to $A$ or connecting vertices attached to $B$, and since they are connected to the same vertices in $A$ and $B$:
\begin{align*}
    |\Hom((T^r,\lambda), (G^{s_1},\lambda_G))| &= \prod_{i=1}^m \left( \sum_{s \in \calN(s_1)} |\Hom((T^{c_i}_{i}|(V(T)\setminus \{r\}),\lambda), (G^s, \lambda_G))|\right)\\
    &= \prod_{i=1}^m \left(\sum_{s \in \calN(s_2)} |\Hom((T^{c_i}_{i}|(V(T)\setminus \{r\}),\lambda), (H^s, \lambda_H))|\right)\\
    &= |\Hom((T^r,\lambda), (H^{s_2},\lambda_H))|
\end{align*}
Where the second equality holds since $(G^s,\lambda_G)$ and $(H^s,\lambda_H)$ are indistinguishable by trees of depth $d$. By induction $(G^{s_1},\lambda_G)$ and $(H^{s_2},\lambda_H)$ are indistinguishable by homomorphism counts from embedded trees. Using Theorem \ref{thm:GNN_hom_counts} then for every GNN $\calN$ with homomorphism counts from $\calK$, $\calN(G^{u_0},\lambda_0) = \calN(H^{u_0},\lambda_0)$, where $\lambda_0$ is the constant embedding onto $(0)$.
\end{proof}

Theorem \ref{thm:GNN_does_not_express_phisun} gives an example of a value property 
on undirected unembedded graphs that is expressed by a connected $\sumagg$ DHN, but not by a GNN with any aggregation and combination function, global readout and homomorphism count features. This proves Theorem \ref{thm:sun-expressivity}. As an aside, we note that, although in this paper we study expressive power only at the level of node classifiers, the same argument also shows that
the graph property $\exists x_1\varphi_{\sun}(x_1)$ is not expressible by a GNN with global readout and homomorphism count features, while it is expressible by a $\sumagg$-DHN.

\section{Proofs for Section~\ref{sect:HMLC}}

An HML+C formula or term is \emph{arithmetical} if it does not use individual variables. Arithmetical HML+C-formulas and terms
coincide with arithmetical FO+C-formulas and terms as used in \cite{DBLP:journals/theoretics/Grohe24}. This is a main reason why the technical development from \cite{DBLP:journals/theoretics/Grohe24} carries over. 

We recall that, in this section, we work with activation functions that are Lipschitz continuous and piecewise linear, with all relevant parameters represented by dyadic rationals. For the sake of completeness, we make this precise.

A function $f : \mathbb{R}^d \to \mathbb{R}^{d'}$ is \emph{Lipschitz continuous} if there is some constant $\lambda$, called a \emph{Lipschitz constant} for $f$, such that for all $x,y \in \mathbb{R}^d$ it holds that\[\|f(x)-f(y)\|_{\infty} \leq \lambda \|x-y\|_{\infty}.\]A function $L : \mathbb{R} \to \mathbb{R}$ is \emph{piecewise linear} if there are $n \in \mathbb{N}$, $a_0,\ldots,a_n,b_0,\ldots,b_n,t_1,\ldots,t_n \in \mathbb{R}$ such that $t_1 < t_2 < \cdots < t_n$ and\[L(x) =\begin{cases}a_0x+b_0, & \text{if } x < t_1,\\a_ix+b_i, & \text{if } t_i \leq x < t_{i+1} \text{ for some } i < n,\\a_nx+b_n, & \text{if } x \geq t_n,\end{cases}\]if $n \geq 1$, or $L(x)=a_0x+b_0$ for all $x$ if $n=0$. We assume that  representations of a piecewise linear function $L$ always use a minimal number $n+1$ of pieces. We call $t_1,\ldots,t_n$ the \emph{thresholds} of $L$; $a_1,\dots,a_n$ the \emph{slopes},
and $b_1,\dots,b_n$ the \emph{intercepts}. $L$ is \emph{dyadic rational} if all its thresholds, slopes, and intercepts are dyadic rationals. To ensure that
$L$ is continuous, we require that
$a_{i-1}t_i+b_{i-1}=a_it_i+b_i$ for $1 < i \leq n$.
Observe that if $L$ is continuous then it is Lipschitz continuous with Lipschitz constant $\max_{0 \leq i \leq n} a_i$.

\subsection{Dyadic Rationals}

We first describe how we represent dyadic rationals in HML+C.
The same representation is used in \cite{DBLP:journals/theoretics/Grohe24}.

For \(r \in \{0,1\}\), a finite set \(I \subseteq \mathbb N\), and \(s,t \in \mathbb N\), let
\[
\llangle r,I,s,t\rrangle
:=
(-1)^r \cdot 2^{-s} \cdot \sum_{i \in I,\ i<t} 2^i.
\]
Intuitively, \(r\) is the sign bit, \(I\) specifies the binary support of the
numerator, \(s\) is the denominator exponent, and \(t\) is a bound on the
relevant bit positions.
Every dyadic rational \(a\)  has a unique \emph{canonical
representation}
\(
\operatorname{crep}(a)=(r,I,s,t)
\)
such that
\begin{enumerate}
\item \(a=\langle r,I,s,t\rangle\),
\item \(I\subseteq \{0,\dots,t-1\}\),
\item \(s=0\) or \(0\in I\),
\item \(a=0\) implies \(t=0\), and \(a\neq 0\) implies \(t-1\in I\).
\end{enumerate}
In particular,
\(
\operatorname{crep}(0)=(0,\emptyset,0,0).
\)

HML+C is a two-typed logic, and we  introduce  notation for dealing with that. Each individual (or value) variable has type $\mathsf v$, and each number variable has type $\mathsf n$. A $k$-tuple
$(z_1,\dots,z_k)$ of variables has  type $(t_1,\dots,t_k) \in \{ \mathsf v, \mathsf n \}^k$, where each $t_i$ is the type of $z_i$.
For a database $D$ and a tuple $\bar t =
(t_1,\dots,t_k) \in \{ \mathsf v, \mathsf n \}^k$, $\mn{adom}(D)^{\bar t}$ denotes the set of all
$|\bar t|$-tuples $\bar c = (c_1,\dots,c_k)$ with elements from $\mn{adom}(D) \cup \mathbb{N}$ such that 
$c_i \in \mn{adom}(D)$ if $t_i = \mathsf v$
and $c_i \in \mathbb{N}$ if $t_i = \mathsf n$.

Following \cite{DBLP:journals/theoretics/Grohe24},  we extend HML+C with \emph{relation variables} of type \(\bar t\in\{\mathsf v,\mathsf n\}^k \), denoted by letters $X,Y$,
and \emph{function variables} of type \(
\bar t \to \mathsf n
\), denoted by letters $U,V$. The former can be used in place of relation symbols in HML+C formulas, and the latter can be used in the construction of HML+C terms (which are always numerical). Relation and function variables will always be substituted away in the 
formulas that we ultimately construct. 
We shall not use second-order features, 
that is, we never quantify over relation or function variables.  Let $D$ be a database. In the presence of function and relation variables, an \emph{assignment}  over $D$ is a mapping $\alpha$ that maps every individual variable to an element of
$\mn{adom}(D)$, every number variable to an element of $\mathbb{N}$, every relation variable of type \(\bar t\in\{\mathsf v,\mathsf n\}^k \) to a relation over
$\mn{adom}(D)^{\bar t}$, and every
function variable of type \(
\bar t \to t
\) to a function from $\mn{adom}(D)^{\bar t}$ to $\mathbb{N}$. The semantics of HML+C formulas with relation and function variables is then as expected.
For a database $D$,
an assignment $\alpha$ over $D$, and an HML+C-term $t$, possibly using
relation and function variables, we write
$\llbracket t \rrbracket^{(D,\alpha)}$
to denote the element of
$\mn{adom}(D)$ that $t$ evaluates to under $\alpha$.



An \emph{r-schema of type} \(\bar t \to \mathsf r\) is a tuple
\(
Z
=
\bigl(
Z_{\mathsf{sg}},
Z_{\mathsf{Ind}},
Z_{\mathsf{dn}},
Z_{\mathsf{bd}}
\bigr),
\)
where $Z_{\mathsf{sg}}$ is a relation 
variable of type $\bar t$, $Z_{\mathsf{Ind}}$ is a relation variable of type $\mathsf n \bar t$, and $Z_{\mathsf{dn}}(\bar z), Z_{\mathsf{bd}}$ are function
variables of type $\bar t \rightarrow \mathsf n$.\footnote{The range $\mathsf r$ in  \(\bar t \to \mathsf r\) is supposed to indicate the dyadic rationals.} For a database $D$, an
assignment $\alpha$  over $D$, and a tuple $\bar c \in \mn{adom}(D)^{\bar t}$, we let
\[
r_{Z}^{D,\alpha}(\bar c)
:=
\begin{cases}
1 & \text{if } \bar c\in \alpha(Z_{\mathsf{sg}}),\\
0 & \text{otherwise,}
\end{cases}
\]
and
\[
\langle\!\langle Z \rangle\!\rangle^{(D,\alpha)}
=
(-1)^{r_{Z}^{D,\alpha}}
\cdot
2^{\alpha(Z_{\mathsf{dn}})(\bar c)}
\cdot
\sum_{\substack{
(i,\bar c) \in \alpha(Z_{\mathsf{Ind}})
\\
i<\alpha(Z_{\mathsf{bd}})
}}
2^i.
\]
Note that with \(I = \{i \in \mathbb{N} \mid (i,\bar{c}) \in \alpha(Z_{\mathsf{Ind}})\}\), \(s = \alpha(Z_{\mathsf{dn}})(\bar{c})\), and \(t = \alpha(Z_{\mathsf{bd}})(\bar{c})\), we obtain \(\llangle Z\rrangle^{(A,\alpha)}(\bar{c}) = \llangle r, I, s, t\rrangle\).

We now define r-expressions, which are similar to r-schemas except that relation and function variables are replaced by formulas and terms.
An \emph{r-expression} \(\rho(\bar z)\) of type \(\bar t \to \mathsf r\) is a tuple
\[
\rho(\bar z)
=
\bigl(
\rho_{\mathsf{sg}}(\bar z),
\rho_{\mathsf{Ind}}(i,\bar z),
\rho_{\mathsf{dn}}(\bar z),
\rho_{\mathsf{bd}}(\bar z)
\bigr),
\]
where
\(\rho_{\mathsf{sg}}(\bar z)\) is an HML\(+\)C formula,
 \(\rho_{\mathsf{Ind}}(i,\bar z)\) is an HML\(+\)C formula with one
additional free number variable \(i\),
and \(\rho_{\mathsf{dn}}(\bar z)\) and \(\rho_{\mathsf{bd}}(\bar z)\) are
HML\(+\)C terms.
Here \(\bar z\) is a tuple of variables of type~\(\bar t\). This follows the definitions in \cite{DBLP:journals/theoretics/Grohe24}, but with FO\(+\)C formulas and terms
replaced by HML\(+\)C formulas and terms.
An r-expression is \emph{arithmetical}
if it all formulas and terms in it are arithmetical.
For a database $D$, an
assignment $\alpha$  over $D$, and a tuple $\bar c \in \mn{adom}(D)^{\bar t}$, we let
\[
r_\rho^{D,\alpha}
:=
\begin{cases}
1 & \text{if } D \models \rho_{\mathsf{sg}}[\bar c],\\
0 & \text{otherwise,}
\end{cases}
\qquad\qquad
I_\rho^{D,\alpha}
:=
\left\{
i\in\mathbb N :
D \models \rho_{\mathsf{Ind}}[i,\bar c]
\right\}.
\]
and
\[
\langle\!\langle \rho \rangle \! \rangle^{(D,\alpha)}
:=
\left\langle\!\!\left\langle
r_\rho^{D,\alpha},
I_\rho^{D,\alpha},
\llbracket \rho_{\mathsf{dn}}\rrbracket^{(D,\alpha)},
\llbracket \rho_{\mathsf{bd}}\rrbracket^{(D,\alpha)}
\right\rangle\!\!\right\rangle.
\]
%

In Section~3 of \cite{DBLP:journals/theoretics/Grohe24}, Grohe develops a
machinery for performing 
arithmetic on dyadic rationals,
represented by r-schemas,
inside FO+C.  He then adapts this
machinery to GFO+C, the guarded fragment of FO with counting. The latter is relatively straightforward, mainly because most lemmas are purely arithmetical, and arithmetical FO+C formulas coincide with arithmetical GFO+C formulas, so these lemmas do
not need to be modified. Here, we adapt his machinery to HML+C,
exploiting that arithmetical FO+C formulas also coincide with
arithmetical HML+C formulas. The adaptation exactly parallels that for
GFO+C in \cite{DBLP:journals/theoretics/Grohe24}. We start with
stating two purely
arithmetical lemmas from~\cite{DBLP:journals/theoretics/Grohe24} that we are going to use as is.
\begin{lemma}[\cite{DBLP:journals/theoretics/Grohe24}, Lemma~3.18]
    \label{lem:craplemma}
    Let $Z$ be an r-schema of type $\mathsf{v} \to \mathsf{r}$ Then there is an arithmetical r-expression $\mathsf{crep}$ such that
    for all databases $D$, assignments $\alpha$ over $D$ and $a \in \mn{adom}(D)$, $\mathsf{crep}(a)$ defines the canonical representation of $\langle\!\langle Z \rangle\!\rangle^{(D,\alpha)}(a)$ in $(D,\alpha)$.
\end{lemma}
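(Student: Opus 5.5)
The plan is to construct the four components of $\mathsf{crep}$ directly from the components of $Z$, using only number quantification. Number quantification is available in HML+C through counting terms: $\exists j<t\,\chi$ can be written as $\#(m<t).\chi \geq 1$, and bounded universal quantification is obtained by negation.

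Fix $a$ and write $r=r_Z(a)$, $I=\{i \mid (i,a)\in Z_{\mathsf{Ind}}\}$, $s=Z_{\mathsf{dn}}(a)$ and $t=Z_{\mathsf{bd}}(a)$. We first pass to the effective support $I'=\{i\in I\mid i<t\}$, which is defined by the formula $Z_{\mathsf{Ind}}(i,a)\wedge i<t$. The value $\langle\!\langle Z\rangle\!\rangle(a)$ is zero exactly when $I'=\emptyset$. This is expressed by the arithmetical formula $\zeta := \neg\,\exists i<t\,(Z_{\mathsf{Ind}}(i,a))$, again via counting terms.

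Next we define, as terms, the minimum $m$ and the maximum $M$ of $I'$. We set
\[
m := \#(j<t).\neg\exists i<j+1\,(Z_{\mathsf{Ind}}(i,a)), \qquad M+1 := t - \#(j<t).\neg\exists i<t\,(j\leq i \wedge Z_{\mathsf{Ind}}(i,a)),
\]
where subtraction of a term that is known to be at most $t$ is itself realised by a counting term over $j<t$. The canonical representation cancels common factors of two, so let $k:=\min(m,s)$. This minimum is definable by a case split on $m\leq s$, since comparisons are formulas and terms can be multiplied by $0/1$-valued indicator terms.

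The canonical representation in the nonzero case is then
\[
\bigl(r,\ \{i-k\mid i\in I'\},\ s-k,\ M-k+1\bigr).
\]
Concretely, the four components are defined as follows.
\begin{enumerate}
\item $\mathsf{crep}_{\mathsf{sg}} := Z_{\mathsf{sg}}(a)\wedge\neg\zeta$.
\item $\mathsf{crep}_{\mathsf{Ind}}(i) := \neg\zeta\wedge Z_{\mathsf{Ind}}(i+k,a)\wedge i+k<t$.
\item $\mathsf{crep}_{\mathsf{dn}} := \#(j<s).(k\leq j)$, multiplied by the indicator of $\neg\zeta$.
\item $\mathsf{crep}_{\mathsf{bd}} := (M+1)-k$, likewise multiplied by the indicator of $\neg\zeta$.
\end{enumerate}
In the zero case these choices yield $(0,\emptyset,0,0)$, as required.

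Checking the four conditions of canonicity is routine. Condition 1 holds because removing the factor $2^{k}$ from both the numerator and $2^{s}$ leaves the value unchanged. Condition 2 holds because every element of the shifted support is at most $M-k$. Condition 3 follows from $k=\min(m,s)$: either the new denominator exponent is $s-k=0$, or $k=m$ and then $0$ lies in the shifted support. Condition 4 holds because $M-k$ belongs to the shifted support.

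All formulas and terms produced this way mention no individual variables other than the argument $a$ handed to $Z$. Hence they are arithmetical in the sense of the paper; they coincide with the FO+C expressions of \cite{DBLP:journals/theoretics/Grohe24}.

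The main obstacle is the bookkeeping needed to express $\min$, $\max$ and truncated subtraction as genuine HML+C terms. HML+C only has $+$, $\cdot$ and counting terms, and no $\min$ or $-$. The step I would focus on is writing every such operation as a count over a bounded number range, $\#(j<\cdot)$, together with $0/1$ indicator terms for the case splits. The remaining work is to check that every bounding term uses only earlier number variables, as the syntax of counting terms requires.
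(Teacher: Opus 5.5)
Your construction is correct, but it takes a different route from the paper. The paper gives no argument of its own here. It imports Grohe's Lemma~3.18, which is stated for r-schemas of type $\emptyset\to\mathsf r$, and transfers it to type $\mathsf v\to\mathsf r$ by the syntactic substitution $Z_{\mathsf{sg}}()\mapsto Z_{\mathsf{sg}}(x_v)$, $Z_{\mathsf{Ind}}(t)\mapsto Z_{\mathsf{Ind}}(t,x_v)$, and so on. You instead reprove the lemma from scratch, directly at type $\mathsf v\to\mathsf r$: you trim to the effective support, locate $\min I'$ and $\max I'$ by bounded counts, cancel $2^{\min(m,s)}$, and mask by $\neg\zeta$. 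Your check of the four canonicity conditions is right, including that the zero case yields $(0,\emptyset,0,0)$.

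The citation route is shorter and treats this lemma uniformly with the other imported arithmetic (Lemmas~\ref{lem:grohearithmetic} and~\ref{cor:grohe325}). Your route is self-contained and shows why the transfer is harmless: the individual variable only ever occurs as the fixed argument of the $Z$-atoms, and all quantification ranges over bounded number intervals. So ``arithmetical'' is to be read, as in the paper, modulo that substitution.

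Two cosmetic remarks. First, an atom with a compound term such as $Z_{\mathsf{Ind}}(i+k,x)$ is not literally in the syntax and should be unfolded, e.g.\ to $1\leq\#(j<t).(Z_{\mathsf{Ind}}(j,x)\wedge j=i+k)$. Second, the bookkeeping you flag as the main obstacle is lighter than you expect. Both $\min(m,s)=\#(j<s).(j<m)$ and $M+1=\#(j<t).\exists i<t\,(j\leq i\wedge Z_{\mathsf{Ind}}(i,x))$ are single counting terms, so you need neither a case split nor a subtraction.
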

Grohe actually states his Lemma 3.18 for an r-schema $Z$ of type
$\emptyset \to \mathsf{r}$, but he freely applies it also to r-schemas
of type $\mathsf v \to \mathsf r$. 
This should be
understood
as follows. We take a fresh r-schema $Y$ of type $\emptyset \to
\mathsf r$ and apply  Lemma~\ref{lem:craplemma} to $Y$.
We then substitute in the resulting r-expression $Y$ with $Z$. This requires us to insert
an individual variable $x_v$: if $Y = (
Y_{\mathsf{sg}},
Y_{\mathsf{Ind}},
Y_{\mathsf{dn}},
Y_{\mathsf{bd}}
)$ and likewise for $Z$, then any atom
$Y_{\mathsf{sg}}()$
would be replaced
with $Z_{\mathsf{sg}}(x_v)$,
any atom $Y_{\mathsf{Ind}}(t)$
with $Y_{\mathsf{Ind}}(t,x_v)$, any constant $Y_{\mathsf{dn}}$ with the term $Z_{\mathsf{dn}}(x_v)$, and any constant $Y_{\mathsf{bd}}$ with the term $Z_{\mathsf{bd}}(x_v)$.  Using the same kind of substitution we may even apply
Lemma~\ref{lem:craplemma} to an r-expression rather than only to
an r-schema. Similar considerations apply to the lemmas below.
\begin{lemma}
[\cite{DBLP:journals/theoretics/Grohe24}, Lemma~3.19]
\label{lem:grohearithmetic}
Let $Z_1, Z_2$ be r-schemas of type $\mathsf v \to \mathsf r$.
    Then there is an arithmetical r-expression $\mathsf{add}$, $\mathsf{sub}$, and
    $\mathsf{mul}$ of type $\mathsf v \to \mathsf r$ such that for all
    databases $D$, assignments $\alpha$ over~$D$, and $v \in \mn{adom}(D)$,
    \[
     \begin{array}{r@{\;}c@{\;}l}
    \langle\!\langle \mathsf{add} \rangle\!\rangle^{(D,\alpha)}(v)
    &=& \langle\!\langle Z_1 \rangle\!\rangle^{(D,\alpha)}(v) + \langle\!\langle Z_2 \rangle\!\rangle^{(D,\alpha)}(v),
    \\[1mm]
    \langle\!\langle \mathsf{sub} \rangle\!\rangle^{(D,\alpha)}(v)
    &=& \langle\!\langle Z_1 \rangle\!\rangle^{(D,\alpha)}(v) - \langle\!\langle Z_2 \rangle\!\rangle^{(D,\alpha)}(v),
    \\[1mm]
    \langle\!\langle \mathsf{mul} \rangle\!\rangle^{(D,\alpha)}(v)
    &=& \langle\!\langle Z_1 \rangle\!\rangle^{(D,\alpha)}(v) \cdot \langle\!\langle Z_2 \rangle\!\rangle^{(D,\alpha)}(v).
    \end{array}
    \]

\end{lemma}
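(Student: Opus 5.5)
We would not redo Grohe's arithmetic inside HML+C. Instead we would obtain the three r-expressions by the substitution device described after Lemma~\ref{lem:craplemma}, applied to Grohe's original construction. First take fresh r-schemas $Y_1,Y_2$ of type $\emptyset\to\mathsf r$. Lemma~3.19 of \cite{DBLP:journals/theoretics/Grohe24} gives arithmetical FO+C r-expressions $\mathsf{add}'$, $\mathsf{sub}'$, $\mathsf{mul}'$ of type $\emptyset\to\mathsf r$ that compute $\llangle Y_1\rrangle\pm\llangle Y_2\rrangle$ and $\llangle Y_1\rrangle\cdot\llangle Y_2\rrangle$. Arithmetical FO+C formulas and terms coincide with arithmetical HML+C formulas and terms, so these are already HML+C r-expressions. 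The remaining work is to lift them from type $\emptyset\to\mathsf r$ to type $\mathsf v\to\mathsf r$.

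\emph{Step 1 (substitution).} Fix an individual variable $x_v$. In $\mathsf{add}'$, $\mathsf{sub}'$ and $\mathsf{mul}'$ we replace every atom $(Y_i)_{\mathsf{sg}}()$ by $(Z_i)_{\mathsf{sg}}(x_v)$ and every atom $(Y_i)_{\mathsf{Ind}}(t)$ by $(Z_i)_{\mathsf{Ind}}(t,x_v)$. Every occurrence of the nullary function variable $(Y_i)_{\mathsf{dn}}$ becomes the term $(Z_i)_{\mathsf{dn}}(x_v)$, and likewise for $(Y_i)_{\mathsf{bd}}$. The results are $\mathsf{add}$, $\mathsf{sub}$ and $\mathsf{mul}$.

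\emph{Step 2 (well-formedness).} We check that each result is an HML+C r-expression with $x_v$ as its only free individual variable. The original expressions have no individual variables. Every counting term in them therefore has the form $\#(m_1<t_1,\dots,m_k<t_k).\psi$ with an empty tuple $\bar y$ of individual variables. After substitution, the pattern formula $\psi$ contains relation-variable atoms and subformulas whose only individual variable is $x_v$. This is permitted, since relation variables are used in place of relation symbols and $x_v$ plays the role of the free variable $x$. The bounding terms may now mention $(Z_i)_{\mathsf{dn}}(x_v)$ and $(Z_i)_{\mathsf{bd}}(x_v)$. These are still allowed, because bounding terms may use $x$ and the previously bound number variables. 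Negation, disjunction and comparisons $t\le t'$ are unaffected.

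\emph{Step 3 (semantics).} Let $D$ be a database, $\alpha$ an assignment and $v\in\mn{adom}(D)$. Define $\alpha_v$ by letting $Y_i$ denote the slice of $Z_i$ at $v$. That is, $(Y_i)_{\mathsf{sg}}$ holds iff $v\in\alpha(Z_{i,\mathsf{sg}})$, $(Y_i)_{\mathsf{Ind}}=\{j\mid (j,v)\in\alpha(Z_{i,\mathsf{Ind}})\}$, and analogously for $\mathsf{dn}$ and $\mathsf{bd}$. By a straightforward simultaneous induction on formulas and terms, every substituted term or formula evaluated under $\alpha[x_v/v]$ has the same value as the original under $\alpha_v$. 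For counting terms this works because only number variables are bound and the bounding terms evaluate identically. Since $\llangle Y_i\rrangle^{(D,\alpha_v)}=\llangle Z_i\rrangle^{(D,\alpha)}(v)$ by definition, Grohe's lemma yields the three required identities.

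We expect Step~2 to be the only real obstacle. One has to confirm that Grohe's purely arithmetical construction never relies on a syntactic feature that stops being HML-admissible once an individual variable is present. The concern would be a counting term whose pattern formula mixes $x_v$ with quantified individual variables. This cannot happen, because the original expressions contain no individual variables at all, so $x_v$ is the only individual variable that ever appears.
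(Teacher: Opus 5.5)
Your proposal is correct and follows the paper's route. The paper also uses Grohe's Lemma~3.19 as is, relying on the fact that arithmetical FO+C and HML+C expressions coincide, and lifts it from type $\emptyset\to\mathsf r$ to type $\mathsf v\to\mathsf r$ with the substitution device described after Lemma~\ref{lem:craplemma} (``similar considerations apply to the lemmas below''). Your Steps~2 and~3 only make explicit the well-formedness and substitution checks that the paper leaves implicit. One small remark: after substitution the expressions contain the individual variable $x_v$. They are therefore ``arithmetical'' only in the loose sense, which the paper also uses, that $x_v$ occurs solely as an argument of the schema variables.
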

We now state the first lemma that needs adaptation, the HML\(+\)C analogue of
Grohe's Lemmas~3.15 for FO+C and 3.32 
for GFO+C. The `u' in term names stand for unary representation,
referring to the fact that non-negative integers are represented here
directly by a single number term.
%
\begin{lemma}
\label{lem:wrootedcounting}
%
%
Let $X$ be a relation variable of type $\mathsf v^k \mathsf n^\ell$
and let $U,V$  be function variables of type $\mathsf v^kn^\ell \to \mathsf n$ and $v^k \to \mathsf n$, respectively. Further let $F^{v_1}$ be a pointed database with $\mn{adom}(F)=\{v_1,\dots,v_k\}$.
Then 
\begin{enumerate}

\item there is a HML+C-term $\mathsf{u\text{-}itadd}(x)$  such that for all databases $D$,
assignments $\alpha$ over $D$, and $a \in \mn{adom}(D)$, 
\[
\llbracket \mathsf{u\text{-}itadd} \rrbracket^{(D,\alpha)}(a)
=
\sum_{\bar a\bar b}
\alpha(U)(\bar a\bar b)
,
\]
where the sum ranges over all $\bar a \bar b \in \mn{adom}(D)^k \times \mathbb{N}^\ell$ such that the following conditions are satisfied:
\begin{enumerate}

    \item $\bar a\bar b \in \alpha(X)$;

    \item there is a homomorphism $h$ from $F$ to $D$ with $h(v_1)=a$
and $h(v_1,\dots,v_k)=\bar a$;

    \item $\bar b =(b_1,\dots,b_\ell) \in \mathbb{N}^\ell$ with 
    $b_i < \alpha(V)(\bar a)$ for all $i$;
    
\end{enumerate}

\item there are HML+C-terms $\mathsf{u\text{-}max}(x)$ and
$\mathsf{u\text{-}min}(x)$ such
that for all databases $D$, assignments
$\alpha$ over $D$, and $a \in \mn{adom}(D)$, 
\[
\llbracket \mathsf{u\text{-}max} \rrbracket^{(D,\alpha)}(a)
=
\max_{\bar a\bar b}
\alpha(U)(\bar a)
\quad \text{ and } \quad
\llbracket \mathsf{u\text{-}min} \rrbracket^{(D,\alpha)}(a)
=
\min_{\bar a\bar b}
\alpha(U)(\bar a)
\]
where max and min range over all $\bar a \bar b \in \mn{adom}(D)^k \times \mathbb{N}^\ell$ that satisfy Conditions~(a)-(c) from Point~1.
\end{enumerate}
\end{lemma}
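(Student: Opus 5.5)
The plan is to write both terms directly with the counting operator of HML+C. The guarding by the pattern $F^{v_1}$ is what keeps them inside HML+C rather than merely FO+C. Throughout, let $\psi_F(x,y_2,\dots,y_k)$ be the pattern formula that contains, for every fact $R(v_{i_1},\dots,v_{i_m})\in F$, the atom $R(z_{i_1},\dots,z_{i_m})$, where $z_1:=x$ and $z_i:=y_i$ for $i\geq 2$. Then tuples $\bar a=(a,a_2,\dots,a_k)$ satisfying $\psi_F$ correspond exactly to homomorphisms $h$ from $F$ to $D$ with $h(v_1)=a$, which is Condition~(b). Conditions~(a) and~(c) will be enforced by the remaining conjuncts and by the bounding terms.

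For Point~1, I would first replace the weighted sum by a plain count. A sum of $U(\bar a\bar b)$ over admissible $\bar a\bar b$ equals the number of tuples $\bar a\bar b m$ with $m<U(\bar a\bar b)$. So I set
\[
\mathsf{u\text{-}itadd}(x) := \#(y_2\cdots y_k,\ n_1<V(x,\bar y),\dots,n_\ell<V(x,\bar y),\ m<U(x,\bar y,\bar n)).\,\big(\psi_F\wedge X(x,\bar y,\bar n)\big).
\]
The bounding terms are legitimate because $t_i$ may use $x$, $\bar y$ and earlier bounded number variables. The atom $X(x,\bar y,\bar n)$ is a relation variable, treated like a relational atom in the pattern formula, in the same way Grohe treats relation variables in GFO+C. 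Correctness follows directly from the semantics of counting terms: for fixed $\bar a\bar b$ the number of admissible $m$ is $U(\bar a\bar b)$ when (a)--(c) hold and $0$ otherwise. Here (c) comes from the bounds on the $n_i$, and (b) and (a) come from the conjuncts. One side issue is whether a relation variable of mixed type $\mathsf v^k\mathsf n^\ell$ is allowed inside a pattern formula. If it is not, I would rewrite it as a unary-style formula guarded by the same pattern. The cleanest route is to say that relation variables are substituted away later and behave like atoms, following the GFO+C adaptation (Lemma~3.32 of Grohe).

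For Point~2, I would use the fact that maxima and minima over a finite set of naturals are definable from counting. The term $\mathsf{u\text{-}max}(x)$ should equal the least $c$ such that no admissible tuple has $U$-value at least $c+1$. Equivalently, it is the number of $c$ below some admissible value:
\[
\mathsf{u\text{-}max}(x) := \#(m<\mathsf{B}(x)).\,\big(\#(\bar y,\bar n<\dots).(\psi_F\wedge X\wedge m<U)\geq 1\big),
\]
where $\mathsf B(x)$ is a bound on all relevant values, for instance $\mathsf{u\text{-}itadd}(x)$ itself, since each admissible $U$-value is at most the sum. The inner counting term is a Point~1 style term whose pattern formula additionally contains the comparison $m<U(x,\bar y,\bar n)$. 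That comparison is an HML+C formula with one free individual variable only if it is viewed as attached to $x$. This is the main obstacle: the comparison mentions all of $\bar y$, so it is not a unary formula. To avoid it, I would again move the comparison into a bounding term. The number of tuples $\bar y\bar n m'$ with $m'<U$ and $m'=m$ is positive exactly when $m<U$ for some admissible tuple. I would realize this as a count with last bound $m'<U(x,\bar y,\bar n)$ and an additional bound $m''<\dots$, and extract the equality $m'=m$ arithmetically through $\#(\dots,m'<\min(U,m+1))$ versus $\#(\dots,m'<\min(U,m))$. Here $\min$ of two number terms is arithmetical and hence available. The difference of the two counts is positive iff some admissible $U$ exceeds $m$, and the test is the arithmetical formula $t_1\leq t_2$ negated.

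For $\mathsf{u\text{-}min}$, I would count the $m<\mathsf B$ such that every admissible tuple has $U>m$, that is, such that the count with bound $m'<\min(U,m+1)$ equals the count of all admissible tuples. Again this uses only Point~1 style terms and arithmetic comparisons. Empty ranges yield $0$, matching the conventions.
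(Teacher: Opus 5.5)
Your Point~1 term is the paper's term verbatim (the paper writes it for $\ell=1$). For Point~2 you follow the paper's scheme: count the $m<\mathsf{u\text{-}itadd}(x)$ for which some (resp.\ every) admissible tuple has $U>m$.

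Where you differ is in how the test $m<U$ is expressed. The paper writes it as $(\exists x_2\cdots\exists x_k\,\exists y<V(\bar x))(X(\bar x,y)\wedge z<U(\bar x,y))$ and calls the existential block an abbreviation for a count $1\le\#(\ldots).(\psi_F\wedge X)$. It does not address that $z<U(\bar x,y)$ has several free individual variables and is therefore not a legal pattern conjunct. You correctly spot this and move the comparison into the bounds, which is a genuine gain in rigour.

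Your fix is heavier than needed, though, and it relies on $\min(U,m+1)$ being an admissible bounding term, which is not evident. That term is not arithmetical, since it contains $U(x,\bar y,\bar n)$. Defining it by a nested count would put the outer $\bar y$ into an inner bound or body. It suffices to keep the bound $m'<U(x,\bar y,\bar n)$ and add the conjunct $m'=m$ (or $m'\le m$ to obtain your $C_1(m)$). Pattern conjuncts may mention the counted number variables as long as they have at most one free individual variable, and this one has none. Then $\#(\bar y,\bar n<V,m'<U).(\psi_F\wedge X\wedge m'=m)$ is exactly the number of admissible tuples with $U>m$.

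There is one slip in $\mathsf{u\text{-}min}$. As written, you test whether the count with bound $m'<\min(U,m+1)$ equals the number $N$ of admissible tuples. That count is $\sum\min(U,m+1)$, which equals $(m+1)N$, not $N$, exactly when every admissible $U$ exceeds $m$. So the test is wrong for $m\ge 1$. Compare instead the number of admissible tuples with $U>m$ (your difference $C_1(m)-C_0(m)$, or the equality count above) with $N=\#(\bar y,\bar n<V).(\psi_F\wedge X)$. Alternatively, compare $C_1(m)$ with $(m+1)\cdot N$.
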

\begin{proof}
 To simplify notation, in the proof we assume that $\ell=1$. The generalization
 to arbitrary $\ell$ is straightforward.

For Point~1, let $\bar x =x x_2 \cdots x_k$ and set 
\[
\operatorname{u-itadd}(x)
:=
\#(x_2,\dots,x_k, y < V(\bar x), z < U(\bar x,y))
.
(\psi_F(\bar x) \wedge X(\bar x,y))
\]
where $\psi_F$ is $F$ viewed as a set of atoms  with the vertices $v_1,\dots,v_k$ represented by the
variables $x,x_2,\dots,x_k$.
This formula check conditions (a), (b), and (c) by including \(X(\bar x, y)\), \(\psi_F(\bar x)\), and \(y <V(\bar x)\).
The summation is then done with the adaptive counting bound \(z<U(\bar x, y)\).
Then
define
\[
\operatorname{u-max}(x) :=
\# (z < \operatorname{u-itadd}(x)).
(\exists x_2 \cdots \exists x_k \exists y < V(\bar x))(X(\bar x,y) \wedge z < U(\bar x,y)),
\]
%
where the existential quantifiers are an abbreviation for \[1\leq \#(x_1,\ldots, x_k, y<V(\bar x)).(\psi_F(\bar x)\land X(\bar x, y)).\]
It suffices to bound \(z\) by \(\operatorname{u-itadd}\), since it is an upper bound for \(\operatorname{u-max}\).
For \(\operatorname{u-min}\), the existential quantifier can be replaced with universal quantifiers:
\[
\operatorname{u-min}(x) :=
\# (z < \operatorname{u-itadd}(x)).
(\forall x_2 \cdots \forall x_k \forall y < V(\bar x))(X(\bar x,y) \rightarrow z < U(\bar x,y)).
\]

\end{proof}
We next replace the unary representation of non-negative integers by
binary representation. The subsequent lemma is the HML+C analogue of
Grohe's Lemma 3.16 for FO+C. He does not make explicit an analogue for
GFO+C, but it seems to be used in the (suppressed) proof of his Lemma
3.33.
It defines terms that represent the iterated addition of multiple nonnegative integers, where the number of summands is not fixed.
To formulate it, we need a representation of a non-negative integer in terms of an HML+C formula and an HML+C term. For an HML+C formula $\chi(n,\bar z)$ and an HML+C term $t(\bar z)$, we define, for
every database $D$, assignment $\alpha$ over $D$, and tuple
$\bar a\in \operatorname{adom}(D)^{\bar t}$,
\[
\llangle \chi,t\rrangle^{(D,\alpha)}(\bar a)
\;:=\;
\sum_{\substack{i<\llbracket t \rrbracket^{(D,\alpha)}(\bar a)\\ D\models \chi[i,\bar a]}} 2^i.
\]
\begin{lemma}
  \label{lem:itaddintegers}
    Let $X,Y$ be  relation variables of type $\mathsf v^k \mathsf n^\ell$
and $nv^kn^\ell$, and let $U,V$  be function variables of type $\mathsf v^k \mathsf n^\ell \to \mathsf n$ and $v^k \to \mathsf n$, respectively. Further let $F^{v_1}$ be a pointed database with $\mn{adom}(F)=\{v_1,\dots,v_k\}$.
Then there is an HML+C formula
$\mathsf{itadd}$ and an HML+C term $\mathsf{bd\text{-}itadd}(x)$  such that for all databases $D$,
assignments $\alpha$ over $D$, and $a \in \mn{adom}(D)$, 
\[
\langle\!\langle \mathsf{itadd},\mathsf{bd\text{-}itadd} \rangle\!\rangle^{(D,\alpha)}(a)
=
\sum_{\bar a\bar b}
\langle\!\langle Y,U \rangle\!\rangle
(\bar a\bar b)
,
\]
where the sum ranges over all $\bar a \bar b \in \mn{adom}(D)^k \times \mathbb{N}^\ell$ such that Conditions~(a)-(c)
from Lemma~\ref{lem:wrootedcounting} are satisfied.
\end{lemma}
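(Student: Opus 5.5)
We would prove the lemma by reducing the binary-represented iterated sum to a column-wise unary count, following Grohe's proof of his Lemma~3.16 for FO+C, but using the pattern-restricted counting term of Lemma~\ref{lem:wrootedcounting} in place of unrestricted counting. The idea is that the $j$-th bit of each summand $\langle\!\langle Y,U\rangle\!\rangle(\bar a\bar b)$ is the truth value of $Y(j,\bar a,\bar b)\wedge j<U(\bar a,\bar b)$. Hence the total sum equals $\sum_j c_j 2^j$, where $c_j$ is the number of tuples $\bar a\bar b$ that satisfy Conditions~(a)--(c) and have bit $j$ set.

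The first step is to express $c_j$ by a term with one free individual variable $x$ and a free number variable $j$. We apply Point~1 of Lemma~\ref{lem:wrootedcounting} with $X$ replaced by the formula $X(\bar x,\bar y)\wedge Y(j,\bar x,\bar y)\wedge j<U(\bar x,\bar y)$ and with summand $1$. Concretely, we take $\#(x_2,\dots,x_k,\bar y<V(\bar x)).(\psi_F(\bar x)\wedge X(\bar x,\bar y)\wedge Y(j,\bar x,\bar y)\wedge j<U(\bar x,\bar y))$. Since $\psi_F$ is a pattern formula and the extra conjuncts are permitted, this is a legal HML+C term. Next we bound the bit positions. Using Point~2 of Lemma~\ref{lem:wrootedcounting} we form $m(x)=\mathsf{u\text{-}max}$ of $U$, and the number $N(x)$ of summands as an $\mathsf{u\text{-}itadd}$ with summand $1$. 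We then set $\mathsf{bd\text{-}itadd}(x):=m(x)+N(x)+1$. This bound is valid because the sum is less than $N\cdot 2^{m}\le 2^{m+N}$, and $\langle\!\langle\cdot\rangle\!\rangle$ ignores bits at or above the bound, so overestimating is harmless.

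The remaining step, and the main obstacle, is to define the formula $\mathsf{itadd}(i,x)$ stating that bit $i$ of $\sum_j c_j(x)2^j$ is set. We plan to handle this without any individual variables, which is the point that makes the adaptation work. Once the column counts $c_j(x)$ are available as a term in $x$ and $j$, the bit extraction is a purely arithmetical statement over number variables, with $x$ occurring only as a parameter inside the substituted term. We therefore take Grohe's arithmetical FO+C formula for the binary digits of $\sum_{j<t}c_j2^j$, written with a function variable $C$ of type $\mathsf n\to\mathsf n$. By the remark at the start of this appendix, this is also an HML+C formula, and we substitute $C(j)$ by our counting term, in the same way substitution is used after Lemma~\ref{lem:craplemma}. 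Grohe realises this formula either through an iterated carry computation, expressible because all carries are bounded by $N$, or through bit-by-bit iterated addition using Lemma~\ref{lem:grohearithmetic}.

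The care needed is to check that substitution keeps every counting term well formed. The bounding terms may mention $x$, and pattern formulas must contain only atoms over $x,\bar y$ together with one-variable subformulas. We expect this to hold because the substituted term has $x$ as its only free individual variable, so it can be nested under the arithmetical counting quantifiers exactly like the function variable it replaces.
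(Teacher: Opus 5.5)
Your proposal is correct and follows essentially the paper's proof. Both use per-bit column counts via a pattern-restricted counting term (the paper's $s_i(x)$), take the bound $\mathsf{u\text{-}max}$ plus the number of summands, and then apply Grohe's arithmetical bit-extraction (from his Lemma~3.10) with $x$ occurring only as a parameter. Your extra conjunct $j<U(\bar x,\bar y)$ is needed in general: the paper's $s_i$ omits it, which is only right if $Y$ has no entries at positions $\ge U$. Since that conjunct has several free individual variables, it should be realised through a bounding term rather than as a conjunct of the pattern formula, e.g.\ an extra number variable $w<U(\bar x,\bar y)$ together with the conjunct $w=j$.
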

\begin{proof}
To simplify notation, we assume that $\ell=1$. The generalization to arbitrary $\ell$ is straightforward.


For $a\in \operatorname{adom}(D)$, let $I_a$ be the set of all tuples
$\bar a b \in \operatorname{adom}(D)^k\times \mathbb N$ such that Conditions~(a)--(c) from
Lemma~\ref{lem:wrootedcounting} are satisfied.
For  $\bar a b\in I_a$, for brevity let
\(
n_{\bar a b}:=\llangle Y,U\rrangle^{(D,\alpha)}(\bar a,b).
\)

Set
\[
m(x):=\#(x_2,\dots,x_k,y<V(\bar x)).(\psi_F(\bar x)\wedge X(\bar x,y)),
\]
where  $\bar x = x x_2 \cdots x_k$ and $\psi_F$ is $F$ viewed as a conjunction of atoms with the vertices
$v_1,\dots,v_k$ represented by the variables $x,x_2,\dots,x_k$. Then $m(a)=|I_a|$.
For the term 
 $u\text{-}\max(x)$ provided by Lemma~\ref{lem:wrootedcounting} for the variables $X,U,V$, we have

\[
u\text{-}\max(a)=\max_{\bar a b\in I_a}\alpha(U)(\bar a,b).
\]
Hence every $n_{\bar a b}$ is smaller than $2^{u\text{-}\max(a)}$, and therefore
\(
\sum_{\bar a b\in I_a} n_{\bar a b}<2^{u\text{-}\max(a)+m(a)}.
\)
Consequently we may set
\[
\mathsf{bd}\text{-}\mathsf{itadd}(x):=u\text{-}\max(ax)+m(x).
\]

Next, for every $i\in\mathbb N$, define the HML+C term
\[
s_i(x):=\#(x_2,\dots,x_k,y<V(\bar x)).
\bigl(\psi_F(\bar x)\wedge X(\bar x,y)\wedge Y(i,\bar x,y)\bigr).
\]
Then $s_i(a)$ is the number of tuples $\bar a b\in I_a$ such that the $i$-th bit of
$n_{\bar a b}$ is $1$. Consequently,
\[
\sum_{\bar a b\in I_a} n_{\bar a b}
=
\sum_{i<\mathsf{bd}\text{-}\mathsf{itadd}(a)} s_i(a)\cdot 2^i.
\]

Now the rest is exactly as in Grohe's proof of Lemma~3.16. Using the arithmetical
construction from the proof of his Lemma~3.10, applied to the coefficients $s_i(a)$, one obtains
an arithmetical formula that expresses whether a given bit position belongs to the binary
support of the number
\[
\sum_{i<\mathsf{bd}\text{-}\mathsf{itadd}(a)} s_i(a)\cdot 2^i.
\]
Since arithmetical FO+C formulas and terms coincide with arithmetical HML+C formulas and
terms, this formula can be used unchanged here. We choose $\mathsf{itadd}$ to be the resulting HML+C formula.
\end{proof}
The subsequent Lemma~\ref{lem:summation} is the HML\(+\)C counterpart of Grohe's Lemma~3.20 for FO+C and Lemma~3.33
for GFO+C, which defines formulas for the iterated addition of dyadic rationals.
%
\begin{lemma}
\label{lem:summation}
Let $Z$ be an r-schema of type $\mathsf v^k \mathsf n^\ell \to \mathsf r$. Let $X$ be a relation variable of type $\mathsf v^k \mathsf n^\ell$,
and let $V$ be a function variable of
type $\mathsf v^k \rightarrow \mathsf n$. Furthermore, let $F^{v_1}$ be a pointed database with $\mn{adom}(F)=\{v_1,\dots,v_k\}$.
Then there is an r-expression $\mathsf{itadd}(x)$ of type $\mathsf v \to\mathsf r$ such that for all databases $D$,
assignments $\alpha$ over $D$, and $a \in \mn{adom}(D)$, we have
\[
\langle\!\langle \mathsf{itadd} \rangle\!\rangle^{(D,\alpha)}(a)
=
\sum_{\bar a\bar b}
\langle\!\langle Z \rangle\!\rangle^{(D,\alpha)}(\bar a,\bar b),
\]
where the sum ranges over all $\bar a \bar b \in \mn{adom}(D)^k \times \mathbb{N}^\ell$ such that Conditions~(a)-(c)
from Lemma~\ref{lem:wrootedcounting} are satisfied.
\end{lemma}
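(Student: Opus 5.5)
We will follow Grohe's proof of his Lemma~3.20 (and its guarded counterpart, Lemma~3.33). Throughout, we use Lemma~\ref{lem:itaddintegers} as the only place where the pattern $F^{v_1}$ and the individual variables enter; everything else is arithmetical.

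\textbf{Common denominator.} Write each summand $\llangle Z\rrangle^{(D,\alpha)}(\bar a,\bar b)$ as $(-1)^{r}\,2^{-s}\,N$, where $r$, $s$, $N$ are read off from $Z_{\mathsf{sg}}$, $Z_{\mathsf{dn}}$, and the bits of $Z_{\mathsf{Ind}}$ below $Z_{\mathsf{bd}}$.
\begin{itemize}
\item Using the term $\mathsf{u\text{-}max}(x)$ of Lemma~\ref{lem:wrootedcounting} with $U:=Z_{\mathsf{dn}}$, define $S(x)$, the maximal denominator exponent over all tuples satisfying Conditions~(a)--(c).
\item Rescale every summand to the common denominator $2^{S}$: the numerator becomes $N\cdot 2^{S-s}$.
\item Its binary support is the shifted set $\{i+S-s : i\in I\}$. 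The formula $\exists j< i\,(i = j + S(x) - Z_{\mathsf{dn}}(\bar x\bar y) \wedge Z_{\mathsf{Ind}}(j,\bar x\bar y))$ defines this support, with new bound $Z_{\mathsf{bd}}+S-Z_{\mathsf{dn}}$. Subtraction and existential number quantification are arithmetical, and the variables $\bar x\bar y$ only occur free inside relation/function-variable atoms. Hence this yields an HML+C formula and term once substituted.
\end{itemize}

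\textbf{Splitting by sign.} Split the tuples into positive and negative ones by conjoining $X$ with $\neg Z_{\mathsf{sg}}$, respectively $Z_{\mathsf{sg}}$.
\begin{itemize}
\item Apply Lemma~\ref{lem:itaddintegers} twice, with $X$ replaced by $X\wedge\neg Z_{\mathsf{sg}}$ (resp.\ $X\wedge Z_{\mathsf{sg}}$), $Y$ by the shifted support formula, and $U$ by the shifted bound term.
\item Substituting formulas and terms for the relation/function variables is legitimate, since those lemmas hold for arbitrary assignments and HML+C is closed under this substitution. The substitution respects the one-free-individual-variable discipline, because the pattern formula $\psi_F$ and counting terms already bind $x_2,\dots,x_k$.
\item This gives binary representations $(\chi^+,t^+)$ and $(\chi^-,t^-)$ of the integers $P(a)$ and $M(a)$ with
\[
\sum \llangle Z\rrangle = (P-M)\cdot 2^{-S}.
\]
\end{itemize}

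\textbf{Assembling the r-expression.}
\begin{itemize}
\item Package $(\chi^+,t^+)$ with sign $\bot$ and denominator $S(x)$ into an r-expression of type $\mathsf v\to\mathsf r$, and likewise for $(\chi^-,t^-)$.
\item Apply $\mathsf{sub}$ from Lemma~\ref{lem:grohearithmetic} to the two, and optionally normalize via Lemma~\ref{lem:craplemma}.
\item Since $\mathsf{sub}$ and $\mathsf{crep}$ are arithmetical, substituting our HML+C expressions for the r-schema variables, as explained after Lemma~\ref{lem:craplemma}, yields an HML+C r-expression $\mathsf{itadd}(x)$.
\end{itemize}

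\textbf{Main obstacle.} We expect the hard part to be the bookkeeping in the rescaling step: checking that the shifted support and bound are definable uniformly in $\bar x\bar y$ without leaving HML+C. In particular, every individual variable other than $x$ must be quantified only through a counting term of the form $\#(x_2,\dots,x_k,\dots).(\psi_F\wedge\cdots)$.

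We intend to handle this by never quantifying individual variables ourselves. Instead, we push all such quantification into the applications of Lemmas~\ref{lem:wrootedcounting} and~\ref{lem:itaddintegers}, exactly as Grohe does for GFO+C with guards, where our pattern $\psi_F$ plays the role of the guard.
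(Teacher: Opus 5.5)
Your proposal is correct and takes essentially the same route as the paper. Both split by sign, obtain a common denominator exponent via $\mathsf{u\text{-}max}$ with $U=Z_{\mathsf{dn}}$, shift the numerators and sum them with Lemma~\ref{lem:itaddintegers}, and recombine with $\mathsf{sub}$; using one denominator $S(x)$ for both signs instead of separate $d^+,d^-$ is a harmless variation. One small slip: your shifted-support formula must quantify $j\le i$ (i.e.\ $j<i+1$) rather than $j<i$, since otherwise every tuple with $Z_{\mathsf{dn}}=S(x)$ loses all its bits, including the tuple attaining the maximum.
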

\begin{proof}
The proof is the same as Grohe's proof of Lemma~3.20, with Lemma~\ref{lem:wrootedcounting} and Lemma~\ref{lem:itaddintegers} in place of
his Lemmas~3.15 and~3.16.

Let
\(
Z=(Z^{\mathsf{sg}},Z^{\mathsf{Ind}},Z^{\mathsf{dn}},Z^{\mathsf{bd}}).
\)
Take a database $D$, an assignment $\alpha$, and an $a\in \operatorname{adom}(D)$. Let
\[
I_a:=\{\bar a\bar b\in \operatorname{adom}(D)^k\times \mathbb N^\ell :
\text{\(\bar a\bar b\) satisfies Conditions~(a)--(c) of Lemma~\ref{lem:wrootedcounting}}\}.
\]
To construct the desired r-expression $\mathsf{itadd}(x)$ of type $\mathsf v\to \mathsf r$, we split the index set into
\[
I_a^+:=\{\bar a\bar b\in I_a : \llangle Z\rrangle^{(D,\alpha)}(\bar a,\bar b)\ge 0\}
\quad
\text{ and }
\quad
I_a^-:=\{\bar a\bar b\in I_a : \llangle Z\rrangle^{(D,\alpha)}(\bar a,\bar b)< 0\}.
\]
It suffices to construct r-expressions $\mathsf{itadd}^+(x)$ and $\mathsf{itadd}^-(x)$ such that
\[
\llangle \mathsf{itadd}^+\rrangle^{(D,\alpha)}(a)
=
\sum_{\bar a\bar b\in I_a^+}\llangle Z\rrangle^{(D,\alpha)}(\bar a,\bar b)
\quad
\text{ and }
\quad
\llangle \mathsf{itadd}^-\rrangle^{(D,\alpha)}(a)
=
\sum_{\bar a\bar b\in I_a^-}\bigl|\llangle Z\rrangle^{(D,\alpha)}(\bar a,\bar b)\bigr|.
\]
Then the required expression $\mathsf{itadd}(x)$ is obtained by applying the subtraction expression
from Lemma~\ref{lem:grohearithmetic} to $\mathsf{itadd}^+(x)$ and $\mathsf{itadd}^-(x)$. 

We only construct $\mathsf{itadd}^+(x)$; the construction of $\mathsf{itadd}^-(x)$ is identical after
replacing each summand by its absolute value, that is, simply ignoring
its sign represented by $Z^{\mathsf{sg}}$. 
For each $\bar a\bar b\in I_a^+$, let
\[
\llangle Z\rrangle^{(D,\alpha)}(\bar a,\bar b)=m_{\bar a,\bar b}\,2^{-d_{\bar a,\bar b}},
\]
where $m_{\bar a,\bar b}\in \mathbb N$ is represented by $(Z^{\mathsf{Ind}},Z^{\mathsf{bd}})$, and
\(
d_{\bar a,\bar b}=\alpha(Z^{\mathsf{dn}})(\bar a,\bar b).
\)
Choose a common denominator by setting
\(
d^+(a):=\max\{d_{\bar a,\bar b} : \bar a\bar b\in I_a^+\}.
\)
By Point~2 of Lemma~\ref{lem:wrootedcounting} applied with
$U=Z^{\mathsf{dn}}$, $d^+(x)$ is definable by an HML+C-term.
%
%
Then
\[
m_{\bar a,\bar b}\,2^{-d_{\bar a,\bar b}}
=
\bigl(m_{\bar a,\bar b}\,2^{d^+(a)-d_{\bar a,\bar b}}\bigr)\,2^{-d^+(a)},
\]
and therefore
\[
\sum_{\bar a\bar b\in I_a^+}\llangle Z\rrangle^{(D,\alpha)}(\bar a,\bar b)
=
N^+(a)\,2^{-d^+(a)}
\quad
\text{ where }
N^+(a):=
\sum_{\bar a\bar b\in I_a^+}m_{\bar a,\bar b}\,2^{d^+(a)-d_{\bar a,\bar b}}.
\]

Now apply Lemma~\ref{lem:itaddintegers} to the family of shifted numerators
\[
m_{\bar a,\bar b}\,2^{d^+(x)-d_{\bar a,\bar b}}
\qquad
(\bar a\bar b\in I_a^+),
\]
that is, with $Y$ replaced by the bit relation for the shifted
numerator, obtained from $Z^{\mathsf{ind}}$ by shifting bit positions
by $d^+(x) - Z^{\mathsf{dn}}(\bar x,\bar y)$, and $U$ replaced by the
corresponding shifted bit bound, namely
$Z^{\mathsf{bd}}+d^+(Z)-Z^{\mathsf{dn}}(\bar x,\bar y)$.
This yields an HML+C formula $\mathsf{itadd}^{+,\mathsf{Ind}}$ and an HML+C-term
$\mathsf{itadd}^{+,\mathsf{bd}}(x)$ such that
\(
\llangle \mathsf{itadd}^{+,\mathsf{Ind}},\mathsf{itadd}^{+,\mathsf{bd}}\rrangle^{(D,\alpha)}(a)=N^+(a).
\)
We then define $\mathsf{itadd}^+(x)$ as the r-expression whose sign component is constantly $0$,
whose numerator components are
\(
(\mathsf{itadd}^{+,\mathsf{Ind}},\mathsf{itadd}^{+,\mathsf{bd}}),
\)
and whose denominator component is $d^+(x)$. By construction,
\[
\llangle \mathsf{itadd}^+\rrangle^{(D,\alpha)}(a)
=
\sum_{\bar a\bar b\in I_a^+}\llangle Z\rrangle^{(D,\alpha)}(\bar a,\bar b).
\]
%
\end{proof}   

The following lemma is a main result of Grohe's technical development,  its proof purely arithmetical and thus unaffected by the choice of FO+C, GFO+C, or HML+C. 
\begin{lemma}[\cite{DBLP:journals/theoretics/Grohe24}, Corollary~3.25]
\label{cor:grohe325}
Let
\(
f 
\)
be a function computed by an FNN that uses dyadic rational piecewise-linear activation
function and dyadic rational coefficients. Further let
\(
Z_1,\dots,Z_d
\)
be  r-schemas of type \(\mathsf v \to \mathsf r\). Then there exist
 arithmetical \(r\)-expressions
\(
\mathsf{eval}_1,\dots,\mathsf{eval}_{d'}
\)
of type \(\mathsf v \to \mathsf r\) such that for every database \(D\) and assignment $\alpha$ over $D$,
\[
f(\langle\!\langle Z_1\rangle\!\rangle^{D,\alpha},\dots,\langle\!\langle Z_d\rangle\!\rangle^{D,\alpha})
=
(\langle\!\langle
\mathsf{eval}_1\rangle \!\rangle^{D,\alpha},\dots,
 \langle\!\langle \mathsf{eval}_{d'}
 \rangle\!\rangle^{D,\alpha}).
\]
\end{lemma}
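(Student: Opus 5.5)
The plan is to prove the statement by induction on the structure of the FNN computing $f$. We view the FNN as an alternating composition of affine maps $\bar v \mapsto \bar v\cdot A + \bar b$ and component-wise applications of a piecewise-linear activation $\sigma$. It then suffices to produce arithmetical r-expressions for three kinds of building blocks.
\begin{enumerate}
\item Dyadic constants.
\item Sums and products of values represented by r-expressions.
\item A single application of $\sigma$ to a value represented by an r-expression.
\end{enumerate}
Composing these by substitution (the substitution mechanism described after Lemma~\ref{lem:craplemma}) gives $\mathsf{eval}_1,\dots,\mathsf{eval}_{d'}$. Since all ingredients are arithmetical and no individual variables beyond the parameter $x_v$ are introduced, the resulting r-expressions remain arithmetical HML+C.

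Constants are immediate. A fixed dyadic rational $c$ with $\operatorname{crep}(c)=(r,I,s,t)$ is represented by the following arithmetical r-expression.
\begin{itemize}
\item The sign component is $\top$ or $\bot$, according to $r$.
\item The index formula $\rho_{\mathsf{Ind}}(i)$ is the finite disjunction $\bigvee_{j\in I} i=j$, where each $i=j$ is written as $i\le j\wedge j\le i$ with numerals built from $c$, $+$.
\item The two number components are the constants $s$ and $t$.
\end{itemize}
An affine map coordinate $\sum_j a_{jk}v_j+b_k$ is then obtained by iterating the $\mathsf{mul}$ and $\mathsf{add}$ expressions of Lemma~\ref{lem:grohearithmetic} finitely many times, since the matrix has fixed size.

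The activation step is the one I expect to be the real obstacle, because it needs a case distinction inside the logic. For a threshold $t_i$, I first build a formula $\theta_i(x_v)$ expressing $\langle\!\langle\rho\rangle\!\rangle \ge t_i$. To do this, apply $\mathsf{sub}$ to obtain an expression for $\langle\!\langle\rho\rangle\!\rangle - t_i$, then normalise it with $\mathsf{crep}$ (Lemma~\ref{lem:craplemma}). In canonical form the value is zero exactly when the bound term is $0$, so $\theta_i$ is
\[
\neg\mathsf{sg} \vee \mathsf{bd}\le 0 .
\]
Next I introduce an ``if-then-else'' r-expression $\mathrm{ite}(\varphi,\rho^1,\rho^2)$, defined componentwise.
\begin{itemize}
\item The sign and index formulas are $(\varphi\wedge\rho^1_\cdot)\vee(\neg\varphi\wedge\rho^2_\cdot)$.
\item The number terms are $e_\varphi\cdot t^1 + e_{\neg\varphi}\cdot t^2$, where $e_\varphi := \#(m<1).\varphi$ evaluates to $1$ if $\varphi$ holds and $0$ otherwise.
\end{itemize}
With this, $\sigma(\rho)$ is a nested $\mathrm{ite}$ over $\theta_n,\theta_{n-1},\dots,\theta_1$. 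Its branches are the expressions $\mathsf{add}(\mathsf{mul}(a_i,\rho),b_i)$, built from the constants and from Lemma~\ref{lem:grohearithmetic}. No continuity argument is needed at this point: representing the piecewise definition exactly is enough.

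Finally, I would verify that each construction yields an arithmetical r-expression of type $\mathsf v\to\mathsf r$ whose value agrees with the intended real number for every $D$ and $\alpha$. This is a routine induction over the layers of the FNN using the correctness clauses of Lemmas~\ref{lem:craplemma} and~\ref{lem:grohearithmetic}. The only subtle point is the $\mathrm{ite}$ construction: it is correct because the value of an r-expression depends only on its four components, and these are selected consistently by $\varphi$.
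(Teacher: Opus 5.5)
Your argument is correct, but it is not the route the paper takes. The paper gives no proof of its own. It imports Grohe's Corollary~3.25, which Grohe proves for r-schemas of type $\emptyset\to\mathsf r$, on the grounds that the proof is purely arithmetical. It then lifts the result to type $\mathsf v\to\mathsf r$ by the substitution $Z_{\mathsf{sg}}()\mapsto Z_{\mathsf{sg}}(x_v)$, etc., described after Lemma~\ref{lem:craplemma}.

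You instead rebuild the statement from only the two lower-level imported lemmas: $\mathsf{crep}$ (Lemma~\ref{lem:craplemma}) and $\mathsf{add}/\mathsf{sub}/\mathsf{mul}$ (Lemma~\ref{lem:grohearithmetic}). To these you add explicit dyadic constants and an if-then-else gadget. Its number components use $\#(m<1).\varphi$, which is a legal HML+C counting term: $\bar y$ is empty and the pattern formula is the single unary formula $\varphi(x)$.

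Your route buys transparency about exactly the point the paper asserts without detail. No individual variable is ever bound, so every formula has the single free individual variable $x_v$ and the HML+C syntactic restrictions are trivially met. Moreover, your case distinction never uses continuity. It therefore applies verbatim to $\{0,1\}$-valued step functions such as the classification function $\mathsf{cls}$, to which the paper applies this lemma in the proof of Theorem~\ref{thm:DHNstoHMLplucS}. The citation route is shorter and inherits Grohe's verification of the arithmetic.

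A minor point: since $\operatorname{crep}(0)=(0,\emptyset,0,0)$, the canonical sign bit alone decides $\geq 0$. The disjunct $\mathsf{bd}\le 0$ in your $\theta_i$ is therefore redundant, though harmless.
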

Like Lemma~\ref{lem:craplemma}, Grohe proves his Corollary 3.25
for r-schemas of type $\emptyset \to \mathsf r$. Again, the transfer
is trivial.

\subsection{Proof of Theorem~\ref{thm:DHNstoHMLplucS}}

\begin{lemma}
 \label{lem:onelayerHMLC}
Let \(Q=(F^\bullet,\mu,\mathsf{sum})\) be a homomorphism query of input dimension \(d\) and output dimension \(d'\) that uses dyadic rational piecewise-linear activation and dyadic rational coefficients. Let
\(
Z_1,\dots,Z_d
\)
be \(r\)-schemas of type \(\mathsf v\to \mathsf r\). For every database \(D\), assignment $\alpha$ over $D$, and $v\in\operatorname{adom}(D)$, let
\[
\lambda_{D,\alpha}(v):=(\langle\!\langle Z_1\rangle\!\rangle^{D,\alpha}(v),\dots,\langle\!\langle Z_d\rangle\!\rangle^{D,\alpha}(v)).
\]
Then there exist \(r\)-expressions
\(
Q\text{-}\mathsf{eval}_1(x),\dots,Q\text{-}\mathsf{eval}_{d'}(x)
\)
of type $\mathsf v \rightarrow \mathsf r$ 
such that for every database~\(D\), assignment $\alpha$ over $D$, and  \(u\in\operatorname{adom}(D)\),
\[
(\langle\!\langle Q\text{-}\mathsf{eval}_1\rangle\!\rangle^{D,\alpha}(u),\dots,\langle\!\langle Q\text{-}\mathsf{eval}_{d'}\rangle\!\rangle^{D,\alpha}(u))
=
\operatorname{eval}(Q,(D^u,\lambda_{D,\alpha})).
\]
\end{lemma}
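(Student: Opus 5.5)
The plan is to realise the result of the homomorphism query in three stages: first the transformation functions, then the component-wise product over $\mn{adom}(F)$, and finally the sum over all homomorphisms. Each stage maps onto one of the arithmetical tools from Grohe's development as adapted to HML+C above.

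Let $\mn{adom}(F)=\{v_1,\dots,v_k\}$ with distinguished value $v_1$, and let $\bar x = x x_2\cdots x_k$ be individual variables standing for $v_1,\dots,v_k$. Each $\mu_{v_j}$ is computed by an FNN with dyadic rational piecewise-linear activation and dyadic rational coefficients. Applying Lemma~\ref{cor:grohe325} to $\mu_{v_j}$ and the r-schemas $Z_1,\dots,Z_d$ gives arithmetical r-expressions $\mathsf{eval}^{j}_1,\dots,\mathsf{eval}^{j}_{d'}$ of type $\mathsf v\to\mathsf r$. In these we substitute the individual variable $x_j$, in the manner described after Lemma~\ref{lem:craplemma}, so that $\langle\!\langle \mathsf{eval}^j_c\rangle\!\rangle(a_j)$ is the $c$-th component of $\mu_{v_j}(\lambda_{D,\alpha}(a_j))$. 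For each output coordinate $c$ we then take the product over $j=1,\dots,k$. To do so we apply the multiplication expression $\mathsf{mul}$ of Lemma~\ref{lem:grohearithmetic} $k-1$ times, which is possible because $k$ is fixed. The result is an r-expression $P_c(\bar x)$ of type $\mathsf v^k\to\mathsf r$ with $\langle\!\langle P_c\rangle\!\rangle(\bar a)=\prod_j \mu_{v_j}(\lambda(a_j))_c$. Lemma~\ref{lem:grohearithmetic} is stated for type $\mathsf v\to\mathsf r$, so here I will use the same substitution argument to lift it to tuples of individual variables. This is harmless because the lemmas are purely arithmetical.

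Next comes the summation. I apply Lemma~\ref{lem:summation} with $\ell=0$, taking $Z$ to be the r-schema that is afterwards substituted by $P_c$, $X$ the trivially true relation of type $\mathsf v^k$, and $V$ arbitrary (it is irrelevant since $\ell=0$; if $\ell\ge 1$ is required, use $\ell=1$ and $V\equiv 1$). Conditions (a)–(c) of Lemma~\ref{lem:wrootedcounting} then reduce to the requirement that $\bar a$ is the image of a homomorphism $h$ from $F$ to $D$ with $h(v_1)=u$. Homomorphisms from $F^{v_1}$ to $D^u$ correspond bijectively to tuples $\bar a$ with $a_1=u$ satisfying the atoms $\psi_F(\bar a)$, because $h$ is determined by $(h(v_1),\dots,h(v_k))$. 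Hence the multiset sum in $\operatorname{eval}(Q,(D^u,\lambda))$ is exactly the sum produced by $\mathsf{itadd}$, and we set $Q\text{-}\mathsf{eval}_c:=\mathsf{itadd}$ with $Z$ replaced by $P_c$.

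The main obstacle I expect is checking that this substitution of a composite r-expression for an r-schema stays inside HML+C. Concretely, the counting terms produced in Lemmas~\ref{lem:wrootedcounting}–\ref{lem:summation} bind the variables $x_2,\dots,x_k$, and the subformulas inserted for $Z$, such as the bit formulas of $P_c$, then contain several free individual variables inside a pattern formula. That is not allowed, since only relational atoms and formulas with one free individual variable may appear there. The way around this is to note that $P_c$ is built arithmetically from the one-variable expressions $\mathsf{eval}^j_c(x_j)$, which involve only $Z_i(x_j)$. The plan is therefore to have the numerical parts of $P_c$, namely its bits, signs and bounds, range over number variables quantified as bounded number variables $m_i<t_i$ in the counting term. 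The per-variable data $Z_i(x_j)$ then enter as unary formulas $\chi(x_j,\bar n,\bar m)$, and the arithmetical glue uses only number variables. If this bookkeeping becomes unwieldy, my first fallback is to precompute the multiplication in a slightly modified Lemma~\ref{lem:summation} that directly takes $k$ one-variable r-schemas and sums their products. Its proof would go through Lemma~\ref{lem:itaddintegers} in the same way, with the shifted-numerator bits of the product expressed arithmetically from the individual factors' bit relations.
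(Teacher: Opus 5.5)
Your three-stage construction is exactly the paper's proof. It applies Lemma~\ref{cor:grohe325} to each $\mu_{v_j}$, chains $k-1$ instances of $\mathsf{mul}$ from Lemma~\ref{lem:grohearithmetic} into an r-expression of type $\mathsf v^k\to\mathsf r$, and then uses Lemma~\ref{lem:summation} with $\ell=0$ and a trivially true $X$. That last step relies, as you say, on homomorphisms $F^{v_1}\to D^u$ being exactly the tuples $\bar a$ with $a_1=u$ satisfying $\psi_F$.

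The paper makes this substitution of the product expression for $Z$ without comment, so the obstacle you raise is not handled there. You are right that it is real. A bit of a product depends jointly on all factors, so the $\mathsf{Ind}$ and $\mathsf{dn}$ components of the product expression have all of $x,x_2,\dots,x_k$ free. In the proof of Lemma~\ref{lem:summation} they end up inside pattern formulas, for instance in the terms $s_i(x)$ of Lemma~\ref{lem:itaddintegers}, and HML+C only allows relational atoms and formulas in one individual variable there.

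Your repair is only gestured at, and the fallback does not get around the problem. Expressing ``the bits of the product from the factors' bit relations'' again gives a formula in $k$ individual variables, because carries couple all factors, and that formula must still sit in the pattern of $s_i(x)$. What is needed, and what your first plan has to amount to, is to distribute the product over the sum before any bit of a product is formed. Write $N_j(b)$ for the numerator of the relevant coordinate of $\mu_{v_j}(\lambda_{D,\alpha}(b))$, and $\beta_j(i,x_j)$ for its unary bit formula (the $\mathsf{Ind}$ component of your $\mathsf{eval}^j_c$). With $h$ ranging over $\Hom(F^{v_1},D^{a})$ and $x$ interpreted as $a$, one has
\[
\sum_{h}\prod_{j=1}^{k} N_j\bigl(h(v_j)\bigr)=\sum_{i_1,\dots,i_k}2^{i_1+\cdots+i_k}\,C_{\bar i}(x),
\qquad
C_{\bar i}(x):=\#(x_2,\dots,x_k).\Bigl(\psi_F(\bar x)\wedge\bigwedge_{j=1}^{k}\beta_j(i_j,x_j)\Bigr).
\]
Here $C_{\bar i}(x)$ is a legitimate HML+C term: its pattern contains only the atoms of $F$ and unary formulas, and the bit positions $i_j$ are number variables.

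The remaining ingredients are handled the same way:
\begin{itemize}
\item Signs: split over the finitely many sign patterns in $\{0,1\}^k$, each a conjunction of unary conditions, and subtract.
\item Denominator exponents: add further number variables $s_j$ constrained by the unary conditions $\mathsf{dn}_j(x_j)=s_j$. The weight then becomes $2^{\sum_j i_j+S-\sum_j s_j}$ for a common exponent $S$.
\item Conversion to binary: the resulting weighted sum of counts is turned into binary by the arithmetical iterated-addition construction (Grohe's Lemma~3.10) that the proof of Lemma~\ref{lem:itaddintegers} already uses.
\end{itemize}
This way the only object ever exposed is the final r-expression of type $\mathsf v\to\mathsf r$.
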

\begin{proof}
Let  \(Q=(F^{v_1},\mu,\mathsf{sum})\), $d$, $d'$, and 
\(
Z_1,\dots,Z_d\)
be as in the lemma. Assume that $\mn{adom}(F)=\{v_1,\dots,v_k\}$.
For any $v_i \in \mn{adom}(F)$, we
may apply Lemma~\ref{cor:grohe325} to the transformation function
$\mu_{v_i}$ to obtain \(r\)-expressions
\(
\mu_v\!\operatorname{-expr}_{1},\dots,\mu_v\!\operatorname{-expr}_{d'}
\)
of type $\mathsf v \rightarrow \mathsf r$
such that for every database \(D\), assignment $\alpha$ over $D$, and
$a \in \mn{adom}(D)$,
\[
\mu_v(\langle\!\langle Z_1\rangle\!\rangle^{D,\alpha}(a),\dots,\langle\!\langle Z_d\rangle\!\rangle^{D,\alpha}(a)) =
(\langle\!\langle \mu_v\!\operatorname{-expr}_{1}\rangle\!\rangle^{D,\alpha}(a),\dots,
 \langle\!\langle \mu_v\!\operatorname{-expr}_{d'}\rangle\!\rangle^{D,\alpha}(a)).
\]
Now let $1 \leq i \leq d'$.  
Take the arithmetical r-expression $\operatorname{mul}$
from Lemma~\ref{lem:grohearithmetic} and  substitute the r-schemas $Z_1$ and $Z_2$ of
type  $\emptyset \to \mathsf r$
with the r-expressions
\(
\mu_{v_1}\!\operatorname{-expr}_{i}\)
and
\(
\mu_{v_2}\!\operatorname{-expr}_{i}\)
of type $\mathsf v \to \mathsf r$. Put the resulting r-expression
of type $\mathsf v^2 \to \mathsf r$ as a substitution for $Z_1$ into another instance of
$\operatorname{mul}$, and substitute $Z_2$
with \(
\mu_{v_3}\!\operatorname{-expr}_{i}\). Keep going
until \(
\mu_{v_k}\!\operatorname{-expr}_{i}\). Clearly, the resulting r-expression \(
\Pi\operatorname{-expr}_i(x,\bar z)
\) of type $\mathsf v^k \to \mathsf r$ is such that  for every
$\mathbb{R}^d$-embedded pointed \Sbf-database $(D^u,\lambda)$ and
every homomorphism $h$ from $F$ to $D$ with $h(v_1)=u$,
the value
\[
\langle\!\langle \Pi\operatorname{-expr}_i\rangle\!\rangle^{D,\alpha}(h(v_1),\dots,h(v_k))
\]
is exactly the \(i\)-th coordinate of the contribution of the corresponding
homomorphism to 
the evaluation
$\res((F^\bullet,\mu,\agg),(D^\bullet,\lambda))$.

To obtain from each $\Pi\operatorname{-expr}_i$
the desired r-expression 
$Q\operatorname{-eval}_i$, we use Lemma~\ref{lem:summation} in the special case $\ell=0$.
More precisely, we use $\Pi\operatorname{-expr}_i$ in place of the
r-schema $Z$ and take as relation variable $X$ a fresh
variable of type $\emptyset$ that is interpreted by $\{ () \}$, the
full nullary relation. $V$ is irrelevant since $\ell=0$, Condition~(a)
from Lemma~\ref{lem:wrootedcounting} is vacuous, and Condition~(c) disappears.
Hence, we obtain an r-expression
\(
Q\text{-eval}_j(x)
\)
such that for every database $D$, assignment $\alpha$ over $D$, and
$a\in\operatorname{adom}(D)$,
\[
\langle\!\langle Q\text{-eval}_i\rangle\!\rangle^{D,\alpha}(a)
=
\sum_{h}
\langle\!\langle \Pi\text{-expr}_i\rangle\!\rangle^{D,\alpha}(h(v_1),\dots,h(v_k)),
\]
where the sum ranges over all homomorphisms $h$ from $F$ to $D$ with
$h(v_1)=a$.
\end{proof}

\newcommand{\Nmc}{\mathcal{N}}
\newcommand{\Lmc}{\mathcal{L}}

\begin{proof}[Proof of Theorem~\ref{thm:DHNstoHMLplucS}.]
  Let $\Nmc=(\Lmc_1,\ldots,\Lmc_\ell,\mathsf{cls})$ be a sum-DHN that
  uses dyadic rational piecewise-linear activation and dyadic rational
  coefficients. For $0\leq t\leq \ell$, let $d_t$ be the output
  dimension of the $t$-th layer, where $d_0=0$. Recall that for a database
  $D$,  $\Nmc$ produces a sequence of embedded databases
\[
(D,\lambda^0_{\Nmc,D}),\ldots,(D,\lambda^\ell_{\Nmc,D}).
\]
We show by
induction on $t\in\{0,\ldots,\ell\}$ that there exist r-expressions
\(
\rho^{(t)}_1(x),\ldots,\rho^{(t)}_{d_t}(x)
\)
of type $v\to \mathsf r$ such that for every database $D$ and every $a\in\operatorname{adom}(D)$,
\[
(\llangle \rho^{(t)}_1\rrangle^{(D,\alpha)}(a),\ldots,\llangle \rho^{(t)}_{d_t}\rrangle^{(D,\alpha)}(a))
=
\lambda^t_{\Nmc,D}(a).
\]

For $t=0$, there is nothing to show.
Now let $1\leq t\leq \ell$, and suppose that
$\rho^{(t-1)}_1(x),\ldots,\rho^{(t-1)}_{d_{t-1}}(x)$ have already been constructed. Let
\(
\Lmc_t=(\mathcal F_t,\mathsf{com}_t),
\)
where
\(
\mathcal F_t=(Q_{t,1},\ldots,Q_{t,m_t})
\)
and each
\(
Q_{t,s}=(F^\bullet_{t,s},\mu_{t,s},\mathsf{sum})
\)
is a homomorphism query of input dimension $d_{t-1}$ and output dimension $e_{t,s}$.
Take any $s$ with $1 \leq s \leq m_t$. Apply Lemma~\ref{lem:onelayerHMLC} to the query $Q_{t,s}$ and to the r-schemas represented by
\(
\rho^{(t-1)}_1(x),\ldots,\rho^{(t-1)}_{d_{t-1}}(x)
\)
to obtain r-expressions
\(
\sigma^{(t,s)}_1(x),\ldots,\sigma^{(t,s)}_{e_{t,s}}(x)
\)
of type $\mathsf v \to \mathsf r$ such that for every database $D$ and every $a\in\operatorname{adom}(D)$,
\[
(\llangle \sigma^{(t,s)}_1\rrangle^{(D,\alpha)}(a),\ldots,
\llangle \sigma^{(t,s)}_{e_{t,s}}\rrangle^{(D,\alpha)}(a))
=
\operatorname{eval}(Q_{t,s},(D_a,\lambda^{t-1}_{\Nmc,D})).
\]

Recall that $\mathsf{com}_t$ is represented by an FNN with dyadic rational piecewise-linear activation and dyadic
rational coefficients. Hence, by Lemma~\ref{cor:grohe325}, applied to the family of r-expressions
\(
\sigma^{(t,s)}_j(x)
\) with $1 \leq s \leq m_t$ and $1 \leq j \leq e_{t,s}$,
there exist r-expressions
\(
\rho^{(t)}_1(x),\ldots,\rho^{(t)}_{d_t}(x)
\)
of type $\mathsf v \to \mathsf r$ such that for every database $D$ and every $a\in\operatorname{adom}(D)$,
\[
\begin{array}{r@{}l}
(\llangle \rho^{(t)}_1\rrangle^{(D,\alpha)}(a),&\ldots,
\llangle \rho^{(t)}_{d_t}\rrangle^{(D,\alpha)}(a)) \\[1mm]
& =
\mathsf{com}_t\bigl(
\operatorname{eval}(Q_{t,1},(D_a,\lambda^{t-1}_{\Nmc,D})),\ldots,
\operatorname{eval}(Q_{t,m_t},(D_a,\lambda^{t-1}_{\Nmc,D}))
\bigr) = \lambda^t_{\Nmc,D}(a).
\end{array}
\]
This completes the induction
step.

We have thus obtained r-expressions
\(
\rho^{(\ell)}_1(x),\ldots,\rho^{(\ell)}_{d_\ell}(x)
\)
representing the final embedding $\lambda^\ell_{\Nmc,D}(a)$. It remains to deal with the classification
function $\mathsf{cls}:\mathbb R^{d_\ell}\to\{0,1\}$. Since $\mathsf{cls}$ is again represented by an
FNN that uses dyadic rational piecewise-linear activation and dyadic rational coefficients, Lemma~\ref{cor:grohe325} yields an
r-expression
\(
\mathsf{cls}\text{-}\mathsf{eval}(x)
\)
of type $\mathsf v \to \mathsf r$ such that for every database $D$ and every $a\in\operatorname{adom}(D)$,
\[
\llangle \mathsf{cls}\text{-}\mathsf{eval}\rrangle^{(D,\alpha)}(a)
=
\mathsf{cls}(\lambda^\ell_{\Nmc,D}(a)).
\]
Since the codomain of $\mathsf{cls}$ is $\{0,1\}$, this value is always either $0$ or $1$. Applying Lemma~\ref{lem:craplemma}, we may assume that
$\mathsf{cls}\text{-}\mathsf{eval}(x)$ is in canonical representation. $\Nmc$ accepts $D_a$ if and only if
\(
\llangle \mathsf{cls}\text{-}\mathsf{eval}\rrangle^{(D,\alpha)}(a)=1,
\)
and, by canonicity, $\Nmc$ is thus equivalent to
\[
\varphi_\Nmc(x):=
\neg \mathsf{cls}\text{-}\mathsf{eval}^{\mathsf{sg}}(x)
\wedge
\mathsf{cls}\text{-}\mathsf{eval}^{\mathsf{dn}}(x)=0
\wedge
\mathsf{cls}\text{-}\mathsf{eval}^{\mathsf{bd}}(x)=1
\wedge
\mathsf{cls}\text{-}\mathsf{eval}^{\mathsf{Ind}}(0,x).
\]
\end{proof}

\section{Details on Experiments}\label{app:experiments}

Both datasets, the code used to generate them, as well as our implementation of DHNs and the experiment code are available in the supplementary material.

\subsection{Local Transitivity Dataset} 

The local transitivity is a synthetic dataset for vertex classification in a directed graph generated such that the positive examples are locally transitive, and the negative examples are not.
To produce this dataset, we start with a base graph that consists of 200 disjoint copies
of a linear-order graph with 20 vertices. In this base graph, all $200 \cdot 20 = 4000$ vertices are locally transitive. We then randomly delete $4000$ edges, resulting 1921 vertices that are locally transitive, which we pick as positive examples, and 2079 vertices that are not locally transitive, which we pick as negative examples.
In this dataset, every vertex is labeled with a 1-dimensional vector that is constant.

As an example, Figure~\ref{fig:lt-example} shows an excerpt of the first six vertices (and related edges) of this graph. Vertex classes are indicated by $+$ and $-$.

\begin{figure}
\centering
\begin{tikzpicture}
    \node[draw, circle, inner sep = 0pt] (0) at (0, 0) {$+$};
    \node[draw, circle, inner sep = 0pt] (1) at (1, 0) {$-$};
    \node[draw, circle, inner sep = 0pt] (2) at (2, 0) {$-$};
    \node[draw, circle, inner sep = 0pt] (3) at (3, 0) {$+$};
    \node[draw, circle, inner sep = 0pt] (4) at (4, 0) {$+$};
    \node[draw, circle, inner sep = 0pt] (5) at (5, 0) {$+$};

    \draw[->] (0) to (1);
    \draw[->, bend left] (0) to (2);
    \draw[->, bend left] (0) to (3);
    \draw[->, bend left] (0) to (4);
    \draw[->, bend left] (0) to (5);
    \draw[->] (1) to (2);
    \draw[->, bend right] (1) to (4);
    \draw[->, bend right] (1) to (5);
    \draw[->] (2) to (3);
    \draw[->, bend left] (2) to (4);
    \draw[->] (3) to (4);
    \draw[->, bend right] (3) to (5);
    \draw[->] (4) to (5);
    \end{tikzpicture} 
    \caption{Except from the local transitivity dataset with node classifications}
    \label{fig:lt-example}
\end{figure}

\subsection{Sun Property Dataset}

The sun property dataset is a synthetic dataset for vertex classification in undirected graphs in which positive examples satisfy the sun property (formalized as $\varphi_{\sun}$ in Section~\ref{sect:DENs}) and negative examples do not. Recall that the sun property demands a vertex to lie on a 6-cycle on which every vertex has a neighbor with degree 1.
Thus, the vertices of the 6-cycle of the graph in Figure~\ref{fig:sp-example} satisfy this property, and one can obtain negative examples by randomly attaching additional neighbors to the outer vertices of that graph.
From these basic examples, we obtain more diverse examples that also contain multiple 6-cycles by attaching random negative examples to the cycle vertices of basic examples. 

The dataset is then the disjoint union of 100 such cycles that satisfy the property and 100 such cycles that do not satisfy the property, where we increase the degree of the neighbors to at most 8, resulting in 600 positive example vertices and 600 negative example vertices. Note that a large part of the vertices in this dataset do not serve as examples, and that again all vertices are labeled with the constant 1-vector.

\begin{figure}
\centering
\begin{tikzpicture}[node distance = 0.5cm and 0.5cm]
    \node[draw, circle] (0) at (0, 0) {};
    \node[draw, circle, above left  = of 0] (0b) {};
    \node[draw, circle, right = of 0] (1) {};
    \node[draw, circle, above right = of 1] (1b) {};
    \node[draw, circle, below right = of 1] (2) {};
    \node[draw, circle, right  = of 2] (2b) {};
    \node[draw, circle, below left = of 2] (3)  {};
        \node[draw, circle, below right  = of 3] (3b) {};
    \node[draw, circle, left = of 3] (4)  {};
        \node[draw, circle, below left  = of 4] (4b) {};
    \node[draw, circle, above left = of 4] (5)  {};
        \node[draw, circle, left  = of 5] (5b) {};
    \draw[-] (0) to (1);
    \draw[-] (0) to (0b);

    \draw[-] (1) to (2);
        \draw[-] (1) to (1b);

    \draw[-] (2) to (3);
        \draw[-] (2) to (2b);

    \draw[-] (3) to (4);
        \draw[-] (3) to (3b);

    \draw[-] (4) to (5);
        \draw[-] (4) to (4b);
    \draw[-] (5) to (0);
        \draw[-] (5) to (5b);
    \end{tikzpicture} 
    \caption{Positive example of the sun property}
    \label{fig:sp-example}
\end{figure}

\subsection{DHN Implementation}

Our implementation of DHNs largely follows the one of \cite{maehara2024deep} and uses PyTorch 2.8.0 and PyTorch Geometric 2.7.0. The full code is available in the supplementary material. We compute the required homomorphisms using a naive backtracking algorithm, which proved sufficient for our homomorphism queries and datasets.
The transformation functions are trainable FNNs with a single hidden layer and a hidden/output dimension of 32. The combination functions are trainable FNNs with 3 hidden layers of dimension 32. All FNNs use leaky ReLU as their activation function. We include node-wise LayerNorm in-between DHN layers.
For vertex classification, a final FNN with one hidden layer projects the vertex embedding to a single dimension.

For the experiment on the local transitivity dataset we equip DHNs with 13 homomorphism queries with up to 3 vertices. These are chosen to represent all injective-homomorphically different graphs with up to 3 vertices that have matches in the dataset. They are depicted in Figure~\ref{fig:lt-hom-queries}, where the root of the homomorphism query is highlighted in black. 

\begin{figure}
\begin{tikzpicture}[baseline=(current bounding box.north)]
  \node[circle, fill] (0) at (0, 0) {};
  \node[circle, draw] (1) at (0, -1) {};
  \draw[->] (0) to (1);
\end{tikzpicture} \quad
\begin{tikzpicture}[baseline=(current bounding box.north)]
  \node[circle, fill] (0) at (0, 0) {};
  \node[circle, draw] (1) at (0, -1) {};
  \draw[->] (1) to (0);
\end{tikzpicture} \quad
\begin{tikzpicture}[baseline=(current bounding box.north)]
  \node[circle, fill] (0) at (0, 0) {};
  \node[circle, draw] (1) at (0, -1) {};
  \node[circle, draw] (2) at (0, -2) {};
  \draw[->] (0) to (1);
  \draw[->] (1) to (2);
\end{tikzpicture} \quad
\begin{tikzpicture}[baseline=(current bounding box.north)]
  \node[circle, draw] (0) at (0, 0) {};
  \node[circle, fill] (1) at (0, -1) {};
  \node[circle, draw] (2) at (0, -2) {};
  \draw[->] (0) to (1);
  \draw[->] (1) to (2);
\end{tikzpicture} \quad
\begin{tikzpicture}[baseline=(current bounding box.north)]
  \node[circle, draw] (0) at (0, 0) {};
  \node[circle, draw] (1) at (0, -1) {};
  \node[circle, fill] (2) at (0, -2) {};
  \draw[->] (0) to (1);
  \draw[->] (1) to (2);
\end{tikzpicture} \quad
\begin{tikzpicture}[baseline=(current bounding box.north)]
  \node[circle, draw] (0) at (0, 0) {};
  \node[circle, fill] (1) at (0, -1) {};
  \node[circle, draw] (2) at (0, -2) {};
  \draw[->] (0) to (1);
  \draw[->] (2) to (1);
\end{tikzpicture} \quad
\begin{tikzpicture}[baseline=(current bounding box.north)]
  \node[circle, fill] (0) at (0, 0) {};
  \node[circle, draw] (1) at (0, -1) {};
  \node[circle, draw] (2) at (0, -2) {};
  \draw[->] (0) to (1);
  \draw[->] (2) to (1);
\end{tikzpicture} \quad
\begin{tikzpicture}[baseline=(current bounding box.north)]
  \node[circle, fill] (0) at (0, 0) {};
  \node[circle, draw] (1) at (0, -1) {};
  \node[circle, draw] (2) at (0, -2) {};
  \draw[->] (1) to (2);
  \draw[->] (1) to (0);
\end{tikzpicture} \quad
\begin{tikzpicture}[baseline=(current bounding box.north)]
  \node[circle, draw] (0) at (0, 0) {};
  \node[circle, fill] (1) at (0, -1) {};
  \node[circle, draw] (2) at (0, -2) {};
  \draw[->] (1) to (2);
  \draw[->] (1) to (0);
\end{tikzpicture}
\begin{tikzpicture}[baseline=(current bounding box.north)]
  \node[circle, fill] (0) at (0, 0) {};
  \node[circle, draw] (1) at (-0.66, -1) {};
  \node[circle, draw] (2) at (0.66, -1) {};
  \draw[->] (0) to (1);
  \draw[->] (1) to (2);
  \draw[->] (0) to (2);
\end{tikzpicture}
\begin{tikzpicture}[baseline=(current bounding box.north)]
  \node[circle, draw] (0) at (0, 0) {};
  \node[circle, fill] (1) at (-0.66, -1) {};
  \node[circle, draw] (2) at (0.66, -1) {};
  \draw[->] (0) to (1);
  \draw[->] (1) to (2);
  \draw[->] (0) to (2);
\end{tikzpicture}
\begin{tikzpicture}[baseline=(current bounding box.north)]
  \node[circle, draw] (0) at (0, 0) {};
  \node[circle, draw] (1) at (-0.66, -1) {};
  \node[circle, fill] (2) at (0.66, -1) {};
  \draw[->] (0) to (1);
  \draw[->] (1) to (2);
  \draw[->] (0) to (2);
\end{tikzpicture}
\begin{tikzpicture}[baseline=(current bounding box.north)]
  \node[circle, fill] (0) at (0, 0) {};
  \node[circle, draw] (1) at (-0.66, -1) {};
  \node[circle, draw] (2) at (0.66, -1) {};
  \draw[->] (0) to (1);
  \draw[->] (1) to (2);
  \draw[->] (2) to (0);
\end{tikzpicture}
\caption{Homomorphism queries used in the local transitivity experiment}
\label{fig:lt-hom-queries}
\end{figure}

\subsection{Training} 

For training, we split the dataset into training, validation, and test set using a standard 60\%/20\%/20\% split. We train all models using the Adam optimizer on the training set as a single batch, and employ binary cross entropy with logits as a loss function. The DHN variants we train for 1000 epochs with a learning rate of 0.0003. Other GNN variants we train for 10000 epochs with a learning rate of 0.001, as we observed slower convergence for these models as for DHNs.
We do not employ regularization measures such as weight decay.

For GCN, GraphSage, and GIN we use the provided implementations in PyTorch Geometric 2.7.0,
also using LayerNorm and leaky ReLU. For GIN we set the parameter $\varepsilon$ to be trainable.

All experiments were performed on the CPU of a 2021 MacBook Pro with M1 Pro chip. The longest training run took 83 minutes.

\subsection{Evaluation}

We compute F1-score and AUROC of the trained models  on the validation and test sets using available functions in scikit-learn 1.6.1. Due to some observed variance in results for all architectures, we perform 3 training runs each, and select the model with the highest F1-score on the validation set. The F1-score and AUROC of this selected model on the test set are then reported in Tables~\ref{tab:exp-results-lt} and~\ref{tab:exp-results-sp}.

To support that our results are statistically significant, we additionally report mean and standard error on the validation set over the three training runs.

On the local transitivity dataset:
\begin{center}
\begin{tabular}{lrr}\toprule
        Architecture & Mean F1 (SE) & Mean AUROC (SE)\\\midrule
            \sumagg-DHN & 0.96 (0.03)& 0.99 (0.01) \\
            \maxagg-DHN & 0.83 (0.02) & 0.93 (0.01) \\
            GCN & 0.76 (0.01) & 0.86 (0.01) \\
            GraphSAGE & 0.79 (0.01) & 0.91 (0.01) \\
            GIN & 0.93 (0.01) & 0.98 (0.00) \\
            \bottomrule
\end{tabular}
\end{center}

And on the sun dataset:
\begin{center}
\begin{tabular}{lrr}\toprule
        Architecture & Mean F1 (SE) & Mean AUROC (SE) \\\midrule
            \sumagg-DHN & 1.00 (0.00) & 1.00 (0.00) \\
            GIN with hom. counts & 0.77 (0.06) & 0.82 (0.06) \\
            GCN & 0.78 (0.01) & 0.85 (0,01) \\
            GraphSAGE & 0.64 (0.02) & 0.50 (0.00) \\
            GIN & 0.63 (0.15) & 0.65 (0.19) \\
            \bottomrule
 \end{tabular}
 \end{center}
 

\end{document}